\newtheorem{theorem}{Theorem}[section]
\newtheorem{corollary}[theorem]{Corollary}
\newtheorem{lemma}[theorem]{Lemma}
\newtheorem{prop}[theorem]{Proposition}
\newtheorem{remark}[theorem]{Remark}
\newtheorem{definition}[theorem]{Definition}
\def\beq{\begin{equation}}
\def\eeq{\end{equation}}
\def\ie{\begin{equation}\begin{aligned}}
\def\fe{\end{aligned}\end{equation}}
\newcommand{\Umod}{U_\text{mod}}
\newcommand{\ep}{\epsilon}
\newcommand{\est}{\text{G}}
\newcommand{\Phinew}{\Phi}
\newcommand{\conpl}{\mathbbm{\Delta}}
\def\ad{\mathrm{ad}}
\def\dd{\text{d}}
\def\mot{\mathfrak{m}}
\def\BF{{\rm BF}}
\newcommand{\dplus}[1]{{\cal D}^+ \! \left[\begin{smallmatrix}#1\end{smallmatrix}\right]\!}
\newcommand{\cplus}[1]{{\cal C}^+ \! \left[\begin{smallmatrix}#1\end{smallmatrix}\right]\!}
\newcommand{\beqvY}[1]{\beta^{\rm eqv}\!\left[\begin{smallmatrix}#1\end{smallmatrix};\tau\right]}
\newcommand{\bplusY}[1]{\beta_+\!\left[\begin{smallmatrix}#1\end{smallmatrix};\tau\right]}
\newcommand{\bminusY}[1]{\beta_-\!\left[\begin{smallmatrix}#1\end{smallmatrix};\tau\right]}
\newcommand{\xcomp}[2]{\chi\! \left[\begin{smallmatrix}#1\\#2\end{smallmatrix}\right]}
\newcommand{\ee}[3]{{\cal E}\! \left[\begin{smallmatrix}#1\\#2\end{smallmatrix};#3\right]}
\newcommand{\eeqqvv}[3]{{\cal E}^{\rm eqv}\! \left[\begin{smallmatrix}#1\\#2\end{smallmatrix};#3\right]}
\newcommand{\gab}[3]{\Gamma^{\rm sv}\! \left[\begin{smallmatrix}#1\\#2\end{smallmatrix};#3\right]}
\newcommand{\lab}[3]{\Lambda^{\rm sv}\! \left[\begin{smallmatrix}#1\\#2\end{smallmatrix};#3\right]}
\newcommand{\eez}[4]{{\cal E}\! \left[\begin{smallmatrix}#1\\#2\\#3\end{smallmatrix};#4\right]}
\newcommand{\eezbno}[3]{\overline{{\cal E}\! \left[\begin{smallmatrix}#1\\#2\\#3\end{smallmatrix}\right]}}
\newcommand{\eezno}[3]{{\cal E}\! \left[\begin{smallmatrix}#1\\#2\\#3\end{smallmatrix}\right]}
\newcommand{\beqvtau}[3]{\beta^{\rm eqv}\! \left[\begin{smallmatrix}#1\\#2\end{smallmatrix};#3\right]}
\newcommand{\bsvtau}[3]{\beta^{\rm sv}\! \left[\begin{smallmatrix}#1\\#2\end{smallmatrix};#3\right]}
\newcommand{\beqvztau}[4]{\beta^{\rm eqv}\! \left[\begin{smallmatrix}#1\\#2\\#3\end{smallmatrix} ;#4\right]}
\newcommand{\beqveqvtau}[7]{\beta^{\rm eqv}\! \left[\begin{smallmatrix}#1\\#2\\#3\end{smallmatrix} \left| \begin{smallmatrix}#4\\#5\\#6\end{smallmatrix}  \right.;#7\right]}
\newcommand{\wpz}[2]{\omega_{+}\! \left[\begin{smallmatrix}#1\end{smallmatrix};\tau,#2\right]}
\newcommand{\wmz}[2]{\omega_{-}\! \left[\begin{smallmatrix}#1\end{smallmatrix};\tau,#2\right]}
\newcommand{\wpmz}[2]{\omega_{\pm}\! \left[\begin{smallmatrix}#1\end{smallmatrix};\tau,#2\right]}
\newcommand{\bpmz}[4]{\beta_{\pm} \! \left[\begin{smallmatrix}#1\\#2\\#3\end{smallmatrix};#4\right]}
\newcommand{\bplus}[3]{\beta_+ \! \left[\begin{smallmatrix}#1\\#2\end{smallmatrix};#3\right]}
\newcommand{\bplusz}[4]{\beta_+ \! \left[\begin{smallmatrix}#1\\#2\\#3\end{smallmatrix};#4\right]}
\newcommand{\bminus}[3]{\beta_- \! \left[\begin{smallmatrix}#1\\#2\end{smallmatrix};#3\right]}
\newcommand{\bminusz}[4]{\beta_- \! \left[\begin{smallmatrix}#1\\#2\\#3\end{smallmatrix};#4\right]}
\def\Re{{\rm Re \,}}
\def\Im{{\rm Im \,}}
\DeclareSymbolFontAlphabet{\mathbbm}{bbold}
\DeclareSymbolFontAlphabet{\mathbb}{AMSb}
\title{Elliptic modular graph forms, equivariant iterated integrals and single-valued elliptic polylogarithms}
\author{Oliver Schlotterer$^{a,b}$, Yoann Sohnle$^{a}$ and Yi-Xiao Tao$^{c,d}$}
\affiliation[a]{Department of Physics and Astronomy, Uppsala University, Box 516, 75120 Uppsala, Sweden}
\affiliation[b]{Department of Mathematics, Centre for Geometry and Physics, Uppsala University, Box 480, 75106 Uppsala, Sweden}
\affiliation[c]{Department of Mathematical Sciences, Tsinghua University, Beijing 100084, China}
\affiliation[d]{Nordita, KTH Royal Institute of Technology and Stockholm University, Hannes Alf\'{v}ens v\"{a}g 12, SE-106 91 Stockholm, Sweden}
\emailAdd{oliver.schlotterer@physics.uu.se}
\emailAdd{yoann.sohnle@physics.uu.se}
\emailAdd{taoyx21@mails.tsinghua.edu.cn}
\abstract{The low-energy expansion of genus-one string amplitudes produces infinite families of non-holomorphic modular forms after each step of integrating over a point on the torus worldsheet which are known as elliptic modular graph forms (eMGFs). We solve the differential equations of eMGFs depending on a single point $z$ and the modular parameter $\tau$ via iterated integrals over holomorphic modular forms which individually transform inhomogeneously under ${\rm SL}_2(\mathbb Z)$. Suitable generating series of these iterated integrals over $\tau$, their complex conjugates and single-valued multiple zeta values (svMZVs) are combined to attain equivariant transformations under ${\rm SL}_2(\mathbb Z)$ such that their components are modular forms. 

Our generating series of equivariant iterated integrals for eMGFs is related to elliptic multiple polylogarithms (eMPLs) through a gauge transform of the flat Calaque-Enriquez-Etingof connection. By converting iterated $\tau$-integrals to iterated integrals over points on a torus, we arrive at an explicit construction of single-valued eMPLs where all the monodromies in the points cancel. 
Each single-valued eMPL depending on a single point $z$ is found to be a finite combination of meromorphic eMPLs, their complex conjugates, svMZVs and equivariant iterated Eisenstein integrals.
Our generating series determines the latter two admixtures via so-called zeta generators and Tsunogai derivations which act on the two generators $x$, $y$ of a free Lie algebra and where the coefficients of words in $x,y$ define the single-valued eMPLs.}
\preprint{UUITP--33/25}
\begin{document}

\maketitle

\newpage
\setcounter{tocdepth}{2}
\tableofcontents

\vspace{5mm}
\hrule
\vspace{5mm}

\section{Introduction}
\label{sec:1}

Scattering amplitudes in string theory have a history of fruitful symbiosis with modular forms going back to the early 70s~\cite{Shapiro:1972ph}. As a central physics implication of the modular invariance of closed-string amplitudes, gravitational loop diagrams of string theories are free of ultraviolet divergences, even in ten or more spacetime dimensions. This work is dedicated to the mathematical implications of modularity in genus-one string amplitudes, with a focus on the contributions of a fixed torus worldsheet with two marked points.

For the modular group ${\rm SL}_2(\mathbb Z)$ under investigation, holomorphic and meromorphic modular forms are well-understood textbook material \cite{Zagier:123, DHoker:2024book}. In the realm of non-holomorphic modular forms, however, it was a more recent investigation of string amplitudes which added so-called {\it modular graph forms (MGFs)} to the subject of modularity \cite{DHoker:2015gmr, DHoker:2015wxz, DHoker:2016mwo}. 
Already at genus one, MGFs generalize non-holomorphic Eisenstein series to an infinite function space selected by low-energy expansions of string amplitudes \cite{Green:1999pv, Green:2008uj, DHoker:2019blr, Claasen:2024ssh} with interesting number-theoretic properties \cite{Zerbini:2015rss, DHoker:2015wxz, DHoker:2019xef, Zagier:2019eus} and an intriguing web of algebraic and differential relations \cite{DHoker:2015gmr, DHoker:2016mwo, DHoker:2016quv, Kleinschmidt:2017ege, Basu:2019idd}.\footnote{See \cite{Gerken:review, Berkovits:2022ivl, DHoker:2024book} for overview references and \cite{Gerken:2020aju, Claasen:2025vcd} for {\sc Mathematica} packages on MGFs.}

The independent representatives of MGFs under their wealth of relations are distilled by solving their differential equations \cite{DHoker:2016mwo, Gerken:2018zcy} in terms of iterated integrals over holomorphic Eisenstein series \cite{DHoker:2015wxz, Broedel:2018izr, Panzertalk, Gerken:2020yii}.
While integration usually spoils the ${\rm SL}_2(\mathbb Z)$ properties of holomorphic modular forms, Brown identified a systematic way of producing non-holomorphic forms from iterated Eisenstein integrals, their complex conjugates and multiple zeta values (MZVs) \cite{brown2017multiple, Brown:2017qwo, Brown:2017qwo2}. These modular combinations are known as {\it equivariant iterated Eisenstein integrals} and indeed offer a streamlined description of MGFs \cite{Dorigoni_2022, Dorigoni:2024oft, Claasen:2025vcd} as conjectured by Brown.

In this work, we extend the construction of equivariant iterated integrals to more general modular forms in the integrands, namely the coefficients $f^{(k)}$ of the doubly-periodic Kronecker-Eisenstein series
\beq
f^{(k)}(u\tau{+}v,\tau) = - \sum_{m,n \in \mathbb Z\atop {(m,n)\neq (0,0) }} \frac{e^{2\pi i (nu-mv) }}{(m\tau{+}n)^k} \, , \ \ \ \ k\geq 2
\label{intro.01}
\eeq
that depend on a marked point $z=u\tau{+}v$ on a torus through its co-moving coordinates $u,v \in \mathbb R$ and reduce to holomorphic Eisenstein series as $u,v\rightarrow 0$ if $k\geq 4$. Our first main result is a generating-series construction of non-holomorphic forms by combining iterated integrals of the above $f^{(k)}$ and holomorphic Eisenstein series over modular parameters $\tau$ with their complex conjugates and MZVs.

The new $z$-dependent class of equivariant iterated integrals in this work is inspired by and constructed to describe so-called {\it elliptic modular graph forms} (eMGFs) from the string-theory literature. Non-separating degenerations of genus-two MGFs \cite{DHoker:2017pvk} give rise to eMGFs at genus one where one of the genus-two moduli takes the role of two marked points modulo translation invariance of the torus \cite{DHoker:2018mys}. More general eMGFs capture intermediate steps of successively integrating over the marked points on a torus in genus-one string amplitudes. Similar to the case of genus-one MGFs, their $z$-dependent generalizations to eMGFs obey a web of algebraic and differential relations \cite{DHoker:2018mys, DHoker:2020tcq, Basu:2020pey, Basu:2020iok, Dhoker:2020gdz, Hidding:2022vjf} which are exposed in their alternative description via equivariant iterated integrals in this~work.

The simplest class of eMGFs matches Zagier's single-valued elliptic polylogarithms \cite{Ramakrish} which connects to the second main result of this work. Iterated integrals over points on a torus are conveniently organized via {\it elliptic multiple polylogarithms} (eMPLs) which close under integration over the points \cite{BrownLev}. Moreover, eMPLs are at the heart of multidisciplinary exchange since their mathematical investigations \cite{Lev, Levrac, BrownLev, Enriquez:2023emp} find a variety of physics applications in string amplitudes \cite{Broedel:2014vla, Broedel:2017jdo, Panzertalk, Kaderli:2021kqn} and Feynman integrals \cite{Broedel:2017kkb, Bourjaily:2022bwx, Weinzierl:2022}. Similar to their genus-zero counterparts on the sphere, eMPLs are multi-valued functions on the torus and exhibit monodromies as the points of the torus wind around its homology cycles (global monodromies) or other points (local monodromies).

The second main result of this work is a generating-series construction of single-valued eMPLs
{\it in one variable}, i.e.\ that depend on one point $z=u\tau{+}v$ on the torus besides its modular parameter\footnote{Strictly speaking, eMGFs and single-valued eMPLs are functions on the moduli space $\mathfrak{M}_{1,2}$ of twice-punctured tori. We are not counting in the second marked point which is fixed to the origin by translation invariance on the torus. Similarly, our terminology on genus-zero multiple polylogarithms in $n$ variables does not account for the three marked points on the sphere which have been fixed to $(0,1,\infty)$ and does not expose that they are (multi-valued or single-valued) functions on the moduli space $\mathfrak{M}_{0,n+3}$ of  $(n{+}3)$-punctured~spheres.} without any global or local monodromies in $z$. We derive an infinite family of single-valued eMPLs from the equivariant iterated integrals over modular parameters of the Kronecker-Eisenstein kernels $f^{(k)}(u\tau{+}v,\tau)$ in (\ref{intro.01}). The latter are at the same time a possible choice of integration kernels for eMPLs when integrating over the point $z=u\tau{+}v$ instead of $\tau$.\footnote{More precisely, the integration kernels in the Brown-Levin formulation \cite{BrownLev} of eMPLs in one variable are $\dd \bar z$, $\dd z$ and $\dd z \, f^{(k)}(z,\tau)$ with $k\geq 1$, combined into the flat connection reviewed in section \ref{sec:2.1.2}.} On these grounds, the link between eMGFs and single-valued eMPLs is not surprising and amounts to a change of fibration basis of the underlying iterated integrals -- conversions between integrals over $z$ and those over $\tau$. We will implement the change of fibration basis at the level of generating series using a flat connection which is gauge-equivalent to the one of Calaque-Enriquez-Etingof  \cite{KZB}.

A central number-theoretic feature of (e)MGFs, equivariant iterated integrals and the single-valued eMPLs to be constructed in this work is the appearance of single-valued MZVs \cite{Schnetz:2013hqa, Brown:2013gia} in their expansions around the cusp $\tau \rightarrow i \infty$. This is still conjectural for MGFs (see \cite{Zerbini:2015rss, DHoker:2015wxz, DHoker:2019xef, Zagier:2019eus, Gerken:2020yii, Vanhove:2020qtt} for supporting examples and arguments) and hard-wired in Brown's equivariant iterated Eisenstein integrals \cite{brown2017multiple, Brown:2017qwo, Brown:2017qwo2}. The fully explicit formulation of Brown's construction \cite{Dorigoni_2022, Dorigoni:2024oft} makes use of the $f$-alphabet description of (single-valued) MZVs \cite{Brown:2011ik, BrownTate} and links the appearance of higher-depth instances or products to that of odd Riemann zeta values. These links rely on Lie-algebraic tools dubbed {\it zeta generators} which act on generators of the fundamental group of the punctured sphere \cite{Levine, DG:2005, Brown:anatomy, BrownTate, Brown:depth3} and torus \cite{EnriquezEllAss, Hain:KZB, Schneps:2015mzv, hain_matsumoto_2020, Dorigoni:2024iyt} that enter the generating series of this work.

At genus zero, zeta generators can be used to reformulate \cite{Frost:2023stm, Frost:2025lre} the motivic coaction \cite{Goncharov:2001iea, Goncharov:2005sla, BrownTate, Brown:2011ik} and single-valued map \cite{svpolylog, Broedel:2016kls, DelDuca:2016lad, Brown:2018omk} of multiple polylogarithms (MPLs) in any number of variables. The recent description of Brown's single-valued iterated Eisenstein integrals \cite{Brown:2017qwo2} via genus-one incarnations of zeta generators \cite{Dorigoni:2024oft} mirrors the structure of single-valued genus-zero MPLs in one variable \cite{Frost:2023stm}. A key result of this work is another example where zeta generators offer a genus-agnostic description of single-valued generating series that depend on the same number of variables: our generating-series construction of single-valued eMPLs in one variable -- functions of two complex moduli $z,\tau$ -- closely resembles the construction of single-valued genus-zero MPLs in two variables $y,z \in \mathbb C$ in \cite{Frost:2023stm}.

As a practical appeal of constructing single-valued generating series from zeta generators, the fibration bases of all the contributing iterated integrals are preserved. At genus zero, for instance, single-valued MPLs in multiple variables require admixtures of MPLs in fewer variables (besides MZVs) to cancel all monodromies, and the construction of \cite{Frost:2023stm} generates these MPLs with coherent choices of integration endpoints and fibration bases. In this work, the cancellation of global and local monodromies from single-valued eMPLs relies on admixtures of single-valued MZVs and equivariant iterated Eisenstein integrals. The latter can be viewed as a fibration basis for modular combinations of Enriquez' elliptic multiple zeta values \cite{Enriquez:Emzv, Broedel:2015hia, Matthes:Thesis, Zerbini:2018sox} with their complex conjugates and MZVs.

In summary, the two main results of this work are
the construction of infinite families of
\begin{itemize}
\item equivariant iterated integrals over combinations of holomorphic Eisenstein series and Kronecker-Eisenstein kernels $f^{(k)}(u\tau{+}v,\tau)$ accompanied by their complex conjugates and single-valued MZVs as coefficients, see (\ref{intr.80}) or Theorem \ref{3.thm:2};
\item single-valued eMPLs from eMPLs (either in their meromorphic formulation \cite{Broedel:2017kkb} or their Brown-Levin counterparts \cite{BrownLev}) and their complex conjugates, with equivariant iterated Eisenstein integrals and single-valued MZVs as coefficients, see (\ref{onemain}) or Theorem \ref{3.cor:1}.
\end{itemize}
For both of these families, each member is a finite combination of iterated integrals meromorphic in $\tau$ and their complex conjugates. The respective generating series will be given in fully explicit form, and the extraction of individual equivariant iterated integrals and single-valued eMPLs requires iterative applications of Lie-algebraic commutation relations that will be described in detail and can be computed to any desired order.
Further results on single-valued eMPLs concerning their complex-conjugation properties and their asymptotics $\tau \rightarrow i\infty$ 
are previewed in the summary section \ref{sec:1.1} below. 

Throughout this work, we will rely on the $f$-alphabet description and the single-valued map of MZVs (see appendix \ref{sec:A.2}) which are strictly speaking only well-defined for motivic MZVs \cite{Goncharov:2001iea, Goncharov:2005sla, BrownTate, Brown:2011ik}. Accordingly, the theorems of this work are built on the widely trusted assumption that MZVs as real numbers are isomorphic to their motivic counterparts.

\subsection*{Note added}

Before completion of this work, we were informed by Konstantin Baune, Johannes Broedel and Yannis M\"ockli about their upcoming work \cite{Baune:2025svempl} on closely related results, to appear on the arXiv on the day before this preprint. According to an inspiring exchange with the authors, their work provides an alternative construction of single-valued eMPLs in one variable where the monodromies are directly shown to cancel. For the single-valued eMPLs in our work, the absence of monodromies follows in an indirect way, namely through the link with equivariant iterated integrals and modularity properties. The two preprints are expected to offer complementary perspectives on single-valued eMPLs. We are grateful to Konstantin Baune, Johannes Broedel and Yannis M\"ockli for notifying us of their upcoming work and smooth coordination on the arXiv submissions.


\subsection*{Outline}

After a brief overview of the main results of this work in section \ref{sec:1.1}, we start its main part in section \ref{sec:2} by reviewing background material on eMPLs, eMGFs and the construction of single-valued MPLs from zeta generators. Section \ref{sec:key} then introduces the key ingredients of our construction including Lie-algebra structures for genus one, equivariant iterated Eisenstein integrals and the generating series of $z$-dependent iterated integrals. Section \ref{sec:3} is then dedicated to our main results -- first and foremost Theorems \ref{3.thm:2} and \ref{3.cor:1} on the advertised equivariant and single-valued generating series, followed by a discussion of their behavior under complex conjugation and $\tau \rightarrow i\infty$ degeneration in sections \ref{sec:3.cc} and \ref{sec:3.3}, respectively. In section \ref{sec:4}, we spell out the proof of Theorem \ref{3.thm:1} which leads to the change of fibration basis between iterated integrals over $\tau$ or $z$ and is used in proving our main theorems. Section \ref{sec:5} then connects the earlier results to eMGFs and the counting of their independent representatives under algebraic relations, followed by a conclusion and outlook in section~\ref{sec:outopen}. 

Several appendices provide additional review material, display lengthy expressions, and contain several proofs of results in the main body. Moreover, numerous examples of bracket relations among the Lie-algebra generators as well as translations between eMGFs and iterated integrals can be found in the {\sc Mathematica} notebooks {\tt Brackets.nb} and {\tt eMGFdata.nb} included as ancillary files to the arXiv submission of this work.


\subsection{Overview of main results}
\label{sec:1.1}

This section aims to give an overview of the main results of this work at an intermediate level of detail: We provide a gentle amount of background information, motivation and intuition for the ingredients of our generating-series constructions, with the goal of making this 11-page preview of a $> 100$-page paper suitable for standalone reading.

\subsubsection{Generating series and path-ordered exponentials}
\label{int.1.g}

The different types of iterated integrals to be encountered in this work will all be organized into generating series obtained from path-ordered exponentials
\beq
{\rm Pexp}\bigg( \int^a_b \conpl(t) \bigg) =  1 + \int^a_b \conpl(t) + \int^a_b\conpl(t_1) \int^{t_1}_b \conpl(t_2) + \ldots 
\label{intro.02}
\eeq
of Lie-algebra valued one-form connections $\conpl(t)$, see appendix \ref{sec:A.4} for further details. The integration variable $t$ and the endpoints $a,b$ will either take the role of a marked point on the sphere or the torus, or they will refer to its modular parameter $\tau$ in the upper half plane.

The connections $\conpl(t)$ under investigation will be composed of non-commuting variables -- generators of certain Lie algebras -- multiplying one-forms in $t$. At genus zero, for instance, we will encounter the Knizhnik-Zamolodchikov (KZ) connection $\conpl(t) \rightarrow \dd t \, (\frac{e_0}{t}+ \frac{e_1}{t{-}1})$ with logarithmic one-forms and two braid operators $e_0$, $e_1$ that generate a free Lie algebra. The path-ordered exponential (\ref{intro.02}) will then take the form of ($z\in \mathbb C$)
\begin{align}
\mathbb G_{e_0,e_1}(z) &=
{\rm Pexp}\bigg({-} \int^0_z \dd t \,\bigg(\frac{e_0}{t}+ \frac{e_1}{t{-}1} \bigg) \bigg) \label{intro.03} \\
&= 
\sum_{r=0}^\infty \sum_{a_1,\ldots,a_r \atop {\in \{ 0,1\}}} G(a_r,\ldots,a_2,a_1;z) e_{a_1}  e_{a_2 }\ldots e_{a_r}
\notag
\end{align}
and generate MPLs, i.e.\ iterated integrals $G(a_1,\ldots,a_r;z)$ of $\dd \log(t{-}a)$ forms with $a\in \mathbb C$ as defined in (\ref{appA.01}). Moreover, each word $e_{a_1}e_{a_2}\ldots$ in the infinite series expansion of (\ref{intro.02}) will then be accompanied by a different MPL, and the freeness of ${\rm Lie}[e_0,e_1]$ guarantees that the coefficients of such words are well-defined. At genus one, the formulation of single-valued eMPLs through the series $\mathbbm{\Gamma}^{\rm sv}_{x,y}(z,\tau)$ in (\ref{cor.3.4a}), (\ref{agrteq.02}) will also take values in a freely generated Lie algebra, but this is not the case for the composing equivariant series $\mathbb I^{\rm eqv}_{\epsilon^{\rm TS}}(\tau)$, $\mathbb I^{\rm eqv}_{\epsilon,b}(u,v,\tau)$ in (\ref{lieg1.51}), (\ref{lieg1.61}) whose generators obey certain relations reviewed in section \ref{sec:2.4.1}.

The example of MPLs illustrates that $\conpl(t_1)$ and  $\conpl(t_2)$ in (\ref{intro.02}) generically do not commute at $t_1\neq t_2$ such that double integrals do not factorize in general.
The connections $\conpl(t)$ in this work will all be flat\footnote{Flatness of the connection is automatic for $\conpl(t) = \dd t \,\mathbb F(t)$ for meromorphic functions $\mathbb F(t)$ as is the case in (\ref{intro.03}), but we will also encounter the non-meromorphic Brown-Levin connection (\ref{notsec.06}) which generates eMPLs and whose flatness hinges on mixing $\dd t$ and $\dd \bar t$.} and thus ensure that the integrals in (\ref{intro.02}) only depend on the endpoints $a,b$ and the homotopy class of the integration path between them, but not on small path deformations away from singular points of the integrand. 

A key advantage of generating iterated integrals via path-ordered exponentials is that many of their key properties are encoded in concatenations of series. In the case of (\ref{intro.03}), the differential equations of individual MPLs take the generating-series form $\partial_z\mathbb G_{e_0,e_1}(z) = \mathbb G_{e_0,e_1}(z) \, (\frac{e_0}{z}+ \frac{e_1}{z{-}1})$.
More importantly, the monodromies of MPLs as the integration path towards $z$ loops around the origin (to be informally denoted by $z \rightarrow e^{2\pi i}z$) take the considerably more compact form via generating series,
\beq
\mathbb G_{e_0,e_1}(e^{2\pi i}z) = e^{2\pi i e_0}\mathbb G_{e_0,e_1}(z)
\label{intro.05}
\eeq
as compared to the monodromies of individual MPLs.\footnote{The coefficient of $e_0 e_0 e_1 e_0$ in (\ref{intro.05}) for instance encodes that 
\[
G(0,1,0,0;e^{2\pi i}z) = G(0,1,0,0;z) + 2\pi i G(0,1,0;z) + \frac{1}{2} (2\pi i)^2 G(0,1;z)
\]
and illustrates the unpacking of the concatenation $e^{2\pi i e_0}\mathbb G_{e_0,e_1}(z)$ of series.}
Accordingly, generating series of single-valued MPLs are constructed by concatenating $\mathbb G_{e_0,e_1}(z)$
from the left with other series that produce the inverse monodromy $e^{-2\pi i e_0}$ as a right-multiplicative factor in the case of $z \rightarrow e^{2\pi i}z$. Loops around different singular points produce more general monodromy factors, and the single-valued completion of $\mathbb G_{e_0,e_1}(z)$ is engineered to ensure that the appropriate inverse factors are produced in each case. The same mechanisms apply to genus-one integrals where the monodromies of eMPLs series $\mathbbm{\Gamma}_{x,y}(z,\tau)$ in (\ref{notsec.19}) or the obstruction of the iterated $\tau$ integrals $\mathbb I_{\epsilon,b}(u,v,\tau)$ in (\ref{notsec.17}) to transform equivariantly under ${\rm SL}_2(\mathbb Z)$ arise via left-multiplicative and invertible factors analogous to $e^{2\pi i e_0}$ in (\ref{intro.05}) which will be canceled in the constructions of this work.

In fact, as will be used below, the path-ordered exponentials themselves are invertible as series, see (\ref{invpath}). The simple rule for spelling out the inverse w.r.t.\ the concatenation product is to invert the integration order and to insert a minus sign for each integration,~e.g.
\beq
\mathbb G_{e_0,e_1}(z)^{-1} = \sum_{r=0}^\infty  (-1)^r \sum_{a_1,\ldots,a_r \atop {\in \{ 0,1\}}} G(a_1,a_2,\ldots,a_r;z) e_{a_1}  e_{a_2 }\ldots e_{a_r}
\label{intro.07}
\eeq
as the inverse of (\ref{intro.03}). This inversion prescription hinges on the shuffle product universal to any convergent iterated integral produced by (\ref{intro.02}), even to the regularized values of formally divergent integrals encountered in this work.

\subsubsection{Overview of connections and Lie algebras}
\label{int.1.a}

In this section and the overview table \ref{tab:A}, we present the schematic form of the genus-zero and genus-one connections as well as the associated iterated integrals and Lie algebras encountered in this work. 

At genus zero, the second and third row of table \ref{tab:A} recaps the series $\mathbb G_{e_0,e_1}(z)$ in (\ref{intro.03}) and indicates the modifications for MPLs in two variables $y,z \in \mathbb C$ with generating series $\mathbb G_{e_0',e_1',e_y'}(y,z)$ in (\ref{svmpl.01}), respectively. The three generators of the freely generated algebra ${\rm Lie}[e_0',e_1',e_y']$ in the two-variable case will be distinguished from their one-variable counterparts $e_0,e_1$ through a prime. Flatness of the underlying KZ connection in two variables (see table \ref{tab:A} for one of its components) implies a first {\it normalization condition}: Commutators that mix primed and unprimed generators $e_i$ and $e_j'$ from the generating series of MPLs in one and two variables, respectively, always lie in the Lie algebra ${\rm Lie}[e_0',e_1',e_y']$ of the latter,
\beq
[e_i,e_j'] \in {\rm Lie}[e_0',e_1',e_y'] \, , \ \ \ \ \ \ i \in \{0,1\} \, , \ \ \ \ j \in \{0,1,y \}
\label{intro.08}
\eeq
i.e.\ ${\rm Lie}[e_0,e_1]$ normalizes ${\rm Lie}[e_0',e_1',e_y']$.

\begin{table}[h]
\centering
\begin{tabular}{c||c | c | c| c}
$\!\!$gen.\ series$\!\!$&type of it.\ integrals &schematic form of $\conpl(t)$ &Lie-alg.\ generator &def.
\\\hline\hline
$\mathbb G_{e_i}(z)$
&MPLs in 1 var.\ $z$
&$ \displaystyle \Bigg. \dd t \, \bigg( \frac{e_0}{t} +\frac{e_1}{t{-}1} \bigg)$ 
&braid operators $e_i$
&$\!$(\ref{svmpl.00})$\!$
\\\hline
$\mathbb G_{e_i'}(y,z)$
&MPLs in 2 var.\ $y,z$
&$ \displaystyle \Bigg. \dd t \, \bigg( \frac{e'_0}{t} +\frac{e'_1}{t{-}1} +\frac{e'_y}{t{-}y} \bigg)$ 
&braid operators $e'_i$
&$\!$(\ref{svmpl.01})$\!$
\\\hline
\hline
$\mathbb I_{\epsilon^{\rm TS}}(\tau)$
&it.\ Eisenstein int.\
&$\bigg. \displaystyle \dd t \, \bigg( \ep_0 {+} \sum_{k=4}^\infty {\rm G}_k(t) \ep_k^{\rm TS}\bigg) \Big.$ 
&$\!$Tsunogai derivation $\epsilon_k^{\rm TS}\!$
&$\!$(\ref{notsec.18})$\!$
\\\hline
$\!\!\mathbb I_{\epsilon,b}(u,v,\tau)\!\!$
&$\begin{array}{c} \!\!\textrm{iterated Kronecker-}\!\! \\ \textrm{Eisenstein int.}
\end{array}$
&$\begin{array}{c}  \dd t \, \Big( \ep_0 {+} \sum_{k=4}^\infty {\rm G}_k(t) \ep_k 
\\
{+} \sum_{k=2}^\infty f^{(k)}(ut{+}v,t) b_k \Big) 
\end{array} \Bigg.$ 
&$\begin{array}{c} \textrm{derivation} \ \epsilon_k \\ \textrm{Lie polynomial} \ b_k \end{array}$
&$\!$(\ref{notsec.17})$\!$
\\\hline
$\mathbbm{\Gamma}_{x,y}(z,\tau)$
&eMPL in 1 var.\ $z$
&\small $\displaystyle  \! \!  \dd t \sum_{k=0}^\infty f^{(k)}(t,\tau) {\rm ad}_x^k y
{-}\frac{\pi \,\dd \bar t}{\Im \tau} x \! \! $ 
&free generators $x,y$
&$\!$(\ref{notsec.19})$\!$ 
\end{tabular}
\caption{\label{tab:A}\textit{Overview of different path-ordered exponentials for iterated integrals at genus zero (second and third line with $\mathbb G_{e_i}(z) = \mathbb G_{e_0,e_1}(z)$ and $\mathbb G_{e_i'}(y,z) = \mathbb G_{e_0',e_1',e_y'}(y,z) $) and genus one (last three lines). The expressions for the connection one-forms $\conpl(t)$ are understood to be schematic, dropping rational prefactors and even gauge transformations via $e^{2\pi i \tau \ep_0}$ in case of the $\ep_0$-dependent connections, see (\ref{sc5.15}). The precise formulae can be found by following the equation numbers in the rightmost column where the respective generating series are defined.}}
\end{table}

The last three lines of table \ref{tab:A} gather the cast of characters at genus one involving Kronecker-Eisenstein kernels
$f^{(k)}(z,\tau)$ in (\ref{intro.01}) and holomorphic Eisenstein series ${\rm G}_k(\tau)= - f^{(k)}(0,\tau)$ at $k\geq 4$ as integration kernels. 
The simplest connection and path-ordered exponential only depends on the modular parameter $\tau$ which appears as the endpoint of the integration path $\int^{i\infty}_\tau$ defining the series $\mathbb I_{\epsilon^{\rm TS}}(\tau)$ of iterated Eisenstein integrals in (\ref{notsec.18}). This can to some extent be viewed as an analogue of the MPLs in one variable $z$, but there are two notable differences between the Lie-algebra generators: In contrast to the braid operators $e_0$, $e_1$ in the genus-zero case, the {\it Tsunogai derivations} $\ep_0$ and $\ep^{\rm TS}_k$ with even $k\geq 4$ in the connection of $\mathbb I_{\epsilon^{\rm TS}}(\tau)$ (i) furnish an infinite set of generators and (ii) do not form a free algebra in view of relations below like $[\ep_0,[\ep_0,[\ep_0,\ep^{\rm TS}_4]]]=0$ or $[\ep^{\rm TS}_4,\ep^{\rm TS}_{10}]=3 [\ep^{\rm TS}_6,\ep^{\rm TS}_8]$ \cite{LNT, Pollack, Broedel:2015hia, WWWe}. 

The two main results of this work concern iterated integrals at genus one that depend on two variables: $\tau$ in the upper half plane and $z= u\tau{+}v$ with co-moving coordinates $u,v \in \mathbb R$. 
\begin{itemize}
\item A first incarnation of $(z,\tau)$-dependent iterated integrals is sketched in the second line from below of table \ref{tab:A}: simply adjoin the additional integration kernels $\dd t \, f^{(k)}(ut{+}v,t)$ with respect to modular parameters at fixed $u,v$ to the Eisenstein integrals of the one-variable case of $\mathbb I_{\epsilon^{\rm TS}}(\tau)$. The generating series $\mathbb I_{\epsilon,b}(u,v,\tau)$ of such Kronecker-Eisenstein integrals defined in (\ref{notsec.17}) features new Lie-algebra generators $b_k$ with $k\geq 2$ and a variant of the Tsunogai derivations
\beq
\ep_k = \ep^{\rm TS}_k - b_k \,  , \ \ \ \ \ \ k\geq 4 \ {\rm even}
\label{intr.68}
\eeq
\item The last line of table \ref{tab:A} displays the Brown-Levin connection \cite{BrownLev} for eMPLs in one variable, namely homotopy-invariant iterated integrals of $\dd t\,f^{(k)}(t,\tau)$ over points. Their generating series $\mathbbm{\Gamma}_{x,y}(z,\tau)$ defined by (\ref{notsec.19}) takes values in the free Lie algebra with two generators $x,y$. As will be detailed in section \ref{sec:4.3}, eMPLs are expressible in terms of the iterated $\tau$ integrals in the expansion of the series $\mathbb I_{\epsilon,b}(u,v,\tau)$, see \cite{Broedel:2018iwv} for earlier discussions of this change of fibration basis.
\end{itemize}
The two Lie algebras in the bullet points are linked by bracket relations such as ($k\geq 2$)
\beq
b_k = -{\rm ad}_x^{k-1}(y) \, , \ \ \ \  [\ep_0,x]= y \, , \ \ \ \ [\ep_0,y] = 0 
\label{intr.69}
\eeq
which follow from the flatness of the Calaque-Enriquez-Etingof connection \cite{KZB} that carries different entries of table \ref{tab:A} in its $\dd \tau$ and $\dd z$ components. In preparation for compactly stating the normalization conditions on these generators, we introduce the shorthand
\beq
b_k^{(j)} = {\rm ad}^j_{\ep_0}(b_k) \, , \ \ \ \ \ \
\ep_k^{(j)} = {\rm ad}^j_{\ep_0}(\ep_k) \, , \ \ \ \ \ \
\ep_k^{(j){\rm TS}} = \ep_k^{(j)} + b_k^{(j)}
\label{intro.21}
\eeq
and note the following relations for nested commutators with $\epsilon_0$
\beq
{\rm ad}^{k-1}_{\ep_0}(b_k) = 0
\, , \ \ \ \ \ \
{\rm ad}^{k-1}_{\ep_0}(\ep_k) = 0
\label{intro.22}
\eeq
Denoting the Lie algebras of non-vanishing nested brackets in (\ref{intro.21}) by
\begin{align}
\mathfrak{L}_b &= {\rm Lie}\big[ \, \big\{b_k^{(j)}, \ k\geq 2 , \ 0 \leq j \leq k{-}2 \big\} \,\big]  \label{intro.23} \\
\mathfrak{L}_{\ep^{\rm TS}} &= {\rm Lie}\big[ \, \big\{\ep_k^{(j){\rm TS}}, \ k\geq 4 \ {\rm even} , \ 0 \leq j \leq k{-}2 \big\} \,\big]
\notag
\end{align}
then freeness of ${\rm Lie}[x,y]$ implies that the non-zero Lie polynomials $b^{(j)}_k$ in $x,y$ of degree $k \geq 2$ also form a free algebra $\mathfrak{L}_b$. The algebra $\mathfrak{L}_{\ep^{\rm TS}}$ of Tsunogai derivations, by contrast, is not free by Pollack relations starting from the aforementioned $[\ep^{\rm TS}_4,\ep^{\rm TS}_{10}]=3 [\ep^{\rm TS}_6,\ep^{\rm TS}_8]$.

The normalization conditions among the generators in the last three lines of table \ref{tab:A} are given by (see Lemma \ref{braklem})
\beq
[x,b^{(j)}_k] ,\, [y,b^{(j)}_k] , \, [x,\ep^{(j)}_k] ,\, [y,\ep^{(j)}_k] \in \mathfrak{L}_b
\label{intro.24}
\eeq
for any $k\geq 2$ and $0\leq j\leq k{-}2$
as well as ($k_i\geq 2$ and $0\leq j_i\leq k_i{-}2$ for $i=1,2$)
\beq
[\ep^{(j_1)}_{k_1},b^{(j_2)}_{k_2}] \in \mathfrak{L}_b
\ \ \ \Rightarrow \ \ \ [\ep^{(j_1) {\rm TS}}_{k_1},b^{(j_2)}_{k_2}] \in 
\mathfrak{L}_b
\label{intro.25}
\eeq
In particular, the methods of appendix \ref{sec:D} allow us to express the brackets of (\ref{intro.24}) and $[\ep^{(j_1)}_{k_1},b^{(j_2)}_{k_2}] $ as a Lie polynomial in $b_k^{(j)}$ to any desired degree $k_1{+}k_{2}$, also see the ancillary file {\tt Brackets.nb} of the arXiv submission for all brackets up to and including $k_1{+}k_2=10$.

\subsubsection{Zeta generators}
\label{int.1.z}

While the Lie-algebra generators $x$, $y$, $\ep^{(j)}_{k}$, $b^{(j_2)}_{k_2}$ in table \ref{tab:A} control the generating series of functions of $z$ or $\tau$, their interplay with MZVs is described by so-called zeta generators. The latter are defined as Ihara derivations acting on the Lie algebra of the fundamental group of the thrice-punctured sphere \cite{Levine, DG:2005, Brown:anatomy, BrownTate, Brown:depth3} and the once-punctured torus \cite{EnriquezEllAss, Hain:KZB, Schneps:2015mzv, hain_matsumoto_2020, Dorigoni:2024iyt}. Denoting the genus-zero and genus-one incarnations of zeta generators by $M_w$ and $\sigma_w$, respectively, with $w\geq 3$ odd, then both of ${\rm Lie}[M_3, M_5,\ldots]$ and ${\rm Lie}[\sigma_3, \sigma_5,\ldots]$ are freely generated. At a geometric level, the genus-one zeta generators $\sigma_w$ capture the MZVs from the nodal sphere obtained from pinching one of the homology cycles of the torus, see e.g.\ appendix A of \cite{Dorigoni:2024iyt}.

The bracket relations among zeta generators and the generators in table \ref{tab:A} can be computed explicitly to any desired order (see sections \ref{sec:2.3.1}, \ref{sec:2.4} and e.g.\ \cite{Frost:2023stm, Dorigoni:2024iyt} for a recent computation-oriented account) and line up with the following normalization conditions:
\begin{itemize}
\item At genus zero, both ${\rm Lie}[e_0,e_1]$ and ${\rm Lie}[e_0',e_1',e_y']$ are separately normalized by zeta generators, with bracket relations
\beq
[e_0,M_w] = 0 \,, \ \ \ \ \ \
[e_1,M_w] = \big[P_w(e_0,e_1), e_1 \big] \, , \ \ \ \ \ \
w\geq 3 \ {\rm odd}
\label{intro.31}
\eeq
for MPLs in one variable as well as $[e_0',M_w] = 0$ and similar combinations of $P_w(\cdot,\cdot)$ for $[e_1',M_w],\, [e_y',M_w] \in {\rm Lie}[e_0',e_1',e_y']$ given by (\ref{svmpl.08}) for MPLs in two variables: The bracket relations for $[e_1,M_w]$, $[e_1',M_w]$, $[e_y',M_w]$ feature combinations of the Lie polynomials $P_w(e_0,e_1)$ of degree $w$ determined by the Drinfeld associator via (\ref{notsec.12}), e.g.\ $P_3(e_0,e_1)= [e_0{+}e_1,[e_1,e_0]]$ and $P_5(e_0,e_1)$ given by~(\ref{notsec.13}).
\item At genus one, the adjoint action of zeta generators $\sigma_w$ on $\ep_k^{(j){\rm TS}}$, $\ep_k^{(j)}$, $b_k^{(j)}$ produces infinite Lie series of unbounded degrees $\geq w{+}k{+}1$ subject to the normalization conditions
\beq
[\sigma_w , \ep_k^{(j){\rm TS}}] \in \mathfrak{L}_{\epsilon^{\rm TS}} \, , \ \ \ \ k\geq 4 \ {\rm even}, \, 0\leq j\leq k{-}2
\label{intro.32}
\eeq
for iterated Eisenstein integrals as well as
\beq
[\sigma_w , x], \, [\sigma_w , y] \in \mathfrak{L}_{b} \ \ \ \Rightarrow \ \ \ 
[\sigma_w , b_k^{(j)}] \in \mathfrak{L}_{b} \, , \ \ \ \ k\geq 2 , \, 0\leq j\leq k{-}2
\label{intro.33}
\eeq
The explicit Lie-series form of the brackets in (\ref{intro.32}) and (\ref{intro.33}) can be computed from the methods of section \ref{sec:2.4.2} and \cite{Dorigoni:2024iyt}.
\end{itemize}
Zeta generators play a key role in the construction of single-valued versions of MPLs, iterated Eisenstein integrals and eMPLs. They will enter through the following type of group-like series with single-valued MZVs as coefficients
\begin{align}
\mathbb M^{\rm sv}_{\sigma} &= \sum_{r=0}^{\infty} \sum_{i_1,\ldots,i_r \atop {\in 2\mathbb N+1} } \rho^{-1}\big({\rm sv}(f_{i_1} \ldots f_{i_r})\big) \, \sigma_{i_1}\ldots  \sigma_{i_r}
\label{intro.34} \\
&=1 +2 \sum_{i_1 \in 2\mathbb N+1} \zeta_{i_1} \sigma_{i_1}
+ 2 \sum_{i_1,i_2 \in 2\mathbb N+1} \zeta_{i_1} \zeta_{i_2}\sigma_{i_1} \sigma_{i_2} + \ldots
\notag
\end{align}
where the ellipsis features (conjecturally) indecomposable higher-depth MZVs in their $f$-alphabet representation reviewed in appendix \ref{sec:A.2}. We will also encounter variants of the series (\ref{intro.34}) with genus-zero generators $M_w$ or new objects $z_w$, $\Sigma_w(u)$ (see section \ref{int.1.b}) in the place of $\sigma_w$ which will be denoted by $\mathbb M^{\rm sv}_{0}$, $\mathbb M^{\rm sv}_{z}$ and $\mathbb M^{\rm sv}_{\Sigma(u)}$, respectively. In all the single-valued series in section \ref{int.1.d} below, the path-ordered exponentials of table \ref{tab:A} will be sandwiched between series of zeta generators $\mathbb M^{\rm sv}_{\bullet} $ and their inverses $(\mathbb M^{\rm sv}_{\bullet} )^{-1}$. As exemplified~by
\beq
(\mathbb M_\sigma^{\rm sv})^{-1} \, \mathbb X \,
\mathbb M_\sigma^{\rm sv} = \sum_{r=0}^{\infty} \sum_{i_1,i_2,\ldots,i_r \atop {\in 2\mathbb N+1} } \rho^{-1}\big({\rm sv}(f_{i_1} f_{i_2}\ldots f_{i_r})\big) \, \big[[ \ldots [[ \mathbb X ,\sigma_{i_1}] , \sigma_{i_2}], \ldots ], \sigma_{i_r} \big]
\label{intro.35}
\eeq
the conjugations by series in zeta generators of arbitrary series $\mathbb X$ will lead to infinite series in nested commutators which can be sequentially evaluated using the bracket relations of zeta generators. Since the Lie algebras where the series of table \ref{tab:A} take values are all normalized by the zeta generators $M_w$ and $\sigma_w$ at the corresponding genus, all orders of (\ref{intro.35}) preserve the respective algebras of $\mathbb X  \in \big\{\mathbb G_{e_0,e_1}(z),\,
\mathbb G_{e_0',e_1', e_y'}(y,z),\,
\mathbb I_{\epsilon^{\rm TS}}(\tau) ,\,
\mathbb I_{\epsilon,b}(u,v,\tau),\,
\mathbbm{\Gamma}_{x,y}(z,\tau) \big\}$.

\subsubsection{Overview of single-valued generating series}
\label{int.1.d}

With the path-ordered exponentials in table \ref{tab:A} and the series (\ref{intro.34}) in zeta generators in place, we can here preview one of the two main results of this work concerning the generating series of single-valued eMPLs in one variable $z$. The final prerequisite needed is the transposition operation 
\beq
(l_1 l_2\ldots l_r)^T = (l_r)^T \ldots (l_2)^T (l_1)^T
\label{intro.36}
\eeq
on concatenations of arbitrary Lie-algebra generators $l_i \in \{e_j, e_j', x,y,b_k^{(j)},\ep_k^{(j)} \}$ of table \ref{tab:A}. The transposition $(\ldots)^T$ is defined to leave individual braid operators $e_i$, $e_j'$ at genus zero invariant and to act on the genus-one generators with alternating signs, see section \ref{sec:2.4.4}:
\beq
\big( x^T, \, y^T \big) = ({-}x,y) \, , \ \ \ \
(\ep^{(j)}_{k})^T = (-1)^j \ep^{(j)}_{k} \, , \ \ \ \ 
(b^{(j)}_{k})^T = (-1)^j b_{k}^{(j)}
\label{intro.37}
\eeq
The significance of the transposition for the construction of single-valued generating series can be seen from the left-multiplicative monodromy $e^{2\pi i e_0}$ of $\mathbb G_{e_0,e_1}(z)$ in (\ref{intro.05}) which is compensated by the right-multiplicative monodromy $e^{-2\pi i e_0}$ of $\overline{ \mathbb G_{e_0,e_1}(z)^T }$. However, general monodromies of genus-zero MPLs (from looping the integration contour of their Pexp around singular points $\neq 0$) involve MZVs and MPLs in fewer variables. Accordingly, the single-valued completions of the MPLs series $\mathbb G_{e_0,e_1}(z)$ and $\mathbb G_{e_0',e_1',e_y'}(y,z)$ necessitate additional series beyond the naive composition with their complex conjugate transposes like $\overline{\mathbb G_{e_0,e_1}(z)^T}\mathbb G_{e_0,e_1}(z)$. Indeed, the constructions of single-valued MPLs in \cite{svpolylog, Broedel:2016kls, DelDuca:2016lad, Brown:2018omk} are equivalent to the following compositions of series 
\begin{align}
\mathbb G^{\rm sv}_{e_0,e_1}(z) &= (\mathbb M_0^{\rm sv})^{-1} \, \overline{ \mathbb G_{e_0,e_1}(z)^T } \,
\mathbb M_0^{\rm sv} \,
\mathbb G_{e_0,e_1}(z) 
\label{intro.39}
\\
\mathbb G_{e_0',e_1',e_y'}^{\rm sv}(y,z) &= \mathbb G^{\rm sv}_{e_0,e_1}(y)^{-1}\, (\mathbb M_0^{\rm sv})^{-1}  \,
\overline{  \mathbb G_{e_0',e_1',e_y'}(y,z) ^T } \,\mathbb M_0^{\rm sv} \,  \mathbb G^{\rm sv}_{e_0,e_1}(y) \,
 \mathbb G_{e_0',e_1',e_y'}(y,z)
 \notag
\end{align}
where the series $\mathbb G^{\rm sv}_{e_0,e_1}$ constructed in the first line is needed to make the second one fully explicit. These formulae are also noted in the overview table \ref{tab:ovsv} of single-valued generating series and discussed in more detail in section \ref{sec:2.3.3}.

The mechanism of cancelling monodromies by compositions of suitable generating series carries over to path-ordered exponentials for eMPLs at genus one where global and local monodromies result in different invertible left-multiplicative factors analogous to $e^{2\pi i e_0}$ in (\ref{intro.05}). The global monodromies as $z$ is moved around the $A$- and $B$-cycles of the torus to $z\rightarrow z{+}1$ and $z \rightarrow z{+}\tau$, respectively, produce Enriquez' elliptic multiple zeta values (eMZVs) \cite{Enriquez:Emzv, Broedel:2015hia, Matthes:Thesis, Zerbini:2018sox}. Since eMZVs only depend on $\tau$ and are expressible as iterated Eisenstein integrals with $\mathbb Q[(2\pi i)^{\pm 2\pi i}]$-linear combinations of MZVs as coefficients \cite{Broedel:2015hia, Lochak:2020}, they can (within this section) be viewed as genus-one analogues of the MPLs in one variable $y$ generated by $\overline{\mathbb G_{e_0,e_1}(y)^T}$ and $\mathbb G_{e_0,e_1}(y)$ in (\ref{intro.39}). In the same way as only the single-valued combinations of one-variable MPLs were needed for the single-valued completion (\ref{intro.39}) of 
$\mathbb G_{e_0',e_1',e_y'}^{\rm sv}(y,z)$ in two variables, it is the series \cite{Brown:2017qwo2, Dorigoni:2024oft}
\beq
\mathbb I^{\rm sv}_{\ep^{\rm TS}}(\tau)  = 
(\mathbb M^{\rm sv}_{\sigma})^{-1}  \, \overline{ \mathbb I_{\ep^{\rm TS}}(\tau)^T} \, \mathbb M^{\rm sv}_{\sigma}\,
\mathbb I_{\ep^{\rm TS}}(\tau)
\label{intro.41}
\eeq
in single-valued iterated Eisenstein integrals and thereby single-valued eMZVs that enters our construction of single-valued MPLs. The indirect analysis of monodromy cancellations in section \ref{sec:3.2} leads to the following main result of this work in Theorem \ref{3.cor:1}: 
\beq
\mathbbm{\Gamma}^{\rm sv}_{x,y}(z,\tau) =  \mathbb I^{\rm sv}_{\ep^{\rm TS}}(\tau)^{-1} \,
(\mathbb M^{\rm sv}_\sigma)^{-1} \,
\overline{\mathbbm{\Gamma}_{x,y}(z,\tau)^T} \,
  \mathbb M^{\rm sv}_\sigma \,
\mathbb I^{\rm sv}_{\ep^{\rm TS}}(\tau)  \,
\mathbbm{\Gamma}_{x,y}(z,\tau) 
\label{onemain}
\eeq
Its coefficients of different words in $b_k^{(j)}$ (and thus in $x,y$) produce single-valued eMPLs in one variable. They can furthermore be presented in a basis of non-holomorphic modular forms by passing to the series $\mathbbm{\Lambda}^{\rm sv}_{x,y}(z,\tau)$ in (\ref{cor.3.4c}), with the simple conversion formula (\ref{gavsla}) for the individual single-valued eMPLs.

\begin{table}[h]
\centering
\begin{tabular}{c||c||c}
\# vars.\ &genus zero&genus one
\\\hline\hline
1& $ \displaystyle \Bigg. \mathbb G^{\rm sv}_{e_i}(z) = (\mathbb M_0^{\rm sv})^{-1} \, \overline{ \mathbb G_{e_i}(z)^T } \,
\mathbb M_0^{\rm sv} \,
\mathbb G_{e_i}(z)  $ 
&$\displaystyle \Bigg.\mathbb I^{\rm sv}_{\ep^{\rm TS}}(\tau)  = 
(\mathbb M^{\rm sv}_{\sigma})^{-1}  \, \overline{ \mathbb I_{\ep^{\rm TS}}(\tau)^T} \, \mathbb M^{\rm sv}_{\sigma}\,
\mathbb I_{\ep^{\rm TS}}(\tau) $
\\\hline
2&$\bigg. \! \! \! \! \! \! \! \! \! \begin{array}{c}\mathbb G_{e_i'}^{\rm sv}(y,z) = \mathbb G^{\rm sv}_{e_i}(y)^{-1}\, (\mathbb M_0^{\rm sv})^{-1} \Big. \\
\phantom{\times\times}\times 
\overline{  \mathbb G_{e_i'}(y,z) ^T } \,\mathbb M_0^{\rm sv} \,  \mathbb G^{\rm sv}_{e_i}(y) 
 \mathbb G_{e_i'}(y,z) \Big. \end{array} $
 &$\! \! \! \! \!
 \begin{array}{c} \!\!\!  \mathbbm{\Gamma}^{\rm sv}_{x,y}(z,\tau) =  \mathbb I^{\rm sv}_{\ep^{\rm TS}}(\tau)^{-1} \,
(\mathbb M^{\rm sv}_\sigma)^{-1} \Big. \ \ \\
\phantom{x\times}
\times
\overline{\mathbbm{\Gamma}_{x,y}(z,\tau)^T} \,
  \mathbb M^{\rm sv}_\sigma \,
\mathbb I^{\rm sv}_{\ep^{\rm TS}}(\tau)  \,
\mathbbm{\Gamma}_{x,y}(z,\tau) \Big. \end{array} $
\end{tabular}
\caption{\label{tab:ovsv}\textit{Overview of single-valued generating series of MPLs at genus zero, see (\ref{svmpl.12}) and (\ref{svmpl.15}), and single-valued iterated Eisenstein integrals (\ref{lieg1.61}) or eMPLs (\ref{agrteq.02}) at genus one. The formulae in the same line exhibit an identical structure and showcase that the formulation via zeta generators exposes close analogies between genus zero and one. We are using shorthands $\mathbb G_{e_i}(z) = \mathbb G_{e_0,e_1}(z)$ and $\mathbb G_{e_i'}(y,z) = \mathbb G_{e_0',e_1',e_y'}(y,z)$ and similarly in their single-valued versions.}}
\end{table}

The right-hand side of (\ref{onemain}) is a series in $b_k^{(j)}$ since the conjugation by the zeta generators of $(\mathbb M^{\rm sv}_\sigma)^{\pm 1}$ can be iteratively carried out via (\ref{intro.35}) and preserves the Lie algebra of the series $\overline{\mathbbm{\Gamma}_{x,y}(z,\tau)^T} $ in antiholomorphic eMPLs defined by (\ref{ahologam}). The same is true for the additional conjugation by Tsunogai derivations coming from $\mathbb I^{\rm sv}_{\ep^{\rm TS}}(\tau)^{\pm 1}$ in (\ref{onemain}). 
Since the evaluation of commutators  $\sigma_w$ and $\ep_k^{(j){\rm TS}}$ is algorithmic to any desired order, the main result (\ref{onemain}) can be translated into explicit formulae for the individual single-valued eMPLs in one variable: finite combinations of eMPLs and their complex conjugates (from $\mathbbm{\Gamma}_{x,y}(z,\tau)$, $\overline{\mathbbm{\Gamma}_{x,y}(z,\tau)^T} $) with $\mathbb Q$-linear combinations of single-valued MZVs (from $(\mathbb M^{\rm sv}_\sigma)^{\pm 1}$) and single-valued iterated Eisenstein integrals (from $\mathbb I^{\rm sv}_{\ep^{\rm TS}}(\tau)^{\pm 1}$) as coefficients.

As detailed in table \ref{tab:ovsv} and Remark \ref{3.rmk:1}, the genus-one formulae (\ref{intro.41}) and (\ref{onemain}) for single-valued generating series depending on one and two complex variables mirror the structure of (\ref{intro.39}) for single-valued genus zero MPLs in one and two variables, respectively. This correspondence between genus zero and genus one brings the series $\mathbb G_{e_0,e_1}(z)  
\leftrightarrow
\mathbb I_{\ep^{\rm TS}}(\tau)$ and $\mathbb G_{e_0',e_1',e_y'}(y,z) \leftrightarrow \mathbbm{\Gamma}_{x,y}(z,\tau)$ into direct analogy which propagates to their single-valued counterparts in table \ref{tab:ovsv}, see section 3.2.4 of \cite{Dorigoni:2024oft} for the analogies $\mathbb G^{\rm sv}_{e_0,e_1}(z)\leftrightarrow
\mathbb I^{\rm sv}_{\ep^{\rm TS}}(\tau)$ in the one-variable case. As a genus-agnostic commonality of the single-valued series in table \ref{tab:ovsv}, the antiholomorphic series are conjugated by concatenation products of single-valued series in fewer variables -- MZVs with no moduli-dependence and potentially $\mathbb G^{\rm sv}_{e_0,e_1}(y)$ or $\mathbb I^{\rm sv}_{\ep^{\rm TS}}(\tau)$ in the two-variable case.

In particular, we find a genus-one echo of the fact that the genus-zero formulae for single-valued generating series in terms of zeta generators preserve the fibration bases of the contributing MPLs \cite{Frost:2023stm}: The ingredients $\overline{ \mathbb I_{\ep^{\rm TS}}(\tau)^T} $ and $\mathbb I_{\ep^{\rm TS}}(\tau)$ of $\mathbb I^{\rm sv}_{\ep^{\rm TS}}(\tau)$ in (\ref{intro.41}) present eMZVs and their complex conjugates in a canonical form as iterated Eisenstein integrals where all their algebraic relations are already incorporated \cite{Broedel:2015hia, Lochak:2020}.

Note that the conjugation by the composite series $\mathbb M^{\rm sv}_\sigma 
\mathbb I^{\rm sv}_{\ep^{\rm TS}}(\tau)$ in (\ref{onemain}) can be formulated as a change of alphabet of the letters of the series $\overline{\mathbbm{\Gamma}_{x,y}(z,\tau)^T} $
\beq
x \rightarrow  
\big(\mathbb I^{\rm sv}_{\ep^{\rm TS}}(\tau) \big)^{-1}\,
(\mathbb M^{\rm sv}_\sigma )^{-1}
\, x\, \mathbb M^{\rm sv}_\sigma \,
\mathbb I^{\rm sv}_{\ep^{\rm TS}}(\tau) \, , \ \ \ \ 
y \rightarrow  
\big(\mathbb I^{\rm sv}_{\ep^{\rm TS}}(\tau) \big)^{-1}\,
(\mathbb M^{\rm sv}_\sigma )^{-1}
\, y\, \mathbb M^{\rm sv}_\sigma \,
\mathbb I^{\rm sv}_{\ep^{\rm TS}}(\tau)
\eeq
in close analogy with the changes of alphabet that drive the construction of single-valued MPLs in \cite{svpolylog, Broedel:2016kls, DelDuca:2016lad} and the single-valued eMPLs of Baune, Broedel and M\"ockli \cite{Baune:2025svempl}.

\subsubsection{Equivariant iterated integrals}
\label{int.1.b}

In the case of iterated integrals over modular forms, the analogues of monodromies of (e)MPLs are inhomogeneous terms under ${\rm SL}_2(\mathbb Z)$ transformations which enter through left-multiplicative and invertible series and are referred to as cocycles. However, while the KZ or Brown-Levin connections gathering the integration kernels for (e)MPLs are single-valued, the connections for iterated $\tau$ integrals in the fourth and fifth line of table \ref{tab:A} are not modular invariant. Instead, the gauge transformed version 
\begin{align}
 {\mathbb D}_{\ep^{\rm TS}}(\tau) &=    2\pi i\, \dd \tau
\sum_{k=4}^\infty \frac{ (k{-}1) }{(2\pi i)^{k}} \sum_{j=0}^{k-2}\dfrac{(-1)^j}{j!}(2\pi i \tau)^j  {\rm G}_k(\tau)\epsilon_k^{{ (j)\rm TS}}  
\label{intr.71}\\
\mathbb D_{\ep,b}(u,v,\tau) &=  2\pi i \, \dd \tau
\sum_{k=2}^\infty \frac{ (k{-}1) }{(2\pi i)^{k}} \sum_{j=0}^{k-2}\dfrac{(-1)^j}{j!}\,(2\pi i \tau)^j\big[ {\rm G}_k(\tau)\epsilon_k^{(j)}
 - f^{(k)}(u\tau{+}v,\tau)b_k^{(j)} \big]  
\notag
\end{align}
of the connections in table \ref{tab:A} are equivariant under ${\rm SL}_2(\mathbb Z)$ in the sense that modular transformations mix finite-dimensional multiplets of kernels $\dd \tau \, \tau^j {\rm G}_k(\tau)$ or $\dd \tau \, \tau^j f^{(k)}(u\tau{+}v,\tau)$. These mixings will be expressed through the transformation of the above generators $b_k^{(j)}$, $\ep_k^{(j)}$  in (\ref{intro.21}) under the $\mathfrak{sl}_2$ algebra generated by $\ep_0$ in section \ref{int.1.a} and another derivation $\ep_0^\vee$ via
\begin{align}
[ \ep_0, b_{k}^{(j)} ] &= b_{k}^{(j+1)} \, , &[ \ep_0, \ep_{k}^{(j)} ] &= \ep_{k}^{(j+1)}  \label{intr.72} \\
[ \ep^\vee_0, b_{k}^{(j)} ] &= j(k{-}j{-}1)b_{k}^{(j-1)} \, , &[ \ep^\vee_0, \ep_{k}^{(j)}   ] &= j(k{-}j{-}1)\ep_{k}^{(j-1)} 
\notag 
\end{align}
which follow from the $\ep_0$ action on $x,y$ in (\ref{intr.69}) as well as $[\ep_0^\vee,x] = 0$ and $[\ep^\vee_0,y] = x$.
Modular transformations $T$ and $S$ of the kernels $\dd \tau \, \tau^j {\rm G}_k(\tau)$ or $\dd \tau \, \tau^j f^{(k)}(u\tau{+}v,\tau)$ in the connection (\ref{intr.71}) can then be absorbed into conjugations by 
\begin{align}
U_T = e^{2\pi i \ep_0} \, , \ \ \ \ \ \ U_S 
= e^{\ep_0^\vee} e^{-\ep_0} e^{\ep_0^\vee} (2\pi i )^{-[\ep_0,\ep_0^\vee]}  = e^{ \ep_0^\vee/(2\pi i)} e^{ -2\pi i \ep_0} e^{\ep_0^\vee/ (2\pi i)}
\label{intr.73}
\end{align}
in the Lie group of $\mathfrak{sl}_2$. The mixing (\ref{uschoice.01}) under $U_T$, $U_S$ of $b_k^{(j)}$ and $\ep_k^{(j)}$ with different $j$ but fixed $k$ identifies both of $\{ b_k^{(j)}: \, 0\leq j \leq k{-}2 \}$ and $\{ \epsilon_k^{(j)}: \, 0\leq j \leq k{-}2 \}$ as $(k{-}1)$-dimensional representations of $\mathfrak{sl}_2$ and implies the following {\it equivariant} transformation law of (\ref{intr.71}) 
\begin{align}
{\mathbb D}_{\ep^{\rm TS}}  (\gamma \cdot \tau)   = U_\gamma^{-1}\, {\mathbb D}_{\ep^{\rm TS}}(\tau)  \,U_\gamma 
\, , \ \ \ \
{\mathbb D}_{\ep,b}  \big(\gamma\cdot (u,v,\tau)\big)   = U_\gamma^{-1}\, {\mathbb D}_{\ep,b}  (u,v, \tau)  \,U_\gamma 
\label{intr.74}
\end{align}
where the ${\rm SL}_2(\mathbb Z)$ action on the moduli $(z,\tau)$ and the co-moving coordinates $u,v$ are given by
\begin{align}
\gamma\cdot (z,\tau) = \bigg( \frac{z}{c\tau {+} d},\frac{a\tau{+}b}{c\tau {+} d}\bigg) \, , \ \ \ \ \gamma\cdot ( \smallmatrix v \\ -u \endsmallmatrix )
= ( \smallmatrix a &b \\ c &d \endsmallmatrix ) ( \smallmatrix v \\ -u \endsmallmatrix ) \, , \ \ \ \  \gamma = ( \smallmatrix a &b \\ c &d \endsmallmatrix ) \in {\rm SL}_2(\mathbb Z)
\label{intr.75}
\end{align}
The $\exp(\mathfrak{sl}_2)$ element $U_\gamma$ in (\ref{intr.74}) associated with a general modular transformation $\gamma \in {\rm SL}_2(\mathbb Z)$ is understood to decompose into $U_T$, $U_S$ in (\ref{intr.73}) according to the decomposition of $\gamma$ into the generators $T$, $S$. 

The ${\rm SL}_2(\mathbb Z)$ transformations of the path-ordered exponentials $\mathbb I_{\ep^{\rm TS}}(\tau)$ and $\mathbb I_{\ep,b}(u,v,\tau)$ of ${\mathbb D}_{\ep^{\rm TS}}(\tau)$ and $\mathbb D_{\ep,b}(u,v,\tau)$ in (\ref{intr.71}) depart from the equivariant behavior (\ref{intr.74}) of the connections by
left-multiplicative cocycles such as $e^{2\pi i N^{\rm TS}}  e^{2\pi i \ep_0}$ under the modular $T$ transformation 
\beq
\mathbb I_{\ep^{\rm TS}}(\tau{+}1) = e^{2\pi i N^{\rm TS}} \, e^{2\pi i \ep_0} \,U_T^{-1}\, \mathbb I_{\ep^{\rm TS}}(\tau)\, U_T \, , \ \ \ \ \ \
N^{\rm TS} = -\ep_0 + \sum_{k=4}^\infty (k{-}1)   \frac{B_k}{k!} \, \ep^{\rm TS}_k
\label{intr.76}
\eeq
see appendix \ref{app:thm21} and \cite{brown2017multiple, Brown:2017qwo2, saad2020multiple, Dorigoni:2024oft, Kleinschmidt:2025dtk} for discussions of the analogous $S$-cocycle. Hence, the analogues of single-valued $z$-dependent generating series in the realm of iterated $\tau$ integrals of modular forms are equivariant generating series subject to
\begin{align}
{\mathbb I}^{\rm eqv}_{\ep^{\rm TS}}  (\gamma \cdot \tau)   = U_\gamma^{-1} \, {\mathbb I}^{\rm eqv}_{\ep^{\rm TS}}(\tau) \, U_\gamma 
\, , \ \ \ \
{\mathbb I}^{\rm eqv}_{\ep,b}  \big(\gamma\cdot (u,v,\tau)\big)   = U_\gamma^{-1} \, {\mathbb I}^{\rm eqv}_{\ep,b}  (u,v, \tau) \,  U_\gamma 
\label{intr.78}
\end{align}
i.e.\ that mirror the equivariant transformation (\ref{intr.74}) of the connections. In the solely $\tau$-dependent case, the equivariant series ${\mathbb I}^{\rm eqv}_{\ep^{\rm TS}}(\tau)$ with the same holomorphic $\tau$-derivative as $\mathbb I_{\ep^{\rm TS}}(\tau)$ was found in Brown's work \cite{Brown:2017qwo2} and presented in the following  explicit form in \cite{Dorigoni:2024oft}
\begin{align}
\mathbb I^{\rm eqv}_{\ep^{\rm TS}}(\tau)  = 
(\mathbb M^{\rm sv}_{z})^{-1}  \, \overline{ \mathbb I_{\ep^{\rm TS}}(\tau)^T} \, \mathbb M^{\rm sv}_{\sigma}\,
\mathbb I_{\ep^{\rm TS}}(\tau)
\label{intr.79}
\end{align}
This merely differs from the series $\mathbb I^{\rm sv}_{\ep^{\rm TS}}(\tau) $ in single-valued iterated Eisenstein integrals by the left-multiplicative factor $(\mathbb M^{\rm sv}_{z})^{-1}$ in the place of $(\mathbb M^{\rm sv}_{\sigma})^{-1}$ in (\ref{intro.41}). According to the discussion below (\ref{intro.34}), the expansion of $\mathbb M^{\rm sv}_{z}$ is obtained by replacing the genus-one zeta generators $\sigma_w$ in that equation by their arithmetic parts $z_w$. The latter are $\mathfrak{sl}_2$ singlets in the sense that $[\ep_0,z_w]= [\ep^\vee_0,z_w]= 0$ for any odd $w \geq 3$ and are the only contributions to $\sigma_w = z_w +\ldots$ which are not expressible in terms of  $\ep_k^{(j){\rm TS}}$,\footnote{A preferred way of splitting $\sigma_w = z_w +\ldots$ into an arithmetic part $z_w$ and the {\it geometric} terms in the ellipsis given by a Lie series in $\ep_k^{(j){\rm TS}}$ can be found in Theorem 5.4.1 of \cite{Dorigoni:2024iyt}.} see section \ref{sec:2.4.2} for further details. The necessity of the left-multiplicative $(\mathbb M^{\rm sv}_{z})^{-1}$ for equivariance of (\ref{intr.79}) can be seen from the ${\rm SL}_2(\mathbb Z)$ transformation $ \overline{ \mathbb I_{\ep^{\rm TS}}(\gamma\cdot\tau)^T} = U_\gamma^{-1}  \overline{ \mathbb I_{\ep^{\rm TS}}(\tau)^T}  U_\gamma  \overline{\mathbb S^T_\gamma}$ (with some cocycle $\mathbb S_\gamma$) of the adjacent factor since $(\mathbb M^{\rm sv}_{z})^{-1} U_\gamma^{-1} = U_\gamma^{-1} (\mathbb M^{\rm sv}_{z})^{-1} $ while $(\mathbb M^{\rm sv}_{\sigma})^{-1} U_\gamma^{-1} \neq U_\gamma^{-1} (\mathbb M^{\rm sv}_{\sigma})^{-1}$.

The explicit realization of the equivariant series ${\mathbb I}^{\rm eqv}_{\ep,b}  (u,v, \tau) $ in (\ref{intr.78}) depending on both $\tau$ and $z=u\tau{+}v$ is the second main result of this work in Theorem \ref{3.thm:2}:
\beq
\mathbb I^{\rm eqv}_{\ep,b}(u,v,\tau)  = 
(\mathbb M^{\rm sv}_{z})^{-1}\, \overline{ \mathbb I_{\ep,b}(u,v,\tau)^T} \, \mathbb M^{\rm sv}_{\Sigma(u)}\,
\mathbb I_{\ep,b}(u,v,\tau) 
\label{intr.80}
\eeq
The series $\mathbb M^{\rm sv}_{\Sigma(u)}$ in single-valued MZVs in the middle involves augmented zeta generators
\beq
\Sigma_w(u) = e^{u x} \big( P_w( t_{12}, t_{01})+\sigma_w  \big) e^{-ux} \, , \ \ \ \ \ \ w\geq 3 \ {\rm odd}  
\label{intr.81}
\eeq
instead of the $\sigma_w$ in the solely $\tau$-dependent case (\ref{intr.79}). By Lemma \ref{zwprop} and (\ref{intr.81}), $\mathbb M^{\rm sv}_{\Sigma(u)}$ is an explicitly known series in arithmetic zeta generators $z_w$ and the generators $b_k^{(j)}$, $\ep_k^{(j)}$ of the series $\mathbb I_{\ep,b}(u,v,\tau)$. The Lie polynomials $P_w$ are the same as in (\ref{intro.31}) at genus zero and produce Lie series in $b_k^{(j)}$ by the expressions (\ref{lieg1.17}) for the letters $t_{12}$, $t_{01}$ in terms of $x,y$.

In both of (\ref{intr.79}) and (\ref{intr.80}), all the arithmetic $z_w$ from the middle series $\mathbb M^{\rm sv}_{\sigma}$ and $\mathbb M^{\rm sv}_{\Sigma(u)}$ eventually cancel against those in the leftmost factors $(\mathbb M^{\rm sv}_{z})^{-1}$ once they are pulled through the series $\overline{ \mathbb I_{\ep^{\rm TS}}(\tau)^T}$ and $\overline{ \mathbb I_{\ep,b}(u,v,\tau)^T}$ antiholomorphic in $\tau$. This process results in commutators of $z_w$ with the generators of $\overline{ \mathbb I_{\ep^{\rm TS}}(\tau)^T}$ and $\overline{ \mathbb I_{\ep,b}(u,v,\tau)^T}$ which preserve the respective algebras since the zeta generators do. In this way, one finds an infinite series of equivariant iterated integrals of $\dd \tau \, \tau^j {\rm G}_k(\tau)$ and $\dd \tau \, \tau^j f^{(k)}(u\tau{+}v,\tau)$ by organizing the expansion of (\ref{intr.80}) into words in $b_k^{(j)}$, $\ep_k^{(j)}$. While the Pollack relations among $\ep_k^{(j){\rm TS}}$ and thus $\ep_k^{(j)}$ prevent the coefficients in this expansion from being well-defined at degrees $k_1{+}k_2{+}\ldots \geq 14$, the ratio
\begin{align}
\mathbbm{\Gamma}^{\rm sv}_{x,y}(z,\tau) &= 
\big(\mathbb I^{\rm eqv}_{\ep^{\rm TS}}(\tau)\big)^{-1} \,
\mathbb I^{\rm eqv}_{\ep,b}(u,v,\tau)  
\label{intr.83} \\
&= \big(\mathbb I^{\rm eqv}_{\ep^{\rm TS}}(\tau)\big)^{-1} \,
(\mathbb M^{\rm sv}_{z})^{-1}\, \overline{ \mathbb I_{\ep,b}(u,v,\tau)^T} \, \mathbb M^{\rm sv}_{\Sigma(u)}\,
\mathbb I_{\ep,b}(u,v,\tau)
\notag
\end{align}
is a series solely in the free-Lie-algebra generators $b_k^{(j)}$ with well-defined coefficients. In fact, the ratio (\ref{intr.83}) matches the series (\ref{onemain}) in single-valued eMPLs by Theorem \ref{3.thm:1}. Hence, our results on equivariant iterated integrals offer an alternative viewpoint on single-valued eMPLs and implement a change of fibration basis on the iterated integrals.

Note that the expansion coefficients of the equivariant series (\ref{intr.79}) and (\ref{intr.80}) do not yet furnish modular forms as one can see from the non-trivial $T$ transformation (\ref{intr.73}). However, one can convert linear combinations of equivariant iterated integrals into modular forms \cite{Brown:2017qwo} which arise as coefficients after conjugating $\mathbb I^{\rm eqv}_{\ep^{\rm TS}}(\tau)$, $\mathbb I^{\rm eqv}_{\ep,b}(u,v,\tau)$, $\mathbbm{\Gamma}^{\rm sv}_{x,y}(z,\tau)$ by \cite{Dorigoni:2024oft}
\beq
\Umod(\tau) = \exp \bigg( \frac{-\epsilon_0^\vee}{4\pi \Im \tau} \bigg)  \exp(2\pi i \bar \tau \epsilon_0)
\label{prevumod}
\eeq
in the Lie group of $\mathfrak{sl}_2$. The resulting generating series in {\it modular frame} are for instance discussed in section \ref{sec:2.5.3} and throughout section \ref{sec:3}. As detailed in Theorem \ref{3.cor:1} and Corollary \ref{corof44}, this implies that single-valued eMPLs can be brought into a basis of non-holomorphic modular forms which in turn establishes their link to eMGFs.

\subsubsection{Further results}
\label{int.1.c}

While the discussion of this section was tailored to state our two main results on equivariant iterated integrals and single-valued eMPLs in Theorems \ref{3.thm:2} and \ref{3.cor:1}, the main body of this work offers various follow-up results and links with eMGFs to be summarized here. 

Already at genus zero, single-valued MPLs can be engineered to either preserve the holomorphic differential equations or the antiholomorphic ones of the associated multi-valued MPLs, but not both at the same time. Our construction of single-valued eMPLs has the same feature for the differential equations in $\tau$ at fixed $u,v$ and, with a minor caveat (see section \ref{sec.4.diff}) for their $z$-derivatives. The additional contaminations in the antiholomorphic differential equations are pinpointed by relating our generating series of eMPLs to their complex conjugates in section \ref{sec:3.cc}. The net effect of complex conjugation boils down to conjugations by generating series in equivariant iterated Eisenstein integrals and MZVs, a phenomenon that is well-known from their genus-zero counterparts and discussed from an algebraic-geometry perspective in \cite{brown2017multiple, Brown:2017qwo2}.

We furthermore establish a direct link between single-valued (e)MPLs at genus zero and genus one by investigating the expansion of single-valued eMPLs around the cusp $\tau \rightarrow i\infty$ in section \ref{sec:3.3}. By tracking subleading orders in $q=e^{2\pi i \tau}$ at fixed values of the co-moving coordinates $u,v\in \mathbb R$ of $z=u\tau{+}v$, we show that the asymptotics of single-valued eMPLs at the cusp is expressible in terms of Laurent polynomials in $\pi \Im \tau$ whose coefficients are $\mathbb Q$-linear combinations of (products of) (i) single-valued MZVs as it is conjecturally the case for MGFs \cite{Zerbini:2015rss, DHoker:2015wxz} and (ii) single-valued MPLs in one variable $e^{2\pi i z}$.

A major line of motivation and valuable guidance for this work comes from the eMGFs in genus-one closed-string amplitudes reviewed in section \ref{sec:2.2}. Their differential equations in $z$ and $\tau$ \cite{Dhoker:2020gdz} and the resulting iterated-integral representations \cite{Hidding:2022vjf} were important hints for the existence and the properties of the equivariant generating series (\ref{onemain}) and (\ref{intr.80}). The implications of our main results for eMGFs, in particular the counting of their independent representatives, are discussed in section \ref{sec:5}.


\section*{Acknowledgments}

We are grateful to Francis Brown, Emiel Claasen, Marco David, Eric D'Hoker, Daniele Dorigoni, Mehregan Doroudiani, Claude Duhr, Benjamin Enriquez, Hadleigh Frost, Martijn Hidding, Deepak Kamlesh, Axel Kleinschmidt, Franca Lippert, Franziska Porkert, Carlos Rodriguez, Leila Schneps, Bram Verbeek and Federico Zerbini for combinations of inspiring discussions and collaboration on related topics. We furthermore thank Eric D'Hoker for helpful comments on a draft. YS is particularly grateful to Benjamin Enriquez for some enlightening discussions and feedback at the IRMA in Strasbourg. We would like to thank Konstantin Baune, Johannes Broedel and Yannis M\"ockli for notifying us of their closely related results \cite{Baune:2025svempl} and coordinating the arXiv submissions of both preprints.
YT would like to thank Nordita for hosting him during several stages of the project. YT is supported by the National Key R\&D Program of China (NO.\ 124B2094 and NO.\ 2020YFA0713000). The research of OS and YS is funded by the European Union under ERC Synergy Grant MaScAmp 101167287. Views and opinions expressed are however those of the author(s) only and do not necessarily reflect those of the European Union or the European Research Council. Neither the European Union nor the granting authority can be held responsible for them.

\section{Background material}
\label{sec:2}

The goal of this section is to review background material needed to motivate and state the results in later sections: elliptic multiple polylogarithms in section \ref{sec:2.1}, elliptic modular graph forms in section \ref{sec:2.2} and Lie-algebra techniques for single-valued genus-zero polylogarithms in section \ref{sec:2.3}.

\subsection{Elliptic multiple polylogarithms and their generating series}
\label{sec:2.1}

The space of rational functions can be extended to close under integration
by adjoining multiple polylogarithms (MPLs) $G(a_1,\ldots,a_r;z)$ reviewed in appendix \ref{sec:A.1}
\cite{LapDan:1953rq, GONCHAROV1995197, Brown:2009qja}.
At a geometric level, MPLs capture iterated integrals on the Riemann sphere with marked points, 
i.e.\ Riemann surfaces of genus zero. MPLs are constructed from meromorphic and 
single-valued one-forms or {\it integration kernels} 
$\frac{\dd z}{z{-}a}$ with $z,a \in \mathbb C$ and at most simple poles in $z$.

At genus one, the analogous function space on the torus $T^2 = \mathbb C/ (\mathbb Z {+} \tau \mathbb Z)$ 
with modular parameter $\tau \in \mathbb H = \{ \tau \in \mathbb C\, | \ \Im \tau > 0\}$ that closes under integration 
over marked points is based on elliptic multiple polylogarithms (eMPLs) \cite{BrownLev}. In contrast 
to the integration kernels of MPLs at genus zero, the integration kernels of eMPLs cannot be made to 
simultaneously enjoy single-valuedness and meromorphicity in $z \in T^2$ 
without introducing double poles. Instead, eMPLs can be constructed in three different
but eventually equivalent ways from
\begin{itemize}
\item[(i)] meromorphic integration kernels $g^{(n)}(z{-}a,\tau)$ with $n\in \mathbb N_0$ which have no poles other than $(z{-}a)^{-1}$ at $n=1$ for $z,a\in T^2$ and are multi-valued in both, see section \ref{sec:2.1.1};
\item[(ii)] single-valued integration kernels $f^{(n)}(z{-}a,\tau)$ with $n\in \mathbb N_0$ which have no poles other than $(z{-}a)^{-1}$ at $n=1$ and are non-meromorphic in $z,a\in T^2$, see section \ref{sec:2.1.2};
\item[(iii)] meromorphic and single-valued integration kernels with poles of order $\geq 2$, see \cite{Levrac, BEFZ:2110} and \cite{BEFZ:2212, Enriquez:2023emp} for two different approaches.
\end{itemize}
The eMPLs obtained from iterated integrals of (i), (ii) and (iii) all exhibit monodromies 
when the endpoint $z$ of the integration path is shifted around
the homology cycles of the torus ($z \rightarrow z{+}1$ or $z \rightarrow z{+}\tau$)
or around the singular points of the integrand (e.g.\ $z\in a{+}\mathbb Z {+} \tau \mathbb Z$ in case of $g^{(n)}(z{-}a,\tau)$ or $f^{(1)}(z{-}a,\tau)$).
One of the key results of this work is the construction
of single-valued eMPLs by combining iterated integrals of kernels 
(i) and (ii) with their complex conjugates and suitable $\tau$-dependent 
coefficients that result in good modular properties.

\subsubsection{eMPLs from meromorphic integration kernels}
\label{sec:2.1.1}

A first class (i) of integration kernels for eMPLs are the meromorphic
Kronecker-Eisenstein coefficients $g^{(n)}(z,\tau)$ with $z \in \mathbb C$ and
$n\in \mathbb N_0$ defined by (\ref{appA.10}), with $B$-cycle monodromies
given by (\ref{bmon}). For $r\in \mathbb N$, $n_1,\ldots,n_r \in \mathbb N_0$ and
$z,a_1,\ldots,a_r\in \mathbb C$, we recursively define meromorphic 
eMPLs by~\cite{Broedel:2014vla, Broedel:2017kkb}
\begin{align}
\tilde \Gamma\big( \smallmatrix n_1 &n_2 &\ldots &n_r \\ a_1 &a_2 &\ldots &a_r \endsmallmatrix ;z,\tau\big) = \int^z_0  \dd t  \, g^{(n_1)}(t{-}a_1,\tau) \, \tilde \Gamma\big( \smallmatrix n_2 &\ldots &n_r \\ a_2 &\ldots &a_r \endsmallmatrix ;t,\tau\big)
\, , \ \ \ \  \tilde \Gamma\big( \smallmatrix \emptyset \\ \emptyset  \endsmallmatrix ;z,\tau\big) =1
 \label{defempl}
\end{align}
Similar to the shuffle regularization of MPLs in (\ref{appA.02}), we shuffle regularize the endpoint 
divergence of eMPLs with $n_r=1$ at $t\rightarrow 0$ by assigning 
\cite{Broedel:2014vla, Broedel:2018iwv, Broedel:2019tlz}
\begin{align}
\tilde \Gamma\big( \smallmatrix 1 \\ 0 \endsmallmatrix ;z,\tau\big) 
&= \lim_{\varepsilon \rightarrow 0} \bigg( \int^z_\varepsilon  \dd t \, g^{(1)}(t,\tau)  + \log(1{-}e^{2\pi i \varepsilon}) \bigg)
\label{reggt} \\
&= \log(1{-}e^{2\pi i z}) - i\pi z + 2 \sum_{m,n=1}^\infty \frac{1}{m}\, \big( 1{-} \cos(2\pi m z) \big) \, q^{mn}
\notag
\end{align}
Throughout this work, we will restrict our discussion to eMPLs with $a_i=0$ for $i=1,\ldots,r$
which are referred to as {\it eMPLs in one variable}. Further specialization of eMPLs in
one variable to $z=1$, i.e.\ to the $A$-cycle $t \in (0,1)$ of the torus as an integration path, gives rise to
Enriquez' $A$-elliptic multiple zeta values \cite{Enriquez:Emzv}
\beq
\omega(n_1,\ldots,n_r ; \tau) = \tilde \Gamma\big( \smallmatrix n_r  &\ldots &n_1 \\ 0 &\ldots &0 \endsmallmatrix ;1,\tau\big) 
\label{defemzv}
\eeq
see \cite{Matthes:Thesis, Lochak:2020, Zerbini:2018sox} for further studies in the mathematics literature
and \cite{Broedel:2014vla, Broedel:2015hia, Broedel:2017jdo} from a string-theory perspective.

We shall perform the construction of single-valued eMPLs in one variable at the level of the following generating series:
Consider the free Lie algebra ${\rm Lie}[x,y]$ in two generators $x,y$ and the $\dd z$ component of the Calaque-Enriquez-Etingof (CEE) connection \cite{KZB}
\beq
\tilde{\mathbb K}_{x,y}(z,\tau) = \dd z \, {\rm ad}_x F(z,{\rm ad}_{ \frac{x}{2\pi i} },\tau) y  = \dd z \sum_{n=0}^\infty (2\pi i)^{1-n} g^{(n)}(z,\tau) \ad_x^{n}(y)
\label{defKcon}
\eeq
where the meromorphic Kronecker-Eisenstein series $F(z,\alpha,\tau) $ is defined
by (\ref{appA.10}) and\\  $\ad_x^k(y) = [x,\ad_x^{k-1}(y)]$ with $\ad_x(y) = [x,y]$.
Then, the following path-ordered exponential in the conventions of appendix \ref{sec:A.4}
generates eMPLs (\ref{reggt}) in one variable according to
\begin{align}
{\rm Pexp} \biggl(\int^0_z
 \tilde{\mathbb K}_{x,y}(z_1,\tau)\biggr) 
 &= 1 + \sum_{r=1}^\infty (-1)^r  \! \! \! \! \!\sum_{n_1,\ldots,n_r=0}^\infty \! \! \! \! \!
 (2\pi i)^{ \sum_{i=1}^r (1-n_i) }
 \tilde \Gamma\big( \smallmatrix n_1  &\ldots &n_r \\ 0  &\ldots &0 \endsmallmatrix ;z,\tau\big)
 \ad_x^{n_r}(y) \ldots  \ad_x^{n_1}(y)
 \notag \\
 &= 1- \sum_{n_1=0}^{\infty} (2\pi i)^{1-n_1}  \tilde \Gamma\big( \smallmatrix n_1 \\ 0 \endsmallmatrix ;z,\tau\big) \ad_x^{n_1}(y)
\notag \\
 &\quad + \sum_{n_1,n_2=0}^{\infty} (2\pi i)^{2-n_1-n_2}  \tilde \Gamma\big( \smallmatrix n_1 &n_2 \\ 0 &0 \endsmallmatrix ;z,\tau\big)
 \ad_x^{n_2}(y) \ad_x^{n_1}(y) - \ldots
 \label{genKempl}
\end{align}
with $\tilde \Gamma\big( \smallmatrix n_1  &\ldots &n_r \\ 0  &\ldots &0 \endsmallmatrix ;z,\tau\big)$ at $r\geq 3$ in the ellipsis, see (\ref{cctilK}) for the complex conjugate series.

\subsubsection{eMPLs from single-valued integration kernels}
\label{sec:2.1.2}

The second class (ii) of integration kernels for eMPLs in the above enumeration is
furnished by the doubly-periodic
Kronecker-Eisenstein coefficients $f^{(n)}(z,\tau)$ with $z \in \mathbb C$ and
$n\in \mathbb N_0$ defined by (\ref{appA.11}). By their antiholomorphic derivatives
in (\ref{pbarf}), a naive replacement $g^{(n)}(z,\tau) \rightarrow f^{(n)}(z,\tau)$ in the connection
$\tilde{\mathbb K}_{x,y}(z,\tau)$ in (\ref{defKcon}) would lead to a non-zero exterior derivative $\dd_z = \dd z \, \partial_z+ \dd \bar z \, \partial_{\bar z}$ and spoil its flatness.
This can be fixed by the additional $\dd \bar z$ component in the Brown-Levin connection \cite{BrownLev}\footnote{Our conventions for the generators $x,y$ can be mapped to those of \cite{BrownLev} by rescaling $(x,y) \rightarrow (2\pi i x, y/(2\pi i))$.}
\begin{align}
\mathbb J^{\rm BL}_{x,y}(z,\tau) &=  \frac{x}{2i \Im \tau}\, (\dd z - \dd \bar z)  + \dd z \, {\rm ad}_x \Omega(z,{\rm ad}_{ \frac{x}{2\pi i} },\tau) y 
\label{notsec.06} \\
&=  {-} \frac{x\,\dd \bar z }{2i \Im \tau} + \dd z \bigg\{ \frac{x}{2i \Im \tau}+2\pi i y + \sum_{n=1}^\infty (2\pi i)^{1-n} f^{(n)}(z,\tau) \ad_x^{n}(y)  \bigg\} \notag
\end{align}
The doubly-periodic Kronecker-Eisenstein series $\Omega(z,\alpha,\tau)$ is defined in (\ref{appA.11}),
and its antiholomorphic derivative in (\ref{pbarf}) implies the flatness condition away from $z=0$
\beq
\dd_z \mathbb J^{\rm BL}_{x,y}(z,\tau) = \mathbb J^{\rm BL}_{x,y}(z,\tau) \wedge \mathbb J^{\rm BL}_{x,y}(z,\tau) + \pi \, \dd z \wedge \dd \bar z \, [x,y] \, \delta^2(z)
\label{jblflt}
\eeq
Hence, the following path-ordered exponential only depends on the endpoints and the homotopy class
of the integration path,
\beq
{\rm Pexp} \biggl(\int^0_z \mathbb J^{\rm BL}_{x,y}(z_1,\tau)  \biggr) = \sum_{W \in \{ x,y\}^\times} \Gamma_{\rm BL}(W;z,\tau)
\label{jbexp}
\eeq
and defines an eMPL $\Gamma_{\rm BL}(W;z,\tau)$ for each word $W$ in the letters $x,y$. The Brown-Levin eMPLs are expressible in terms of the meromorphic eMPLs 
$\tilde \Gamma(\smallmatrix \ldots \\ \ldots \endsmallmatrix ;z,\tau)$ in (\ref{defempl}):
introducing the real ``co-moving'' coordinates $u,v$ of the marked point $z \in T^2$,
\beq
z = u \tau {+} v \, , \ \ \ \ \ \ u,v  \in \mathbb R
\label{comov}
\eeq
Brown-Levin eMPLs $\Gamma_{\rm BL}(W;z,\tau)$ are $\mathbb Q[(2\pi i)^{\pm 1}, u]$-linear combinations
of $\tilde \Gamma(\smallmatrix \ldots \\ \ldots \endsmallmatrix ;z,\tau)$. This can be seen from the
fact that the CEE and Brown-Levin connections (\ref{defKcon}) and (\ref{notsecCEE}) are related 
by the gauge transformation \cite{DHoker:2025szl}
\begin{align}
 \mathbb J^{\rm BL}_{x,y}(z,\tau)  =
e^{u x} \tilde{\mathbb K}_{x,y}(z,\tau) e^{-u x}  + (\dd e^{ux}) e^{-ux} 
\label{notsecCEE}
\end{align}
As a consequence, the path-ordered exponentials (\ref{genKempl}) and (\ref{jbexp})
are related by
\beq
{\rm Pexp} \biggl(\int^0_z \mathbb J^{\rm BL}_{x,y}(z_1,\tau)  \biggr)  = {\rm Pexp} \biggl(\int^0_z \tilde{\mathbb K}_{x,y}(z_1,\tau)  \biggr)  e^{-ux}
\label{gaugepexp}
\eeq
which generates the relations between the respective eMPLs upon expansion in words in $x,y$.

It is worth noting that the variant of the Brown-Levin connection
\begin{align}
\mathbb J^{\rm mod}_{x,y}(z,\tau) &=
\mathbb J^{\rm BL}_{x,y}(z,\tau) - \frac{x \, \dd z }{2i \Im \tau} = \mathbb J^{\rm BL}_{x,y+\frac{x}{4\pi \Im \tau}}(z,\tau)  \notag \\
&=  - \frac{x \,  \dd \bar z }{2i \Im \tau} + \dd z \, {\rm ad}_x \Omega(z,{\rm ad}_{ \frac{x}{2\pi i} },\tau) y 
\label{notsec.07}
\end{align}
exhibits the modularity property \cite{DHoker:2023vax}\footnote{The connection in Theorem 3.2 of the reference specializes to $\mathbb J^{\rm mod}_{x,y}(z,\tau)$ in (\ref{notsec.07}) at genus one upon setting $\hat a = 2\pi i y$ and $b = \frac{x}{2\pi i}$.}
\beq
\mathbb J^{\rm mod}_{x,y}\bigg(\frac{z}{c\tau{+}d},\frac{a\tau{+}b}{c\tau{+}d} \bigg) = 
\mathbb J^{\rm mod}_{(c\tau{+}d)x,\frac{y}{c\tau{+}d}}(z,\tau) \, , \ \ \ \ \ \
\big( \smallmatrix  a &b \\ c &d \endsmallmatrix \big) \in {\rm SL}_2(\mathbb Z)
\label{jmodtrf}
\eeq
This can be traced back to the modularity of the doubly-periodic Kronecker-Eisenstein kernels
\beq
 f^{(n)} \bigg(\frac{z}{c\tau {+}d},\frac{a\tau{+}b}{c\tau {+}d} \bigg) = 
(c\tau{+}d)^n\,  f^{(n)}(z,\tau)
 \label{appA.13}
\eeq
and implies that after the redefinition $y \rightarrow y+\frac{x}{4\pi \Im \tau}$, 
the combinations of Brown-Levin eMPLs in the expansion of
(\ref{jbexp}) are modular forms. However, this modularity argument
needs to be refined (see appendix \ref{app:gmod}) to also take the regularization
of endpoint divergences through the $\varepsilon$ cutoff of (\ref{reggt})
into account.

\subsection{Elliptic modular graph forms}
\label{sec:2.2}

The function space of single-valued eMPLs to be constructed in this work 
aims to capture a class of modular $T^2$ integrals in the string-theory literature
dubbed {\it elliptic modular graph forms} (eMGFs) 
\cite{DHoker:2018mys, DHoker:2020tcq, Basu:2020pey, Basu:2020iok, Dhoker:2020gdz}.
They generalize Zagier's single-valued eMPLs \cite{Ramakrish} beyond depth one and
were firstly related to single-valued MPLs at genus zero in \cite{DHoker:2015wxz}.
The differential equations of eMGFs \cite{Dhoker:2020gdz}
were solved in terms of iterated integrals over modular parameters for a variety of cases 
\cite{Hidding:2022vjf}. A key result of this work is to establish this iterated-integral 
structure at the level of generating series, namely to construct an infinite family
of non-holomorphic modular forms with the properties of eMGFs from iterated
integrals and their complex conjugates. The link of the generating series in section \ref{sec:3} to eMGFs is discussed in section \ref{sec:5}.

\subsubsection{Definition of dihedral eMGFs}
\label{sec:2.2.1}

Simple prototypes of eMGFs are furnished by the meromorphic lattice sums that produce holomorphic 
Eisenstein series ${\rm G}_k(\tau)$ and Kronecker-Eisenstein coefficients $f^{(k)}(z,\tau)$
at fixed comoving coordinates $u,v\in \mathbb R$ of $z=u\tau{+}v \in\mathbb C$,
\begin{align}
{\rm G}_k(\tau) &=  \sum_{(m,n) \in \Lambda'_\tau} \frac{1}{(m\tau{+}n)^k} \, , \ \ \ \ k\geq 3
\label{latsumgf} \\
f^{(k)}(z,\tau) &= - \sum_{(m,n) \in \Lambda'_\tau} \frac{e^{2\pi i (nu-mv) }}{(m\tau{+}n)^k} \, , \ \ \ \ k\geq 2
\notag 
\end{align}
where $z\neq \mathbb Z{+}\tau \mathbb Z$ in case of $k=2$ and the summation range for the integers $m,n$ is given by
\beq
\Lambda'_\tau = \mathbb Z^2 \setminus \{(0,0)\}
\label{deflat}
\eeq
Generic eMGFs generalize these lattice sums to feature additional
anti-holomorphic factors $(m\bar \tau{+}n)^{-1}$ in the summand and/or multiple pairs $(m_1,n_1), (m_2,n_2),\ldots \in \Lambda'_\tau$ 
of summation variables. Zagier's single-valued elliptic polylogarithms \cite{Ramakrish}
\beq
\dplus{a \\ b}(z,\tau) = \frac{(\Im \tau)^a}{\pi^b} \sum_{(m,n) \in \Lambda'_\tau} \frac{e^{2\pi i (nu-mv) }}{(m\tau{+}n)^a (m\bar \tau{+}n)^b}
\label{defZag}
\eeq
for instance are defined for arbitrary $z = u\tau{+}v$ and integer exponents $a,b\in \mathbb N_0$ with $a{+}b\geq 2$
(using the second Kronecker limit formula for $a{+}b= 2$) and related to meromorphic eMPLs (\ref{defempl})
and their complex conjugates in \cite{Broedel:2019tlz}. They are special cases of the dihedral eMGFs \cite{Dhoker:2020gdz}
\beq
\cplus{a_1 &\ldots &a_r \\ b_1 &\ldots &b_r \\ z_1 &\ldots &z_r}(\tau) = \frac{(\Im \tau)^{a_1+\ldots+a_r}}{\pi^{b_1+\ldots+b_r}} \! \!  \! \!  \! \!   \sum_{(m_1,n_1),\ldots,(m_r,n_r) \in \Lambda'_\tau}   \! \!  \! \!  \! \! \delta\bigg(\sum_{j=1}^r (m_j, n_j) \bigg) \prod_{i=1}^r  \frac{e^{2\pi i (n_i u_i -m_i v_i) }}{(m_i\tau{+}n_i)^{a_i} (m_i\bar \tau{+}n_i)^{b_i}}
\label{defdiC}
\eeq
with $r\in \mathbb N$, $a_i,b_i \in \mathbb N_0$ and $z_i = u_i \tau{+} v_i \in \mathbb C$
with $u_i,v_i \in \mathbb R$ for $i=1,2,\ldots,r$.
By the Kronecker delta that enforces $\sum_{j=1}^r(m_j,n_j) = (0,0)$, the multiple sums in 
(\ref{defdiC}) are absolutely convergent if $a_i{+}a_j{+} b_i {+}b_j > 2$ for any pair $(i,j)$ 
in the range $1\leq i<j\leq r$. Cases
with $a_i{+}a_j{+} b_i {+}b_j = 2$ for some of these pairs $(i,j)$ in turn
can be dealt with through the second Kronecker limit formula.
Note that (\ref{latsumgf}), (\ref{defZag}) and (\ref{defdiC}) are connected through the special cases
\begin{align}
\dplus{a \\ 0}(z,\tau) &= - (\Im \tau)^a  f^{(a)}(z,\tau) \label{speceMGF}
\\
\cplus{a_1 &a_2 \\ b_1 &b_2 \\ z_1 &z_2}(\tau)  &= (-1)^{a_2 + b_2} \dplus{a_1{+}a_2 \\ b_1{+}b_2}(z_1{-}z_2,\tau)
\notag
\end{align}
and eMGFs (\ref{defdiC}) with a single column $r=1$ vanish due to their empty summation range.
The dependence of (\ref{defZag}) and (\ref{defdiC}) on $z_j=u_j\tau{+}v_j$ solely enters
through the Fourier modes $\sim e^{2\pi i (n_j u_j -m_j v_j)}$ with $n_j,m_j \in \mathbb Z$.
This manifests doubly-periodicity of  both ${\cal D}^+ \! [\ldots](z,\tau)$ and ${\cal C}^+ \! [\ldots](\tau)$
under $z_j \rightarrow z_j{+}1$ and $z_j \rightarrow z_j{+}\tau$, i.e.\ single-valuednesss in all points $z_j \in T^2$.
Moreover, the lattice sums expose the modular properties analogous to those of $f^{(n)}$ in (\ref{appA.13}),
\begin{align}
\dplus{a \\ b}\bigg(\frac{z}{\gamma\tau{+}\delta} ,\frac{\alpha\tau{+} \beta}{\gamma\tau {+}\delta} \bigg) &= (\gamma\bar \tau{+}\delta)^{b-a} \dplus{a \\ b}(z,\tau)  \, , \ \ \ \ \big( \smallmatrix \alpha & \beta \\  \gamma & \delta \endsmallmatrix \big) \in {\rm SL}_2(\mathbb Z)
\notag \\
\cplus{a_1 &\ldots &a_r \\ b_1 &\ldots &b_r \\ z_1/(\gamma\tau{+}\delta) &\ldots &z_r/(\gamma\tau{+}\delta)}\bigg(\frac{\alpha\tau{+} \beta}{\gamma\tau {+}\delta} \bigg) &=(\gamma\bar \tau{+}\delta)^{b_1+\ldots+b_r-a_1-\ldots-a_r}  \cplus{a_1 &\ldots &a_r \\ b_1 &\ldots &b_r \\ z_1  &\ldots &z_r}(\tau)
 \label{modeMGF}
\end{align}
The powers of $\Im \tau$ in our
normalization conventions for (\ref{defZag}) and (\ref{defdiC}) are chosen to ensure
that modular transformations only feature antiholomorphic factors of 
$(\gamma\bar \tau{+}\delta)$ as opposed to holomorphic ones $(\gamma\tau{+}\delta)$, i.e.\
that eMGFs are non-holomorphic modular forms of weight $(0, \sum_{i=1}^r (b_i{-}a_i))$.

\subsubsection{General definition of eMGFs}
\label{sec:2.2.2}

In order to motivate the word {\it graph} in the terminology for eMGFs, we note that
their dihedral instances (\ref{defdiC}) admit representations as convolution integrals over the torus \cite{Dhoker:2020gdz}
\beq
\cplus{a_1 &\ldots &a_r \\ b_1 &\ldots &b_r \\ z_1 &\ldots &z_r}(\tau) = \int_{T^2} \frac{\dd^2 z}{\Im \tau}\,
\prod_{i=1}^r \dplus{a_i \\ b_i}(z_i{-}z,\tau)
\label{intdih}
\eeq
where the modular invariant measure with $\dd^2 z= -\frac{i}{2} \dd z \wedge \dd \bar z$ is normalized
to yield 
\beq
\int_{T^2} \frac{\dd^2 z}{\Im \tau} \, e^{2\pi i (nu-mv) }
= \int_{0}^1  \dd u \int_0^1 \dd v \, e^{2\pi i (nu-mv) }
= \delta_{n,0} \delta_{m,0}
\label{fourint}
\eeq
One can associate graphs with eMGFs by assigning
\begin{itemize}
 \item different types of vertices to integrated points $z$ and unintegrated ones $z_i \in T^2$;
 \item directed edges with decoration $(a_i, b_i) \in \mathbb N_0^2$ 
between vertices $z_i,z$ for each factor of $\dplus{a_i \\ b_i}(z_i{-}z,\tau)$
in the integrand of (\ref{intdih}).
\end{itemize} 
Generic eMGFs beyond the dihedral class in 
(\ref{defdiC}) and (\ref{intdih}) are obtained by integrating products of 
$\dplus{a_{ij} \\ b_{ij}}(z_i{-}z_j,\tau)$ with $a_{ij},b_{ij} \in \mathbb N_0$ over several points
$z_1,\ldots,z_r \in T^2$, i.e.\ through integrals of the form
\beq
\prod_{k=1}^r \int_{T^2} \frac{\dd^2 z_k}{\Im \tau} \,
 \prod_{i=1}^{r}   \prod_{j=i+1}^{r+s} 
\dplus{a_{ij} \\ b_{ij}}(z_i{-}z_j,\tau) 
\label{genint}
\eeq
which may depend on $s \in \mathbb N_0$ unintegrated points $z_{r+1},\ldots,z_{r+s} \in T^2$.
By repeated use of (\ref{fourint}), one can straightforwardly convert  the
integral representation (\ref{genint}) of general eMGFs into lattice sums. When interpreting
$m\tau{+}n$ and $m\bar\tau{+}n$ as the components of torus momenta
on the lattice (\ref{deflat}), then each integration over $z_k$ imposes momentum
conservation at the respective vertex of the associated graph.
For simple topologies of the graph associated with the integrands of (\ref{genint}), one
recovers the expression (\ref{defdiC}) for dihedral eMGFs, and the next more general
case of trihedral eMGFs is discussed in appendix A of \cite{Dhoker:2020gdz}.

Similar to our terminology for eMPLs (\ref{defempl}) with $a_i = 0$, we refer to eMGFs
(\ref{defdiC}) or (\ref{genint}) where all the unintegrated $z_i$ are equal to $0$ or the 
same point $z\in T^2$ as {\it eMGFs in one variable}. Integrals of the form (\ref{genint})
where all the unintegrated $z_i$ vanish are known as {\it modular graph forms} (MGFs) 
\cite{DHoker:2015wxz, DHoker:2016mwo}
which systematize the low-energy expansion of closed-string amplitudes at genus one 
\cite{Green:1999pv, Green:2008uj, DHoker:2015gmr}, see \cite{Gerken:review, DHoker:2024book} for an overview. 

\subsubsection{Differential equations of eMGFs}
\label{sec:2.2.3}

Both the integral and the lattice-sum representation of eMGFs
manifest their double periodicity in the unintegrated $z_i$ and their ${\rm SL}_2(\mathbb Z)$
transformations as modular forms of weight $(0, \sum_{i=1}^r \sum_{j=i+1}^{r+s} (b_{ij}{-}a_{ij}))$ 
as in (\ref{modeMGF}). However, these representations make it difficult to
foresee the wealth of algebraic relations among eMGFs
\cite{DHoker:2020tcq, Basu:2020pey, Basu:2020iok} or their expansion around the
cusp $\tau \rightarrow i \infty$.

Similar to the earlier investigation of MGFs \cite{DHoker:2015gmr, DHoker:2016mwo, DHoker:2016quv}\footnote{See \cite{Gerken:2020aju} for a {\sc Mathematica} package implementing relations among MGFs.},
relations among eMGFs can be conveniently found by studying their differential equations in the 
modular parameter \cite{Dhoker:2020gdz}. Their lattice-sum representations as in (\ref{defdiC}) are a convenient starting point to
evaluate their holomorphic $\tau$-derivatives at fixed co-moving coordinates $u_i,v_i$ of the points $z_i$,
\beq
2\pi i (\Im \tau)^2 \partial_\tau \cplus{a_1 &\ldots &a_r \\ b_1 &\ldots &b_r \\ z_1 &\ldots &z_r}(\tau) = 
\sum_{i=1}^r a_i \,
 \cplus{a_1 &\ldots &a_i {+} 1 &\ldots &a_r \\ b_1 &\ldots &b_i {-} 1 &\ldots &b_r \\ z_1 &\ldots &z_i &\ldots &z_r}(\tau) 
\label{nabdiC} 
\eeq
Similar formulae arise for derivatives in the marked points,
\beq
(\Im \tau)  \partial_z \cplus{a_1 &\ldots &a_r &a_{r+1} &\ldots &a_{r+s} \\ b_1 &\ldots &b_r &b_{r+1} &\ldots &b_{r+s} \\ z_1 &\ldots &z_r &z &\ldots &z}(\tau) =  \sum_{i=r+1}^{r+s} 
 \cplus{a_1 &\ldots &a_r  &a_{r+1} &\ldots &a_{i} &\ldots &a_{r+s} \\ b_1 &\ldots &b_r
 &b_{r+1} &\ldots &b_{i}-1 &\ldots &b_{r+s} \\ z_1 &\ldots &z_r &z &\ldots &z &\ldots &z}(\tau) 
\label{dzdiC} 
\eeq
assuming $z_1,\ldots,z_r$ to be unaffected by $\partial_z$. The antiholomorphic derivatives $\partial_{\bar \tau}$
and $\partial_{\bar z}$ can be derived from (\ref{nabdiC}) and (\ref{dzdiC}) through the complex-conjugation property
\beq
\overline{\cplus{a_1 &\ldots &a_r \\ b_1 &\ldots &b_r \\ z_1 &\ldots &z_r}(\tau) }
= \bigg( \prod_{j=1}^r (\pi \Im \tau)^{a_j-b_j} \bigg) \cplus{ b_1 &\ldots &b_r \\ a_1 &\ldots &a_r \\ -z_1 &\ldots &-z_r}(\tau) 
\label{ccemgf}
\eeq
The action of both $\partial_\tau$ and $\partial_z$
lowers the exponents $b_i$ of the antiholomorphic factors $(m_i\bar \tau{+}n_i)^{-1}$ in the lattice sum (\ref{defdiC})
and thereby gradually simplifies the eMGF: repeated derivatives in $\tau$ or $z$ inevitably lead to the situation
where two or more entries $b_i$ in the second line of $\cplus{a_1 &\ldots &a_r \\ b_1 &\ldots &b_r \\ z_1 &\ldots &z_r}(\tau)$ vanish. In these cases, the technique of {\it holomorphic subgraph reduction}, initially developed for MGFs in \cite{DHoker:2016mwo, Gerken:2018zcy} and extended to eMGFs in \cite{Dhoker:2020gdz}, 
leads to simplifications of the form
\beq
 \cplus{a_1 &a_2 &A \\ 0 &0 &B  \\ z_1 &z_2 &Z}(\tau) = 
 \sum^{a_1+a_2}_{k=0} (\Im \tau)^k f^{(k)}(z_1{-}z_2,\tau)  \cplus{A_k \\ B_k  \\ Z_k}(\tau) 
\label{HSRemgf} 
\eeq
where $A=[a_3,\ldots,a_r]$, $B=[b_3,\ldots,b_r]$ and $Z=[z_3,\ldots,z_r]$ gather an arbitrary number of entries $a_i,b_i \in \mathbb N_0$ and $z_i \in \mathbb C$. The shorthands $A_k,B_k, Z_k$ 
on the right side have a similar meaning, where $ \cplus{A_k \\ B_k  \\ Z_k}(\tau) $ is
a placeholder for a combination of dihedral eMGFs which can be determined 
from the techniques in section 3.3 of \cite{Dhoker:2020gdz}. The entries $B$ on the left
side of (\ref{HSRemgf}) are assumed to contain no additional zeros besides the two exposed ones,
and the $B_k$ on the right side have at most one vanishing entry. Taking further derivatives of these 
$ \cplus{A_k \\ B_k  \\ Z_k}(\tau) $ in $z$ and $\tau$ via (\ref{nabdiC}) and (\ref{dzdiC}) leads to
further instances of repeated zeros in the second line and new opportunities to perform
holomorphic subgraph reduction (\ref{HSRemgf}). Note that holomorphic subgraph reduction
can also be understood as applications of Fay identities among bilinears of $f^{(k)}$ in
the integrands of (\ref{intdih}) and (\ref{genint}) \cite{Gerken:2020aju}.

All of the steps in (\ref{nabdiC}), (\ref{dzdiC}) and (\ref{HSRemgf}) generalize beyond the dihedral topology,
i.e.\ to the most general eMGFs in (\ref{genint}). Iterations of these steps are referred
to as the {\it sieve algorithm} which was originally developed for MGFs \cite{DHoker:2016mwo} and is explained in section 3.4 of \cite{Dhoker:2020gdz}
to shed light on both the differential structure and algebraic relations of eMGFs.

\subsubsection{Towards iterated integrals for eMGFs}
\label{sec:2.2.4}

The differential equations and reality properties of eMGFs outlined in the previous section
implies that their dependence on $\tau$ at fixed co-moving coordinates $u_i,v_i \in \mathbb R$
can be captured via holomorphic iterated integrals of $\dd \tau\, f^{(k)}((u_i{-}u_j)\tau{+} (v_i{-}v_j),\tau)$ and their complex conjugates.
In the one-variable case where all the $z_i$ are either zero or given by the same $u\tau{+}v$,
this amounts to three types of integration kernels
\beq
\dd \tau \ \ \ \ {\rm and} \ \ \ \ 
\dd \tau \, f^{(k)}(u\tau{+} v,\tau) \, , \ \ k\geq 2 \ \ \ \ {\rm and} \ \ \ \ 
\dd \tau \, {\rm G}_k(\tau) \, , \ \ k\geq 4
\label{emgfkers}
\eeq
at fixed $u,v \in \mathbb R$, where the bounds on $k$ can for instance be seen from 
the generating-function methods in \cite{Dhoker:2020gdz}. As is familiar from iterated
integrals over holomorphic Eisenstein series \cite{brown2017multiple}, it is convenient to trade the
first kernel $\dd \tau$ in (\ref{emgfkers}) for insertions of non-negative powers of $\tau$ along
with the remaining kernels $\dd \tau \, f^{(k)}(u\tau{+} v,\tau)$ and $\dd \tau \, {\rm G}_k(\tau)$.\footnote{See
for instance \cite{Broedel:2018izr, Hidding:2022vjf, Dorigoni_2022} for conversion techniques.} Accordingly, 
we shall employ the iterated Kronecker-Eisenstein integrals with $k_i\geq 2$
\begin{align}
\eez{j_1 &\ldots &j_r}{k_1 &\ldots &k_r}{z_1 &\ldots &z_r}{\tau} &= (2\pi i)^{1+j_r - k_r} \int^\tau_{i\infty} \dd \tau_r \, \tau_r^{j_r} f^{(k_r)}(u_r \tau_r {+} v_r,\tau_r) \eez{j_1 &\ldots &j_{r-1}}{k_1 &\ldots &k_{r-1}}{z_1 &\ldots &z_{r-1}}{\tau_r}
\, , 
 \ \ \ \ \eez{\emptyset}{\emptyset}{\emptyset}{\tau}  = 1
\label{bsc.07} 
\end{align}
to represent eMGFs as initiated in \cite{Hidding:2022vjf}, and the
integration is performed at fixed $u_r,v_r \in \mathbb R$ to ensure homotopy invariance.
Here and throughout, the endpoint divergences of iterated $\tau$ integrals from the
region where $\tau_r\rightarrow i\infty$ are shuffle regularized through the tangential-base-point
method of \cite{brown2017multiple} whose net effect is $\int^\tau_{i\infty} \tau_r^n \dd \tau_r
= \frac{\tau^{n+1}}{n{+}1}$ for $n\in \mathbb N_0$.

We will only consider the iterated integrals (\ref{bsc.07}) with entries $0 \leq j_i \leq k_i{-}2$
in the upper row to ensure that the $(k{-}1)$-tuples
of differential forms $\dd \tau \, \tau^j f^{(k)}(u\tau{+} v,\tau)$ at fixed $k$ closes 
under ${\rm SL}_2(\mathbb Z)$ without introducing negative powers of $\tau$.

Finally, integration kernels $\dd \tau \, \tau^j {\rm G}_k(\tau)$ with $k\geq 4$ even and
$0\leq j \leq k{-}2$ are represented by means of empty slots in the last line of
\begin{align}
\eez{j_1 &\ldots &j_{r-1} &j_r}{k_1 &\ldots &k_{r-1}&k_r}{z_1 &\ldots &z_{r-1} &}{\tau} &= - (2\pi i)^{1+j_r - k_r} \int^\tau_{i\infty} \dd \tau_r \, \tau_r^{j_r} {\rm G}_{k_r}(\tau_r) \eez{j_1 &\ldots &j_{r-1}}{k_1 &\ldots &k_{r-1}}{z_1 &\ldots &z_{r-1}}{\tau_r}
\label{absc.07}
\end{align}
where the same recursive definition holds in case of empty slots in the place of some of the
$z_1,\ldots,z_{r-1}$. The extra minus sign in (\ref{absc.07}) relative to (\ref{bsc.07}) is due to the consequence
\beq
{\rm G}_k(\tau) = - f^{(k)}(0,\tau) \, , \ \ \ \ \ \ k\geq 3
\eeq
of the lattice-sum representations (\ref{latsumgf}). However, note that $f^{(2)}(z,\tau)$ does not have a well-defined
$z \rightarrow 0$ limit. Moreover, the limit $z_r \rightarrow 0$ limit in the integrand of (\ref{bsc.07})
generically does not commute with integration of $\dd \tau_r \, (\tau_r)^j f^{(k)}(u\tau_r{+} v,\tau_r)$,
see section 3.2 of \cite{Hidding:2022vjf} for simple counterexamples.
That is why we represent $\dd \tau_r \, (\tau_r)^j {\rm G}_j(\tau_r)$ kernels through a column
$\smallmatrix j_r \\ k_r \\ \endsmallmatrix$ rather than $\smallmatrix j_r \\ k_r \\ 0 \endsmallmatrix$ in the square bracket of (\ref{absc.07}). Note that section \ref{sec:itinemgf} organizes the iterated integrals ${\cal E}[\ldots;\tau]$ of (\ref{bsc.07}) and (\ref{absc.07}) into specific linear combinations $\beta_{\pm}[\ldots;\tau]$ of  tailored to the modularity properties of eMGFs.

Since eMGFs close under complex conjugation up to powers of $\Im \tau$, see for instance 
(\ref{ccemgf}), their $\bar \tau$-dependence can be accounted for through the complex conjugates
of the iterated integrals (\ref{bsc.07}) and (\ref{absc.07}). Hence, one can attain iterated
$\tau$-integral representations of eMGFs by iteratively integrating their total differential
$\dd_\tau = \dd \tau \, \partial_\tau + \dd \bar \tau \, \partial_{\bar \tau}$,
\beq
 \cplus{A \\ B  \\ Z}(\tau)   =  \cplus{A \\ B  \\ Z}(i \infty) + \int^\tau_{i\infty} \dd_{\tau_1}  \cplus{A \\ B  \\ Z}(\tau_1)  
 \label{totdtau}
\eeq
provided that their $u_i,v_i$-dependent integration constants at $\tau \rightarrow i \infty$ are known.
These integration constants turn out
to comprise multiple zeta values and Bernoulli polynomials in $u$
which compensate for the fact that the iterated integrals 
(\ref{bsc.07}) and (\ref{absc.07}) generically do not transform as modular forms of ${\rm SL}_2(\mathbb Z)$.
A systematic method to determine these integration constants is described in  \cite{Hidding:2022vjf} along
with applications to a variety of eMGFs. The generating series $\mathbb I^{\rm eqv}_{\ep,b}(u,v,\tau)$ 
to be presented in section \ref{sec:3.1} below determines the completions of
iterated integrals (\ref{bsc.07}) in the one-variable case $z_i=z$ with 
arbitrary $k_i,j_i$ satisfying $0 \leq j_i \leq k_i{-}2$ to attain the desired modular properties.

An alternative path towards iterated-integral representations of eMGFs is to
integrate their differential equations in $z$ instead of those in $\tau$ via (\ref{totdtau}).
By the $z$-derivatives (\ref{dzdiC}) of dihedral ${\cal C}^+$ together with holomorphic subgraph reduction (\ref{HSRemgf}), this alternative 
path expresses eMGFs in terms of iterated integrals of $\dd z \, f^{(k)}(z,\tau)$ 
and $\dd \bar z \,\overline{ f^{(k)}(z,\tau)}$,
i.e.\ Brown-Levin eMPLs and their complex conjugates. The integration constants
analogous to the ${\cal C}^+\![\ldots](i \infty)$ in (\ref{totdtau}) are then furnished
by MGFs obtained from the $z_i=0$ limit of eMGFs and eventually elliptic multiple
zeta values (\ref{defemzv}) as anticipated in talks by Panzer \cite{Panzertalk}. Representations
of eMGFs in terms of eMPLs are only known for the depth-one case of the
$\dplus{a \\ b}(z,\tau) $ in (\ref{defZag}) using the meromorphic formulation via $\tilde \Gamma$
in (\ref{defempl}) \cite{Broedel:2019tlz}.
A key result of this work is to spell out the conversion of iterated $\tau$ integrals
for general one-variable eMGFs into modular and single-valued combinations of
eMPLs at the level of generating series, see section \ref{sec:3.2}. 

In summary, the differential equations of eMGFs admit two types of iterated-integral 
representations -- either via $\tau$-integrals of Kronecker-Eisenstein kernels and 
holomorphic Eisenstein series (\ref{bsc.07}) and (\ref{absc.07}) or via eMPLs.
In both cases, eMGFs realize modular and single-valued combinations of the respective iterated
integrals and their complex conjugates. Moreover, both types of iterated-integral 
representations expose all algebraic relations among eMGFs
since the relations among the respective iterated integrals are completely known.
In the remainder of this work, we shall construct the modular and single-valued combinations of both types
of iterated integrals that enter one-variable eMGFs.

\subsection{Single-valued genus-zero polylogarithms}
\label{sec:2.3}

The problem of determining modular combinations of iterated integrals and their complex conjugates
for eMGFs can be viewed as the genus-one analogue of constructing single-valued MPLs at genus zero.
We shall here review the generating series for single-valued versions of
MPLs $G(a_1,\ldots,a_r;z)$ in one variable ($a_i \in \{0,1\}$) and two variables
($a_i \in \{0,1,y\}$ with $y\in \mathbb C \setminus\{0,1\}$). In order to set the stage for
the Lie-algebra methods employed in our genus-one results of later sections,
we shall focus on the recent reformulation \cite{Frost:2023stm, Frost:2025lre} of the earlier constructions
of single-valued MPLs in one variable \cite{svpolylog} or $\geq 2$ variables \cite{Broedel:2016kls, DelDuca:2016lad}.

\subsubsection{Generating series of MPLs and braid algebra}
\label{sec:2.3.1}

The meromorphic but multi-valued MPLs in one and two variables will be organized in the generating
series
\begin{align}
\mathbb G_{e_0,e_1}(z) &= {\rm Pexp}\bigg({-} \int^0_z \dd t \,\bigg[  \frac{e_0}{t} +  \frac{e_1}{t{-}1} \bigg] \bigg)\label{svmpl.00} \\
&=  \sum_{r=0}^\infty \sum_{a_1,\ldots,a_r \atop {\in \{ 0,1\}}} G(a_r,\ldots,a_2,a_1;z) e_{a_1}  e_{a_2 }\ldots e_{a_r}
\notag
\end{align}
and 
\begin{align}
\mathbb G_{e_0',e_1',e_y'}(y,z) &= {\rm Pexp}\bigg({-} \int^0_z \dd t \,\bigg[  \frac{e'_0}{t} +  \frac{e'_1}{t{-}1} +  \frac{e'_y}{t{-}y} \bigg] \bigg)\label{svmpl.01} \\
&=  \sum_{r=0}^\infty \sum_{a_1,\ldots,a_r \atop {\in \{ 0,1,y\}}} G(a_r,\ldots,a_2,a_1;z) e'_{a_1}  e'_{a_2 }\ldots e'_{a_r}
\notag
\end{align}
where both collections of non-commutative braid operators $e_0,e_1$ and 
$e_0', e_1', e_y'$ individually generate free Lie algebras
${\rm Lie}[e_0,e_1]$ and ${\rm Lie}[e_0', e_1', e_y']$. However, the commutators that
mix primed generators $e_i'$ with unprimed ones $e_j$ close on the primed generators
\begin{align}
[e_0' , e_0] &= 0 \, , &[e_0',e_1] &= 0 
\label{svmpl.02} \\
[e_1', e_0] &= [e_1', e_y'] \, , &[e_1', e_1] &= [e_y', e_1'] \notag \\
[e_y' , e_0] &= [e_0', e_y'] \, , &[e_y', e_1] &= [e_1',e_y'] \notag
\end{align}
i.e.\ both of ${\rm ad}_{e_0}$ and ${\rm ad}_{e_1}$ are derivations on ${\rm Lie}[e_0', e_1', e_y']$.
The bracket relations (\ref{svmpl.02}) follow from imposing flatness of the Knizhnik--Zamolodchikov (KZ) connection in two variables
\beq
\mathbbm{\Omega}^{\rm KZ}_{e,e'}(y,z) = \dd z\, \bigg( \frac{e_0'}{z} + \frac{e_1'}{z{-}1} + \frac{e_y'}{z{-}y} \bigg)
+ \dd y\, \bigg( \frac{e_0{-}e_y'}{y} + \frac{e_1}{y{-}1} + \frac{e_y'}{y{-}z} \bigg)
\label{svmpl.03}
\eeq
that features in the differential equation of the concatenation product of 
(\ref{svmpl.00}) and (\ref{svmpl.01}),
\beq
\dd \big(\mathbb G_{e_0,e_1}(y) \,\mathbb G_{e_0',e_1',e_y'}(y,z) \big) =
\mathbb G_{e_0,e_1}(y)\,  \mathbb G_{e_0',e_1',e_y'}(y,z) \, \mathbbm{\Omega}^{\rm KZ}_{e,e'}(y,z) 
\label{svmpl.04}
\eeq
The residue $e_0{-}e_y'$ of the pole of $\mathbbm{\Omega}^{\rm KZ}_{e,e'}(y,z)$ in $y$
is engineered such that the $\dd y$ component of $\mathbbm{\Omega}^{\rm KZ}_{e,e'}(y,z)$ reduces
 to the standard form $\dd y\, (\frac{e_0}{y}+\frac{e_1}{y-1})$ of a KZ connection in 
 one variable upon setting $z\rightarrow 0$ in (\ref{svmpl.03}).

\subsubsection{Generating series of MZVs and zeta generators}
\label{sec:2.3.2}

Already for single-valued MPLs in one variable, the coefficients of the
composing multi-valued MPLs and their complex conjugates are
$\mathbb Q$-linear combinations of single-valued MZVs \cite{svpolylog}.
We shall gather single-valued MZVs in the generating series
\begin{align}
\mathbb M^{\rm sv}_{0} &= \sum_{r=0}^{\infty} \sum_{i_1,\ldots,i_r \atop {\in 2\mathbb N+1} } \rho^{-1}\big({\rm sv}(f_{i_1} \ldots f_{i_r})\big) \, M_{i_1}\ldots  M_{i_r}
\label{g0.05} \\
&= 1 + 2 \sum_{i_1 \in 2\mathbb N+1} \zeta_{i_1} M_{i_1}
+ 2 \sum_{i_1,i_2 \in 2\mathbb N+1} \zeta_{i_1} \zeta_{i_2}M_{i_1} M_{i_2} + \ldots
\notag
\end{align}
with (conjecturally) indecomposable MZVs of depth $\geq 2$ in the ellipsis.
The isomorphism $\rho$ converts MZVs into the $f$-alphabet with non-commutative
generators $f_3,f_5,\ldots$, see appendix \ref{sec:A.2} for a brief review and in particular
(\ref{appA.00}) for the single-valued map. 

The series $\mathbb M^{\rm sv}_{0}$ in (\ref{g0.05})
combines all (conjecturally) $\mathbb Q$-linearly independent single-valued MZVs with
the genus-zero incarnation of zeta generators $M_3,M_5,\ldots$
\cite{DG:2005, Brown:depth3}.
By themselves, zeta generators form a free algebra 
${\rm Lie}[M_3,M_5,\ldots]$ \cite{Levine, BrownTate}.
However, they obey bracket relations with the above braid operators
such that the closure conditions $[M_w,e_i] \in {\rm Lie}[e_0,e_1]$ and $[M_w, e_j'] \in
{\rm Lie}[e_0', e_1', e_y']$ are satisfied. The explicit form of these
brackets involving zeta generators is most conveniently described
in terms of the Drinfeld associator given by the regularized $z\rightarrow 1$ limit 
\begin{align}
\Phinew(e_0,e_1) &= \mathbb G_{e_0,e_1}(1) = {\rm Pexp}\bigg({-}\int_1^0 \dd z\, \bigg[  \frac{e_0}{z} +  \frac{e_1}{z{-}1} \bigg] \bigg)
\notag \\
 &= \sum_{r=0}^\infty \sum_{a_1,\ldots,a_r \atop {\in \{0,1 \} } } G(a_1,\ldots,a_r;1) e_{a_r} \ldots e_{a_1}
\label{notsec.11}
\end{align}
and subject to $\Phinew^{-1}(e_0,e_1) = \Phinew(e_1,e_0)$.
More specifically, the expressions for $[M_w,e_i] \in {\rm Lie}[e_0,e_1]$ and $[M_w, e_j'] \in
{\rm Lie}[e_0', e_1', e_y']$ are given in terms of the following degree-$w$ Lie-polynomials $P_w$ obtained
by expressing the (motivic) associator in the $f$-alphabet of MZVs and extracting the 
coefficient of $f_w$
\beq
P_w(e_0,e_1)  =  \rho \big( \Phinew(e_0,e_1)  \big) \, \big|_{f_w}  = - P_w(e_1,e_0)\, , \ \ \ \ w\geq 3 \ {\rm odd}
\label{notsec.12}
\eeq
The simplest instances of these Lie polynomials are given by\footnote{Our conventions
$P_w$ for the Lie polynomials in the Drinfeld associator are related
to the $g_w$ in \cite{Dorigoni:2024iyt} via $P_w(e_0,e_1)=  - g_w(- e_0, e_1)$, for instance
$g_3(x,y) = [ x {-}y , [x,y]]$.}
\begin{align}
P_3(e_0,e_1) &= - [ e_0 + e_1 , [ e_0, e_1 ] ]
\label{notsec.13}\\
P_5(e_0,e_1) &=- [ e_0,[e_0,[e_0,[e_0,e_1 ]] ]]  - \frac{ 3}{2} [ e_1,[e_0,[e_0,[e_0,e_1]] ]]
-  \frac{1}{2} [ e_0,[e_1,[e_0,[e_0,e_1]] ]] \notag \\
&\quad  - \frac{1}{2} [ e_1,[e_0,[e_1,[e_0,e_1]] ]] 
-  \frac{3}{2} [ e_0,[e_1,[e_1,[e_0,e_1]] ]] - [e_1,[e_1,[e_1,[e_0,e_1]] ]]
\notag
\end{align}
With these definitions in place, the bracket relations of genus-zero zeta
generators relevant for MPLs in one and two variables are given by ($w\geq 3$ odd)
\begin{align}
\big[ e_0 , M_{w} \big] &= 0 \, , \ \ \ \ \ \
\big[ e_1 , M_{w} \big] = \big[ P_w(e_0,e_1) , e_1 \big]
\label{svmpl.07} 
\end{align}
and
\begin{align}
\big[ e_0' , M_{w} \big] &= 0 \, , \ \ \ \ \ \
\big[ e_y' , M_{w} \big] = \big[ P_w(e_0',e_y') , e_y' \big]
\label{svmpl.08} \\
\big[ e_1' , M_{w} \big] &= \big[ P_w(e_0'{+}e_y',e_1')+P_w(e_0',e_y')+P_w(e_y',e_0 - e_y') , e_1' \big]
\notag
\end{align}
respectively, see \cite{Frost:2023stm, Frost:2025lre} for the generalizations to an arbitrary number of variables.

\subsubsection{Constructing single-valued MPLs}
\label{sec:2.3.3}

The above series in MZVs and meromorphic MPLs will now be combined
to produce generating series of single-valued MPLs in one and two variables
that take values in the Lie groups generated by $e_0,e_1$ and $e_0', e_1', e_y'$
and define the individual single-valued MPLs according to
\begin{align}
\mathbb G^{\rm sv}_{e_0,e_1}(z) &= 
  \sum_{r=0}^\infty \sum_{a_1,\ldots,a_r \atop {\in \{ 0,1\}}} G^{\rm sv}(a_r,\ldots,a_2,a_1;z) e_{a_1}  e_{a_2 }\ldots e_{a_r}
\label{svmpl.11} \\
\mathbb G^{\rm sv}_{e_0',e_1',e_y'}(y,z) &= 
 \sum_{r=0}^\infty \sum_{a_1,\ldots,a_r \atop {\in \{ 0,1,y\}}} G^{\rm sv}(a_r,\ldots,a_2,a_1;z) e'_{a_1}  e'_{a_2 }\ldots e'_{a_r}
\notag
\end{align}

In the one-variable case, the single-valued series in the first line of 
(\ref{svmpl.11}) is constructed from the concatenation product \cite{Frost:2023stm, Frost:2025lre}
\beq
\mathbb G^{\rm sv}_{e_0,e_1}(z) = (\mathbb M_0^{\rm sv})^{-1} \, \overline{ \mathbb G_{e_0,e_1}(z)^T } \,
\mathbb M_0^{\rm sv} \,
\mathbb G_{e_0,e_1}(z)  
\label{svmpl.12}
\eeq
where the transposition symbol $(\ldots)^T$ reverses the concatenation order 
of the (primed or unprimed) braid operators
\beq
(\ldots e_i \ldots e_j \ldots )^T  = \ldots e_j \ldots e_i \ldots  \, , \ \ \ \ \ \ \
(\ldots e'_i \ldots e'_j \ldots )^T  = \ldots e'_j \ldots e'_i \ldots 
\label{svmpl.13}
\eeq
In order to see that the generators on the right side of (\ref{svmpl.12}) are expressible in terms of
$e_0, e_1$ and do not necessitate the larger alphabet of $e_i,M_w$, one expands the conjugation by 
$\mathbb M_0^{\rm sv}$ in terms of nested commutators acting on an arbitrary non-commutative symbol $X$,
\beq
(\mathbb M_0^{\rm sv})^{-1} \, X \,
\mathbb M_0^{\rm sv} = \sum_{r=0}^{\infty} \sum_{i_1,i_2,\ldots,i_r \atop {\in 2\mathbb N+1} } \rho^{-1}\big({\rm sv}(f_{i_1} f_{i_2}\ldots f_{i_r})\big) \, [[ \ldots [[ X,M_{i_1}] , M_{i_2}], \ldots ], M_{i_r}]
\label{svmpl.14}
\eeq
When applied to (\ref{svmpl.12}), the quantity $X$ is an arbitrary word in $e_0, e_1$.
In this case, iterative use of the commutation relations (\ref{svmpl.07}) converts
the nested brackets of $X$ with $M_{i_1} ,M_{i_2}\ldots$ in (\ref{svmpl.14}) to larger
words in $e_0,e_1$ with $i_1{+}i_2{+}\ldots$ additional letters $e_a$. Hence, the
expansion in the first line of (\ref{svmpl.11}) exists and leads to well-defined coefficients
$G^{\rm sv}(a_1,\ldots,a_r;z) $ for each word in $a_i \in \{0,1\}$ by freeness of ${\rm Lie}[e_0,e_1]$.

The same logic applies to single-valued MPLs in two or more variables: The
generating series in the second line of (\ref{svmpl.11}) is constructed from
the concatenation product \cite{Frost:2023stm, Frost:2025lre}
\beq
 \mathbb G_{e_0',e_1',e_y'}^{\rm sv}(y,z) =\big(  \mathbb G^{\rm sv}_{e_0,e_1}(y)  \big)^{-1}\, (\mathbb M_0^{\rm sv})^{-1} \, \overline{  \mathbb G_{e_0',e_1',e_y'}(y,z) ^T } \,
\mathbb M_0^{\rm sv} \,  \mathbb G^{\rm sv}_{e_0,e_1}(y) \, 
 \mathbb G_{e_0',e_1',e_y'}(y,z) 
 \label{svmpl.15}
\eeq
where the two conjugations of $\overline{  \mathbb G_{e_0',e_1',e_y'}(y,z) ^T }$ by
$\mathbb M_0^{\rm sv} $ and $  \mathbb G^{\rm sv}_{e_0,e_1}(y)$ both preserve
the alphabet of $e_i'$: The zeta generators enter through nested brackets by 
(\ref{svmpl.14}) which evaluate to longer words in $e_i'$ by iterative use of (\ref{svmpl.08}) (and also (\ref{svmpl.02}) to express $P_w(e_y',e_0 - e_y')$ in terms of Lie polynomials in $e_i'$).
By the shuffle relations of single-valued MPLs in one variable \cite{svpolylog}, the second conjugation in
(\ref{svmpl.15}) can be organized into
\beq
\big(  \mathbb G^{\rm sv}_{e_0,e_1}(y)  \big)^{-1}\, X\,  \mathbb G^{\rm sv}_{e_0,e_1}(y) 
= \sum_{r=0}^\infty \sum_{a_1,\ldots,a_r \atop {\in \{ 0,1\}}} G^{\rm sv}(a_r,\ldots,a_2,a_1;y) [[ \ldots [[ X,e_{a_1}],  e_{a_2 }], \ldots ],e_{a_r}]
 \label{svmpl.16}
\eeq
For words $X$ in $e_0',e_1', e_y'$, the nested brackets on the right side conspire to 
longer words in the same alphabet by virtue of the braid relations (\ref{svmpl.02}),
leading to well-defined $G^{\rm sv}(a_1,\ldots,a_r;z) $ for any word in $a_i \in \{0,1,y\}$
in the second line of (\ref{svmpl.11}) by freeness of ${\rm Lie}[e_0', e_1', e_y']$.

Note that the Lie-polynomials $P_w$ (defined in (\ref{notsec.12}) through the appearance of $f_w$
in the Drinfeld associator) and therefore the bracket relations (\ref{svmpl.07}), (\ref{svmpl.08}) of $M_w$ individually depend on the choice of the 
$f$-alphabet isomorphism $\rho$.
However, the conjugation in (\ref{svmpl.14}) by the entire series $\mathbb M_0^{\rm sv} $ in zeta 
generators is independent on $\rho$ since the conversion of words in $f_w$ into MZVs in (\ref{g0.05}) 
features a compensating dependence on the choice of $f$-alphabet isomorphism.

As detailed in section 7.4 of \cite{Frost:2025lre}, the series (\ref{svmpl.12}) and (\ref{svmpl.16})
in single-valued MPLs in one and two variables are equivalent to the earlier constructions
in \cite{svpolylog} and \cite{Broedel:2016kls, DelDuca:2016lad}
through a change of alphabet for $e_1$ and $e_y',e_1'$, respectively.

\section{Key ingredients of the main result}
\label{sec:key}

This section introduces some of the key ingredients to set the stage for the main results of this work: Lie-algebra structures of iterated integrals at genus one in section \ref{sec:2.4}, equivariant and single-valued iterated Eisenstein integrals in section \ref{sec:2.5}, building blocks for $z$-dependent equivariant series in section \ref{sec:3.1} and further discussions of the notion of equivariance in section~\ref{sec:3.eqv}.

\subsection{Lie-algebra structures of genus-one integrals}
\label{sec:2.4}

We have seen in section \ref{sec:2.3} that the construction of single-valued MPLs at genus zero
relies on the bracket relations (\ref{svmpl.02}), (\ref{svmpl.07}) and (\ref{svmpl.08}) among
the Lie-algebra generators $e_0,e_1$ as well as $e_0',e_1',e_y'$ and $M_w , \ w \in 2\mathbb N{+}1$.
The goal of this section is to introduce the genus-one generators and bracket relations relevant for the construction
of modular and single-valued combinations of eMPLs or iterated $\tau$-integrals (\ref{bsc.07}) and (\ref{absc.07}).
For this purpose, the free Lie algebra in generators $x,y$ that governs the generating series
(\ref{genKempl}) and (\ref{jbexp}) of eMPLs will be augmented by two kinds of derivations 
related to holomorphic Eisenstein series and odd Riemann zeta values, respectively.

\subsubsection{Tsunogai's derivation algebra}
\label{sec:2.4.1}

A first class of derivations on ${\rm Lie}[x,y]$ governs the $\tau$-dependence
of elliptic MZVs (\ref{defemzv}). More specifically, their generating series known
as the $A$-elliptic KZB associator and built from $\tilde{\mathbb K}_{x,y}(z,\tau)$ in (\ref{defKcon})
\begin{align}
\mathbb A_{x,y}(\tau) &= {\rm Pexp} \biggl(\int^0_1
 \tilde{\mathbb K}_{x,y}(z_1,\tau)\biggr) 
 \label{lieg1.01} \\
 &=   \sum_{r=0}^\infty (-1)^r  \! \! \! \! \!\sum_{n_1,\ldots,n_r=0}^\infty \! \! \! \! \!
 (2\pi i)^{ \sum_{i=1}^r (1-n_i) }
 \omega(n_1,\ldots,n_r;\tau)
 \ad_x^{n_1}(y) \ldots  \ad_x^{n_r}(y)
 \notag
\end{align}
obeys the differential equation \cite{KZB, EnriquezEllAss, Hain:KZB}\footnote{Our conventions for the elliptic KZB associator are related to those of \cite{Broedel:2015hia} by reversal of the integration path and rescaling the generators in the reference by $x \rightarrow \frac{x}{2\pi i}$ and $y \rightarrow 2\pi i y$ (which leads to the translation $\ep_0 \rightarrow (2\pi i)^{2} \ep_0$ and $\ep_k \rightarrow (2\pi i)^{2-k} \ep_k^{\rm TS}, \ k\geq 2$ of the derivations in the reference).}
\beq
 \partial_\tau \mathbb A_{x,y}(\tau) = 2\pi i\, \big[\ep_0 ,  \mathbb A_{x,y}(\tau) \big] + \sum_{k=2}^{\infty} (2\pi i)^{1-k} \, (k{-}1) \, {\rm G}_k(\tau)\,
 \big[ \epsilon_{k}^\text{TS} ,  \mathbb A_{x,y}(\tau) \big]
 \label{lieg1.02}
\eeq
The holomorphic Eisenstein series ${\rm G}_k(\tau)$ with
$k\geq 3$ are defined by absolutely convergent lattice sums (\ref{latsumgf}), furnish modular forms of weight $(k,0)$ and vanish for odd $k$. The
quasi-modular ${\rm G}_2(\tau)$ in (\ref{lieg1.02}) can be defined through the $k=2$
instance of the following $q$-series
\beq
{\rm G}_k(\tau) = \frac{(2\pi i)^k}{(k{-}1)!} \bigg\{ 
{-}\frac{B_k}{k}+ 2 \sum_{m,n=1}^\infty n^{k-1} q^{mn}
\bigg\} \, , \ \ \ \ q = e^{2\pi i \tau} \, , \ \ \ \ k\geq 2 \ {\rm even}
 \label{lieg1.03}
\eeq
The non-commuting $ \epsilon_{0}, \epsilon_{k}^\text{TS}$ that accompany the Eisenstein
series ${\rm G}_k(\tau)$ in the differential equation 
(\ref{lieg1.02}) of the KZB associator are known
as Tsunogai's derivations \cite{tsunogai4}. They 
act on the generators $x,y$ entering the associator via
\begin{align}
[\epsilon_0,x]=y \, , \ \ \ \ [\epsilon_0,y]=0
 \label{lieg1.04}
\end{align}
as well as ($\ell \in \mathbb N$)
\begin{align}
[\epsilon_{2\ell}^\text{TS},x]&=\ad^{2\ell}_x(y)  \label{lieg1.05} \\
[\epsilon_{2\ell}^{\rm TS},y]&=[y,\ad_x^{2\ell-1}(y)]+\sum_{j=1}^{\ell-1}(-1)^j [\ad_x^j(y),\ad_x^{2\ell-1-j}(y)]
\notag
\end{align}
which assigns degree 0 to $\epsilon_0$ and degree $k$ to $ \epsilon_{k}^\text{TS}$ in the
generators $x,y$. The second line of (\ref{lieg1.05}) can be derived from first line by demanding that the
commutator $[x,y]$ is annihilated by all Tsunogai derivations
\beq
\big[\ep_0 , [x,y] \big] = 0 \, , \ \ \ \ \ \
\big[ \epsilon_{k}^\text{TS} , [x,y] \big] = 0 \, , \ \ \ \ k\geq 2 \ {\rm even}
\label{killxy}
\eeq
Tsunogai's derivations have a rich history in the literature of both mathematics 
\cite{tsunogai1,tsunogai2, IharaTakao:1993, tsunogai5,tsunogai6,tsunogai7,Pollack,Brown:anatomy,
Hain:KZB,Brown:depth3, tsunogai12,hain_matsumoto_2020} and string theory 
\cite{Broedel:2015hia, Mafra:2019ddf, Mafra:2019xms, Gerken:2019cxz, Gerken:2020yii, 
Dorigoni:2021jfr, Dorigoni:2021ngn, Dorigoni_2022, Dorigoni:2024oft}. They obey a wealth 
of relations, starting from the facts that $\epsilon_{2}^\text{TS}$ is central
\beq
[\epsilon_{2}^\text{TS} , \epsilon_{k}^\text{TS} ] = 0 \, , \ \ \ \ k\geq 2
 \label{lieg1.06}
\eeq
and that all the $\epsilon_{k}^\text{TS}$ with $k\geq 2$ obey the nilpotency property with
respect to $\ad_{\ep_0}$
\beq
\ad_{ \epsilon_{0} }^{k-1} (\epsilon_{k}^\text{TS})  = 0 \, , \ \ \ \ k\geq 2
 \label{lieg1.07}
\eeq
Moreover, certain combinations of Tsunogai derivations starting from
degree 14 obey {\it Pollack relations} related to holomorphic cusp forms, for instance 
\cite{LNT, Pollack, Broedel:2015hia, WWWe}
\begin{align}
  0 &=   [\epsilon^\text{TS}_4,\epsilon^\text{TS}_{10}] - 3[\epsilon^\text{TS}_6,\epsilon^\text{TS}_8]
    \label{lieg1.08} \\
0 &= 80 [\ep_4^{(1) \text{TS}}, \ep^\text{TS}_{12} ] + 16 [\ep_{12}^{(1)\text{TS}},\ep_4^\text{TS} ]
- 250 [\ep_6^{(1) \text{TS}},\ep^\text{TS}_{10}] -125 [\ep_{10}^{(1) \text{TS}},\ep_6^\text{TS}] \notag \\
&\quad
+ 280 [\ep_8^{(1) \text{TS}},\ep_8^\text{TS}] 
- 462[\ep^\text{TS}_4, [\ep^\text{TS}_4,\ep^\text{TS}_8]] 
- 1725 [\ep^\text{TS}_6,[\ep^\text{TS}_6,\ep^\text{TS}_4]] 
\notag 
\end{align}
Here and throughout this work, we employ the shorthand for repeated adjoint actions of $\ep_0$
\beq
\epsilon_{k}^{(j) \text{TS}} = {\rm ad}_{\ep_0}^j( \epsilon_{k}^{\text{TS}})
 \label{lieg1.09}
\eeq
The non-vanishing $\epsilon_{k}^{(j) \text{TS}} $ at fixed $k$ and $0\leq j \leq k{-}2$
form $(k{-}1)$-dimensional multiplets with respect to the $\mathfrak{sl}_2$ algebra generated by
$\ep_0$ and the additional derivation $\epsilon_0^\vee$ of degree zero
\begin{align}
[\epsilon^\vee_0,x]=0 \, , \ \ \ \ [\epsilon^\vee_0,y]=x 
 \label{lieg1.11}
\end{align}
More specifically, $\ep_0$ and $\epsilon_0^\vee$ act as ladder operators of
the $\mathfrak{sl}_2$ with
\begin{align}
[\ep_0, \epsilon_{k}^{(j) \text{TS}} ] = \epsilon_{k}^{(j+1) \text{TS}} \, , \ \ \ \ \ \ 
[\epsilon^\vee_0,  \epsilon_{k}^{(j) \text{TS}} ]= j(k{-}j{-}1) \epsilon_{k}^{(j-1) \text{TS}} 
 \label{lieg1.13}
\end{align}
and their commutator yields the Cartan generator ${\rm h}$ of the $\mathfrak{sl}_2$
\beq
{\rm h} = [\ep_0, \ep_0^\vee]  \, , \ \ \ \ [ {\rm h},x] = - x \, , \ \ \ \ [ {\rm h}, y] = y \, , \ \ \ \
 \big[ {\rm h} , \epsilon_{k}^{(j) \text{TS}} \big] = (2j{-}k{+}2) \epsilon_{k}^{(j) \text{TS}}
 \label{lieg1.14}
 \eeq
In view of these Cartan eigenvalues, we refer to
$\epsilon_{k}^{(k-2) \text{TS}}$ and $\epsilon_{k}^{(0) \text{TS}}= \epsilon_{k}^{\text{TS}}$  as 
highest-weight vectors and lowest-weight
vectors, respectively,
\beq
[\ep_0, \epsilon_{k}^{(k-2) \text{TS}}] = 0
 \, , \ \ \ \ \ \
[\epsilon^\vee_0, \epsilon_{k}^\text{TS} ] = 0
 \, , \ \ \ \ \ \ k\geq 2
 \label{lieg1.15}
 \eeq
The first condition is equivalent to the nilpotency property (\ref{lieg1.07}), and
the second condition is (\ref{lieg1.13}) at $j=0$.

\subsubsection{Zeta generators at genus one}
\label{sec:2.4.2}

An additional class of derivations on ${\rm Lie}[x,y]$ is furnished by the genus-one uplift $\sigma_w$ of
the genus-zero zeta generators $M_w$ in section \ref{sec:2.3.2}. Their defining properties
are formulated in terms of the $\tau \rightarrow i \infty$ degeneration of the CEE connection (\ref{defKcon})
at fixed $z$
\begin{align}
\tilde{\mathbb K}_{x,y}(z, i\infty) &= \dd z \, \bigg\{  2 \pi i y + \pi \cot(\pi z) [x,y]
- 2 \sum_{\ell=1}^\infty (2\pi i)^{1-2\ell}  \zeta_{2\ell} \ad_x^{2\ell-1}(y)
\bigg\} \notag \\
&= {-}\dd \sigma \, \bigg( \frac{t_{12}}{\sigma{-}1} + \frac{t_{01}}{\sigma} \bigg)
 \label{lieg1.16}
\end{align}
where we use the shorthands
\beq
t_{12} = - [x,y]  \, , \ \ \ \ \ \  t_{01} =  -  \sum_{k=0}^\infty    \frac{B_k}{k!}  {\rm ad}_x^k(y)
= - \frac{{\rm ad}_x  }{ e^{{\rm ad}_x} - 1 }\, (y)   \label{lieg1.17}
\eeq
and $B_k$ denotes the $k^{\rm th}$ Bernoulli number.
In passing to the second line of (\ref{lieg1.16}), we have introduced the coordinate 
\beq
\sigma = e^{2\pi i z} \, , \ \ \ \ \ \
\dd z = \frac{\dd \sigma}{2\pi i \sigma} \, 
\label{zdsig}
\eeq
of the nodal sphere which is obtained from the non-separating 
degeneration of the torus. By comparison with the integration kernels of MPLs in (\ref{svmpl.00}), we recovered
a KZ connection from the CEE connection at $\tau \rightarrow i \infty$ where the
composite generators $t_{12},t_{01}$ in (\ref{lieg1.17}) take the role of the braid 
operators $e_0, e_1$ in section \ref{sec:2.3}. Given that $e_0, e_1$ govern the
defining bracket relations (\ref{svmpl.07}) of the zeta generators $M_w$ at genus zero,
we define the analogous genus-one zeta generators $\sigma_w$ \cite{EnriquezEllAss, Hain:KZB, Schneps:2015mzv, hain_matsumoto_2020} by transcribing
(\ref{svmpl.07}) into
\beq
[t_{12} , \sigma_w] = 0 \, , \ \ \ \ \ \ [t_{01}, \sigma_w] = \big[P_w(t_{12},t_{01}) , t_{01} \big] 
\label{dfprpsig}
\eeq
where the Lie polynomials $P_w(t_{12},t_{01}) = - P_w(t_{01},t_{12})$ are defined by (\ref{notsec.12}) for a given $f$-alphabet isomorphism $\rho$ (see appendix \ref{sec:A.2}).
We are taking $t_{12}$ rather than $t_{01}$ to commute with $\sigma_w$
since the associated singularity of (\ref{lieg1.16}) at $\sigma=1$ and
therefore $z=0$ occurs at the starting point of the integration path
that defines eMPLs (\ref{defempl}).

Given the infinite series expansion of the generator
$t_{01} = -y+\frac{1}{2}[x,y]- \frac{1}{12} [x,[x,y]]+ {\cal O}(x^4)$
in the defining relations (\ref{dfprpsig}) of genus-one zeta generators $\sigma_w$, their action 
on $x,y$ mixes derivations of different degrees without any upper bound.\footnote{Inferring the action
$[\sigma_w,x]$, $[\sigma_w,y]$ on $x,y$ from the expressions (\ref{dfprpsig})
for $[\sigma_w,t_{01}]$, $[\sigma_w,t_{12}]$ requires the 
extension lemma 2.1.2 of \cite{Schneps:2015mzv}, also see section 5.3 of \cite{Dorigoni:2024iyt} for further details.}
It turns out that the contributions to $[\sigma_w,x]$ and $[\sigma_w,y]$ of all 
degrees except for $2w{+}1$ are expressible in terms of Tsunogai derivations \cite{hain_matsumoto_2020, Dorigoni:2024iyt}.
More precisely, (\ref{dfprpsig}) determines genus-one zeta generators
as an infinite series
\beq
\sigma_w  = z_w - \frac{ \epsilon_{w+1}^{(w-1) \text{TS}} }{(w{-}1)!} + \ldots
 \label{lieg1.21}
\eeq
with nested brackets of $\geq 2$ Tsunogai derivations with total degree $\geq w{+}3$ in the ellipsis.
The derivation $z_w$ of degree $2w$ which is not expressible
in terms of $\epsilon_{k}^{(j) \text{TS}}$ is referred to as the {\it arithmetic} 
part of the zeta generator $\sigma_w$ whereas the infinite series
of $\epsilon_{k}^{(j) \text{TS}}$ is referred to as its {\it geometric} part.
The arithmetic parts commute with all of $\ep_0,\ep_0^\vee$ 
and $[x,y]$ \cite{hain_matsumoto_2020}, 
\beq
[\ep_0 , z_w] = [ \ep_0^\vee, z_w] = 0 \, , \ \ \ \ \ \ \big[z_w, [x,y] \big] = 0
 \label{lieg1.22}
\eeq
i.e.\ the $z_w$ form $\mathfrak{sl}_2$ singlets. For a given zeta generator $\sigma_w$, the arithmetic derivations $z_{w}$
are canonical once the geometric remainder $\sigma_w{-} z_w$ is imposed to have
no $\mathfrak{sl}_2$ invariant terms~\cite{Dorigoni:2024iyt}.\footnote{Without this
extra condition, the arithmetic $z_w$ with $w\geq 7$ are only well defined up to
redefinitions by $\mathfrak{sl}_2$-invariant nested brackets of $\geq 3$ Tsunogai 
derivations \cite{Brown:2017qwo2}.}

A useful property of genus-one zeta generators is that they commute with
the following infinite series $N^{\rm TS} $ in Tsunogai derivations
\cite{hain_matsumoto_2020}\footnote{Up to the absence of the term $\epsilon_2^{\rm TS}$, the series $N^{\rm TS}$ in (\ref{notsec.22}) is obtained from $-1/(2\pi i)$ times the $\dd \tau$ component of the CEE connection $\mathbb K_{x,y,\ep}(z,\tau)$ in (\ref{notsec.02}) below in the limit $\tau \rightarrow i\infty$ and $z\rightarrow 0$.}
\begin{align}
[N^{\rm TS} , \sigma_w] &= 0 \, , \ \ \ \  w \geq 3 \, , \ \ \ \ \ \ 
N^{\rm TS} = -\ep_0 + \sum_{k=4}^\infty (k{-}1)   \frac{B_k}{k!} \, \ep^{\rm TS}_k 
 \label{notsec.22}
\end{align}
As detailed in section 7.3 of \cite{Dorigoni:2024iyt}, this property can be used
to determine the infinite series of Tsunogai derivations in the ellipsis
of (\ref{lieg1.21}) up to {\it finitely} many highest-weight vectors of $\mathfrak{sl}_2$,
i.e.\ nested brackets of $\ep^{(j){\rm TS}}_k $ in the kernel of ${\rm ad}_{\ep_0}$.
The arguments in the reference rely on the fact that
highest-weight vectors can only enter the expansion of $\sigma_w$ up to and including degree $2w$.\footnote{These bounds on the appearance of highest-weight vectors follow from the fact that $\sigma_w$ and $\epsilon_k^{(j){\rm TS}}$ have $y$-degrees $w$ and $j{+}1$, respectively, as one can deduce from (\ref{lieg1.04}), (\ref{lieg1.05}) and (\ref{dfprpsig}).}

Assigning modular depth one to each $\epsilon_{k}^{(j) \text{TS}}$, then the
contributions to $\sigma_w$ in (\ref{lieg1.21}) of modular depth two are reviewed
in appendix \ref{sec:C.1}. Partial results at modular depth three can be found in section 7.4 of 
\cite{Dorigoni:2024iyt}. Terms of low degrees in
$\sigma_w$ at $w=3,5,7,9$ can be found in appendix \ref{sec:C.2}
and the expressions for $[z_w,x] ,[z_w,y]$ at $w=3$ and $w=5$ are given in (\ref{z3onxy}) below and appendix \ref{sec:C.3}, respectively.

Finally, the vanishing brackets of $\sigma_w$ with $N^{\rm TS}$ in (\ref{notsec.22})
determine the commutators
\beq
[z_w,\ep^{\rm TS}_k] = \frac{B_{w+k-1}  \, k!}{B_k \, (w{+}k{-}1)!\, (w{+}k{-}3)!} \sum_{i=0}^{w-1} (-1)^i \frac{(k{+}i{-}2)!}{i!} [\ep_{w+1}^{(i) {\rm TS}} , \ep_{k+w-1}^{(w-i-1) {\rm TS}} ] + \ldots    
 \label{lieg1.23}
\eeq
as Lie polynomials in $ \ep_{m}^{(j) {\rm TS}} $ of degree $k{+}2w$. The ellipsis 
in (\ref{lieg1.23}) refers to terms of modular depth $\geq 3$, see
section 7.4 of \cite{Dorigoni:2024iyt} for partial results at modular depth three. Similar expressions for $[z_w,\ep^{(j){\rm TS}}_k]$ are simple corollaries under ${\rm ad}_{\ep_0}^j$ since $[\ep_0,z_w]=0$.

In summary, the genus-one incarnations of zeta generators $\sigma_w$ reviewed in this section are expressible
in terms of Tsunogai derivations $ \ep_{k}^{(j) {\rm TS}} $ up to a single arithmetic
derivation $z_w$ of degree $2w$ in $x,y$. The commutators $[\sigma_w , \ep_{k}^{(j) {\rm TS}}]$
yield Lie series in $\ep_{k'}^{(j') {\rm TS}}$ which can be determined from (\ref{dfprpsig})
and the methods of \cite{Schneps:2015mzv, Dorigoni:2024iyt}.

\subsubsection{The CEE connection and its Lie algebra}
\label{sec:2.4.3}

We shall finally introduce a variant $\ep_k^{(j)}$ of the Tsunogai 
derivations $\ep_k^{(j) {\rm TS}}$ by reinstating the $\dd \tau$ part of 
the flat and meromorphic CEE connection \cite{KZB} with $\dd z$ part 
in (\ref{defKcon}):
\begin{align}
 \mathbb K_{x,y,\ep}(z,\tau) &= 
 \dd z \, \ad_x F(z,{\rm ad}_{ \frac{ x}{2\pi i}} ,\tau)y
  +  2\pi i \text{d}\tau\bigg( \epsilon_0+\sum_{k=4}^\infty \frac{(k{-}1)}{(2\pi i)^k} 
 {\rm G}_k(\tau)\epsilon_k  \bigg)
\label{notsec.02} \\
 &\quad 
 +    \text{d}\tau \, {\rm ad}_x  \partial_{{\rm ad}_x} \bigg( F(z,{\rm ad}_{ \frac{ x}{2\pi i}} ,\tau)- \frac{2\pi i}{ \ad_x } \bigg)y
\notag \\
 &= 2\pi i \text{d}z\,  \sum_{k=0}^\infty \frac{ g^{(k)}(z,\tau)}{(2\pi i)^k } \, {\rm ad}_{x}^{k}(y) + 2\pi i \text{d}\tau\bigg( \epsilon_0+\sum_{k=2}^\infty \frac{(k{-}1)}{(2\pi i)^k} \big[ {\rm G}_k(\tau)\epsilon_k
 - g^{(k)}(z,\tau) b_k \big]  \bigg)
 \notag
\end{align}

Its flatness rests on the mixed heat equation (\ref{appA.31}), the action of the derivations $\epsilon_k^{(j)}$ on the free Lie algebra Lie$[x,y]$\footnote{In \cite{KZB} imposing the CEE connection to be flat fixes the action of $\epsilon_0$ and $\epsilon_k$ on $x$ and $y$ given in (\ref{epsxy}).
Using (\ref{lieg1.25}) they lead to the identities (\ref{lieg1.04}) and (\ref{lieg1.05}).}
and Fay identities of 
the Kronecker-Eisenstein series \cite{Broedel:2020tmd}.
It identifies the generators $b_k$ accompanying
$\dd \tau \, g^{(k)}(z,\tau)$ in the last line as Lie polynomials 
\beq
b_k = - {\rm ad}_x^{k-1}(y) \, , \ \ \ \ k \geq 2
 \label{lieg1.24}
\eeq
Moreover, the $z \rightarrow 0$ limit of the $\dd \tau$ component of (\ref{notsec.02}) 
where $g^{(k)}(0,\tau) = - {\rm G}_k(\tau)$ governs the 
differential equation (\ref{lieg1.02}) of the KZB associator
which relates the generators $\ep_k$ in (\ref{notsec.02}) to the Tsunogai derivations via
\beq
\ep^{\rm TS}_k = \ep_k + b_k \, , \ \ \ \ k \geq 2 \ {\rm even}
 \label{lieg1.25}
\eeq
In a shorthand notation analogous to $\epsilon_{k}^{(j) \text{TS}}$ in (\ref{lieg1.09})
\beq
b_{k}^{(j)} = {\rm ad}_{\ep_0}^j(b_{k}) \, , \ \ \ \ \ \
\epsilon_{k}^{(j)} = {\rm ad}_{\ep_0}^j( \epsilon_{k} )
 \label{lieg1.26}
\eeq
one can easily show via (\ref{lieg1.04}) and (\ref{lieg1.11}) that the lowest- and highest-weight vector
conditions (\ref{lieg1.15}) of Tsunogai derivations carry over to the Lie polynomials $b^{(j)}_k$
and thus to $\epsilon_{k}^{(j)}$:
\begin{align}
[ \ep_0, b_{k}^{(k-2)} ] &= 0  &[ \ep_0, \ep_{k}^{(k-2)} ] &= 0  \notag \\
[ \ep^\vee_0, b_{k}] &= 0  &[ \ep^\vee_0, \ep_{k}   ] &= 0 
 \label{lieg1.27}
\end{align}
Hence, both of $b_{k}^{(j)}$ and $\epsilon_{k}^{(j)}$ at fixed $k$ and integer $0\leq j \leq k{-}2$ form
$(k{-}1)$-dimensional representations of the $\mathfrak{sl}_2$ spanned by $\ep_0, \ep_0^\vee$
and ${\rm h}$, with the action of ladder operators and Cartan eigenvalues as
in (\ref{lieg1.13}) and (\ref{lieg1.14}), respectively.
\begin{align}
[ \ep_0, b_{k}^{(j)} ] &= b_{k}^{(j+1)}  &[ \ep_0, \ep_{k}^{(j)} ] &= \ep_{k}^{(j+1)}  \label{lieg1.28} \\
[ \ep^\vee_0, b_{k}^{(j)} ] &= j(k{-}j{-}1)b_{k}^{(j-1)}  &[ \ep^\vee_0, \ep_{k}^{(j)}   ] &= j(k{-}j{-}1)\ep_{k}^{(j-1)} 
\notag \\
[ {\rm h}, b_{k}^{(j)} ] &= (2j{-}k{+}2)b_{k}^{(j)}  &[ {\rm h}, \ep_{k}^{(j)}   ] &= (2j{-}k{+}2) \ep_{k}^{(j)} 
\notag
\end{align}
As detailed in appendix \ref{sec:D}, the above information can be used to express numerous classes of brackets among $\ep^{(j)}_k,b^{(j)}_k,x,y$ in terms of Lie polynomials in $b_{k'}^{(j')}$.

\begin{lemma}
\label{braklem}
Overview of brackets that evaluate to Lie polynomials in $b_{k}^{(j)}$:
\begin{itemize}
\item[(i)] all brackets $[ \ep^{(j)}_k , x ] $, $[ \ep^{(j)}_k , y ] $
and $[ b^{(j)}_k , x ] $, $[ b^{(j)}_k , y ] $ with $k\geq 2$ and $0\leq j\leq k{-}2$
are expressible as Lie polynomials in $b_{k'}^{(j')}$ of $(x,y)$-degree $k{+}1$;
\item[(ii)] all brackets $[ \ep^{(j_1)}_{k_1} , b^{(j_2)}_{k_2} ] $ (or equivalently 
$[ \ep^{(j_1) {\rm TS}}_{k_1} , b^{(j_2)}_{k_2} ] $) with $k_i\geq 2$ and $0\leq j_i\leq k_i{-}2$
for $i=1,2$ are expressible as Lie polynomials in $b_{k'}^{(j')}$ of $(x,y)$-degree $k_1{+}k_2$;
\item[(iii)] all Lie polynomials in $x,y$ of degree $\geq 2$ are 
Lie polynomials in $b_{k}^{(j)}$;
\item[(iv)] all brackets
$[\sigma_w,x]$, $[\sigma_w,y]$ with $w\geq 3$ odd are Lie series in $b_{k}^{(j)}$ of degrees $\geq w{+}2$;
\item[(v)] all brackets
$[z_w,x]$, $[z_w,y]$ with $w\geq 3$ odd are Lie polynomials in $b_{k}^{(j)}$ of degree $2w{+}1$.
\end{itemize}
\end{lemma}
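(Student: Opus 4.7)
The plan is to establish items (i)--(v) in a logical order that respects their mutual dependencies, with items (i) and (iii) serving as the backbone. First I would prove the half of (i) concerning $b_k^{(j)}$: since $b_k = -\mathrm{ad}_x^{k-1}(y)$ is a Lie polynomial in $x,y$, each $b_k^{(j)} = \mathrm{ad}_{\ep_0}^j(b_k)$ is a concrete Lie polynomial in $x,y$ of $(x,y)$-degree $k$, and consequently $[x, b_k^{(j)}]$ and $[y, b_k^{(j)}]$ are Lie polynomials of degree $k{+}1$. A recursion based on the commutator $[\mathrm{ad}_x, \mathrm{ad}_{\ep_0}] = -\mathrm{ad}_y$ (a consequence of $[\ep_0,x]=y$, $[\ep_0,y]=0$), together with the base cases $[x,b_k] = b_{k+1}$ and the expansion of $[y,b_k]$ in the Hall basis, suffices to rewrite them as Lie polynomials in the $b_{k'}^{(j')}$'s.

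Given this half of (i), item (iii) follows by induction on total degree. The algebra $\mathfrak{L}_b$ contains $b_2 = -[x,y]$, and it is stable under the derivations $\mathrm{ad}_x$ and $\mathrm{ad}_y$ acting on its generators, hence it is an ideal of $\mathrm{Lie}[x,y]$. Any Lie polynomial of degree $d \geq 2$ can be written as $[x,P]+[y,Q]$ with $P,Q$ of degree $d{-}1\geq 1$, so the inductive step keeps one inside $\mathfrak{L}_b$, with base case $[x,y]=-b_2$. The remaining half of (i) then follows by expressing $\ep$-brackets through (\ref{lieg1.25}): from $\ep_k^{\mathrm{TS}} = \ep_k + b_k$ and the explicit formulas (\ref{lieg1.04})--(\ref{lieg1.05}) for $[\ep_k^{\mathrm{TS}},x]$ and $[\ep_k^{\mathrm{TS}},y]$, one reads off $[\ep_k,x]$ and $[\ep_k,y]$ as Lie polynomials of degree $k{+}1$ in $x,y$, and the general case $[\ep_k^{(j)},x]$, $[\ep_k^{(j)},y]$ follows by moving $\mathrm{ad}_{\ep_0}$ past $\mathrm{ad}_x$ or $\mathrm{ad}_y$ via Jacobi and inducting on $j$. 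Item (iii) then embeds all these brackets into $\mathfrak{L}_b$.

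Item (ii) follows by writing $b_{k_2}^{(j_2)}$ as an iterated bracket in $x,y$ and using the derivation property of $\mathrm{ad}_{\ep_{k_1}^{(j_1)}}$ to reduce $[\ep_{k_1}^{(j_1)}, b_{k_2}^{(j_2)}]$ to a nested combination of the elementary brackets in (i); this exhibits the output as a Lie polynomial in $b_{k'}^{(j')}$ of total degree $k_1{+}k_2$. The equivalence with $[\ep_{k_1}^{(j_1)\mathrm{TS}}, b_{k_2}^{(j_2)}]$ is immediate from $\ep_k^{\mathrm{TS}} = \ep_k + b_k$ together with the closure of $\mathfrak{L}_b$ under brackets. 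For item (v), the decomposition (\ref{lieg1.21}) and the $\mathfrak{sl}_2$ triviality (\ref{lieg1.22}) of $z_w$ imply that $z_w$ acts as a derivation of $(x,y)$-degree $2w$ on $\mathrm{Lie}[x,y]$, so $[z_w,x]$ and $[z_w,y]$ are Lie polynomials in $x,y$ of degree $2w{+}1$ and lie in $\mathfrak{L}_b$ by (iii). Finally, item (iv) combines (v) with (i) applied to the geometric part of $\sigma_w$: by (\ref{lieg1.21}) the series $\sigma_w - z_w$ is a Lie series in $\ep_k^{(j)\mathrm{TS}}$ of $(x,y)$-degrees $\geq w{+}2$, whose brackets with $x,y$ are by (i) Lie polynomials in $b_{k'}^{(j')}$ of matching degrees.

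The main obstacle I anticipate is the bookkeeping underlying step 1: certifying the output of the recursion for $[x, b_k^{(j)}]$ and $[y, b_k^{(j)}]$ as a Lie polynomial in the overcomplete, structured family $\{b_{k'}^{(j')}\}$ rather than as a generic Lie polynomial in $x,y$. This is where a systematic organization is needed, either through a Hall-basis normal form for $\mathrm{Lie}[x,y]$ restricted to bidegrees $(a,b)$ with $a,b\geq 1$, or through a generating-function argument tracking the action of $\mathrm{ad}_{\ep_0}$ on $-\mathrm{ad}_x^{k-1}(y)$; the explicit machinery for this is deferred to appendix \ref{sec:D}.
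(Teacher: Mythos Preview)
Your proposal is correct and follows essentially the same route as the paper's appendix~\ref{sec:D}: the constructive recursion for the $b$-half of (i), then (iii) via closure of $\mathfrak{L}_b$ under $\ad_x,\ad_y$, then the $\ep$-half of (i) and (ii). The one genuine organizational difference is in (iv)/(v): the paper proves (iv) first by invoking the extension lemma of \cite{Schneps:2015mzv} to pass from $[\sigma_w,t_{01}],[\sigma_w,t_{12}]$ to $[\sigma_w,x],[\sigma_w,y]$ and then applies (iii), obtaining (v) as the degree-$(2w{+}1)$ specialization; you instead prove (v) directly from the fact that $z_w$ is a degree-$2w$ derivation and assemble (iv) from (v) plus the geometric part. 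Both routes ultimately rest on the same cited input (the decomposition \eqref{lieg1.21} and the derivation property of $z_w$ are themselves products of the extension-lemma analysis), so the difference is cosmetic. Two small slips to fix: the geometric part $\sigma_w-z_w$ has $(x,y)$-degree $\geq w{+}1$, not $\geq w{+}2$ (the bracket with $x,y$ then lands at $\geq w{+}2$, as needed); and for \emph{nested} brackets of Tsunogai derivations in that geometric part, invoking ``(i)'' alone is loose---the clean argument is that any such bracket is a derivation on $\mathrm{Lie}[x,y]$, hence sends $x,y$ into $\mathrm{Lie}[x,y]$ at degree $\geq 2$, and (iii) finishes.
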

The proof of the lemma can be found in appendix \ref{sec:D} and leads to an algorithm to determine the brackets of $(i)$ and $(ii)$ in terms of Lie polynomials in $b_k^{(j)}$ to any desired~degree.

For instance, in view of (\ref{lieg1.24}) and (\ref{lieg1.25}), the action (\ref{lieg1.05}) of $\ep^{\rm TS}_k$ on
$x,y$ translates into the following simpler expressions for the analogous $\ep_k$  action,
\beq
[ \ep_k , x ] = 0 \, , \ \ \ \ \ \  [ \ep_k , y ] = \sum_{\ell=1}^{\frac{k}{2}-1} (-1)^j [b_{j+1}, b_{k-j}]  \, , \ \ \ \ \ \  k \geq 2 \ {\rm even}
 \label{epsxy}
\eeq
Similarly, one can show from the flatness of the CEE connection (\ref{notsec.02})
or from iterative use of (\ref{epsxy}) on the Lie-polynomial representation (\ref{lieg1.24}) 
of $ b_{k_2} $ that \cite{Broedel:2020tmd}
\beq
[ \ep_{k_1}  , b_{k_2} ] = \sum_{\ell = 0}^{k_2-2} {k_2{-}2 \choose \ell} [ b_{\ell+2} , b_{k_1+k_2 - \ell - 2}]
\label{intrel}
\eeq
where the last term $\ell=k_2{-}2$ on the right side is absent in
the analogous expression for $[ \ep^{ {\rm TS}}_{k_1}  , b_{k_2} ]$.
Relations among $[ \ep^{(j_1)}_{k_1}  , b^{(j_2)}_{k_2} ]$ with non-zero $j_1,j_2$
are discussed in appendix \ref{sec:D.2}, also see section 4 and the ancillary file of \cite{Hidding:2022vjf}.

The simplest examples of the
Lie polynomials $[z_w,x]$ and $[z_w,y]$ are
\begin{align}
    [z_3,x]&= -\dfrac{1}{12}\big[b_2,[b_2,b_3] \big]-\dfrac{1}{8}[b_3,b_4^{(1)}]-\dfrac{1}{4}[b_4,b_3^{(1)}] \label{z3onxy}\\
    [z_3,y]&= -\dfrac{1}{12}\big[b_2,[b_2,b_3^{(1)}] \big]-\dfrac{1}{8}[b_3,b_4^{(2)}]+\dfrac{1}{8}[b_3^{(1)},b_4^{(1)}]
    \notag
\end{align}
and a considerably longer expression for $[z_5,x]$ can be found in appendix \ref{sec:C.3}.
\begin{prop}
\label{e2prop}
The derivation $\ep_2$ is central in the Lie algebra of $x,y,\ep_k$ and $b_\ell$ with $k\geq 0$ even and $\ell \geq 2$, i.e.\ it commutes with all other generators.
\end{prop}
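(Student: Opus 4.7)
The plan is to reduce everything to the observation that $\epsilon_2$ acts trivially on the free generators $x,y$, then propagate this through the derivation property and the relation $\epsilon_k^{\rm TS} = \epsilon_k + b_k$.

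First I would specialize formula (\ref{epsxy}) to $k=2$. The condition $[\epsilon_2,x]=0$ is explicit, and the sum defining $[\epsilon_2,y]$ runs over $1 \leq \ell \leq \tfrac{k}{2}-1 = 0$, hence is empty. Thus $\epsilon_2$ annihilates both $x$ and $y$. Because $\epsilon_2$ acts as a derivation on ${\rm Lie}[x,y]$, the Leibniz rule then forces $[\epsilon_2,L]=0$ for every $L \in {\rm Lie}[x,y]$. In particular $[\epsilon_2,b_\ell]=0$ for all $\ell \geq 2$ via the presentation (\ref{lieg1.24}).

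Next I would handle $\epsilon_0$. Viewed as derivations on ${\rm Lie}[x,y]$, the commutator $[\epsilon_0,\epsilon_2]$ is determined by its values on the generators $x,y$. Using (\ref{lieg1.04}) together with the previous step gives
\begin{equation*}
[\epsilon_0,\epsilon_2](x) = \epsilon_0\bigl(\epsilon_2(x)\bigr) - \epsilon_2\bigl(\epsilon_0(x)\bigr) = 0 - \epsilon_2(y) = 0,
\end{equation*}
and similarly $[\epsilon_0,\epsilon_2](y)=0$, so $[\epsilon_0,\epsilon_2]=0$.

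The remaining case is $[\epsilon_2,\epsilon_k]$ for even $k\geq 2$, which is the part that genuinely uses the Tsunogai side. Here I would combine (\ref{lieg1.25}), centrality (\ref{lieg1.06}) and the annihilation property (\ref{killxy}). Writing $\epsilon_2^{\rm TS} = \epsilon_2 + b_2$ with $b_2 = -[x,y]$, the centrality of $\epsilon_2^{\rm TS}$ gives
\begin{equation*}
0 = [\epsilon_2^{\rm TS},\epsilon_k^{\rm TS}] = [\epsilon_2,\epsilon_k^{\rm TS}] + [b_2,\epsilon_k^{\rm TS}] = [\epsilon_2,\epsilon_k^{\rm TS}],
\end{equation*}
since $[\epsilon_k^{\rm TS},[x,y]]=0$ by (\ref{killxy}). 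Expanding $\epsilon_k^{\rm TS} = \epsilon_k + b_k$ and using $[\epsilon_2,b_k]=0$ from the first paragraph yields $[\epsilon_2,\epsilon_k]=0$.

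The main conceptual point is really the very first step: the empty sum in (\ref{epsxy}) at $k=2$ makes $\epsilon_2$ a zero derivation on ${\rm Lie}[x,y]$, after which the rest follows mechanically from the Leibniz rule and the centrality of $\epsilon_2^{\rm TS}$ among the Tsunogai derivations. No genuine obstacle is expected; the only delicate bookkeeping is verifying that the Tsunogai identity (\ref{killxy}) does the job of killing the $b_2$ admixture when one trades $\epsilon_2^{\rm TS}$ for $\epsilon_2$.
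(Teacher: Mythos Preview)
Your proof is correct and follows essentially the same route as the paper: establish $[\epsilon_2,x]=[\epsilon_2,y]=0$ from the $k=2$ case of (\ref{epsxy}), propagate to all of ${\rm Lie}[x,y]$ and hence to every $b_\ell$, then use $\epsilon_2^{\rm TS}=\epsilon_2+b_2$ together with (\ref{lieg1.06}) and (\ref{killxy}) to kill $[\epsilon_2,\epsilon_k^{\rm TS}]$ and thus $[\epsilon_2,\epsilon_k]$. The only cosmetic difference is that the paper obtains $[\epsilon_0,\epsilon_2]=0$ by citing the highest-weight condition (\ref{lieg1.27}) at $k=2$, whereas you verify it directly by evaluating the derivation $[\epsilon_0,\epsilon_2]$ on $x$ and $y$; both are equally valid.
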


\begin{proof} 
First, $\ep_2$ commutes with $x,y$ by the $k=2$ instance of (\ref{epsxy}). Second,
 $\ep_2$ commutes with $b_k^{(j)}$ for any $k\geq 2$ and $0\leq j\leq k{-}2$ 
in view of the previous point and the fact that 
(\ref{lieg1.24}) together with (\ref{lieg1.04}) determine $b_k^{(j)}$ as a Lie polynomial in $x,y$. Third,
 $\ep_2$ commutes with $\ep_k^{ {\rm TS}}$ and thus $\ep_k$ for any $k\geq 2$ by (\ref{killxy}),
(\ref{lieg1.06}) and $\ep_2 = \ep_2^{{\rm TS}} + [x,y]$. Finally, $\ep_2$ commutes with $\ep_k^{(j) {\rm TS}}$ and thus $\ep_k^{(j)}$ by
the previous point and $[\ep_0,\ep_2]=0$ in (\ref{lieg1.27}).
\end{proof}

We shall therefore set
\beq
\ep_2 = 0 \, , \ \ \ \ \ \ \ep^{\rm TS}_2 = b_2
 \label{lieg1.31}
\eeq
in the remainder of this work.

\subsubsection{The transposition operation}
\label{sec:2.4.4}

For later convenience, we shall here define the genus-one analogue of the 
transposition operation (\ref{svmpl.13}) among the braid operators $e_i, e_j'$ at genus zero.
We assign transposition parities
\begin{align}
x^T &= - x \, , &\ep_k^T &= \ep_k \, , \ \ \ \ \ \ k\geq 0 \ {\rm even}
 \label{lieg1.32} \\
y^T&= y \, , 
&z_w^T &= z_w \, , \ \ \ \ \ \ w\geq 3 \ {\rm odd}
\notag
\end{align}
and impose that concatenation products of the above generators are transposed according to
\beq
(l_1 l_2\ldots l_r)^T = (l_r)^T \ldots (l_2)^T (l_1)^T \, , \ \ \ \ \ \ 
l_i \in \{x,y,\ep_0,\ep_4,\ep_6,\ldots\} \, , \ \ \ \ r\in \mathbb N_0
 \label{lieg1.33}
\eeq
This implies the alternating transposition parities for
\beq
(\ep^{(j)}_{k})^T = (-1)^j \ep^{(j)}_{k} \, , \ \ \ \ \ \ 
(b^{(j)}_{k})^T = (-1)^j b_{k}^{(j)}
 \label{lieg1.34}
\eeq
and thus $\ep^{(j) {\rm TS}}_{k}$ and extends the composition rule (\ref{lieg1.33}) to the composite $l_i \in \{\ep^{(j)}_{k},b^{(j)}_{k}  \}$
with $k\geq 2$ and $0 \leq j \leq k{-}2$.

\subsection{Equivariant and single-valued iterated Eisenstein integrals}
\label{sec:2.5}

We shall here review Brown's construction of non-holomorphic modular forms 
\cite{brown2017multiple, Brown:2017qwo, Brown:2017qwo2}
from the restriction of the iterated $\tau$-integrals (\ref{bsc.07}), (\ref{absc.07}) to holomorphic Eisenstein series
\beq
\ee{j_1 &\ldots &j_r}{k_1 &\ldots &k_r}{\tau} = (2\pi i)^{1+j_r - k_r} \int_\tau^{i\infty} \dd \tau_r \, \tau_r^{j_r} {\rm G}_{k_r}(\tau_r) \ee{j_1 &\ldots &j_{r-1}}{k_1 &\ldots &k_{r-1}}{\tau_r}
\, , 
 \ \ \ \ \ee{\emptyset}{\emptyset}{\tau}  = 1
 \label{lieg1.35}
\eeq
and their complex conjugates with $r\geq 1$, $k_i\geq 2$ even and $0\leq j_i\leq k_i{-}2$ for $i=1,\ldots,r$.
The endpoint divergences from $\tau_r \rightarrow i\infty$ are again shuffle-regularized using Brown's 
tangential-base-point method \cite{brown2017multiple}. As will be reviewed below, genus-one
zeta generators guide the construction of modular complex combinations of (\ref{lieg1.35})
in the same way as genus-zero zeta generators gave rise to single-valued MPLs in one variable
in (\ref{svmpl.12}) \cite{Dorigoni:2024oft}.

\subsubsection{Generating series and modularity properties}
\label{sec:2.5.1}

Apart from the quasi-modular case of ${\rm G}_2(\tau)$, the integration kernels of
the iterated Eisenstein integrals (\ref{lieg1.35}) are gathered in the connection
 \begin{align}
 {\mathbb D}_{\ep^{\rm TS}}(\tau) &=    2\pi i \dd \tau
\sum_{k=4}^\infty \frac{ (k{-}1) }{(2\pi i)^{k}} \sum_{j=0}^{k-2}\dfrac{(-1)^j}{j!}(2\pi i \tau)^j  {\rm G}_k(\tau)\epsilon_k^{{ (j)\rm TS}}  
\label{tsconn}
 \end{align}
which takes values in Tsunogai's derivation algebra and is
closely related to the $\dd \tau$ part of the CEE connection (\ref{notsec.02})
at $z=0$.\footnote{More precisely, the connection (\ref{tsconn}) is obtained from
\[
  \mathbb K^{(\tau)}_{\ep,b}(z,\tau) = 2\pi i \bigg( \epsilon_0+\sum_{k=2}^\infty \frac{(k{-}1)}{(2\pi i)^k} \big[ {\rm G}_k(\tau)\epsilon_k - g^{(k)}(z,\tau) b_k \big]  \bigg)
  \]
by setting $z=0$, subtracting the $\epsilon_2^{{\rm TS}} $ part and performing following gauge transformation
\[
 {\mathbb D}_{\ep^{\rm TS}}(\tau) =
 \dd \tau\, e^{-2\pi i \ep_0 \tau} \mathbb K^{(\tau)}_{\ep,b}(0,\tau) 
e^{2\pi i \ep_0 \tau} + (\dd e^{-2\pi i \ep_0 \tau}) e^{2\pi i \ep_0 \tau}
- \frac{\dd \tau}{2\pi i}\, {\rm G}_2(\tau)\epsilon_2^{{\rm TS}}  
 \]} The iterated Eisenstein integrals (\ref{lieg1.35})
are then generated by the path-ordered exponential
\begin{align}
\mathbb I_{\ep^{\rm TS}}(\tau) &= {\rm Pexp} \bigg( \int^{i\infty}_\tau \mathbb D_{\ep^{\rm TS}}(\tau_1)   \bigg)
\label{notsec.18} \\
&= 1+\sum_{k_1=4}^\infty (k_1{-}1)  \sum_{j_1=0}^{k_1-2} \dfrac{ (-1)^{j_1} }{j_1!} \, \ee{j_1}{k_1}{\tau}\epsilon_{k_1}^{(j_1) {\rm TS}} \notag \\
&\quad
+\sum_{k_1,k_2=4}^\infty (k_1{-}1)(k_2{-}1) \sum_{j_1=0}^{k_1-2}  \sum_{j_2=0}^{k_2-2}\dfrac{ (-1)^{j_1+j_2} }{j_1!j_2!} \, \ee{j_1&j_2}{k_1&k_2}{\tau} \epsilon_{k_1}^{(j_1) {\rm TS}} \epsilon_{k_2}^{(j_2) {\rm TS}} +\ldots
\notag
\end{align}
By the modular weight $(k,0)$ of ${\rm G}_k(\tau)$, the differential forms $\dd \tau \, \tau^j {\rm G}_k(\tau)$
in the connection (\ref{tsconn}) close under ${\rm SL}_2(\mathbb Z)$ transformations.
In fact, the mixing of the differential forms $\dd \tau \, \tau^j {\rm G}_k(\tau)$
with fixed $k$ and $0\leq j \leq k{-}2$ under the modular group can be described
through the action (\ref{lieg1.13}) and (\ref{lieg1.15}) of the $\mathfrak{sl}_2$ generators $\ep_0,\ep_0^\vee$
on the accompanying $ \epsilon_{k}^{(j) {\rm TS}}$,
\beq
  {\mathbb D}_{\ep^{\rm TS}}  (\gamma\cdot \tau)   = U_\gamma^{-1} {\mathbb D}_{\ep^{\rm TS}}  ( \tau)  U_\gamma \, , \ \ \ \ \ \ \gamma \in {\rm SL}_2(\mathbb Z)
 \label{lieg1.38} 
 \eeq
where  $\gamma\cdot \tau = \frac{a\tau+b}{c\tau + d}$ for $\gamma = ( \smallmatrix a &b \\ c &d \endsmallmatrix ) \in {\rm SL}_2(\mathbb Z)$ and $U_\gamma$ is in the Lie group of $\mathfrak{sl}_2$. The action of
$U_T$ and $U_S$ for the modular $T: \tau \rightarrow \tau{+}1$
and $S: \tau \rightarrow -\frac{1}{\tau}$ transformations is given by\footnote{The expression for $U_S$ in (\ref{lieg1.39}) departs from that in section 4.2 of \cite{Dorigoni:2024oft},
\[
U^{[29]}_S = (2\pi i )^{[\ep_0,\ep_0^\vee]} e^{-\ep_0^\vee} e^{\ep_0} e^{-\ep_0^\vee}   = e^{- \ep_0^\vee/(2\pi i)} e^{ 2\pi i \ep_0} e^{-  \ep_0^\vee/ (2\pi i)} = U_S^{-1} 
\]
by an additional minus sign in the action on the generators $x,y$,
\[
  (U_S)^{-1} \bigg( \begin{array}{c} x \\   y \end{array} \bigg) U_S
  = - 
(U^{[29]}_S)^{-1} \bigg( \begin{array}{c} x \\   y \end{array} \bigg) U^{[29]}_S
= \bigg( \begin{array}{c} 2\pi i y \\  -x/(2\pi i) \end{array} \bigg)
\]
By the even degrees of $\ep_k^{(j){\rm TS}}$, both $U_S$ and $U^{[29]}_S$ have the same action (\ref{lieg1.40}) on Tsunogai derivations.}
\beq
U_T = e^{2\pi i \ep_0} \, , \ \ \ \ \ \ U_S 
= e^{\ep_0^\vee} e^{-\ep_0} e^{\ep_0^\vee} (2\pi i )^{-[\ep_0,\ep_0^\vee]}  = e^{ \ep_0^\vee/(2\pi i)} e^{ -2\pi i \ep_0} e^{\ep_0^\vee/ (2\pi i)}
 \label{lieg1.39} 
\eeq
which determines $U_\gamma$ by decomposing $\gamma \in {\rm SL}_2(\mathbb Z)$
into the generators $S,T$ and maps individual Tsunogai derivations to
\begin{align}
U_T^{-1}  \epsilon_{k}^{(j) {\rm TS}}  U_T &= 
\sum_{p=0}^{k-j-2} \frac{(-2\pi i)^p}{p!} \, \epsilon_{k}^{(j+p) {\rm TS}}   \label{lieg1.40} \\
U_S^{-1}  \epsilon_{k}^{(j) {\rm TS}}  U_S &=  \frac{ (-1)^j \, j!}{(k{-}2{-}j)!} (2\pi i )^{k-2-2j} \, \epsilon_{k}^{(k-j-2) {\rm TS}}
 \notag
\end{align}
However, the ${\rm SL}_2(\mathbb Z)$ transformation (\ref{lieg1.38}) of the connection
does not carry over to its path-ordered exponential $\mathbb I_{\ep^{\rm TS}}(\tau) $
in (\ref{notsec.18}): The obstruction is that the endpoint $i\infty$ of the integration path for
(\ref{notsec.18}) transforms to a distinct point $\gamma^{-1}(i\infty)$ 
as we perform the change of integration variable $\tau_1 = \gamma\cdot \rho_1$
in the first line of
\begin{align}
\mathbb I_{\ep^{\rm TS}}(\gamma\cdot \tau)  &= 
{\rm Pexp} \bigg( \int^{i\infty}_{\gamma\cdot \tau}    \mathbb  D_{\ep^{\rm TS}}(\tau_1)   \bigg) 
=
{\rm Pexp} \bigg( \int^{\gamma^{-1}(i\infty)}_{\tau}  U_\gamma^{-1} \, \mathbb  D_{\ep^{\rm TS}}(\rho_1) \, U_\gamma  \bigg)
\notag \\
&= {\rm Pexp} \bigg( \int^{\gamma^{-1}(i\infty)}_{i\infty}  U_\gamma^{-1}\,  \mathbb  D_{\ep^{\rm TS}}(\rho_1)  \,U_\gamma  \bigg)
\, U_\gamma^{-1}\,
\mathbb I_{\ep^{\rm TS}}(\tau) \,U_\gamma
\label{lieg1.41}
\end{align}
The path-ordered exponential in the second line connecting $\gamma^{-1}(i\infty)$ with $i\infty$ is
known as the {\it cocycle} associated with the modular transformation $\gamma \in {\rm SL}_2(\mathbb Z)$
and is the departure of $\mathbb I_{\ep^{\rm TS}}(\gamma\cdot \tau)$ from the
simple modular transformation law (\ref{lieg1.38}) of the connection.

\subsubsection{Equivariant iterated Eisenstein integrals}
\label{sec:2.5.2}

The previous section raises the question whether $\mathbb I_{\ep^{\rm TS}}(\tau)$ can be
composed with other group-like series in Tsunogai derivations to form a
series with the {\it equivariant} modular transformation
 \beq
\mathbb I^{\rm eqv}_{\ep^{\rm TS}}(\gamma\cdot \tau)  = 
U_\gamma^{-1} \mathbb I^{\rm eqv}_{\ep^{\rm TS}}( \tau) 
U_\gamma \ \ \forall \ \ \gamma \in {\rm SL}_2(\mathbb Z)
 \label{lieg1.42} 
\eeq
without any cocycles like the leftmost factor in the second line of (\ref{lieg1.41}). This question was answered in Brown's work \cite{Brown:2017qwo2} 
by composing $\mathbb I_{\ep^{\rm TS}}(\tau)$ with series in complex conjugate iterated Eisenstein integrals
and in single-valued MZVs. The fully explicit form of Brown's construction was proposed 
in \cite{Dorigoni:2024oft} and relies on series in genus-one zeta generators $\sigma_w$
\begin{align}
\mathbb M^{\rm sv}_{\sigma} &= \sum_{r=0}^{\infty} \sum_{i_1,\ldots,i_r \atop {\in 2\mathbb N+1} } \rho^{-1}\big({\rm sv}(f_{i_1} \ldots f_{i_r})\big) \, \sigma_{i_1}\ldots  \sigma_{i_r}
\label{not.05}
\end{align}
which mirrors the structure of its genus-zero counterpart $\mathbb M^{\rm sv}_0$ in (\ref{g0.05}). We shall similarly
write $\mathbb M^{\rm sv}_{z}$ and $\mathbb M^{\rm sv}_{\Sigma(u)}$ for the
variants of the series (\ref{not.05}) where each $\sigma_w$ is replaced by its
arithmetic, $\mathfrak{sl}_2$-invariant counterpart $z_w$, or by generators
$\Sigma_w(u)$ to be introduced in section \ref{sec:3.1.3} below.
With these series in single-valued MZVs in place, Brown's equivariant
series of iterated Eisenstein integrals of \cite{Brown:2017qwo2} translates into the statement of the following theorem:

\begin{theorem}
\label{2.thm:1}
 (conjectured in \cite{Dorigoni:2024oft} and equivalent to Theorem 8.2 of \cite{Brown:2017qwo2} based on results in section 15 of \cite{brown2017multiple})

The generating series $\mathbb I^{\rm eqv}_{\ep^{\rm TS}}(\tau)$ defined by 
 \beq
\mathbb I^{\rm eqv}_{\ep^{\rm TS}}(\tau)  = 
(\mathbb M^{\rm sv}_{z})^{-1}  \, \overline{ \mathbb I_{\ep^{\rm TS}}(\tau)^T} \, \mathbb M^{\rm sv}_{\sigma}\,
\mathbb I_{\ep^{\rm TS}}(\tau) 
 \label{lieg1.51} 
\eeq
\begin{itemize}
\item[(i)] takes values in the universal enveloping algebra of Tsunogai's derivations $\ep_k^{(j){\rm TS}}$ with $k\geq 4$ and $0\leq j \leq k{-}2$;
\item[(ii)] obeys the equivariance property (\ref{lieg1.42}) under ${\rm SL}_2(\mathbb Z)$.
\end{itemize}
\end{theorem}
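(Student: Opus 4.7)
I would tackle the two parts (i) and (ii) separately, both by exploiting the splitting $\sigma_w = z_w + (\sigma_w - z_w)$ of genus-one zeta generators into an arithmetic piece $z_w$ and a geometric remainder lying entirely in ${\rm Lie}[\epsilon_k^{(j){\rm TS}}]$, as furnished by (\ref{lieg1.21}).

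For part (i), the point is that the apparently foreign generators $z_w$ can be made to disappear from the expansion of (\ref{lieg1.51}). Expand the word $\sigma_{i_1}\sigma_{i_2}\ldots \sigma_{i_r}$ in $\mathbb M^{\rm sv}_\sigma$ via the splitting above, so that each $z_{i_\ell}$ can be commuted leftward past both the adjacent geometric $\sigma_{i_{\ell'}}-z_{i_{\ell'}}$ factors and the antiholomorphic path-ordered exponential $\overline{\mathbb I_{\epsilon^{\rm TS}}(\tau)^T}$. By (\ref{lieg1.23}), each such commutator $[z_{w},\epsilon_{k}^{(j){\rm TS}}]$ is itself a Lie series in Tsunogai derivations, so this reordering produces only words in $\epsilon_k^{(j){\rm TS}}$ on the right while collecting all the $z_w$'s on the far left. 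By construction of $\mathbb M^{\rm sv}_\sigma$ and $\mathbb M^{\rm sv}_z$ through the same $f$-alphabet coefficients (\ref{not.05}), the reassembled left factor is precisely $\mathbb M^{\rm sv}_z$, which is annihilated by the leading $(\mathbb M^{\rm sv}_z)^{-1}$. What remains is a series in the universal enveloping algebra of the $\epsilon_k^{(j){\rm TS}}$, proving (i).

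For part (ii), I would apply the cocycle formula $\mathbb I_{\epsilon^{\rm TS}}(\gamma\cdot\tau) = C_\gamma\, U_\gamma^{-1}\, \mathbb I_{\epsilon^{\rm TS}}(\tau)\, U_\gamma$ from (\ref{lieg1.41}) together with its complex conjugate transpose $\overline{\mathbb I_{\epsilon^{\rm TS}}(\gamma\cdot\tau)^T} = U_\gamma^{-1}\,\overline{\mathbb I_{\epsilon^{\rm TS}}(\tau)^T}\,U_\gamma \cdot \overline{C_\gamma^T}$ (recalling that transposition reverses the order of factors while the action $U_\gamma$ is conjugation). Substituting these into (\ref{lieg1.51}), the two occurrences of $U_\gamma^{-1}(\ldots)U_\gamma$ must be joined across the middle factor $\mathbb M^{\rm sv}_\sigma$. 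The arithmetic generators $z_w$ in the leftmost $(\mathbb M^{\rm sv}_z)^{-1}$ are $\mathfrak{sl}_2$-singlets by (\ref{lieg1.22}) and hence commute with $U_\gamma$, so $U_\gamma^{-1}$ can be pulled past $(\mathbb M^{\rm sv}_z)^{-1}$ to the outer left. The remaining task is to verify that $\overline{C_\gamma^T}\, \mathbb M^{\rm sv}_\sigma\, C_\gamma = \mathbb M^{\rm sv}_\sigma$, i.e.\ that the cocycles are absorbed by the single-valued MZV series.

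For $\gamma=T$, absorption is elementary: from (\ref{intr.76}) the cocycle factor $e^{2\pi i N^{\rm TS}}$ commutes with every $\sigma_w$ by (\ref{notsec.22}), so it may be transported across $\mathbb M^{\rm sv}_\sigma$ to meet and cancel its complex conjugate coming from $\overline{C_T^T}$. The main obstacle is therefore the $S$-cocycle, whose Pexp from $-1/\tau$ to $i\infty$ is an infinite Lie series in $\epsilon_k^{(j){\rm TS}}$ with MZV coefficients, and whose cancellation is the precise content of the single-valued map on motivic MZVs. Here I would invoke the motivic-period machinery of section 15 of \cite{brown2017multiple} and Theorem 8.2 of \cite{Brown:2017qwo2} to identify $\overline{C_S^T}\,\mathbb M^{\rm sv}_\sigma\, C_S = \mathbb M^{\rm sv}_\sigma$, which combined with the $T$-cocycle absorption and multiplicativity of cocycles under composition of modular transformations yields equivariance for all $\gamma\in {\rm SL}_2(\mathbb Z)$.
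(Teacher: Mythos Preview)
Your approach to part (i) is correct and matches the paper's proof: decompose $\sigma_w = z_w + (\sigma_w - z_w)$ into arithmetic plus geometric parts, use (\ref{lieg1.23}) to commute the $z_w$'s leftward past the Tsunogai derivations in $\overline{\mathbb I_{\ep^{\rm TS}}(\tau)^T}$ and in the geometric remainders, and cancel the resulting $\mathbb M^{\rm sv}_z$ against the leading $(\mathbb M^{\rm sv}_z)^{-1}$.

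For part (ii), your strategy is right but the key identity you isolate is wrong. Given your own substitution
\[
\mathbb I^{\rm eqv}_{\ep^{\rm TS}}(\gamma\cdot\tau) = U_\gamma^{-1}\,(\mathbb M^{\rm sv}_z)^{-1}\,\overline{\mathbb I_{\ep^{\rm TS}}(\tau)^T}\,U_\gamma\,\overline{C_\gamma^T}\,\mathbb M^{\rm sv}_\sigma\,C_\gamma\,U_\gamma^{-1}\,\mathbb I_{\ep^{\rm TS}}(\tau)\,U_\gamma,
\]
what you need in order to recover $U_\gamma^{-1}\,\mathbb I^{\rm eqv}_{\ep^{\rm TS}}(\tau)\,U_\gamma$ is
\[
\overline{C_\gamma^T}\,\mathbb M^{\rm sv}_\sigma\,C_\gamma \;=\; U_\gamma^{-1}\,\mathbb M^{\rm sv}_\sigma\,U_\gamma,
\]
\emph{not} $\mathbb M^{\rm sv}_\sigma$. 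The zeta generators $\sigma_w$ are not $\mathfrak{sl}_2$-invariant (only their arithmetic parts $z_w$ are, see (\ref{lieg1.22})), so $\mathbb M^{\rm sv}_\sigma$ does not commute with $U_\gamma$; your stated identity is false already for $\gamma=T$. The paper's $T$-computation in (\ref{extra.52}) shows this explicitly: after $e^{\pm 2\pi i N^{\rm TS}}$ cancel via (\ref{lieg1.52}), residual factors $e^{\pm 2\pi i \ep_0} = U_T^{\pm 1}$ remain and supply the outer conjugation rather than being absorbed. For $\gamma=S$, the paper proves the correct quadratic relation $\mathbb S_{\ep^{\rm TS}}^T\,\mathbb M^{\rm sv}_\sigma\,\mathbb S_{\ep^{\rm TS}} = U_S^{-1}\,\mathbb M^{\rm sv}_\sigma\,U_S$ as item (iv) of Lemma~\ref{applem} in appendix~\ref{app:thm21}, not by black-box citation but through the infinitesimal-coaction operator $\partial_{f_w}$: one writes the $S$-cocycle as $\mathbb S_{\ep^{\rm TS}} = (\mathbb M_\sigma)^{-1}\,\mathbb X_{\ep^{\rm TS}}\,U_S^{-1}\,\mathbb M_\sigma\,U_S$ with a $\mathbb Q[\pi^2]$-valued ``initial value'' $\mathbb X_{\ep^{\rm TS}}$ satisfying $(\mathbb X_{\ep^{\rm TS}})^T = (\mathbb X_{\ep^{\rm TS}})^{-1}$, from which the quadratic relation follows algebraically. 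This is the substantive content that your sketch defers.
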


Note that the transposition operation $(\ldots)^T$ reverses the Tsunogai derivations in the complex
conjugates of the series (\ref{notsec.18}) and introduces alternating signs
$ (\ep^{(j) {\rm TS} }_{k})^T = (-1)^j \ep^{(j) {\rm TS}}_{k} $ according to (\ref{lieg1.34}).
The expansion of $\overline{ \mathbb I_{\ep^{\rm TS}}(\tau)^T} $ in terms of complex conjugate iterated Eisenstein integrals is spelled out in (\ref{cc.18}).

\begin{proof}
\phantom{x}

$(i)$: We start by
decomposing all the zeta generators in $ \mathbb M^{\rm sv}_{\sigma}$ into $\sigma_w = z_w+\ldots$
with a Lie series of Tsunogai derivations in the ellipsis. The bracket relations (\ref{lieg1.23}) for
$[z_w, \ep_k^{(j){\rm TS}}]$ in terms of Lie polynomials in $\ep_m^{(\ell){\rm TS}}$ then allow to move all the $z_w$
to the leftmost position and to write $ \mathbb M^{\rm sv}_{\sigma} =  \mathbb M^{\rm sv}_{z} \mathbb B^{\rm sv}_{\ep^{\rm TS}}$ with a $\ep_k^{(j){\rm TS}}$-valued series $\mathbb B^{\rm sv}_{\ep^{\rm TS}}$. In a last step, one 
again uses the brackets in (\ref{lieg1.23}) to iteratively express the nested commutators
\beq
(\mathbb M_z^{\rm sv})^{-1} \, X \,
\mathbb M_z^{\rm sv} = \sum_{r=0}^{\infty} \sum_{i_1,i_2,\ldots,i_r \atop {\in 2\mathbb N+1} } \rho^{-1}\big({\rm sv}(f_{i_1} f_{i_2}\ldots f_{i_r})\big) \, [[ \ldots [[ X,z_{i_1}] , z_{i_2}], \ldots ], z_{i_r}]
\label{remzs}
\eeq
in terms of Tsunogai derivations, where $X$ represents an arbitrary word in 
$\ep_k^{(j){\rm TS}}$ from the expansion (\ref{cc.18}) of $\overline{ \mathbb I_{\ep^{\rm TS}}(\tau)^T}$.

$(ii)$: The key mechanism for the equivariance of (\ref{lieg1.51}) is the fact that 
the series $\mathbb M^{\rm sv}_{\sigma}$ absorbs the cocycle factors on its right and left due to the 
paths $\int^{\gamma^{-1}(i\infty)} _{i\infty}$ in the expression (\ref{lieg1.41})
for $\mathbb I_{\ep^{\rm TS}}(\gamma\cdot \tau) $ and its complex-conjugate
transpose $ \overline{ \mathbb I_{\ep^{\rm TS}}(\gamma \cdot \tau)^T}$, 
respectively \cite{Brown:2017qwo2, Dorigoni:2024oft}.
In case of the modular $T$ transformation, (\ref{lieg1.41}) specializes to
$\mathbb I_{\ep^{\rm TS}}(\tau{+}1)  = e^{2\pi i N^{\rm TS}} \mathbb I_{\ep^{\rm TS}}( \tau)  e^{2\pi i \ep_0}$
with $N^{\rm TS}$ given by (\ref{notsec.22}), and the the $T$ equivariance property (\ref{lieg1.42}) follows
from inserting
\beq
[N^{\rm TS} , \sigma_w ] = 0 \, , \ \ \ \ w\geq 3 \ {\rm odd} \ \ \ \ \Longrightarrow \ \ \ \
e^{-2\pi i N^{\rm TS}} \, \mathbb M^{\rm sv}_{\sigma} \, e^{2\pi i N^{\rm TS}} = \mathbb M^{\rm sv}_{\sigma}
 \label{lieg1.52} 
\eeq
into (see section 4.1 of \cite{Dorigoni:2024oft} and section 7.1 of \cite{Brown:2017qwo2})
\begin{align}
\mathbb I^{\rm eqv}_{\ep^{\rm TS}}(\tau{+}1) &= ( \mathbb M_z^{\rm sv} )^{-1} \, e^{-2\pi i \ep_0} \, \overline{ \mathbb I_{\ep^{\rm TS}}(\tau)^T} \, e^{-2\pi i N^{\rm TS}} \, \mathbb M^{\rm sv}_{\sigma}  \, e^{2\pi i N^{\rm TS}}\,
\mathbb I_{\ep^{\rm TS}}(\tau)\,
e^{2\pi i \ep_0} \notag \\
&= e^{-2\pi i \ep_0} \,  ( \mathbb M_z^{\rm sv} )^{-1} \, \overline{ \mathbb I_{\ep^{\rm TS}}(\tau)^T} \, \mathbb M^{\rm sv}_{\sigma} \,
\mathbb I_{\ep^{\rm TS}}(\tau)\,
e^{2\pi i \ep_0} \notag \\
&= e^{-2\pi i \ep_0} \, \mathbb I^{\rm eqv}_{\ep^{\rm TS}}(\tau) \,
e^{2\pi i \ep_0} 
 \label{extra.52} 
\end{align}
In case of the modular $S$ transformation, the absorption of cocycle factors via $\mathbb M^{\rm sv}_{\sigma}$ is based on a more refined argument which can be found in appendix \ref{app:thm21}.
In both cases, the left-multiplicative $U_\gamma^{-1} $ produced by $ \overline{ \mathbb I_{\ep^{\rm TS}}(\gamma \cdot \tau)^T}$
commutes with $(\mathbb M^{\rm sv}_{z})^{-1}$ for all $\gamma \in {\rm SL}_2(\mathbb Z)$ 
by $\mathfrak{sl}_2$ invariance of the arithmetic $z_w$ and becomes the leftmost factor of (\ref{lieg1.42}).
\end{proof}

Note that, after removing all the arithmetic zeta generators $z_w$ via (\ref{remzs}), the equivariant series (\ref{lieg1.51}) may be expanded
\begin{align}
\mathbb I^{\rm eqv}_{\ep^{\rm TS}}(\tau) &=1+\sum_{k_1=4}^\infty (k_1{-}1)  \sum_{j_1=0}^{k_1-2} \dfrac{ (-1)^{j_1} }{j_1!} \, \eeqqvv{j_1}{k_1}{\tau}\epsilon_{k_1}^{(j_1) {\rm TS}} \label{nwieqv} \\
&\quad
+\sum_{k_1,k_2=4}^\infty (k_1{-}1)(k_2{-}1) \sum_{j_1=0}^{k_1-2}  \sum_{j_2=0}^{k_2-2}\dfrac{ (-1)^{j_1+j_2} }{j_1!j_2!} \, \eeqqvv{j_1&j_2}{k_1&k_2}{\tau} \epsilon_{k_1}^{(j_1) {\rm TS}} \epsilon_{k_2}^{(j_2) {\rm TS}} +\ldots
\notag
\end{align}
The coefficients ${\cal E}^{\rm eqv}[\ldots;\tau]$ will be referred to as {\it equivariant iterated Eisenstein integrals}. Their combinations entering (\ref{nwieqv}) are expressible in terms of combinations of the meromorphic iterated Eisenstein integrals ${\cal E}[\ldots;\tau]$ in (\ref{notsec.18}) and their complex conjugates, with $\mathbb Q$-linear combinations of single-valued MZVs as coefficients \cite{Brown:2017qwo2, Dorigoni:2024oft}. 
However, the coefficients $\eeqqvv{j_1&j_2}{k_1&k_2}{\tau}$ at degree $k_1{+}k_2\geq14$
(and generic ${\cal E}^{\rm eqv}[\ldots;\tau]$ at modular depth $\geq 3$ and degree $\geq 16$) are not individually well-defined by (\ref{nwieqv})
since the accompanying derivations obey
Pollack relations such as (\ref{lieg1.08}). The reason is that Eisenstein series alone do not
suffice to find equivariant completions of holomorphic double integrals over ${\rm G}_{k_1}(\tau_1)
{\rm G}_{k_2}(\tau_2)$ with $k_1{+}k_2\geq14$, and holomorphic cusp forms are need
as additional integration kernels \cite{brown2017multiple, Brown:2017qwo}.
By providing the equivariant completion via cusp forms, standalone
definitions of all the $\eeqqvv{j_1&j_2}{k_1&k_2}{\tau}$ with arbitrary $k_1,k_2$
and $0\leq j_i \leq k_i{-}2$ has been given in \cite{Dorigoni:2021ngn, Dorigoni_2022}, also see 
\cite{Dorigoni:2024oft} for generalizations to triple Eisenstein integrals.

\subsubsection{Modular versus holomorphic frame}
\label{sec:2.5.3}

In spite of its equivariant transformation (\ref{lieg1.42}), the coefficients of
the words $\ep_k^{(j){\rm TS}}$ of the series $\mathbb I^{\rm eqv}_{\ep^{\rm TS}}(\tau) $ are
not yet modular forms. Still, one can attain a modular connection\footnote{The connection (\ref{tsconn}) transforms as follows under $\Umod(\tau)$ in (\ref{usl2}) 
\[
\Umod(\tau) \,  {\mathbb D}_{\ep^{\rm TS}}(\tau_1)  \,\Umod(\tau)^{-1}
= \frac{ \dd \tau_1 }{2\pi i}\sum_{k=4}^{\infty}  (k{-}1)  \sum_{j=0}^{k-2} \frac{(-1)^j}{j!} \,
\bigg( \frac{\tau{-}\tau_1}{4\pi \Im \tau} \bigg)^{k-j-2}\! (\bar\tau{-}\tau_1)^j \, {\rm G}_k(\tau_1)\,
\ep_k^{(j){\rm TS}}
\]
such that the coefficients of the series (\ref{defheqv}) are iterated integrals
of forms $\sim \dd \tau_1 (\tau{-}\tau_1)^{k-j-2} (\bar\tau{-}\tau_1)^j {\rm G}_k(\tau_1)$.} 
and series in iterated 
Eisenstein integrals by conjugation with the element \cite{Dorigoni:2024oft}\footnote{Note that the expression (\ref{usl2}) for $\Umod(\tau)$ is denoted by $U_{{\rm SL}_2}(\tau)$ in \cite{Dorigoni:2024oft}.} 
\begin{align}
\Umod(\tau) = \exp \bigg( \frac{-\epsilon_0^\vee}{4\pi \Im \tau} \bigg)  \exp(2\pi i \bar \tau \epsilon_0)
\label{usl2}
\end{align}
of the Lie group of $\mathfrak{sl}_2$. The transformed series\footnote{We depart from the conventions of \cite{Dorigoni:2024oft}, where 
the derivations in the expansion (\ref{defheqv}) of $\mathbb H^{\rm eqv}_{\ep^{\rm TS}}(\tau) $ 
are further reflected $\frac{1}{j!}\epsilon_{k}^{(j) {\rm TS}} \rightarrow \frac{1}{(k{-}j{-}2)!}\epsilon_{k}^{(k-j-2) {\rm TS}} $
and reversed in their concatenation order through an operation $R[\ldots]$. The $\tau$-derivative
of the resulting series $\mathbb J^{\rm eqv}_{\ep^{\rm TS}}  = R[ \mathbb H^{\rm eqv}_{\ep^{\rm TS}}]$ 
in the reference produces left-multiplicative Tsunogai derivations that line up with the
differential equations of closed-string genus-one integrals in \cite{Gerken:2019cxz, Gerken:2020yii} which involve
conjectural matrix representations of the $\epsilon_{k}^{(j) {\rm TS}}$.}
\begin{align}
\mathbb H^{\rm eqv}_{\ep^{\rm TS}}(\tau)  &= \Umod(\tau) \, \mathbb I^{\rm eqv}_{\ep^{\rm TS}}(\tau)  \, \Umod(\tau)^{-1} 
 \label{defheqv} \\
&= 1+\sum_{k_1=4}^\infty (k_1{-}1)  \sum_{j_1=0}^{k_1-2} \dfrac{ (-1)^{j_1} }{j_1!} \beqvtau{j_1}{k_1}{\tau}\epsilon_{k_1}^{(j_1) {\rm TS}} \notag \\
&\quad
+\sum_{k_1,k_2=4}^\infty (k_1{-}1)(k_2{-}1) \sum_{j_1=0}^{k_1-2}  \sum_{j_2=0}^{k_2-2}\dfrac{ (-1)^{j_1+j_2} }{j_1!j_2!} \beqvtau{j_1&j_2}{k_1&k_2}{\tau} \epsilon_{k_1}^{(j_1) {\rm TS}} \epsilon_{k_2}^{(j_2) {\rm TS}} +\ldots
\notag
\end{align}
then becomes invariant under $T$ and is diagonal under $S$ as one can see from the Cartan generator
${\rm h}=[\ep_0,\ep_0^\vee]$ in 
\beq
\mathbb H^{\rm eqv}_{\ep^{\rm TS}}(\tau{+}1)= \mathbb H^{\rm eqv}_{\ep^{\rm TS}}(\tau) \, , \ \ \ \ \ \
\mathbb H^{\rm eqv}_{\ep^{\rm TS}}\bigg({-}\frac{1}{\tau} \bigg) = 
\bar\tau^{-{\rm h}} \, \mathbb H^{\rm eqv}_{\ep^{\rm TS}}(\tau)\, \bar\tau^{{\rm h}}
\label{trfheqv}
\eeq
By the Cartan eigenvalues $2j{-}k{+}2$ of $\ep_k^{(j){\rm TS}}$ in (\ref{lieg1.14}),
the transformation (\ref{trfheqv}) translates into the following modular weights of
its coefficients in (\ref{defheqv})
\beq
\beqvtau{j_1 &\ldots &j_r}{k_1 &\ldots &k_r}{ \frac{a\tau{+}b}{ c\tau{+}d } }
= \bigg( \prod_{i=1}^r (c \bar \tau{+}d)^{k_i - 2j_i - 2} \bigg)\, \beqvtau{j_1 &\ldots &j_r}{k_1 &\ldots &k_r}{ \tau }
\label{wtbeqv}
\eeq
However, the coefficients $\beqvtau{j_1&j_2}{k_1&k_2}{\tau}$ at degree $k_1{+}k_2\geq14$
are not individually well-defined by (\ref{defheqv}) due to Pollack relations among the accompanying $\ep_k^{(j){\rm TS}}$, see the discussion below (\ref{nwieqv}). Hence, (\ref{wtbeqv}) is understood to apply to those combinations of $\beta^{\rm eqv}[\ldots;\tau]$ which arise in the expansion (\ref{defheqv}) after modding out by all Pollack relations among $\ep_k^{(j) {\rm TS}}$

\subsubsection{Single-valued iterated Eisenstein integrals}
\label{sec:2.5.4}

The following variant of the equivariant series (\ref{lieg1.51}),
with $(\mathbb M^{\rm sv}_{\sigma})^{-1} $ in the place of $(\mathbb M^{\rm sv}_{z})^{-1}$
as its leftmost factor, generates single-valued iterated Eisenstein integrals
\cite{Dorigoni:2024oft} 
 \beq
\mathbb I^{\rm sv}_{\ep^{\rm TS}}(\tau)  = 
(\mathbb M^{\rm sv}_{\sigma})^{-1}  \, \overline{ \mathbb I_{\ep^{\rm TS}}(\tau)^T} \, \mathbb M^{\rm sv}_{\sigma}\,
\mathbb I_{\ep^{\rm TS}}(\tau) 
 \label{lieg1.61} 
\eeq
In contrast to $\mathbb I^{\rm eqv}_{\ep^{\rm TS}}(\tau)$, the series $\mathbb I^{\rm sv}_{\ep^{\rm TS}}(\tau)$ is unaffected by the ambiguities in singling out the arithmetic generators
$z_w$ within $\sigma_w$ (though the criterion of \cite{Dorigoni:2024iyt}
to not admit any $\mathfrak{sl}_2$ invariant terms in $\sigma_w{-} z_w$ 
prescribes a canonical choice of $z_w$). Moreover, the conjugation of 
$ \overline{ \mathbb I_{\ep^{\rm TS}}(\tau)^T}$ by $\mathbb M^{\rm sv}_{\sigma}$ brings
the single-valued genus-one series (\ref{lieg1.61}) in close formal analogy with the
construction (\ref{svmpl.12}) of single-valued MPLs in one variable at genus zero
(with the non-equivariant $\mathbb I_{\ep^{\rm TS}}(\tau) $ taking the role of the
multi-valued MPL series $\mathbb G_{e_0,e_1}(z) $).
However, given that the leftmost factor of $(\mathbb M^{\rm sv}_{\sigma})^{-1} $ in
(\ref{lieg1.61}) does not commute with the transformation $U^{-1}_\gamma$, the single-valued series does not share the
 equivariant modular properties (\ref{lieg1.42}) of $\mathbb I^{\rm eqv}_{\ep^{\rm TS}}(\gamma\cdot \tau)$
 \beq
\mathbb I^{\rm sv}_{\ep^{\rm TS}}(\gamma\cdot \tau)  = 
(\mathbb M^{\rm sv}_{\sigma})^{-1}  \, U_\gamma^{-1} \,
\mathbb M^{\rm sv}_{\sigma} \, \mathbb I^{\rm sv}_{\ep^{\rm TS}}( \tau) 
U_\gamma
 \label{lieg1.62} 
\eeq
In Theorem \ref{3.cor:1} below, we present a series in single-valued eMPLs which by Remark \ref{3.rmk:1} generalizes the formal analogy between genus zero and genus one to one additional
marked point. In the same way as single-valued MPLs in two variables
are built in (\ref{svmpl.15}) via conjugation by a series in their one-variable counterpart,
the series of single-valued eMPLs features a conjugation by $\mathbb I^{\rm sv}_{\ep^{\rm TS}}(\tau)$
which takes the role of the conjugation by $ \mathbb G^{\rm sv}_{e_0,e_1}(y) $ at genus zero.

\subsection{Building blocks for $z$-dependent equivariant series}
\label{sec:3.1}

In preparation for the equivariant and single-valued series that depend on both $\tau$
and a point $z=u\tau {+}v$ on the torus, we shall here gather the composing series.

\subsubsection{Series in iterated $\tau$-integrals}
\label{sec:3.1.1}

Iterated integrals over combinations of $ f^{(k)}(u\tau{+}v,\tau)$ and ${\rm G}_k(\tau)$
will be generated from the connection\footnote{The $u,v$-dependent connection
(\ref{defdepb}) specializes to the solely $\tau$ dependent $ \mathbb D_{\ep^{\rm TS}}(\tau) $ in
(\ref{tsconn}) via 
 \[
  \mathbb D_{\ep^{\rm TS}}(\tau) = \lim_{u,v \rightarrow 0} \bigg\{
  \mathbb D_{\ep,b}(u,v,\tau) + \frac{\dd \tau}{2\pi i} \,  f^{(2)}(u\tau{+}v,\tau)\, b_2
  \bigg\}
 \]
 where the subtraction of $f^{(2)}(u\tau{+}v,\tau)$ is essential due to its ill-defined
 limit as $u,v\rightarrow 0$.}
\begin{align}
\mathbb D_{\ep,b}(u,v,\tau) &=  2\pi i \dd \tau
\sum_{k=2}^\infty \frac{ (k{-}1) }{(2\pi i)^{k}} \sum_{j=0}^{k-2}\dfrac{(-1)^j}{j!}\,(2\pi i \tau)^j\big[ {\rm G}_k(\tau)\epsilon_k^{(j)}
 - f^{(k)}(u\tau{+}v,\tau)b_k^{(j)} \big]  
 \label{defdepb}
 \end{align}
 which is meromorphic in $\tau$ at fixed co-moving coordinates $u,v \in \mathbb R$ 
 and thus gives rise to a homotopy invariant path-ordered exponential
 \begin{align}
\mathbb I_{\ep,b}(u,v,\tau) &= {\rm Pexp} \bigg( \int^{i\infty}_\tau \mathbb D_{\ep,b}(u,v,\tau_1)   \bigg)
\label{notsec.17}\\
&= 1+\sum_{k_1=2}^\infty (k_1{-}1) \sum_{j_1=0}^{k_1-2} \dfrac{(-1)^{j_1}}{j_1!}\biggl\{\ee{j_1}{k_1}{\tau}\epsilon_{k_1}^{(j_1)}+\eez{j_1}{k_1}{z}{\tau}b_{k_1}^{(j_1)}\biggr\}\notag\\
    &\quad+\sum_{k_1,k_2=2}^\infty (k_1{-}1)(k_2{-}1)  \sum_{j_1=0}^{k_1-2}  \sum_{j_2=0}^{k_2-2}\dfrac{ (-1)^{j_1+j_2}}{j_1!j_2!}\biggl\{\ee{j_1&j_2}{k_1&k_2}{\tau}\epsilon_{k_1}^{(j_1)}\epsilon_{k_2}^{(j_2)}  \notag\\
    &\qquad 
   +\eez{j_1&j_2}{k_1&k_2}{z&z}{\tau}b_{k_1}^{(j_1)}b_{k_2}^{(j_2)} 
    +\eez{j_1&j_2}{k_1&k_2}{&z}{\tau}\epsilon_{k_1}^{(j_1)}b_{k_2}^{(j_2)}
    +\eez{j_1&j_2}{k_1&k_2}{z&}{\tau}b_{k_1}^{(j_1)}\epsilon_{k_2}^{(j_2)}\biggr\}+\dots
\notag
\end{align}  
We set $\ep_2 = 0$ by Proposition \ref{e2prop}, and the words in $b_{k}^{(j)}, \ep_{k}^{(j)} $ are not independent in view of the bracket
relations (\ref{intrel}) and those in appendix \ref{sec:D.2}. The iterated integrals
${\cal E}[\ldots;\tau]$ are defined in (\ref{bsc.07}), (\ref{absc.07}), and the analogous
expansion of their (transposed) complex-conjugate series can be found in (\ref{cc.17}). Numerous properties of the iterated integrals ${\cal E}[\ldots;\tau]$ are discussed in section 3 of \cite{Hidding:2022vjf}.

\subsubsection{Series in eMPLs}
\label{sec:3.1.2}

The conversion of the above iterated $\tau$-integrals to eMPLs in later sections
will single out the following series in eMPLs
\begin{align}
\mathbbm{\Gamma}_{x,y}(z,\tau) &=
e^{-2\pi i \tau \ep_0}
\exp\bigg( \int^{i\infty}_\tau \frac{\dd \tau_1}{2\pi i}  \, {\rm G}_2(\tau_1)b_2\bigg)
{\rm Pexp} \biggl(\int_z^0
\tilde{\mathbb K}_{x,y}(z_1,\tau)
\biggr) 
e^{-ux} e^{2\pi i \tau \ep_0}
\notag \\
&= \exp\big( \ee{0}{2}{\tau} b_2\big)
{\rm Pexp} \biggl(\int_z^0
\tilde{\mathbb K}_{x-2\pi i \tau y,y}(z_1,\tau)
\biggr) 
e^{-u(x -2\pi i \tau y)} 
\notag \\
&= \exp\big( \ee{0}{2}{\tau} b_2\big)
{\rm Pexp} \biggl(\int_z^0
{\mathbb J}^{\rm BL}_{x-2\pi i \tau y,y}(z_1,\tau)
\biggr) 
\label{notsec.19}
\end{align}
The CEE connection $\tilde{\mathbb K}$ and the Brown-Levin connection 
${\mathbb J}^{\rm BL}$ are defined
in (\ref{defKcon}) and (\ref{notsec.06}), respectively, and the primitive
\beq
\ee{0}{2}{\tau}=\int^{i\infty}_\tau \frac{\dd \tau_1}{2\pi i}  \, {\rm G}_2(\tau_1)
\label{nmodg2}
\eeq
of the quasi-modular Eisenstein series ${\rm G}_2$ in (\ref{lieg1.03}) 
which was excluded from the earlier series $\mathbb I_{\ep^{\rm TS}}(\tau)$ in (\ref{notsec.18}) will be
seen in appendix \ref{app:gmod} to play an essential role for the modular 
properties of (\ref{notsec.19}). In passing to the last line of (\ref{notsec.19}), we have used
the gauge equivalence of the CEE and Brown-Levin connections, in particular
the corollary (\ref{gaugepexp}) for their path-ordered exponentials.

In view of the complex combinations in the letters of  $\tilde{\mathbb K}$,
${\mathbb J}^{\rm BL}$ in (\ref{notsec.19}), it might be worthwhile
to spell out the transposed complex-conjugate series
\begin{align}
\overline{\mathbbm{\Gamma}_{x,y}(z,\tau)^T} &=
e^{-2\pi i \bar \tau \ep_0} e^{ux}
{\rm Pexp} \biggl(\int^{\bar z}_0
\overline{ \tilde{\mathbb K}_{x,-y}(z_1,\tau)}\biggr) 
\exp\bigg( \int^{\bar \tau}_{-i\infty} \frac{\dd \bar \tau_1}{2\pi i}  \,\overline{ {\rm G}_2(\tau_1) } b_2\bigg)
 e^{2\pi i \bar \tau \ep_0} \notag \\
 &= e^{u(x-2\pi i \bar \tau y)}
{\rm Pexp} \biggl(\int^{\bar z}_0
\overline{ \tilde{\mathbb K}_{(x+2\pi i \tau y),-y}(z_1,\tau) } \biggr) 
\exp\big( \overline{\ee{0}{2}{\tau}} b_2\big)\notag \\
&= {\rm Pexp} \biggl(\int^{\bar z}_0
\overline{ {\mathbb J}^{\rm BL}_{(x+2\pi i \tau y),-y}(z_1,\tau) } \biggr) 
\exp\big( \overline{\ee{0}{2}{\tau}} b_2\big)
\label{ahologam}
\end{align}
see (\ref{cctilK}) for the expansion of ${\rm Pexp} \Big(\int^{\bar z}_0
\overline{ \tilde{\mathbb K}_{x,-y}(z_1,\tau)}\Big) $.

Note that, up to and including degree two in the letters $x,y$, the expansion of
(\ref{notsec.19}) reads
\begin{align}
\mathbbm{\Gamma}_{x,y}(z,\tau) 
&= 1 - ux - 2\pi i v y + \tfrac{1}{2} u^2 x^2 + \tfrac{1}{2} v^2 (2\pi i y)^2 + 2\pi i u v  yx  \notag \\
&\quad + [y,x] \bigg( i\pi u^2 \tau + \tilde \Gamma( \smallmatrix 1 \\ 0 \endsmallmatrix;z,\tau)
 +\ee{0}{2}{\tau} \bigg) + \ldots
\label{expga}
\end{align}
i.e.\ the simplest eMPL with a non-trivial $q$-series expansion occurs along with the degree-two Lie polynomial $[y,x] = b_2$.

\subsubsection{Series in single-valued MZVs}
\label{sec:3.1.3}

The genus-one zeta generators $\sigma_w$ of section \ref{sec:2.4.2}
will enter our main results on $u,v$-dependent equivariant and single-valued
series in the following combinations: we define {\it augmented zeta generators}  
   \beq
\Sigma_w(u) = e^{u x} \big( P_w( t_{12}, t_{01})+\sigma_w  \big) e^{-ux} \, , \ \ \ \ \ \ w\geq 3 \ {\rm odd}  
\label{not.06}
\eeq
where the Lie polynomials $P_w$ defined by (\ref{notsec.12}) involve the
letter $t_{12} = [y,x]$ and the Lie series $t_{01}$ in $x,y$ given by 
(\ref{lieg1.17}). The exponentials involving $e^{\pm u x}$ can be
rewritten in terms of the adjoint action $e^{u \ad_x}$ acting on
$P_w( t_{12}, t_{01})+\sigma_w$, and the resulting commutators
of the form $[x,[\ldots[x,[x,\sigma_w]]\ldots ]]$ are determined by the expansion of
$\sigma_w$ in terms of Tsunogai derivations and (\ref{dfprpsig}) for
the contribution of their arithmetic term $z_w$.
Following the discussion below (\ref{not.05}), the notation
$ \mathbb M^{\rm sv}_{\Sigma(u)}$ refers to the series in single-valued MZVs obtained
from $ \mathbb M^{\rm sv}_\sigma$ by replacing $\sigma_w \rightarrow \Sigma_w(u)$.

\begin{lemma}
\label{zwprop}
The augmented zeta generators $\Sigma_w(u)$ in (\ref{not.06}) are expressible
as Lie series in $\ep_k^{(j)},b_k^{(j)}$ up to a single arithmetic
term $z_w$ at degree $2w$ identical to the one
in (\ref{lieg1.21}) for $\sigma_w$.
\end{lemma}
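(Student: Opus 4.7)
The plan is to decompose $\Sigma_w(u)$ into pieces that are each transparently controlled, using Lemma \ref{braklem} to ensure that the adjoint action of $x$ preserves the relevant Lie algebras. Rewriting the exponential conjugation as an adjoint exponential, $\Sigma_w(u) = e^{u\,\mathrm{ad}_x}\bigl(P_w(t_{12},t_{01})+\sigma_w\bigr)$, I would split $\sigma_w = z_w + (\sigma_w-z_w)$ via (\ref{lieg1.21}), giving three separate contributions to analyze term by term. Throughout, $e^{u\,\mathrm{ad}_x} = \sum_{n\geq 0}\tfrac{u^n}{n!}\mathrm{ad}_x^n$ is well-defined as a formal series since at each fixed $(x,y)$-degree only finitely many $n$ contribute.

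First I would handle the Drinfeld-polynomial piece $P_w(t_{12},t_{01})$. By (\ref{lieg1.17}), $t_{12}$ and $t_{01}$ are Lie series in $x,y$ starting at degrees $2$ and $1$ respectively, so $P_w(t_{12},t_{01})$ is a Lie series in $x,y$ of total degree $\geq w\geq 3$. By Lemma \ref{braklem}(iii) it is therefore a Lie series in the $b_k^{(j)}$. Then by Lemma \ref{braklem}(i), each bracket with $x$ again produces a Lie polynomial in $b_k^{(j)}$, so $e^{u\,\mathrm{ad}_x}P_w(t_{12},t_{01})$ lies in $\mathrm{Lie}[b_k^{(j)}]$. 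The same logic applies to $\sigma_w - z_w$: by (\ref{lieg1.21}) this is a Lie series in the Tsunogai derivations $\ep_k^{(j)\mathrm{TS}}$, which by (\ref{lieg1.25}) (extended by $\mathrm{ad}_{\ep_0}^j$) is equivalently a Lie series in $\ep_k^{(j)},b_k^{(j)}$. Applying Lemma \ref{braklem}(i) once more, $[x,\ep_k^{(j)}]$ and $[x,b_k^{(j)}]$ are Lie polynomials in $b_k^{(j)}$, so iterated brackets with $x$ never leave the algebra generated by $\ep_k^{(j)},b_k^{(j)}$; hence $e^{u\,\mathrm{ad}_x}(\sigma_w-z_w)$ is a Lie series in $\ep_k^{(j)},b_k^{(j)}$.

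The decisive piece is $e^{u\,\mathrm{ad}_x}z_w$. Here I would split off the $n=0$ term, $e^{u\,\mathrm{ad}_x}z_w = z_w + \sum_{n\geq 1}\tfrac{u^n}{n!}\mathrm{ad}_x^n(z_w)$, which leaves $z_w$ itself with coefficient $1$, matching (\ref{lieg1.21}) exactly. For the $n\geq 1$ corrections, Lemma \ref{braklem}(v) tells us that $[x,z_w]$ is a Lie polynomial in $b_k^{(j)}$ of degree $2w+1$, and then part (i) of the same lemma ensures that every subsequent bracket with $x$ remains a Lie polynomial in $b_k^{(j)}$. Combining the three contributions gives $\Sigma_w(u) = z_w + (\text{Lie series in }\ep_k^{(j)},b_k^{(j)})$, with the $z_w$ appearing with the same coefficient as in the expansion of $\sigma_w$, which is the claimed statement.

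The only substantive input is Lemma \ref{braklem}, so there is no real obstacle; the delicate bookkeeping is checking that the $z_w$ term on the right is indeed identical (same degree $2w$, coefficient $1$, same canonical choice of arithmetic generator) to the one in (\ref{lieg1.21}) for $\sigma_w$. This is ensured because $P_w(t_{12},t_{01})\in\mathrm{Lie}[b_k^{(j)}]$ contributes no $z_w$, the correction $\sigma_w - z_w$ by construction contains no $z_w$, and the adjoint action of $x$ annihilates nothing into $z_w$ (the ${\rm ad}_x^n(z_w)$ for $n\geq 1$ land in $\mathrm{Lie}[b_k^{(j)}]$ by Lemma \ref{braklem}(v),(i) and hence involve no further arithmetic generators).
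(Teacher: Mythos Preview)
Your proof is correct and follows essentially the same approach as the paper's: both decompose $\Sigma_w(u)=e^{u\,\mathrm{ad}_x}(P_w(t_{12},t_{01})+\sigma_w)$ into the $P_w$ piece, the geometric part $\sigma_w-z_w$, and the arithmetic $z_w$, then invoke the relevant parts of Lemma~\ref{braklem} to show each lands in $\mathrm{Lie}[\ep_k^{(j)},b_k^{(j)}]$ except for the bare $z_w$. The only cosmetic differences are that the paper groups the terms as $\sigma_w$ versus $(e^{u\,\mathrm{ad}_x}-1)\sigma_w$ before splitting off $z_w$, and cites item~(iv) rather than your item~(v) for the $\mathrm{ad}_x^n(z_w)$ piece (your choice is in fact the more direct one).
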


\begin{proof}
The statement of the lemma will be demonstrated separately for $(i)$ the first term $ e^{u x} P_w(t_{12}, t_{01})  e^{-ux} $ in (\ref{not.06}); $(ii)$ the $u$-independent part $\sigma_w  $ of the second term in (\ref{not.06}) and $(iii)$ the $u$-dependent part $( e^{u {\rm ad}_x}-1) \sigma_w  $ of the second term in (\ref{not.06}).

$(i)$: Both $P_w(t_{12}, t_{01})$ and $ e^{u x} P_w(t_{12}, t_{01})  e^{-ux} $ are Lie series in $x,y$ with lowest degrees $w{+}1$. Since all Lie polynomials in $x,y$ of degree $\geq 2$ are Lie polynomials in $b_k^{(j)}$, see item $(iii)$ of Lemma \ref{braklem}, the same is true for each term in the series expansion of the first term $ e^{u x} P_w(t_{12}, t_{01})  e^{-ux} $.

$(ii)$: The statement of the lemma for this term follows from the decomposition $\sigma_w=z_w+\ldots$ with Lie series in Tsunogai derivations $\ep_k^{(j){\rm TS}} = \ep_k^{(j)} + b_k^{(j)}$ in the ellipsis (see section \ref{sec:2.4.2}) which exposes the only arithmetic term $z_w$ in $\Sigma_w(u)$.

$(iii)$:  Given that the non-arithmetic parts $\sigma_w{-}z_w$ are Lie series in $\ep_k^{(j){\rm TS}}$, the analogous $u$-dependent parts $( e^{u {\rm ad}_x}-1)(\sigma_w{-}z_w)   $ of the second term are Lie series in $b_k^{(j)}$ by item $(i)$ of Lemma \ref{braklem}. The remainder $( e^{u {\rm ad}_x}-1) z_w$ of the $u$-dependent part of the second term in turn is a Lie series in $b_k^{(j)}$ by items $(iv)$ and $(i)$ of Lemma \ref{braklem}.
\end{proof}

Note that examples of low-degree terms in the expansion of $\Sigma_3(u),\Sigma_5(u),\Sigma_7(u)$ can be found in appendix \ref{sec:C.4}.

\subsection{Equivariant transformation in the $z$-dependent case}
\label{sec:3.eqv}

Before stating the main theorems on our construction of equivariant and single-valued generating series, we shall specify the notion of ${\rm SL}_2(\mathbb Z)$ equivariance in the $z$-dependent case. In the $z$-independent situation of the connection $\mathbb D_{\ep^{\rm TS}}(\tau)$ in (\ref{lieg1.38}) and the series $\mathbb I^{\rm eqv}_{\ep^{\rm TS}}(\tau) $ in (\ref{lieg1.42}), (\ref{lieg1.51}), a modular transformation of $\tau$ by $\gamma \in {\rm SL}_2(\mathbb Z)$ translated into a conjugation by an element $U_\gamma$ in the Lie-group of the $\mathfrak{sl}_2$ generated by $\ep_0$ and $\ep_0^\vee$.

The Lie-group elements $U_T$ and $U_S$ associated with the modular $T$ and $S$ transformations are given by (\ref{lieg1.39}) and lead to the following $\mathfrak{sl}_2$ action on the Lie-algebra generators $x,y$:\footnote{This can be written in the unified form $\big( \smallmatrix x \\ -2\pi i y \endsmallmatrix\big) \rightarrow \big( \smallmatrix  a&b \\ c&d \endsmallmatrix\big)  \big( \smallmatrix x \\ -2\pi i y \endsmallmatrix\big) $ with $ ( \smallmatrix a &b \\ c &d \endsmallmatrix ) \in {\rm SL}_2(\mathbb Z)$ which identifies the vector $\big( \smallmatrix x \\ -2\pi i y \endsmallmatrix\big)$ as transforming in the defining representation of ${\rm SL}_2(\mathbb Z)$.}
\beq
 U_T^{-1} \bigg( \begin{array}{c} x \\   y \end{array} \bigg) U_T
= \bigg( \begin{array}{c} x{-}2\pi i y \\  y \end{array} \bigg)
\, , \ \ \ \ \ \
 U_S^{-1} \bigg( \begin{array}{c} x \\   y \end{array} \bigg) U_S
= \bigg( \begin{array}{c} 2\pi i y \\  -x/(2\pi i) \end{array} \bigg)
\label{uschoice.00}
\eeq
This implies the following $\mathfrak{sl}_2$ action on the Lie polynomials $b_k^{(j)}$
which mirrors that on the Tsunogai derivations (\ref{lieg1.40}) and by (\ref{lieg1.25}) extends to the $\epsilon_k^{(j)}$, 
\begin{align}
U_T^{-1} b_k^{(j)} U_T &= \sum_{p=0}^{k-j-2} \frac{(-2\pi i)^p}{p!}  \, b_k^{(j+p)}\, , &
U_S^{-1} b_k^{(j)} U_S &= \frac{ (-1)^j \, j!}{(k{-}2{-}j)!} (2\pi i )^{k-2-2j} b_k^{(k-j-2)}
\label{uschoice.01} \\
U_T^{-1} \ep_k^{(j)} U_T &= \sum_{p=0}^{k-j-2} \frac{(-2\pi i)^p}{p!}  \, \ep_k^{(j+p)}\, , &
U_S^{-1} \ep_k^{(j)} U_S &= \frac{ (-1)^j \, j!}{(k{-}2{-}j)!} (2\pi i )^{k-2-2j} \ep_k^{(k-j-2)}
\notag
\end{align}
with $k\geq 2$ and $0\leq j \leq k{-}2$ in all cases. These $\mathfrak{sl}_2$ actions capture the modular properties of the connection ${\mathbb D}_{\ep,b}  (u,v, \tau)$ at the heart of our main results, i.e.\ its transformation under
\beq
\gamma\cdot (z,\tau) = \bigg( \frac{z}{c\tau {+} d},\frac{a\tau{+}b}{c\tau {+} d}\bigg) \, , \ \ \ \ \gamma\cdot ( \smallmatrix v \\ -u \endsmallmatrix )
= ( \smallmatrix a &b \\ c &d \endsmallmatrix ) ( \smallmatrix v \\ -u \endsmallmatrix ) \, , \ \ \ \  \gamma = ( \smallmatrix a &b \\ c &d \endsmallmatrix ) \in {\rm SL}_2(\mathbb Z)
\label{uschoice.03}
\eeq

\begin{prop}
\label{dequiv}
The modular transformation (\ref{uschoice.03}) of the connection ${\mathbb D}_{\ep,b}  (u,v, \tau) $ defined by (\ref{defdepb}) translates into the following $\mathfrak{sl}_2$ action
\beq
{\mathbb D}_{\ep,b}  \big(\gamma\cdot (u,v,\tau)\big)   = U_\gamma^{-1} {\mathbb D}_{\ep,b}  (u,v, \tau)  U_\gamma \, , \ \ \ \ \ \ \gamma \in {\rm SL}_2(\mathbb Z)
\label{uschoice.02}
\eeq
\end{prop}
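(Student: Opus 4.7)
The plan is to reduce the claim to the two generators $T:\tau\mapsto\tau{+}1$ and $S:\tau\mapsto-1/\tau$ of ${\rm SL}_2(\mathbb Z)$ and verify (\ref{uschoice.02}) for each of them separately. The composition law $U_{\gamma_1\gamma_2}=U_{\gamma_1}U_{\gamma_2}$ follows by definition of the assignment $\gamma\mapsto U_\gamma$ via the decomposition of $\gamma$ into $S$ and $T$ (see the discussion below (\ref{lieg1.39})), so the equivariance extends to all of ${\rm SL}_2(\mathbb Z)$ once it holds for both generators. A useful preliminary observation is that Proposition \ref{e2prop} sets $\ep_2=0$, so the $k=2$ contribution to (\ref{defdepb}) collapses to $-\frac{\dd\tau}{2\pi i}\,f^{(2)}(u\tau{+}v,\tau)\,b_2$ and the quasi-modular Eisenstein series ${\rm G}_2(\tau)$ never enters the connection. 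The only input needed from the modular properties of the kernels are the weight-$k$ transformation $f^{(k)}(z/(c\tau{+}d),\gamma\cdot\tau)=(c\tau{+}d)^k\,f^{(k)}(z,\tau)$ from (\ref{appA.13}) for $k\geq 2$ together with the classical ${\rm G}_k(\gamma\cdot\tau)=(c\tau{+}d)^k\,{\rm G}_k(\tau)$ for $k\geq 4$ even.

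A second observation that streamlines the bookkeeping is that the nilpotency properties (\ref{intro.22}), i.e.\ ${\rm ad}_{\ep_0}^{k-1}(\ep_k)={\rm ad}_{\ep_0}^{k-1}(b_k)=0$, recast the inner sum over $j$ in (\ref{defdepb}) as an adjoint action
\beq
\sum_{j=0}^{k-2}\frac{(-1)^j}{j!}(2\pi i\tau)^j\,\ep_k^{(j)}\;=\;e^{-2\pi i\tau\ep_0}\,\ep_k\,e^{2\pi i\tau\ep_0}
\eeq
and analogously with $\ep$ replaced by $b$. Under $T$, the comoving coordinates transform to $(u',v')=(u,v{-}u)$ by (\ref{uschoice.03}), so $u'(\tau{+}1){+}v'=u\tau{+}v$ and both ${\rm G}_k(\tau)$ and $f^{(k)}(u\tau{+}v,\tau)$ are $\mathbb Z$-periodic in $\tau$. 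The $T$-equivariance then follows from the elementary identity $e^{-2\pi i(\tau{+}1)\ep_0}=U_T^{-1}e^{-2\pi i\tau\ep_0}$, whose inverse is absorbed as a right-multiplicative $U_T$ after the conjugation.

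Under $S$, the coordinates become $(u',v')=(-v,u)$, the kernels acquire an overall factor $\tau^k$ from their weight-$k$ modular behaviour, and $\dd\tau$ contributes $1/\tau^2$. The remaining factor $(2\pi i\tau')^j=(-1)^j(2\pi i)^j\tau^{-j}$ combined with the surplus $\tau^{k-2}$ yields $(-1)^j(2\pi i)^j\tau^{k-2-j}$ per term. Relabelling $m=k{-}2{-}j$ and extracting $(2\pi i)^m$ to restore a factor $(2\pi i\tau)^m$ leaves precisely the coefficient $\frac{(-1)^j j!}{(k{-}2{-}j)!}(2\pi i)^{k-2-2j}$ demanded by the $S$-action (\ref{uschoice.01}) on $\ep_k^{(j)}$ and $b_k^{(j)}$, which reflects the $\mathfrak{sl}_2$-multiplet by $j\leftrightarrow k{-}2{-}j$. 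The main obstacle is therefore not conceptual but combinatorial: the powers of $(2\pi i)$, the factorials and the alternating signs must be tracked consistently through the reindexing of the $(k{-}1)$-dimensional $\mathfrak{sl}_2$-multiplets $\{\ep_k^{(j)}\}_{0\leq j\leq k-2}$ and $\{b_k^{(j)}\}_{0\leq j\leq k-2}$. Once this matching is verified, (\ref{uschoice.02}) holds for $\gamma=S$, and Proposition \ref{dequiv} then extends to all of ${\rm SL}_2(\mathbb Z)$ by the multiplicativity of $U_\gamma$.
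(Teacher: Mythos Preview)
Your proof is correct and follows essentially the same approach as the paper's: both rely on the modular weight-$k$ transformation of $f^{(k)}$ and ${\rm G}_k$ (for $k\geq 4$, since $\ep_2=0$ eliminates the quasi-modular ${\rm G}_2$) together with the $\mathfrak{sl}_2$-action (\ref{uschoice.01}) on the generators $\ep_k^{(j)},b_k^{(j)}$. The paper's proof is a one-sentence sketch of these ingredients, whereas you spell out the reduction to $T$ and $S$ and carry out the combinatorics explicitly; your exponential rewriting $\sum_{j}\frac{(-1)^j}{j!}(2\pi i\tau)^j\ep_k^{(j)}=e^{-2\pi i\tau\ep_0}\ep_k\,e^{2\pi i\tau\ep_0}$ is a clean shortcut for the $T$-case that the paper does not make explicit here (though it underlies the gauge transformation (\ref{sc5.15})).
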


\begin{proof}
The proposition is a simple consequence of the modular transformation of $f^{(n)}(z,\tau)$ in (\ref{appA.13}), the fact that $\ep_2=0$ only leaves holomorphic Eisenstein series ${\rm G}_k(\tau)$ of weight $k\geq 4$ and the effect (\ref{uschoice.01}) of $U_\gamma$ on the non-commuting variables $b_k^{(j)},\ep_k^{(j)}$.
\end{proof}

\begin{definition}
\label{def:eqv}
Any combination $\mathbb F_{x,y,\ep}(z,\tau)$ of Lie algebra generators $x,y,\ep_k^{(j)}$ with the same modular transformation
\beq
\mathbb F_{x,y,\ep}\big(\gamma\cdot(z,\tau) \big)  = U_\gamma^{-1} \mathbb F_{x,y,\ep}(z,\tau) U_\gamma \, , \ \ \ \ \ \ \gamma \in {\rm SL}_2(\mathbb Z)
\label{uschoice.07}
\eeq
as the connection ${\mathbb D}_{\ep,b}  (u,v, \tau)$  in (\ref{uschoice.02}) will be referred to as equivariant.
\end{definition}

The equivariance (\ref{uschoice.02}) of ${\mathbb D}_{\ep,b}  (u,v,\tau)$
mirrors that of the $z$-independent $\mathbb D_{\ep^{\rm TS}}(\tau)$ in (\ref{lieg1.38}). Similar to the discussion of cocycles in (\ref{lieg1.41}), the $\gamma \in {\rm SL}_2(\mathbb Z)$ action on the path-ordered exponential (\ref{notsec.17}) will depart from $U_\gamma^{-1} \mathbb I_{\ep,b}(u,v,\tau)  U_\gamma$ by left-multiplicative cocycle factors which we do not need to compute for the purpose of this work.

Note that the expression (\ref{lieg1.39}) for $U_S$ and its action in (\ref{uschoice.00}), (\ref{uschoice.01}) single out the realization of the modular $S$ transformation via $(z,\tau) \rightarrow (\frac{z}{\tau}, - \frac{1}{\tau})$ with a positive sign of the term $\frac{z}{\tau}$, i.e.\ through the ${\rm SL}_2(\mathbb Z)$ matrix $( \smallmatrix 0 &-1 \\ 1 &0 \endsmallmatrix )$ as opposed to $( \smallmatrix 0 &1 \\ -1 &0 \endsmallmatrix )$.

\section{Main results}
\label{sec:3}

This section gathers the main results of this work while leaving several of the proofs
for later sections and appendices. More specifically, the series (\ref{lieg1.51}) of equivariant
iterated Eisenstein integrals will be generalized in section \ref{sec:3.2} to include the integration kernels
$\sim \dd \tau \, f^{(k)}(u\tau{+}v,\tau)$ of (\ref{bsc.07}) and related to 
series in single-valued eMPLs and their complex conjugates.
Sections \ref{sec:3.cc} and \ref{sec:3.3} are dedicated to the behavior of this series in single-valued eMPLs under complex conjugation and the degeneration limit $\tau \rightarrow i \infty$ of the torus at fixed co-moving coordinates $u,v$ of the point $z=u\tau{+}v$, respectively.
 
\subsection{Equivariant iterated integrals and single-valued eMPLs}
\label{sec:3.2}

A key step towards $z$- or $(u,v)$-dependent equivariant  and single-valued
generating series is stated in the following theorem:
\begin{theorem}
\label{3.thm:1}
With the definitions (\ref{notsec.17}), (\ref{notsec.19}), (\ref{lieg1.51}), (\ref{not.06}) and (\ref{not.05}) of $\mathbb I_{\ep,b}(u,v,\tau)$, $\mathbbm{\Gamma}_{x,y}(z,\tau)$, $\mathbb I_{\ep^{\rm TS}}^{\rm eqv}(\tau)$, $\Sigma_w(u)$ and $\mathbb M^{\rm sv}_{\Sigma(u)}$ as well as the transposition operation $(\ldots)^T$ in section \ref{sec:2.4.4}, 
the series $\mathbb I^{\rm eqv}_{\ep,b}(u,v,\tau)$
defined~by
\begin{align}
\mathbb I^{\rm eqv}_{\ep,b}(u,v,\tau)  &= 
(\mathbb M^{\rm sv}_{z})^{-1}\, \overline{ \mathbb I_{\ep,b}(u,v,\tau)^T} \, \mathbb M^{\rm sv}_{\Sigma(u)}\,
\mathbb I_{\ep,b}(u,v,\tau)  \label{grteq.01}
\end{align}
can alternatively be rewritten in terms of eMPLs via 
\begin{align}
\mathbb I^{\rm eqv}_{\ep,b}(u,v,\tau)  &=   (\mathbb M^{\rm sv}_z)^{-1}\,
\overline{\mathbbm{\Gamma}_{x,y}(z,\tau)^T} \,
  \mathbb M^{\rm sv}_z\,
\mathbb I^{\rm eqv}_{\ep^{\rm TS}}(\tau)  \, 
\mathbbm{\Gamma}_{x,y}(z,\tau)
 \label{altex.01} 
\end{align}
\end{theorem}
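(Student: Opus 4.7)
The plan is to deduce Theorem~\ref{3.thm:1} from a single \emph{change-of-fibration} identity for the path-ordered exponentials of the CEE connection, followed by purely algebraic manipulations that exploit the defining relation of the genus-one zeta generators. Concretely, I would first establish, in the fully explicit form needed here, a factorization
\begin{equation}
\mathbb I_{\ep,b}(u,v,\tau) \;=\; \mathbb I_{\ep^{\rm TS}}(\tau)\, \mathbbm{\Gamma}_{x,y}(z,\tau) \, C(u,v,\tau)
\label{plan.fact}
\end{equation}
for an explicit factor $C(u,v,\tau)$ built out of $x,y,\ep_0$, powers of $2\pi i\tau$ and the regularized primitive $\mathcal E[\,\smallmatrix 0 \\ 2 \endsmallmatrix;\tau]$. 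Such a factorization is forced by homotopy invariance of the Pexp of the flat CEE connection $\mathbb K_{x,y,\ep}(z,\tau)$ in (\ref{notsec.02}): the left-hand side integrates along the path $z_1=u\tau_1+v$ from $\tau_1=i\infty$ to $\tau_1=\tau$, while the right-hand side first integrates the $\dd\tau$-part at $z=0$ (producing $\mathbb I_{\ep^{\rm TS}}(\tau)$ up to the gauge conjugation by $e^{-2\pi i\tau\ep_0}$ recorded in footnote of section~\ref{sec:2.5.1}) and then the $\dd z$-part at fixed $\tau$ (producing the Pexp over $\tilde{\mathbb K}_{x,y}$ or equivalently the Brown--Levin connection). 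The gauge transformation (\ref{notsecCEE}) between $\tilde{\mathbb K}_{x,y}$ and $\mathbb J^{\rm BL}_{x,y}$, together with the $G_2$-primitive absorbed into the definition (\ref{notsec.19}) of $\mathbbm{\Gamma}_{x,y}(z,\tau)$, accounts for the auxiliary factors appearing in $C(u,v,\tau)$.

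With (\ref{plan.fact}) in hand, the proof reduces to substituting it into the right-hand side of (\ref{grteq.01}) and matching with (\ref{altex.01}). Denoting $\Lambda = \mathbb I_{\ep^{\rm TS}}(\tau)\,\mathbbm{\Gamma}_{x,y}(z,\tau)$ one gets
\begin{equation*}
\mathbb I^{\rm eqv}_{\ep,b}(u,v,\tau) \;=\; (\mathbb M^{\rm sv}_{z})^{-1}\, \overline{C^T}\,\overline{\Lambda^T}\, \mathbb M^{\rm sv}_{\Sigma(u)}\,\Lambda\, C,
\end{equation*}
so one must show that conjugating $\mathbb M^{\rm sv}_{\Sigma(u)}$ by $C$ and $\overline{C^T}$ (both built from $x,y,\ep_0$) turns it into $\mathbb M^{\rm sv}_{\sigma}$, up to a further pull-through of all arithmetic $z_w$ that commute with everything $\mathfrak{sl}_2$-invariant by Lemma~\ref{zwprop} and (\ref{lieg1.22}). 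The key algebraic input is the defining relation (\ref{dfprpsig}) of $\sigma_w$, which says that $P_w(t_{12},t_{01})+\sigma_w$ commutes with $t_{01}$; this identity is precisely the one that upgrades $\sigma_w$ to $\Sigma_w(u)=e^{ux}(P_w(t_{12},t_{01})+\sigma_w)e^{-ux}$ and makes the conjugation absorb all $u$-dependent letters of $C$. Using (\ref{lieg1.52}) to pass the $e^{\pm 2\pi i\tau\ep_0}$ factors through $\mathbb M^{\rm sv}_{\sigma}$, and the fact that $[\ep_0,z_w]=0$ to pass them through $\mathbb M^{\rm sv}_{z}$, then recombines the remaining pieces into $\mathbb I^{\rm eqv}_{\ep^{\rm TS}}(\tau)$ wedged between $\overline{\mathbbm{\Gamma}_{x,y}^T}$ and $\mathbbm{\Gamma}_{x,y}$ as in (\ref{altex.01}).

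The main obstacle is proving (\ref{plan.fact}) with the \emph{correct} auxiliary factor $C(u,v,\tau)$: one must carefully track the gauge transformations $e^{\pm u x}$ and $e^{\pm 2\pi i\tau\ep_0}$, as well as the $\ep_2=0$ convention of Proposition~\ref{e2prop} versus the $G_2$-primitive inside $\mathbbm{\Gamma}_{x,y}(z,\tau)$ that keeps its modular behavior well-defined (see appendix~\ref{app:gmod}). A clean way to do this is to verify (\ref{plan.fact}) by showing that both sides satisfy the same holomorphic $\tau$-ODE (obtained from $\partial_\tau\mathbb I_{\ep,b}$ on one side and from the Leibniz rule applied to $\Lambda\,C$ on the other, using the flatness equation and $\partial_\tau z=u$) and agree at $\tau\to i\infty$ under the tangential-base-point regularization. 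Once (\ref{plan.fact}) is established, the second step is a finite Lie-algebraic rearrangement that rests only on the identities already collected in sections~\ref{sec:2.4} and~\ref{sec:3.1}, so the sole analytic ingredient is the initial change of fibration basis.
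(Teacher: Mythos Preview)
Your overall strategy --- prove a change-of-fibration factorization for $\mathbb I_{\ep,b}$ and then push through algebraically --- is exactly the paper's route, but two concrete errors in your factorization (\ref{plan.fact}) would make the plan fail.

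First, the placement of $C$: the correct identity (equation (\ref{soliplus}) in the paper) has the auxiliary factor on the \emph{left},
\[
\mathbb I_{\ep,b}(u,v,\tau) \;=\; e^{ux}\,e^{-2\pi i v\,t_{01}}\,\Phinew(t_{12},t_{01})^{-1}\,\mathbb I_{\ep^{\rm TS}}(\tau)\,\mathbbm{\Gamma}_{x,y}(z,\tau),
\]
not the right. With $C$ on the right, taking $\overline{(\cdot)^T}$ puts $\overline{C^T}$ on the far left, so $C$ and $\overline{C^T}$ never sandwich $\mathbb M^{\rm sv}_{\Sigma(u)}$ as your second paragraph needs; with $C$ on the left they do (see (\ref{tkpexp.03})).

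Second, and more seriously, your $C$ is not ``built out of $x,y,\ep_0$, powers of $2\pi i\tau$ and $\mathcal E[\smallmatrix 0\\2\endsmallmatrix;\tau]$'': it is $\tau$-independent and contains the full Drinfeld associator $\Phinew(t_{12},t_{01})^{-1}$ with all of its MZVs. This associator arises precisely when you try to verify that both sides ``agree at $\tau\to i\infty$'': under the tangential-base-point regularization the degenerate connection (\ref{lieg1.16}) is the KZ connection in $\sigma=e^{2\pi i z}$, and comparing the two homotopic paths at the cusp forces the initial value to be $\Phinew(t_{12},t_{01})^{-1}$ (section \ref{sec:4.3}). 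Your proposed $\tau$-ODE argument cannot see this --- both sides satisfy the same ODE, but the constant of integration is the associator, not something elementary.

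This matters for the second step: once $C$ contains the associator, conjugating by $e^{ux}$ and $e^{-2\pi i v\,t_{01}}$ reduces $\mathbb M^{\rm sv}_{\Sigma(u)}$ only to $\mathbb M^{\rm sv}_{\Sigma(0)}$ (using $[T_{01}(u),\Sigma_w(u)]=0$, which is your ``$P_w+\sigma_w$ commutes with $t_{01}$''). To pass from $\mathbb M^{\rm sv}_{\Sigma(0)}$ to $\mathbb M^{\rm sv}_{\sigma}$ you still need the genuinely nontrivial identity
\[
\Phinew(-t_{12},-t_{01})\,\mathbb M^{\rm sv}_{\Sigma(0)}\,\Phinew(t_{12},t_{01})^{-1} \;=\; \mathbb M^{\rm sv}_{\sigma}
\]
(Lemma \ref{sigphilem}, proved in appendix \ref{app:phi} via infinitesimal coactions and Ihara derivations). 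The defining relation (\ref{dfprpsig}) alone is an ingredient of that proof but does not yield it directly. Without this lemma the MZV bookkeeping does not close and you cannot recombine $\overline{\mathbb I_{\ep^{\rm TS}}^T}\,\mathbb M^{\rm sv}_{\sigma}\,\mathbb I_{\ep^{\rm TS}}$ into $\mathbb M^{\rm sv}_z\,\mathbb I^{\rm eqv}_{\ep^{\rm TS}}$.
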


\noindent
The dependence on $z = u\tau{+}v$ is displayed via $\mathbbm{\Gamma}_{x,y}(z,\tau)$ to emphasize the link with the eMPLs in its series expansion and via $\mathbb I^{\rm eqv}_{\ep,b}(u,v,\tau)$ to emphasize the fact that the $\tau_1$ integral in (\ref{notsec.17}) is performed at fixed $u,v \in \mathbb R$. The same notation applies to the equivariant and single-valued series with the same parental letter.

The proof of Theorem \ref{3.thm:1} is one of the main tasks of this work and
will be carried out in section \ref{sec:4}. A major advantage of
having two representations of $\mathbb I^{\rm eqv}_{\ep,b}(u,v,\tau) $ in different fibration bases is that (\ref{grteq.01}) and (\ref{altex.01}) manifest different subsets of their
properties. While the periodicity under $v \rightarrow v{+}1$ is easier to see from (\ref{grteq.01}), the modular behavior of $\mathbb I^{\rm eqv}_{\ep,b}(u,v,\tau) $
can be studied on the basis of the following
lemma which establishes that $\mathbbm{\Gamma}_{x,y}(z,\tau)$ is equivariant
under ${\rm SL}_2(\mathbb Z)$ transformations of $z$ and $\tau$ up to simple
$b_2$ dependent phase factors:
\begin{lemma}
\label{3.lem:1}
The modular transformations of the eMPL series $\mathbbm{\Gamma}_{x,y}(z,\tau)$ in (\ref{notsec.19}) are
determined by 
\begin{align}
\mathbbm{\Gamma}_{x,y}(z,\tau{+}1) &= e^{i\pi b_2/6}\,
U_T^{-1} \, \mathbbm{\Gamma}_{x,y}(z,\tau)  \, U_T
\notag \\
\mathbbm{\Gamma}_{x,y}\bigg(\frac{z}{\tau},{-}\frac{1}{\tau} \bigg) &= e^{-i\pi b_2/2}  \, U_S^{-1} \mathbbm{\Gamma}_{x,y}(z,\tau) \, U_S
\label{modgser}
\end{align}
with $U_T$ and $U_S$ in the Lie group of $\mathfrak{sl}_2$ given by (\ref{lieg1.39}).
\end{lemma}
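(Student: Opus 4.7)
I would work from the Brown--Levin representation
$\mathbbm{\Gamma}_{x,y}(z,\tau) = \exp\bigl(\ee{0}{2}{\tau}\, b_2\bigr)\, {\rm Pexp}\bigl(\int_z^0 \mathbb J^{\rm BL}_{x-2\pi i\tau y,\,y}(z_1,\tau)\bigr)$
of (\ref{notsec.19}), proving the two formulas of (\ref{modgser}) separately. A common input is that $b_2 = -[x,y]$ is an $\mathfrak{sl}_2$-singlet: from $[\ep_0,x] = y$, $[\ep_0,y] = 0$, $[\ep_0^\vee,x] = 0$, $[\ep_0^\vee,y] = x$ one gets $[\ep_0,b_2] = [\ep_0^\vee,b_2] = 0$, which is the $k=2$ case of $\ad_{\ep_0}^{k-1}(b_k) = 0$. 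Consequently $b_2$ commutes with both $U_T = e^{2\pi i \ep_0}$ and $U_S$, so the scalar prefactors $e^{i\pi b_2/6}$ and $e^{-i\pi b_2/2}$ in (\ref{modgser}) can be freely commuted through $U_T$ or $U_S$.

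For the $T$-transformation at fixed complex $z$, the three ingredients are: (i) $f^{(n)}(z,\tau+1) = f^{(n)}(z,\tau)$ at fixed $z$ because the lattice $\mathbb Z + \tau\mathbb Z$ is unchanged, and $\Im\tau$ is also $T$-invariant, so $\mathbb J^{\rm BL}_{a,b}(z,\tau+1) = \mathbb J^{\rm BL}_{a,b}(z,\tau)$ as a one-form in $z$; (ii) $U_T^{-1}(x-2\pi i\tau y)U_T = x - 2\pi i(\tau+1)y$ from (\ref{uschoice.00}); and (iii) under the tangential-base-point regularization at $i\infty$ the constant term $\pi^2/3$ of ${\rm G}_2(\tau) = \pi^2/3 + O(q)$ produces the shift $\ee{0}{2}{\tau+1} - \ee{0}{2}{\tau} = i\pi/6$. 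Combining these, the path-ordered exponential transforms by $U_T^{-1}(\cdots) U_T$, the prefactor $\exp(\ee{0}{2}{\tau}b_2)$ picks up the scalar $e^{i\pi b_2/6}$, and the first line of (\ref{modgser}) follows by commuting $b_2$ past $U_T$.

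For the $S$-transformation $(z,\tau) \mapsto (z/\tau,-1/\tau)$ the right tool is the gauge-equivalent modular connection $\mathbb J^{\rm mod}_{a,b} = \mathbb J^{\rm BL}_{a,\,b+a/(4\pi\Im\tau)}$ of (\ref{notsec.07}), since only $\mathbb J^{\rm mod}$ obeys the tensorial modular law (\ref{jmodtrf}). Using $2i\Im\tau = \tau - \bar\tau$ I would first rewrite
$\mathbb J^{\rm BL}_{x-2\pi i\tau y,\,y} = \mathbb J^{\rm mod}_{x-2\pi i\tau y,\,-(x-2\pi i\bar\tau y)/(4\pi\Im\tau)}$,
which has the virtue that both letters depend linearly on $x,y$ with coefficients in $\mathbb C[\tau,\bar\tau,(\Im\tau)^{-1}]$ that are easy to act on by $U_S$. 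Applying (\ref{jmodtrf}) with $\gamma = (\smallmatrix 0 & -1 \\ 1 & 0\endsmallmatrix)$ and checking via (\ref{uschoice.00}) that $U_S^{-1}(x-2\pi i\tau y)U_S = \tau x + 2\pi i y$ and $U_S^{-1}\bigl({-}(x-2\pi i\bar\tau y)/(4\pi\Im\tau)\bigr)U_S = -(\bar\tau x + 2\pi i y)/(4\pi\Im\tau)$ coincide with the transformed letters $(\tau A, B/\tau)$ produced by (\ref{jmodtrf}), one concludes that the path-ordered exponential transforms by $U_S^{-1}(\cdots)U_S$, the endpoint change $z\mapsto z/\tau$ being absorbed by the modular transformation of the one-form.

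The main obstacle is the transformation of the scalar prefactor $\exp(\ee{0}{2}{\tau}\, b_2)$, complicated by the quasi-modularity ${\rm G}_2(-1/\tau) = \tau^2 {\rm G}_2(\tau) - 2\pi i \tau$. A careful change of variable $\tau_1 = -1/\tau_1'$ in $\ee{0}{2}{-1/\tau}$, together with the tangential-base-point regularization, gives $\ee{0}{2}{-1/\tau} - \ee{0}{2}{\tau} = \log(-i\tau) = \log\tau - i\pi/2$, which is not a pure constant. The $\log\tau$ piece must be cancelled by $z$-dependent contributions from the eMPL part of $\mathbbm{\Gamma}_{x,y}$: specifically, the shuffle-regularized $\tilde\Gamma(\smallmatrix 1\\0\endsmallmatrix;z,\tau) = \log\bigl(-i\theta_1(z,\tau)/\eta(\tau)^3\bigr)$ arising from the $\varepsilon$-regularization in (\ref{reggt}) transforms via the standard modularity of $\theta_1$ and $\eta$ as $\tilde\Gamma(\smallmatrix 1\\0\endsmallmatrix; z/\tau,-1/\tau) - \tilde\Gamma(\smallmatrix 1\\0\endsmallmatrix; z, \tau) = i\pi z^2/\tau - \log\tau$. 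Combining this with the reshuffling $(u,v,\tau) \mapsto (-v,u,-1/\tau)$ of the explicit $i\pi u^2 \tau$ and $\pi i uv$ terms in the Lie coefficient of $b_2$ in $\mathbbm{\Gamma}_{x,y}(z,\tau)$ (cf.\ (\ref{expga})), all $z$-dependent and $\log\tau$ contributions cancel at the level of the $b_2$-component, leaving precisely the constant shift $-i\pi/2$ and hence the scalar $e^{-i\pi b_2/2}$ of (\ref{modgser}). The verification that this cancellation persists to all degrees in the generators is carried out in appendix \ref{app:gmod}, which refines the modularity argument for (\ref{jmodtrf}) to consistently account for the $\varepsilon$-cutoff regularization used in defining $\mathbbm{\Gamma}_{x,y}$.
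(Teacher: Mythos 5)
Your strategy coincides with the paper's (appendix \ref{app:gmod}): establish equivariance of the connection $\mathbb J^{\rm BL}_{x-2\pi i\tau y,y}$ under the $\mathfrak{sl}_2$ action, track the quasi-modular anomaly of the primitive $\ee{0}{2}{\tau}$, and locate the residual obstruction in the regularization of the endpoint $z_1\to 0$. The $T$-case is complete, and the individual identities you quote are correct: $U_S^{-1}(x-2\pi i\tau y)U_S=\tau x+2\pi i y$, $\ee{0}{2}{-\tfrac{1}{\tau}}=\ee{0}{2}{\tau}+\log\tau-\tfrac{i\pi}{2}$, $\ee{0}{2}{\tau{+}1}=\ee{0}{2}{\tau}+\tfrac{i\pi}{6}$, and the modular anomaly of the regularized $\tilde\Gamma\big(\smallmatrix 1\\0\endsmallmatrix;z,\tau\big)$.

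The gap is in the $S$-case. You first assert that the path-ordered exponential transforms by pure $U_S$-conjugation (``the endpoint change $z\mapsto z/\tau$ being absorbed by the modular transformation of the one-form''), then, noticing that the $\log\tau$ from $\ee{0}{2}{-1/\tau}$ must cancel against something, you verify the cancellation only for the coefficient of $b_2$ and defer ``all degrees'' to the very appendix you are meant to be reconstructing. The assertion is in fact false as stated: the integration path ends at the regularized origin $z_1=\varepsilon$, and $S$ rescales this tangential base point to $\tau\varepsilon$, so the Pexp acquires a left-multiplicative correction. The paper turns this into an all-order statement by splitting $\mathbb J^{\rm BL}_{x-2\pi i\tau y,y}(z_1,\tau)=\frac{\dd z_1}{z_1}[x,y]+\mathbb J^{\rm reg}$ and using composition of paths: only the simple pole contributes on the infinitesimal segment from $\varepsilon$ to $\tau\varepsilon$, so the entire correction is the single group element ${\rm Pexp}\big({-}b_2\int_\varepsilon^{\tau\varepsilon}\frac{\dd z_1}{z_1}\big)=\tau^{-b_2}$, which cancels the $\tau^{b_2}$ produced by $\exp\big(\ee{0}{2}{-1/\tau}b_2\big)$ uniformly in all words in $x,y$. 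Your degree-two check (the $-\log\tau$ in the $\theta_1/\eta^3$ anomaly of $\tilde\Gamma\big(\smallmatrix 1\\0\endsmallmatrix;\cdot\big)$) is precisely the lowest-order shadow of this factor, but without the group-like packaging you have not excluded additional anomalies in higher words; supplying the splitting-plus-composition-of-paths argument is what closes the proof.
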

The lemma will be proven in appendix \ref{app:gmod}.

\begin{theorem}
\label{3.thm:2}
The series $\mathbb I^{\rm eqv}_{\ep,b}(u,v,\tau)$ in Theorem \ref{3.thm:1} 
\begin{itemize}
\item[(i)] has no monodromy when $z$ is moved around the origin,
\item[(ii)] has no monodromy $v\rightarrow v{+}1$ when $z$ is transported around the $A$-cycle of the torus,
\beq
 \mathbb I^{\rm eqv}_{\ep,b}(u,v{+}1,\tau) = \mathbb I^{\rm eqv}_{\ep,b}(u,v,\tau)
\label{clm.3.3a}
\eeq
\item[(iii)] is equivariant in the sense of Definition \ref{def:eqv}, with $\gamma\cdot(u,v,\tau)$ as in (\ref{uschoice.03})
\beq
\mathbb I^{\rm eqv}_{\ep,b}\big( \gamma\cdot(u,v,\tau)\big) = 
U_\gamma^{-1} \, \mathbb I^{\rm eqv}_{\ep,b}(u,v,\tau)\, U_\gamma
\label{clm.3.3b}
\eeq
\item[(iv)] has no monodromy $u\rightarrow u{+}1$ when $z$ is transported around the $B$-cycle of the torus,
\beq
\mathbb I^{\rm eqv}_{\ep,b}(u{+}1,v,\tau) = \mathbb I^{\rm eqv}_{\ep,b}(u,v,\tau)
\label{clm.3.3r}
\eeq
and thus single-valued in $z$ on the torus in view of (i) and (ii),
\item[(v)] is non-uniquely expressible as a series in words in $\ep_k^{(j)}$ and $b_k^{(j)}$.
\end{itemize}
\end{theorem}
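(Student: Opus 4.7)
The strategy is to exploit the two equivalent forms (\ref{grteq.01}) and (\ref{altex.01}) of $\mathbb I^{\rm eqv}_{\ep,b}(u,v,\tau)$ provided by Theorem \ref{3.thm:1}: form (\ref{grteq.01}) manifests the $v$-periodicity directly, while form (\ref{altex.01}) exposes both the local structure at $z=0$ and the modular transformation properties through Lemma \ref{3.lem:1} and Theorem \ref{2.thm:1}. I would treat the five claims in the order (v), (ii), (i), (iii), (iv), following dependencies.

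For (v), expanding the three factors on the right-hand side of (\ref{grteq.01}) produces a series in the generators $\ep_k^{(j)},b_k^{(j)},z_w$. Pulling all arithmetic $z_w$ on the far left through $\overline{\mathbb I_{\ep,b}(u,v,\tau)^T}$ using Lemma \ref{braklem}(v) (which expresses $[z_w,x],[z_w,y]$ as Lie polynomials in $b_k^{(j)}$), one may write $\mathbb M^{\rm sv}_{\Sigma(u)}=\mathbb M^{\rm sv}_z\mathbb B^{\rm sv}$ with $\mathbb B^{\rm sv}$ valued in $\ep_k^{(j)},b_k^{(j)}$, and the $z_w$ cancel between the leftmost $(\mathbb M^{\rm sv}_z)^{-1}$ and the middle $\mathbb M^{\rm sv}_z$ in analogy to (\ref{remzs}). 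This leaves a series in $\ep_k^{(j)},b_k^{(j)}$, non-unique because of the Pollack relations (\ref{lieg1.08}) among $\ep_k^{(j){\rm TS}}$ (which by (\ref{lieg1.25}) descend to relations among $\ep_k^{(j)}$ and $b_k^{(j)}$) together with the further bracket identities of section \ref{sec:2.4.3} and appendix \ref{sec:D.2}. Item (ii) is then immediate from (\ref{grteq.01}): $\mathbb I_{\ep,b}(u,v,\tau)$ and its transposed complex conjugate depend on $v$ only through the doubly-periodic kernels $f^{(k)}(u\tau{+}v,\tau)$, while $\mathbb M^{\rm sv}_z$ and $\mathbb M^{\rm sv}_{\Sigma(u)}$ carry no $v$-dependence at all.

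For (i), I would switch to representation (\ref{altex.01}). The only integration kernel of $\mathbbm{\Gamma}_{x,y}(z,\tau)$ with a singularity at $z=0$ is $\dd z\, f^{(1)}(z,\tau)$ in the Brown--Levin form of (\ref{notsec.19}), whose accompanying Lie polynomial is $\ad_x(y)=-b_2$; the local monodromy of $\mathbbm{\Gamma}_{x,y}(z,\tau)$ around the origin is therefore a left-multiplicative factor of the form $e^{-2\pi i b_2}$, and that of $\overline{\mathbbm{\Gamma}_{x,y}(z,\tau)^T}$ is the right-multiplicative inverse $e^{+2\pi i b_2}$. By Proposition \ref{e2prop} together with $\ep_2^{\rm TS}=b_2$ from (\ref{lieg1.25}) and the centrality statements (\ref{killxy}), (\ref{lieg1.22}), the generator $b_2$ commutes with every letter appearing in the middle factors $\mathbb M^{\rm sv}_z$ and $\mathbb I^{\rm eqv}_{\ep^{\rm TS}}(\tau)$. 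Thus the two monodromy factors are freely transportable through the middle series and annihilate, proving (i). For (iii), I would apply the modular transformation (\ref{uschoice.03}) to (\ref{altex.01}) factor by factor: Lemma \ref{3.lem:1} yields the expected conjugation $U_\gamma^{-1}(\cdot)U_\gamma$ for $\mathbbm{\Gamma}_{x,y}(z,\tau)$ up to a central $b_2$-phase, and analogously for $\overline{\mathbbm{\Gamma}_{x,y}(z,\tau)^T}$ after transposition and complex conjugation; Theorem \ref{2.thm:1} does the same for $\mathbb I^{\rm eqv}_{\ep^{\rm TS}}(\tau)$; and $\mathbb M^{\rm sv}_z$ commutes with $U_\gamma$ by the $\mathfrak{sl}_2$-invariance of the $z_w$ in (\ref{lieg1.22}). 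The holomorphic and antiholomorphic $b_2$-phases carry opposite signs from the complex conjugation of $i$ (transposition acts trivially on $b_2$ by (\ref{lieg1.34})) and combine to the identity by centrality of $b_2$, so all inner $U_\gamma^{\pm 1}$ factors telescope to the outer conjugation (\ref{clm.3.3b}). Item (iv) then follows by applying (iii) for $\gamma=S$ to both sides of (ii): under $S$, (\ref{uschoice.03}) sends $(u,v)$ to $(-v,u)$, so the A-cycle identity $v\mapsto v{+}1$ at the image moduli translates into $u\mapsto u{+}1$ at the original moduli. Single-valuedness of $\mathbb I^{\rm eqv}_{\ep,b}$ as a function on $T^2$ follows directly from (i), (ii) and (iv).

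The main obstacle will be the bookkeeping underlying (iii). The equivariance of $\mathbb I^{\rm eqv}_{\ep^{\rm TS}}(\tau)$ in Theorem \ref{2.thm:1} rests on absorption of the $T$- and $S$-cocycles of $\mathbb I_{\ep^{\rm TS}}(\tau)$ by $\mathbb M^{\rm sv}_\sigma$ as in appendix \ref{app:thm21}, and one must verify that no residual cocycle is produced once $\mathbbm{\Gamma}_{x,y}(z,\tau)$ and $\overline{\mathbbm{\Gamma}_{x,y}(z,\tau)^T}$ are adjoined on either side, and that the central $b_2$-phases of Lemma \ref{3.lem:1} indeed combine to the identity at all orders. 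I expect both issues to reduce to centrality of $b_2$ together with the $\mathfrak{sl}_2$-invariance of the $z_w$, but the careful sign tracking through the transposition rules (\ref{lieg1.33}), (\ref{lieg1.34}) and the interplay of $U_S$ with the leftmost $(\mathbb M^{\rm sv}_z)^{-1}$ will require attention.
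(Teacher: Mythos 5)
Your overall strategy and the arguments for items (i), (iii), (iv) and (v) coincide with the paper's own proof: (i) and (iii) from the representation (\ref{altex.01}) using the centrality of $b_2=[y,x]$ with respect to the Tsunogai derivations and the arithmetic $z_w$, (iv) by conjugating the $v$-periodicity of item (ii) with $U_S$, and (v) by the rewriting $\mathbb M^{\rm sv}_{\Sigma(u)}=\mathbb M^{\rm sv}_z\,\mathbb B^{\rm sv}$ followed by conjugation arguments as in (\ref{remzs}), with non-uniqueness traced to Pollack relations. (In the paper's conventions the left-multiplicative local monodromy of $\mathbbm{\Gamma}_{x,y}(z,\tau)$ is $e^{2\pi i[y,x]}=e^{2\pi i b_2}$ rather than $e^{-2\pi i b_2}$, but the sign is immaterial to the cancellation.) Your closing worry about residual cocycles in (iii) is unfounded: once Lemma \ref{3.lem:1} and Theorem \ref{2.thm:1} are invoked, all cocycle absorption has already happened inside those statements, and the computation reduces to the telescoping of the inner $U_\gamma^{\pm1}$ and $b_2$-phases exactly as you describe.

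The one genuine gap is item (ii). You declare the $v$-periodicity ``immediate'' because the $v$-dependence of $\mathbb I_{\ep,b}(u,v,\tau)$ enters only through the doubly-periodic kernels $f^{(k)}(u\tau{+}v,\tau)$. But those kernels are doubly periodic, i.e.\ periodic under $u\to u{+}1$ as well, so the identical argument would ``prove'' $\mathbb I_{\ep,b}(u{+}1,v,\tau)=\mathbb I_{\ep,b}(u,v,\tau)$ --- which is false, and is precisely why the paper has to route item (iv) through $S$-equivariance rather than arguing it directly from the integrand. The missing step is the tangential-base-point regularization of the endpoint divergence at $\tau_r\to i\infty$: periodicity of the integrand does not automatically survive regularization, because the divergent and regular parts of the integrand are treated on a different footing. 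The paper verifies that the divergent part $(2\pi i)^{1-k}\,\dd\tau\, f^{(k)}(u\tau{+}v,\tau)=\frac{\dd q}{q}\,\frac{B_k(u)}{k!}+{\cal O}(q^0)$ is independent of $v$ and hence trivially $v$-periodic, so that (\ref{tanbsp.02}) holds; by contrast $B_k(u{+}1)=B_k(u)+k\,u^{k-1}$ obstructs the analogous statement in $u$. Your proof of (ii) needs this verification to be complete.
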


\begin{proof}
\phantom{x}

$(i)$: The short-distance behavior ${\mathbb J}^{\rm BL}_{x-2\pi i \tau y,y}(z_1,\tau) = \frac{\dd z_1}{z_1} [x,y] + {\cal O}((z_1)^0)$ of the integrand in (\ref{notsec.19}) leads to monodromies
$\mathbbm{\Gamma}_{x,y}(z,\tau) \rightarrow e^{2\pi i [y,x]}\mathbbm{\Gamma}_{x,y}(z,\tau)$ as $z$ is moved around the origin. In the second representation of (\ref{altex.01}), these monodromies cancel from
\begin{align}
\mathbb I^{\rm eqv}_{\ep,b}(u,v,\tau)  \rightarrow \;   &(\mathbb M^{\rm sv}_z)^{-1}\,
\overline{\mathbbm{\Gamma}_{x,y}(z,\tau)^T} \, e^{-2\pi i [y,x]} \,
  \mathbb M^{\rm sv}_z\,
\mathbb I^{\rm eqv}_{\ep^{\rm TS}}(\tau)  \, e^{2\pi i [y,x]}  \,
\mathbbm{\Gamma}_{x,y}(z,\tau) \notag \\
=\; &(\mathbb M^{\rm sv}_z)^{-1}\,
\overline{\mathbbm{\Gamma}_{x,y}(z,\tau)^T} \,
  \mathbb M^{\rm sv}_z\,
\mathbb I^{\rm eqv}_{\ep^{\rm TS}}(\tau)  \, 
\mathbbm{\Gamma}_{x,y}(z,\tau) \notag \\
=\; &\mathbb I^{\rm eqv}_{\ep,b}(u,v,\tau) 
 \label{altex.61} 
\end{align}
using the fact that $[y,x]$ commutes with all the Tsunogai derivations in $\mathbb I^{\rm eqv}_{\ep^{\rm TS}}(\tau) $ and the arithmetic zeta generators in $\mathbb M^{\rm sv}_z$.

$(ii)$: Vanishing $A$-monodromies follow from the first representation (\ref{grteq.01}) of
$\mathbb I^{\rm eqv}_{\ep,b}(u,v,\tau)$ where all the dependence $u,v$ enters
through path-ordered exponentials in $\tau$ of the connection $\mathbb D_{\ep,b}(u,v,\tau)$ in (\ref{defdepb})
and its complex conjugate. More specifically, $\mathbb D_{\ep,b}(u,v,\tau_1)$ depends on $u,v$ through the
doubly-periodic $f^{(k)}(u\tau_1{+}v,\tau)$ and the $\tau_1$ integration in the expression
(\ref{notsec.17}) for $\mathbb I_{\ep,b}(u,v,\tau)$ is performed at fixed $u,v$. The tangential-base-point regularization in (\ref{notsec.17}) treats the divergent part of
\beq
(2\pi i)^{1-k} \, \dd \tau \, f^{(k)}(u\tau{+}v,\tau) = \frac{\dd q}{q}\, \frac{B_k(u)}{k!}+ {\cal O}(q^0)
\label{tanbsp}
\eeq
at the cusp $q\rightarrow 0$ on a different footing than the regular terms in $q$. Since the divergent terms in (\ref{tanbsp}) are by themselves $v\rightarrow v{+}1$ periodic, the tangential-base-point regularization in $\mathbb I_{\ep,b}(u,v,\tau)$ does not stop the $v\rightarrow v{+}1$ periodicity of the integrand from propagating to the path-ordered exponential, i.e.
\beq
\mathbb I_{\ep,b}(u,v{+}1,\tau)= \mathbb I_{\ep,b}(u,v,\tau)
\label{tanbsp.02}
\eeq
Note that this argument could not be used to extend the $u\rightarrow u{+}1$ periodicity of the integrands $f^{(k)}(u\tau_1{+}v,\tau)$ to the tangential-base-point regulated integral: The Bernoulli polynomials in the singular terms on the right side of (\ref{tanbsp}) fail to preserve the periodicity in $u$ by $B_k(u{+}1) =B_k(u) +k u^{k-1}$, so the regularization prevents $\mathbb I_{\ep,b}(u{+}1,v,\tau)$ from matching $\mathbb I_{\ep,b}(u,v,\tau)$ as one can easily check from examples at low degree.

$(iii)$: Equivariance will be shown through the second representation of $\mathbb I^{\rm eqv}_{\ep,b}(u,v,\tau)$ 
in (\ref{altex.01}). The modular properties (\ref{modgser}) of the eMPL series can be presented
in unified form as
\beq
\mathbbm{\Gamma}_{x,y} \big(\gamma\cdot (z,\tau)\big) = e^{i\pi r_\gamma b_2}\,
U_\gamma^{-1} \, \mathbbm{\Gamma}_{x,y}(z,\tau)  \, U_\gamma
\label{gaonGA}
\eeq
with $\gamma\cdot (z,\tau)$ in (\ref{uschoice.03}), and
where $r_\gamma \in \mathbb Q$ is determined by decomposing $\gamma \in {\rm SL}_2(\mathbb Z)$
into its generators $S$ and $T$. We are using the fact that $b_2$
commutes with the $\mathfrak{sl}_2$ generators of $U_T$ and $U_S$.
Inserting (\ref{gaonGA}) together with equivariance (\ref{lieg1.42}) of $\mathbb I^{\rm eqv}_{\ep^{\rm TS}}(\tau) $ in Theorem \ref{2.thm:1}
into (\ref{altex.01}) results in
\begin{align}
\mathbb I^{\rm eqv}_{\ep,b}\big(\gamma\cdot(u,v,\tau)\big)  &=   (\mathbb M^{\rm sv}_z)^{-1}\,
U_\gamma^{-1} \, \overline{\mathbbm{\Gamma}_{x,y}(z,\tau)^T} \,
U_\gamma\,  e^{-i\pi r_\gamma b_2} \,  \mathbb M^{\rm sv}_z\notag \\
&\quad \times
U_\gamma^{-1}\,\mathbb I^{\rm eqv}_{\ep^{\rm TS}}(\tau) \, U_\gamma  \, 
e^{i\pi r_\gamma b_2} \, U_\gamma^{-1}\, \mathbbm{\Gamma}_{x,y}(z,\tau) \, U_\gamma
\notag \\
&=  (\mathbb M^{\rm sv}_z)^{-1}\, U_\gamma^{-1} \,
 \overline{\mathbbm{\Gamma}_{x,y}(z,\tau)^T} \,
U_\gamma \,  \mathbb M^{\rm sv}_z \, 
U_\gamma^{-1}\,\mathbb I^{\rm eqv}_{\ep^{\rm TS}}(\tau) \,  \mathbbm{\Gamma}_{x,y}(z,\tau) \, U_\gamma
\notag \\
&=  U_\gamma^{-1} \, (\mathbb M^{\rm sv}_z)^{-1}\, 
 \overline{\mathbbm{\Gamma}_{x,y}(z,\tau)^T} \,
 \mathbb M^{\rm sv}_z \, 
\mathbb I^{\rm eqv}_{\ep^{\rm TS}}(\tau) \, \mathbbm{\Gamma}_{x,y}(z,\tau) \, U_\gamma
\label{clm.3.3c}
\end{align}
In the first step, we have used that $b_2 = [y,x]$ commutes with all of
$\mathbb I^{\rm eqv}_{\ep^{\rm TS}}(\tau) ,U_\gamma$ and 
$ \mathbb M^{\rm sv}_z$, see for instance (\ref{killxy}) and (\ref{lieg1.22}).
The second step is based on $U_\gamma  \mathbb M^{\rm sv}_z =   \mathbb M^{\rm sv}_z U_\gamma$
by $\mathfrak{sl}_2$ invariance of $z_w$. The last line of (\ref{clm.3.3c}) is equivalent to the 
right side of the claim (\ref{clm.3.3b}), finishing the proof of equivariance.

$(iv)$: Periodicity of $\mathbb I^{\rm eqv}_{\ep,b}(u,v,\tau)$ under $u \rightarrow u{+}1$ can be reduced to its $v \rightarrow v{+}1$ periodicity established in $(ii)$ by using equivariance demonstrated in $(iii)$ under modular $S$-transformations: applying $S\cdot (u,v,\tau) = ({-}v,u,-\tfrac{1}{\tau})$ to (\ref{clm.3.3b}) and conjugating by $U_S$ yields
\beq
\mathbb I^{\rm eqv}_{\ep,b}(u,v,\tau) = U_S \mathbb I^{\rm eqv}_{\ep,b}\big({-}v,u,-\tfrac{1}{\tau} \big)
U_S^{-1}
\label{prfuper}
\eeq
Hence, shifting $u \rightarrow u{+}1$ on the left side results in 
\begin{align}
\mathbb I^{\rm eqv}_{\ep,b}(u{+}1,v,\tau) &= U_S \mathbb I^{\rm eqv}_{\ep,b}\big({-}v,u{+}1,-\tfrac{1}{\tau} \big)
U_S^{-1}
= U_S \mathbb I^{\rm eqv}_{\ep,b}\big({-}v,u,-\tfrac{1}{\tau} \big)
U_S^{-1}
= \mathbb I^{\rm eqv}_{\ep,b}(u,v,\tau)
\label{prfuper.02}
\end{align}
where we used that the right side of (\ref{prfuper}) is periodic in its second argument by $(ii)$.
 
$(v)$: We shall use the representation (\ref{grteq.01}) of $\mathbb I^{\rm eqv}_{\ep,b}(u,v,\tau)$ to
show that all the Lie-algebra generators in its expansion are expressible in terms of $\ep_k^{(j)}$ and $b_k^{(j)}$.
The claim is obvious for the rightmost factor $\mathbb I_{\ep,b}(u,v,\tau) $, so it remains to
show that the same holds for the remaining factors
$(\mathbb M^{\rm sv}_{z})^{-1} \overline{ \mathbb I_{\ep,b}(u,v,\tau)^T} \mathbb M^{\rm sv}_{\Sigma(u)}$
with $\overline{ \mathbb I_{\ep,b}(u,v,\tau)^T}$ another series in $\ep_k^{(j)}$ and $b_k^{(j)}$.

Following the steps in the discussion around (\ref{remzs}), one uses the fact that 
$\mathbb M^{\rm sv}_{\Sigma(u)}= \mathbb M^{\rm sv}_{z} \mathbb B^{\rm sv}_{\ep,b}$
for some series $ \mathbb B^{\rm sv}_{\ep,b}$ in $\ep_k^{(j)}$ and $b_k^{(j)}$.
This is always possible since, by Lemma \ref{zwprop}, $\Sigma_w(u) = z_w + \ldots$ with a Lie series in 
$\ep_k^{(j)},b_k^{(j)}$ in the ellipsis. One can commute all the $z_w$ to
the left of $\ep_k^{(j)},b_k^{(j)}$ by using their bracket relations that only produce
Lie polynomials in $\ep_k^{(j)},b_k^{(j)}$.

Finally, given the above rewriting $\mathbb M^{\rm sv}_{\Sigma(u)}= \mathbb M^{\rm sv}_{z} \mathbb B^{\rm sv}_{\ep,b}$,
one expands the leftmost series $(\mathbb M^{\rm sv}_{z})^{-1} \overline{ \mathbb I_{\ep,b}(u,v,\tau)^T}  \mathbb M^{\rm sv}_{z}$
of $\mathbb I^{\rm eqv}_{\ep,b}(u,v,\tau)$ in terms of nested brackets with $z_w$ as done in (\ref{remzs})
and again uses the fact that the constituents $\ep_k^{(j)},b_k^{(j)}$ of $ \overline{ \mathbb I_{\ep,b}(u,v,\tau)^T}  $
are closed under $\ad_{z_w}$.

Non-uniqueness of the expansion of $\mathbb I^{\rm eqv}_{\ep,b}(u,v,\tau)$ in terms of 
$\ep_k^{(j)},b_k^{(j)}$ follows from the Pollack relations among $\ep_k^{(j) {\rm TS}}=\ep_k^{(j)}+ b_k^{(j)}$ at degrees $\geq 14$ combined with the bracket relations expressing 
$[ \ep_{k_1}^{(j_1)}, b_{k_2}^{(j_2)}]$ as Lie polynomials in $b_k^{(j)}$, see item $(ii)$ of Lemma \ref{braklem}.
\end{proof}
As will be detailed in section \ref{sec:itemgf.3}, the non-uniqueness of the coefficients of words in $\ep_k^{(j)},b_k^{(j)}$ can be relegated to degree $\geq14$ and modular depth two where Pollack relations kick in: When systematically moving all the $\ep_k^{(j)}$ to the left of $b_k^{(j)}$ (producing their ${\rm Lie}[b_{k'}^{(j')}]$-valued brackets), one arrives at unique coefficients of the resulting ordered words up to degree 13 where the $\ep_k^{(j)}$ do not obey any relations.

Theorem \ref{3.thm:2} sets the stage for accomplishing one of the main goals of this paper --
the construction of single-valued eMPLs in one variable -- in the following theorem:

\begin{theorem}
\label{3.cor:1} 
The following modifications of the series $\mathbb I^{\rm eqv}_{\ep,b}(u,v,\tau)$ have unique expansion coefficients, preserve its modular equivariance in holomorphic or modular frame and produce two alternative formulations of single-valued elliptic polylogarithms in one variable:

\begin{itemize}
\item[(i)]
The series $\mathbbm{\Gamma}^{\rm sv}_{x,y}(z,\tau)$ which is defined by
\beq
\mathbbm{\Gamma}^{\rm sv}_{x,y}(z,\tau) = \mathbb I^{\rm eqv}_{\ep^{\rm TS}}(\tau)^{-1} \, \mathbb I^{\rm eqv}_{\ep,b}(u,v,\tau)
\label{cor.3.4a}
\eeq
and by (\ref{altex.01}) takes the alternative form
\begin{align}
\mathbbm{\Gamma}^{\rm sv}_{x,y}(z,\tau) &=   \mathbb I^{\rm eqv}_{\ep^{\rm TS}}(\tau)^{-1}
\, (\mathbb M^{\rm sv}_z)^{-1} \,
\overline{\mathbbm{\Gamma}_{x,y}(z,\tau)^T} \,
  \mathbb M^{\rm sv}_z \,
\mathbb I^{\rm eqv}_{\ep^{\rm TS}}(\tau)  \,
\mathbbm{\Gamma}_{x,y}(z,\tau) \label{grteq.02}
\end{align}
is single-valued in $z$, equivariant under ${\rm SL}_2(\mathbb Z)$ transformations
of $z,\tau$ and uniquely expressible as
a series in $b_k^{(j)}$ (with no separate appearance of the generators~$\ep_k^{(j)}$ or~$x,y$),
\begin{align}
\mathbbm{\Gamma}^{\rm sv}_{x,y}(z,\tau) &= 1 +\sum_{k_1=2}^\infty (k_1{-}1) \sum_{j_1=0}^{k_1-2} \dfrac{(-1)^{j_1}}{j_1!}\, \gab{j_1}{k_1}{z,\tau}b_{k_1}^{(j_1)} 
\label{cor.3.4b}\\
    &\quad+\sum_{k_1,k_2=2}^\infty (k_1{-}1)(k_2{-}1)  \sum_{j_1=0}^{k_1-2}  \sum_{j_2=0}^{k_2-2}\dfrac{ (-1)^{j_1+j_2}}{j_1!j_2!} \,  \gab{j_1&j_2}{k_1&k_2}{z,\tau}b_{k_1}^{(j_1)}b_{k_2}^{(j_2)} +\dots
    \notag
\end{align}
\item[(ii)] The series defined by
\beq
\mathbbm{\Lambda}^{\rm sv}_{x,y}(z,\tau) = 
U_{\rm mod}(\tau) \,\mathbbm{\Gamma}^{\rm sv}_{x,y}(z,\tau) \,U_{\rm mod}(\tau)^{-1}
\label{cor.3.4c}
\eeq
with the transformation $U_{\rm mod}(\tau)$ given by (\ref{usl2}) is uniquely expressible as a series
\begin{align}
\mathbbm{\Lambda}^{\rm sv}_{x,y}(z,\tau) &= 1 +\sum_{k_1=2}^\infty (k_1{-}1) \sum_{j_1=0}^{k_1-2} \dfrac{(-1)^{j_1}}{j_1!}\, \lab{j_1}{k_1}{z,\tau}b_{k_1}^{(j_1)} \label{cor.3.4d} \\
    &\quad+\sum_{k_1,k_2=2}^\infty (k_1{-}1)(k_2{-}1)  \sum_{j_1=0}^{k_1-2}  \sum_{j_2=0}^{k_2-2}\dfrac{ (-1)^{j_1+j_2}}{j_1!j_2!} \,  \lab{j_1&j_2}{k_1&k_2}{z,\tau}b_{k_1}^{(j_1)}b_{k_2}^{(j_2)} +\dots
    \notag
\end{align}
in non-holomorphic modular forms $\Lambda[\ldots;z,\tau]$ of purely antiholomorphic modular weights
\beq
\lab{j_1 &\ldots &j_r}{k_1 &\ldots &k_r}{ \frac{z}{c\tau{+}d}, \frac{a\tau{+}b}{ c\tau{+}d } }
= \bigg( \prod_{i=1}^r (c \bar \tau{+}d)^{k_i - 2j_i - 2} \bigg)\, \lab{j_1 &\ldots &j_r}{k_1 &\ldots &k_r}{z, \tau }
\label{cor.3.4e}
\eeq
\end{itemize}
\end{theorem}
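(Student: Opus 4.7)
The plan is to first deduce the alternative form (\ref{grteq.02}) by direct substitution of (\ref{altex.01}) from Theorem \ref{3.thm:1} into the definition (\ref{cor.3.4a}) of $\mathbbm{\Gamma}^{\rm sv}_{x,y}(z,\tau)$, and then to verify its advertised properties. Single-valuedness in $z$ is inherited from $\mathbb I^{\rm eqv}_{\ep,b}(u,v,\tau)$: since the leftmost factor $\mathbb I^{\rm eqv}_{\ep^{\rm TS}}(\tau)^{-1}$ is $z$-independent, the trivial local monodromy at $z=0$ together with the $A$- and $B$-cycle periodicities in items $(i)$, $(ii)$, $(iv)$ of Theorem \ref{3.thm:2} carry over to the ratio. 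Equivariance under ${\rm SL}_2(\mathbb Z)$ is equally immediate: by Theorem \ref{2.thm:1} and Theorem \ref{3.thm:2}(iii), both factors transform by conjugation with the same $U_\gamma$, and these cocycle-like factors cancel between numerator and denominator so that the ratio obeys (\ref{uschoice.07}).

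The nontrivial step is showing that $\mathbbm{\Gamma}^{\rm sv}_{x,y}(z,\tau)$ takes values in $\exp(\mathfrak{L}_b)$, where $\mathfrak{L}_b$ is the free algebra of (\ref{intro.23}). Both $\mathbb I^{\rm eqv}_{\ep^{\rm TS}}(\tau)$ and $\mathbb I^{\rm eqv}_{\ep,b}(u,v,\tau)$ are group-like series valued in the larger Lie algebra $\mathfrak{L}={\rm Lie}[\{\ep_k^{(j)},b_k^{(j)}\}]$ (using Theorem \ref{2.thm:1}(i) with $\ep_k^{(j){\rm TS}}=\ep_k^{(j)}+b_k^{(j)}$ for the former and Theorem \ref{3.thm:2}(v) for the latter), and by Lemma \ref{braklem}(ii) the subspace $\mathfrak{L}_b$ is an ideal of $\mathfrak{L}$. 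Working with the representation (\ref{grteq.01}) of $\mathbb I^{\rm eqv}_{\ep,b}$, one passes to the quotient $\mathfrak{L}/\mathfrak{L}_b$ and checks three reductions: the $b$-part of the connection $\mathbb D_{\ep,b}(u,v,\tau)$ drops out, leaving an $\ep$-part that matches that of $\mathbb D_{\ep^{\rm TS}}(\tau)$ (using $\ep_2=0$ and the vanishing of odd-index ${\rm G}_k$); the augmented zeta generators satisfy $\Sigma_w(u)\equiv\sigma_w$ modulo $\mathfrak{L}_b$, since $t_{12}=-b_2\in\mathfrak{L}_b$ and $t_{01}\equiv -y$ force $P_w(t_{12},t_{01})\equiv P_w(0,-y)=0$ for $w\geq 3$, while the $u$-dependent $e^{u\,{\rm ad}_x}$ is trivial on $\sigma_w$ by items $(i)$, $(iv)$ of Lemma \ref{braklem}; and $\ep_k^{(j){\rm TS}}\equiv\ep_k^{(j)}$. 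Hence $\mathbb I^{\rm eqv}_{\ep,b}(u,v,\tau)\equiv\mathbb I^{\rm eqv}_{\ep^{\rm TS}}(\tau)$ modulo $\mathfrak{L}_b$, so $\mathbbm{\Gamma}^{\rm sv}_{x,y}(z,\tau)\equiv 1$ in the quotient and therefore lies in $\exp(\mathfrak{L}_b)$. Freeness of $\mathfrak{L}_b$ on the generators $b_k^{(j)}$ then yields the unique expansion (\ref{cor.3.4b}).

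Part (ii) follows by tracking how $U_{\rm mod}(\tau)$ in (\ref{usl2}) behaves under modular transformations. A direct calculation in the Lie group of the $\mathfrak{sl}_2$ generated by $\ep_0,\ep_0^\vee$, paralleling the derivation of (\ref{trfheqv}), yields $U_{\rm mod}(\tau{+}1)=U_{\rm mod}(\tau)\,U_T$ and $U_{\rm mod}({-}1/\tau)=\bar\tau^{-{\rm h}}\,U_{\rm mod}(\tau)\,U_S$ with ${\rm h}=[\ep_0,\ep_0^\vee]$. Combining these with the equivariance of $\mathbbm{\Gamma}^{\rm sv}_{x,y}$ established in part (i), the $U_\gamma$ conjugations cancel and one deduces both $\mathbbm{\Lambda}^{\rm sv}_{x,y}(z,\tau{+}1)=\mathbbm{\Lambda}^{\rm sv}_{x,y}(z,\tau)$ and $\mathbbm{\Lambda}^{\rm sv}_{x,y}(z/\tau,{-}1/\tau)=\bar\tau^{-{\rm h}}\,\mathbbm{\Lambda}^{\rm sv}_{x,y}(z,\tau)\,\bar\tau^{{\rm h}}$. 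Since $b_k^{(j)}$ is a Cartan eigenvector of eigenvalue $2j{-}k{+}2$ under ${\rm h}$ by (\ref{lieg1.14}), this $S$-transformation multiplies the coefficient of $b_k^{(j)}$ by $\bar\tau^{k-2j-2}$, reproducing the weights (\ref{cor.3.4e}). Uniqueness of the expansion (\ref{cor.3.4d}) follows because $U_{\rm mod}(\tau)$ acts on each irreducible $\mathfrak{sl}_2$-multiplet $\{b_k^{(j)}\colon 0\leq j\leq k{-}2\}$ by a linear change of basis with $(\tau,\bar\tau,\Im\tau)$-dependent coefficients, so $\mathbbm{\Lambda}^{\rm sv}_{x,y}$ still lies in $\exp(\mathfrak{L}_b)$ and freeness of the generators gives unique $\Lambda[\ldots;z,\tau]$.

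The main obstacle I anticipate is controlling the rearrangements at each finite total degree: the expansions produce infinite Lie series through nested commutators with arithmetic and augmented zeta generators, and the Pollack relations from degree 14 onwards obstruct well-definedness of individual $\ep$-coefficients in $\mathbb I^{\rm eqv}_{\ep^{\rm TS}}$ and $\mathbb I^{\rm eqv}_{\ep,b}$. One must verify that these obstructions affect only subleading orders of the $\ep$-content which cancel in the ratio, so that the resulting $b$-content is genuinely canonical; this is guaranteed by Lemma \ref{braklem}(ii), which produces the relevant brackets as explicit Lie polynomials in $b_{k'}^{(j')}$ of bounded degree, together with the fact that both numerator and denominator are path-ordered exponentials of Lie-algebra-valued connections, ensuring $\mathbbm{\Gamma}^{\rm sv}_{x,y}$ remains group-like so that membership in $\exp(\mathfrak{L}_b)$ translates into a well-defined Lie-series exponent.
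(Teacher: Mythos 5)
Your proposal is correct, and for the single nontrivial step of part (i) it takes a genuinely different route from the paper. Where you argue abstractly -- $\mathfrak{L}_b$ is an ideal of the ambient Lie algebra, the two equivariant series agree modulo $\mathfrak{L}_b$ (connection, zeta generators and $\ep_k^{(j)}\equiv\ep_k^{(j){\rm TS}}$ all reduce correctly), hence the group-like ratio is trivial in the quotient and so lies in $\exp(\mathfrak{L}_b)$ -- the paper instead performs an explicit six-step rearrangement, successively factoring $\overline{\mathbb I_{\ep,b}^T}=\mathbb B_1\overline{\mathbb I_{\ep^{\rm TS}}^T}$, $\mathbb M^{\rm sv}_{\Sigma(u)}=\mathbb B_3\mathbb M^{\rm sv}_\sigma$, $\mathbb I_{\ep,b}=\mathbb B_5\mathbb I_{\ep^{\rm TS}}$ and conjugating the intermediate $b$-valued series through $\overline{\mathbb I_{\ep^{\rm TS}}^T}$, $\mathbb M^{\rm sv}_\sigma$ and $\mathbb I_{\ep^{\rm TS}}$ to arrive at $\mathbbm{\Gamma}^{\rm sv}=\mathbb B_6$. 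Both arguments rest on exactly the same bracket input (Lemma \ref{braklem} and Lemma \ref{zwprop}); yours is more conceptual and shorter, the paper's is constructive and directly yields the algorithm for extracting the $b_k^{(j)}$-coefficients. Two small points to tighten: the quotient must be taken in an ambient Lie algebra containing not only $\ep_k^{(j)},b_k^{(j)}$ but also $x,y$ and the arithmetic $z_w$ (which enter through $t_{01}$, $\mathbb M^{\rm sv}_z$ and $\mathbb M^{\rm sv}_{\Sigma(u)}$), and one should note that $\mathfrak{L}_b$ remains an ideal there by items $(i)$, $(iv)$--$(v)$ of Lemma \ref{braklem}; also $t_{12}=[y,x]=b_2$ rather than $-b_2$, though the sign is immaterial to membership in $\mathfrak{L}_b$. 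Your treatment of single-valuedness, equivariance and part (ii) coincides with the paper's.
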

As will become clear in the proof of item $(ii)$, the conjugation by 
$\Umod(\tau)$ is a universal way of converting equivariant series into generating series of modular forms. Following the terminology from the case of iterated Eisenstein integrals in section \ref{sec:2.5.3}, the equivariant series $\mathbbm{\Gamma}^{\rm sv}_{x,y}(z,\tau)$ is said to be in holomorphic frame (by the constituent $\mathbb I_{\ep,b}(u,v,\tau)$ holomorphic in $\tau$) while the transformed series $\mathbbm{\Lambda}^{\rm sv}_{x,y}(z,\tau)$ in (\ref{cor.3.4c}) is said to be in modular frame by (\ref{cor.3.4e}). The dictionary (\ref{gavsla}) below between the series coefficients in holomorphic and modular frame is universal to the case of equivariant iterated Eisenstein integrals \cite{Brown:2017qwo, Dorigoni_2022, Dorigoni:2024oft} and the present $z$-dependent setting. Its applications to iterated-integral representations of eMGFs will be further discussed in section \ref{sec:5}.

\begin{proof}
\phantom{x}

$(i)$: Single-valuedness and equivariance of $\mathbbm{\Gamma}^{\rm sv}_{x,y}(z,\tau)$ directly follows from the
analogous properties of $\mathbb I^{\rm eqv}_{\ep^{\rm TS}}(\tau)$ and $ \mathbb I^{\rm eqv}_{\ep,b}(u,v,\tau)$ 
in the ratio (\ref{cor.3.4a}). The expansion of $\mathbbm{\Gamma}^{\rm sv}_{x,y}(z,\tau)$ solely in terms of $b_k^{(j)}$
is based on the expansion of $\mathbb I^{\rm eqv}_{\ep,b}(u,v,\tau)$ in terms of $\ep_k^{(j)},b_k^{(j)}$. However, it additionally remains to show that, after elimination of $\ep_k^{(j)}= \ep_k^{(j) {\rm TS}} - b_k^{(j)}$,
all the appearances of $\ep_k^{(j) {\rm TS}}$ in $\mathbb I^{\rm eqv}_{\ep,b}(u,v,\tau)$ can be removed by
the inverse of the series $\mathbb I^{\rm eqv}_{\ep^{\rm TS}}(\tau)$ in (\ref{cor.3.4a}). 

For this purpose, we start by expanding (\ref{cor.3.4a}) using the expression (\ref{grteq.01}) for $ \mathbb I^{\rm eqv}_{\ep,b}(u,v,\tau)$
\ie
\mathbbm{\Gamma}^{\rm sv}_{x,y}(z,\tau)
= \big(\mathbb I_{\ep^{\rm TS}}(\tau) \big)^{-1} \, (\mathbb{M}_\sigma^\text{sv})^{-1} \, (\overline{ \mathbb I_{\epsilon^{\rm TS}}(\tau)^T})^{-1} \, \overline{ \mathbb I_{\ep,b}(u,v,\tau)^T} \, \mathbb{M}^\text{sv}_{\Sigma(u)}\, \mathbb I_{\ep,b}(u,v,\tau)
\label{bbprf.01}
\fe
and gradually identify series
$\mathbb B_r$ with $r=1,2,\ldots,6$ whose expansion is entirely expressible in terms of $b_k^{(j)}$, i.e.\ which take values in the Lie group of $b_k^{(j)}$. 
\begin{itemize}
    \item In a first step, we define a first series  $\mathbb B_1$ via
$\overline{ \mathbb I_{\ep,b}(u,v,\tau)^T}=\mathbb B_1 \overline{ \mathbb I_{\epsilon^{\rm TS}}(\tau)^T}$, i.e.\ by writing $\ep_k^{(j)}= \ep_k^{(j) {\rm TS}} - b_k^{(j)}$ and moving all Tsunogai derivations to the right at the cost of generating Lie polynomials in $b_{k'}^{(j')}$ from their brackets. This leads us to 
\ie
\mathbbm{\Gamma}^{\rm sv}_{x,y}(z,\tau) &= 
 \big(\mathbb I_{\ep^{\rm TS}}(\tau) \big)^{-1} \, (\mathbb{M}_\sigma^\text{sv})^{-1} \, (\overline{ \mathbb I_{\epsilon^{\rm TS}}(\tau)^T})^{-1} \, \mathbb B_1 \, \overline{ \mathbb I_{\epsilon^{\rm TS}}(\tau)^T} \, \mathbb{M}^\text{sv}_{\Sigma(u)} \, \mathbb I_{\ep,b}(u,v,\tau)\\
&= \big(\mathbb I_{\ep^{\rm TS}}(\tau) \big)^{-1} \, (\mathbb{M}_\sigma^\text{sv})^{-1} \, \mathbb B_2 \, \mathbb{M}^\text{sv}_{\Sigma(u)}\, \mathbb I_{\ep,b}(u,v,\tau)
\label{bbprf.02}
\fe
In passing to the second line, we have identified $\mathbb B_2 =  (\overline{ \mathbb I_{\epsilon^{\rm TS}}(\tau)^T})^{-1} \mathbb B_1  \overline{ \mathbb I_{\epsilon^{\rm TS}}(\tau)^T}$ which is expressible via $b_k^{(j)}$ since the brackets of $\ep_k^{(j){\rm TS}}$ with the expansion coefficients of $\mathbb B_1$ are expressible via $b_k^{(j)}$.

\item In a second step, we write $\mathbb{M}^\text{sv}_{\Sigma(u)} = \mathbb{B}_{3} \mathbb{M}^\text{sv}_{\sigma}$
where the discussion in Lemma \ref{zwprop} implies that $\mathbb{B}_{3}$ is again a series in $b_k^{(j)}$. Upon insertion into (\ref{bbprf.02}), we find
\ie
\mathbbm{\Gamma}^{\rm sv}_{x,y}(z,\tau)  &=  \big(\mathbb I_{\ep^{\rm TS}}(\tau) \big)^{-1} \,(\mathbb{M}_\sigma^\text{sv})^{-1} \, \mathbb B_2 \,\mathbb B_3 \, \mathbb{M}^\text{sv}_{\sigma} \, \mathbb I_{\ep,b}(u,v,\tau)\\
&= \big(\mathbb I_{\ep^{\rm TS}}(\tau) \big)^{-1} \, \mathbb B_4 \, \mathbb I_{\ep,b}(u,v,\tau)
\label{bbprf.05}
\fe
and identified a fourth series $ \mathbb B_4 = (\mathbb{M}_\sigma^\text{sv})^{-1} \mathbb B_2\mathbb B_3 \mathbb{M}^\text{sv}_{\sigma}$ in $b_k^{(j)}$ since commutators with $\sigma_w$ normalize the algebra of $b_{k'}^{(j')}$.
\item The third step is to rewrite $ \mathbb I_{\ep,b}(u,v,\tau) = \mathbb B_5 \mathbb I_{\ep^{\rm TS}}(\tau)$ as in the first step which casts (\ref{bbprf.05}) into the form
\ie
\mathbbm{\Gamma}^{\rm sv}_{x,y}(z,\tau) = \big(\mathbb I_{\ep^{\rm TS}}(\tau) \big)^{-1} \, \mathbb B_4\, \mathbb B_5 \, \mathbb I_{\ep^{\rm TS}}(\tau) = \mathbb B_6
\label{bbprf.06}
\fe
We have used once more that commutators with $\ep^{(j){\rm TS}}_k$ normalize the algebra of $b_{k'}^{(j')}$ and concluded that $\mathbbm{\Gamma}^{\rm sv}_{x,y}(z,\tau)$ is expressible in its Lie group.
\end{itemize}
Note that none of the steps in proving item $(i)$ required any information on the expansion coefficients of the series $\mathbb B_r$ in $b_k^{(j)}$ with $r=1,2,\ldots,6$.

$(ii)$: Conjugation by $\Umod(\tau)$ in (\ref{usl2}) universally converts
quantities $\mathbb F_{\rm holo}(\tau)$ with an equivariant ${\rm SL}_2(\mathbb Z)$
transformation $\mathbb F_{\rm holo}(\gamma\cdot \tau) = U_\gamma^{-1} \mathbb F_{\rm holo}(\tau) U_\gamma$
into generating series $\mathbb F_{\rm mod}(\tau)= \Umod(\tau) \mathbb F_{\rm holo}(\tau) \Umod(\tau)^{-1}$ of modular forms subject to $\mathbb F_{\rm mod}(\tau{+}1) = \mathbb F_{\rm mod}(\tau)$
and $\mathbb F_{\rm mod}(-\frac{1}{\tau}) = \bar \tau^{ -{\rm h} }\mathbb F_{\rm mod}(\tau) \bar \tau^{ {\rm h} }$
as in (\ref{trfheqv}). This can be shown solely from the  modular transformation of $\Umod(\tau)$ \cite{Brown:2017qwo2, Dorigoni:2024oft}, 
\beq
\Umod(\tau{+}1)
= \Umod(\tau) U_T \, , \ \ \ \
\Umod\big({-}\tfrac{1}{\tau})
= \bar \tau^{-{\rm h}} \Umod(\tau) U_S 
\eeq
which in turn follows from the adjoint action of both sides on $x,y$, see (\ref{uschoice.00}) for the contributions from $U_S, U_T$. Applying this logic to $\mathbb F_{\rm holo}(\tau) \rightarrow \mathbbm{\Gamma}^{\rm sv}_{x,y}(z,\tau)$ and $\mathbb F_{\rm mod}(\tau) \rightarrow \mathbbm{\Lambda}^{\rm sv}_{x,y}(z,\tau)$ results in the modular behavior
\beq
\mathbbm{\Lambda}^{\rm sv}_{x,y}(z,\tau{+}1) = \mathbbm{\Lambda}^{\rm sv}_{x,y}(z,\tau)
\, , \ \ \ \ \ \
\mathbbm{\Lambda}^{\rm sv}_{x,y}\bigg(\frac{z}{\tau},{-}\frac{1}{\tau}  \bigg) = \bar \tau^{ -{\rm h} }\mathbbm{\Lambda}^{\rm sv}_{x,y}(z,\tau) \bar \tau^{ {\rm h} }
\label{cor.3.4f}
\eeq
By the eigenvalues $2j{-}k{+}2$ of $b_k^{(j)}$ under the Cartan generator ${\rm ad}_{\rm h}$
in (\ref{lieg1.28}), the expansion coefficients $\Lambda^{\rm sv}[\ldots;z,\tau]$ defined by (\ref{cor.3.4d}) enjoy the ${\rm SL}_2(\mathbb Z)$ transformations (\ref{cor.3.4e}) as non-holomorphic modular forms.
\end{proof}

The following corollary of Theorem \ref{3.cor:1} elaborates on the properties and relations of the two alternative formulations of single-valued eMPLs $\Gamma^{\rm sv}[\ldots;z,\tau]$ and $\Lambda^{\rm sv}[\ldots;z,\tau]$ generated by (\ref{cor.3.4b}) and (\ref{cor.3.4d}), respectively.

\begin{corollary}
\label{corof44}
The expansion coefficients $\Gamma^{\rm sv}[\ldots;z,\tau]$ and $\Lambda^{\rm sv}[\ldots;z,\tau]$ in (\ref{cor.3.4b}) and (\ref{cor.3.4d})
%
\item[(i)] are single-valued combinations of eMPLs and their complex conjugates. 
In the conventions where eMPLs are normalized via $(2\pi i)^{ \sum_{i=1}^r (1-n_i) }
 \tilde \Gamma\big( \smallmatrix n_1  &\ldots &n_r \\ 0  &\ldots &0 \endsmallmatrix ;z,\tau\big)$ as in (\ref{genKempl}) or via $\Gamma_{\rm BL}(W;z,\tau)$ as in (\ref{jbexp}), the coefficients in their complex combinations $\Gamma^{\rm sv}[\ldots;z,\tau]$ and $\Lambda^{\rm sv}[\ldots;z,\tau]$ are $\mathbb Q$-linear combinations of products of the following quantities: 
\begin{itemize}
\item[(a)] polynomials in $2\pi i\tau$, $2\pi i\bar \tau$, $(\pi \Im \tau)^{-1}$; 
\item[(b)] equivariant iterated Eisenstein integrals ${\cal E}^{\rm eqv}[\ldots;\tau]$, $\beta^{\rm eqv}[\ldots;\tau]$ in (\ref{nwieqv}), (\ref{defheqv}); 
\item[(c)] the real part $\ee{0}{2}{\tau}+\overline{\ee{0}{2}{\tau}}$ of the primitives of ${\rm G}_2(\tau)$, $\overline{{\rm G}_2(\tau)}$ in (\ref{nmodg2});
\item[(d)] single-valued MZVs.
\end{itemize}

\item[(ii)] obey shuffle relations ($0\leq r\leq s$) 
\begin{align}
\Gamma^{\rm sv}\! \left[  \begin{smallmatrix}j_1&\ldots &j_r\\k_1&\ldots &k_r\end{smallmatrix}  ;z,\tau\right]
\Gamma^{\rm sv}\! \left[  \begin{smallmatrix}j_{r+1}&\ldots &j_s\\k_{r+1}&\ldots &k_s\end{smallmatrix}  ;z,\tau\right]
=
\Gamma^{\rm sv}\! \left[ \left(\begin{smallmatrix}j_1&\ldots &j_r\\k_1&\ldots &k_r\end{smallmatrix} \right) \shuffle \left(\begin{smallmatrix}j_{r+1}&\ldots &j_s\\k_{r+1}&\ldots &k_s\end{smallmatrix} \right) ;z,\tau\right] \label{shrls} \\
\Lambda^{\rm sv}\! \left[  \begin{smallmatrix}j_1&\ldots &j_r\\k_1&\ldots &k_r\end{smallmatrix}  ;z,\tau\right]
\Lambda^{\rm sv}\! \left[  \begin{smallmatrix}j_{r+1}&\ldots &j_s\\k_{r+1}&\ldots &k_s\end{smallmatrix}  ;z,\tau\right]
=
\Lambda^{\rm sv}\! \left[ \left(\begin{smallmatrix}j_1&\ldots &j_r\\k_1&\ldots &k_r\end{smallmatrix} \right) \shuffle \left(\begin{smallmatrix}j_{r+1}&\ldots &j_s\\k_{r+1}&\ldots &k_s\end{smallmatrix} \right) ;z,\tau\right] \notag
\end{align}
\item[(iii)] are related by
\begin{align}
 \! \! \! \! \! \! \!  \lab{j_1 & j_2 &\cdots & j_\ell}{k_1& k_2& \cdots & k_\ell}{z,\tau}=&\sum_{p_1=0}^{k_1-j_1-2}\sum_{p_2=0}^{k_2-j_2-2}\cdots\sum_{p_\ell=0}^{k_\ell-j_\ell-2}{k_1{-}j_1{-}2\choose p_1}{k_2{-}j_2{-}2\choose p_2}\cdots{k_\ell{-}j_\ell{-}2\choose p_\ell} \notag \\
&\times \bigg(\frac{1}{4\pi \Im \tau} \bigg)^{p_1+\cdots+p_\ell}\, \sum_{r_1=0}^{j_1+p_1}\sum_{r_2=0}^{j_2+p_2}\cdots\sum_{r_\ell=0}^{j_\ell+p_\ell}{j_1{+}p_1\choose r_1}{j_2{+}p_2\choose r_2}\cdots{j_\ell{+}p_\ell\choose r_\ell} \notag \\
&\times(-2\pi i\bar{\tau})^{r_1+\cdots+r_\ell}\, \gab{j_1+p_1-r_1 & j_2+p_2-r_2 &\cdots&j_\ell+p_\ell-r_\ell}{k_1&k_2&\cdots&k_\ell}{z,\tau}
\label{gavsla}
\end{align}
i.e.\ any $\Lambda^{\rm sv}[\ldots;z,\tau]$ is a finite $\mathbb Q[2\pi i \bar \tau,(4\pi \Im \tau)^{-1}]$-linear combination of $\Gamma^{\rm sv}[\ldots;z,\tau]$ and vice versa, with identical entries $k_i$ of their second lines.
\end{corollary}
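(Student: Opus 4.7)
Part (i) follows by tracking the constituents in the definition (\ref{grteq.02}) through their explicit series expansions. Unpacking $\mathbbm{\Gamma}_{x,y}(z,\tau)$ via (\ref{notsec.19}) produces Brown--Levin eMPLs together with the prefactor $\exp(\ee{0}{2}{\tau}\, b_2)$ and polynomial dependence on $u$ and $2\pi i\tau$, while its complex-conjugate transpose (\ref{ahologam}) contributes antiholomorphic eMPLs, the companion factor $\exp(\overline{\ee{0}{2}{\tau}}\, b_2)$ and polynomials in $2\pi i\bar\tau$. Since $b_2$ commutes with all generators entering $\mathbb I^{\rm eqv}_{\epsilon^{\rm TS}}(\tau)$ and $\mathbb M^{\rm sv}_z$, the two $b_2$-exponentials combine into $\exp((\ee{0}{2}{\tau}{+}\overline{\ee{0}{2}{\tau}})\, b_2)$ as in item (c). The equivariant iterated Eisenstein integrals of item (b) arise directly from the expansion (\ref{nwieqv}) of $\mathbb I^{\rm eqv}_{\epsilon^{\rm TS}}(\tau)$, and the single-valued MZVs of item (d) enter through $\mathbb M^{\rm sv}_z$. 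Inverse powers of $\pi\Im\tau$ appear on two routes: through $u=\Im z/\Im\tau$ whenever the $e^{-ux}$ prefactor is re-expressed in $(z,\bar z,\tau,\bar\tau)$-coordinates, and, in the modular-frame case, explicitly through the $\Umod(\tau)^{\pm 1}$ in (\ref{cor.3.4c}). Collecting contributions into coefficients of a single word $b_{k_1}^{(j_1)}\cdots b_{k_\ell}^{(j_\ell)}$ uses the $\mathbb Q$-bracket manipulations from the proof of Theorem \ref{3.cor:1}(i), which by Lemma \ref{braklem} produce only $\mathbb Q$-linear combinations of (a)--(d).

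Part (ii) is a Hopf-algebra argument. Each factor in (\ref{grteq.02}) is either a path-ordered exponential or a series of the shape $\mathbb M^{\rm sv}_\bullet$, both of which are group-like with respect to the deconcatenation coproduct on their respective alphabet: path-ordered exponentials are of the form $\exp(L)$ with $L$ in the relevant completed free Lie algebra, and for $\mathbb M^{\rm sv}_z$, $\mathbb M^{\rm sv}_\sigma$, $\mathbb M^{\rm sv}_{\Sigma(u)}$ group-likeness follows because the single-valued map is an algebra homomorphism on the shuffle algebra of the $f$-alphabet. Since group-likeness is preserved under products and inverses, $\mathbbm{\Gamma}^{\rm sv}_{x,y}(z,\tau)$ is group-like in the ambient completed enveloping algebra, and by Theorem \ref{3.thm:2}(v) and Theorem \ref{3.cor:1}(i) it lies in the sub-Hopf-algebra $\widehat U({\rm Lie}[b_k^{(j)}])$, whose Lie generators are free by Lemma \ref{braklem}. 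Standard duality between the deconcatenation coproduct and the shuffle product on a free tensor algebra then identifies the coefficients of a group-like element as characters of the shuffle algebra, which is exactly (\ref{shrls}). The argument for $\Lambda^{\rm sv}$ is identical since conjugation by $\Umod(\tau)$ preserves group-likeness.

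Part (iii) reduces to computing $\Umod(\tau)\, b_k^{(j)}\, \Umod(\tau)^{-1}$ and extending by the fact that conjugation is an algebra automorphism. Using the factorization (\ref{usl2}), the inner conjugation by $e^{2\pi i\bar\tau\epsilon_0}$ yields a finite sum
\[
e^{2\pi i\bar\tau\epsilon_0}\, b_k^{(j)}\, e^{-2\pi i\bar\tau\epsilon_0} \;=\; \sum_{p=0}^{k-j-2}\frac{(2\pi i\bar\tau)^p}{p!}\, b_k^{(j+p)} \,,
\]
where the truncation follows from $\mathrm{ad}_{\epsilon_0}^p(b_k^{(j)})=b_k^{(j+p)}$ vanishing beyond the highest-weight vector. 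The outer conjugation by $\exp(-\epsilon_0^\vee/(4\pi\Im\tau))$ then acts through $\mathrm{ad}_{\epsilon_0^\vee}$ as a lowering operator on the $(k{-}1)$-dimensional $\mathfrak{sl}_2$ multiplet spanned by the $b_k^{(j)}$, producing the factors $(4\pi\Im\tau)^{-p}$ and the binomial coefficients dictated by (\ref{lieg1.28}). Because conjugation distributes over concatenation, the transformation of an $\ell$-letter word factorizes letter by letter, so matching the coefficients of $b_{k_1}^{(j_1')}\cdots b_{k_\ell}^{(j_\ell')}$ on the two sides of $\mathbbm{\Lambda}^{\rm sv}=\Umod\, \mathbbm{\Gamma}^{\rm sv}\,\Umod^{-1}$ against the expansions (\ref{cor.3.4b}) and (\ref{cor.3.4d}), and reindexing $j_i' = j_i + p_i - r_i$, produces (\ref{gavsla}).

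The main obstacle is the bookkeeping in part (iii): the prefactors $(k_i{-}1)(-1)^{j_i}/j_i!$ attached to each letter in (\ref{cor.3.4b}) and (\ref{cor.3.4d}) must combine with the $\mathfrak{sl}_2$ matrix elements of $\mathrm{ad}_{\epsilon_0^\vee}^p$ acting on $b_k^{(m)}$ to assemble exactly the binomials $\binom{k_i-j_i-2}{p_i}\binom{j_i+p_i}{r_i}$ in (\ref{gavsla}). This is a finite, completely explicit calculation, but requires careful tracking of the opposite roles of $\epsilon_0$ (raising) and $\epsilon_0^\vee$ (lowering) on the $b_k^{(j)}$-multiplet, and of the direction in which the letters are reindexed under conjugation.
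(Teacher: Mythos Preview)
Your proposal follows essentially the same strategy as the paper's own proof: (i) traces the constituents of (\ref{grteq.02}) to identify where items (a)--(d) originate, (ii) invokes group-likeness of all composing series, and (iii) reduces to the explicit $\mathfrak{sl}_2$ action of $\Umod(\tau)$ on the $b_k^{(j)}$. Your treatments of (ii) and (iii) are more detailed than the paper's (which cites \cite{Reutenauer, Brown:unpubl} for (ii) and section 3.1 of \cite{Dorigoni:2024oft} for (iii)), but the reasoning is the same.

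One small misattribution in (i): you locate $(\pi\Im\tau)^{-1}$ partly in ``$u=\Im z/\Im\tau$ whenever the $e^{-ux}$ prefactor is re-expressed''. The paper does not use this route; it attributes $(\pi\Im\tau)^{-1}$ solely to the $\Umod(\tau)$ conjugation in the modular frame. In the holomorphic frame, working with the Brown--Levin representation (last line of (\ref{notsec.19})) there is no $e^{-ux}$ factor left over, and the coefficients of the $\Gamma_{\rm BL}(W;z,\tau)$ acquire only polynomial dependence on $2\pi i\tau$, $2\pi i\bar\tau$ through the composite letters $x-2\pi i\tau y$ and $x-2\pi i\bar\tau y$. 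Re-expressing $u$ in $(z,\bar z,\tau,\bar\tau)$-coordinates would moreover introduce genuine $z,\bar z$-dependence into the coefficients, which the statement of the corollary does not allow. This does not affect the validity of the corollary (item (a) merely \emph{permits} $(\pi\Im\tau)^{-1}$, it need not appear in every coefficient), but your explanation of its origin should be corrected.
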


\begin{proof}
\phantom{x}

$(i)$: The appearance of eMPLs and their complex conjugates in the expansion coefficients $\Gamma^{\rm sv}[\ldots;z,\tau]$ and $\Lambda^{\rm sv}[\ldots;z,\tau]$ in (\ref{cor.3.4b}) and (\ref{cor.3.4d})  can be traced back to the series $\overline{\mathbbm{\Gamma}_{x,y}(z,\tau)^T}$ and $\mathbbm{\Gamma}_{x,y}(z,\tau)$ in the expression (\ref{grteq.02}) for $\mathbbm{\Gamma}^{\rm sv}_{x,y}(z,\tau)$. Tentative factors of $(2\pi i)^{\pm 1}$ are artifacts of the normalization conventions of the $\tilde \Gamma\big( \smallmatrix n_1  &\ldots &n_r \\ 0  &\ldots &0 \endsmallmatrix ;z,\tau\big)$ in (\ref{genKempl}) and do not appear when employing the Brown-Levin eMPLs $\Gamma_{\rm BL}(W;z,\tau)$ in the normalization of  (\ref{jbexp}).
The additional quantities that can multiply the eMPLs and their complex conjugates are $\mathbb Q$-linear combinations of products of
\begin{itemize}
\item[(a)] powers of $2\pi i\tau$, $2\pi i\bar \tau$ from the composite letters of $\mathbbm{\Gamma}_{x,y}(z,\tau)$, $\overline{\mathbbm{\Gamma}_{x,y}(z,\tau)^T}$, and the transformation $\Umod(\tau)$ to modular frame (\ref{usl2}) additionally introduces polynomials in $2\pi i\bar \tau$ and $(\pi\Im \tau)^{-1}$ at fixed degree of the $b_k^{(j)}$, $\epsilon_k^{(j)}$ in the generating series;
\item[(b)] equivariant iterated Eisenstein integrals $\eeqqvv{j_1&\ldots &j_r}{k_1&\ldots &k_r}{\tau}$ and $\beqvtau{j_1 &\ldots &j_r}{k_1&\ldots &k_r}{\tau}$ from the expansion of the series $(\mathbb I^{\rm eqv}_{\ep^{\rm TS}}(\tau))^{\pm 1}$ in holomorphic frame (\ref{nwieqv}) and modular frame (\ref{defheqv}), respectively;
\item[(c)] $\ee{0}{2}{\tau}+\overline{\ee{0}{2}{\tau}}$ from the first and last series of $\mathbbm{\Gamma}_{x,y}(z,\tau) \mathbb M^{\rm sv}_z\,
\mathbb I^{\rm eqv}_{\ep^{\rm TS}}(\tau)\overline{\mathbbm{\Gamma}_{x,y}(z,\tau)^T}$ in (\ref{altex.01}), using the fact that $b_2$ commutes with the series $\mathbb M^{\rm sv}_z\,
\mathbb I^{\rm eqv}_{\ep^{\rm TS}}(\tau)$ in between the eMPL series;
\item[(d)] single-valued MZVs from the series $(\mathbb M_z^{\rm sv})^{\pm 1}$ in the expression (\ref{altex.01}) for $\mathbb I^{\rm eqv}_{\ep,b}(u,v,\tau) $.
\end{itemize}

$(ii)$: Shuffle relations of the single-valued eMPLs $\Gamma^{\rm sv}[\ldots;z,\tau]$ and $\Lambda^{\rm sv}[\ldots;z,\tau]$ follow from the fact that all the series $\mathbb M^{\rm sv}_{z}$, $\mathbb M^{\rm sv}_{\Sigma(u)}$, $\mathbb I_{\ep,b}(u,v,\tau)$, $\mathbbm{\Gamma}_{x,y}(z,\tau)$, $\mathbb I^{\rm eqv}_{\ep^{\rm TS}}(\tau) $, etc.\ composing their generating series $\mathbbm{\Gamma}^{\rm sv}_{x,y}(z,\tau) $ and $\mathbbm{\Lambda}^{\rm sv}_{x,y}(z,\tau)$ are group-like, i.e.\ that their logarithms are Lie series in the respective generators \cite{Reutenauer, Brown:unpubl}.

$(iii)$: Follows solely from the action (\ref{lieg1.28}) of the $\mathfrak{sl}_2$ generators $\ep_0,\ep_0^\vee$ in the expansion (\ref{usl2}) of 
$\Umod(\tau)$ on the generators $b_k^{(j)}$ in the expansion (\ref{cor.3.4b}) of $\mathbbm{\Gamma}^{\rm sv}_{x,y}(z,\tau)$, see for instance section 3.1 of \cite{Dorigoni:2024oft}.
\end{proof}

\begin{remark}
\label{3.rmk:1}
Starting from the expression (\ref{grteq.02}) for $\mathbbm{\Gamma}^{\rm sv}_{x,y}(z,\tau)$, we can use the consequence $ \mathbb M^{\rm sv}_z 
\mathbb I^{\rm eqv}_{\ep^{\rm TS}}(\tau)  =  \mathbb M^{\rm sv}_\sigma 
\mathbb I^{\rm sv}_{\ep^{\rm TS}}(\tau) $ of (\ref{lieg1.61}) to trade arithmetic zeta
generators $z_w$ for the full ones $\sigma_w$ and at the same time {\it equivariant} iterated Eisenstein integrals for
{\it single-valued} ones:
\begin{align}
\mathbbm{\Gamma}^{\rm sv}_{x,y}(z,\tau) &=  \mathbb I^{\rm sv}_{\ep^{\rm TS}}(\tau)^{-1} \,
(\mathbb M^{\rm sv}_\sigma)^{-1} \,
\overline{\mathbbm{\Gamma}_{x,y}(z,\tau)^T} \,
  \mathbb M^{\rm sv}_\sigma \,
\mathbb I^{\rm sv}_{\ep^{\rm TS}}(\tau)  \,
\mathbbm{\Gamma}_{x,y}(z,\tau)
\label{agrteq.02}
\end{align}
This is in remarkable correspondence to the generating series (\ref{svmpl.15})
of single-valued MPLs at genus zero in two variables where
$\mathbb M^{\rm sv}_\sigma$ and $\mathbb I^{\rm sv}_{\ep^{\rm TS}}(\tau) $ can be viewed as the genus-one counterparts of $\mathbb M^{\rm sv}_0$ and $\mathbb G^{\rm sv}_{e_0,e_1}(y)$, respectively. The analogies between
genus-one and genus-zero constituents in (\ref{agrteq.02}) and (\ref{svmpl.15})
are summarized in table \ref{tab:1}. This extends and is consistent with 
the analogies between single-valued iterated Eisenstein integrals and single-valued 
MPLs in one variable in section \ref{sec:2.5.4} and \cite{Dorigoni:2024oft}.
\end{remark}

\begin{table}[h]
\centering
\begin{tabular}{c||c}
genus zero&genus one
\\\hline\hline
$ \displaystyle \Bigg. \mathbb M^{\rm sv}_0 \ {\rm and} \ M_w \ {\rm acting} \ {\rm on} \ e_i \ {\rm and} \ e_i'$ 
&$\ \ \ \mathbb M^{\rm sv}_\sigma \ {\rm and} \  \sigma_w \ {\rm acting} \ {\rm on} \ \ep_k^{(j){\rm TS}} \ {\rm and} \ x,y \ \ \ $
\\\hline
$\bigg.  \mathbb G^{\rm sv}_{e_0,e_1}(y)  \ {\rm and} \ e_i \ {\rm acting} \ {\rm on} \  e_i' \Big.$ 
&$\Big.  \mathbb I^{\rm sv}_{\ep^{\rm TS}}  \ {\rm and} \ \ep_k^{(j){\rm TS}} \ {\rm acting} \ {\rm on} \ x,y  \Big.$
\\\hline
$\bigg. \ \ \  \mathbb G_{e_0',e_1',e_y'}(y,z) \ {\rm and} \ 
 \mathbb G^{\rm sv}_{e_0',e_1',e_y'}(y,z)
 \ \ \ $
  &$\mathbbm{\Gamma}_{x,y}(z,\tau)  \ {\rm and} \ \mathbbm{\Gamma}^{\rm sv}_{x,y}(z,\tau)
  $
\end{tabular}
\caption{\label{tab:1}\textit{Dictionary between the constructions of single-valued MPLs in two variables (genus zero) and single-valued eMPLs (genus one).}}
\end{table}

\begin{remark}
\label{rmk.F}
The series (\ref{agrteq.02}) in single-valued eMPLs can be equivalently presented as
\begin{align}
\mathbb F_{x,y,\ep,\sigma}^{\rm sv}(z,\tau) &= \overline{ \mathbb F_{x,y,\ep,\sigma}(z,\tau)^T } \, \mathbb F_{x,y,\ep,\sigma}(z,\tau)  \notag \\
\mathbb F_{x,y,\ep,\sigma}(z,\tau) &= \mathbb M_\sigma \,  \mathbb I_{\ep^{\rm TS}}(\tau)   \,
\mathbbm{\Gamma}_{x,y}(z,\tau)
\label{prevs.01}
\end{align}
by unpacking $\mathbb I^{\rm sv}_{\ep^{\rm TS}}(\tau)$ via (\ref{lieg1.61}) and introducing the series in MZVs
\beq
\mathbb M_{\sigma} = \sum_{r=0}^{\infty} \sum_{i_1,\ldots,i_r \atop {\in 2\mathbb N+1} } \rho^{-1}(f_{i_1} \ldots f_{i_r}) \, \sigma_{i_1}\ldots  \sigma_{i_r}
\label{nosv.05}
\eeq
without the single-valued projection of the $f_{i_1} \ldots f_{i_r}$. The factorized form of (\ref{prevs.01}) comes at the cost of taking values in the larger Lie group involving the three types of generators $\sigma_w,\epsilon_k^{(j)},b_k^{(j)}$. The extraction of coefficients then requires a choice of ordering prescription, say using the bracket relations among the generators to move all the $\sigma_w$ to the left of $\epsilon_k^{(j)},b_k^{(j)}$ and all the $\epsilon_k^{(j)}$ to the left of $b_k^{(j)}$. Moreover, cases with $\geq 2$ generators $\ep^{(j_1)}_{k_1} \ep^{(j_2)}_{k_2}\ldots$ at degrees $\geq 14$ again require us to mod out by Pollack relations before investigating coefficients.

Given that factorized formulae similar to (\ref{prevs.01}) were observed for genus-zero MPLs in any number of variables \cite{Frost:2023stm} and related to period matrices of Koba-Nielsen type integrals over points on a disk or sphere \cite{Brown:2019jng, Britto:2021prf}, it would be interesting if (\ref{prevs.01}) signals similar links to concrete configuration-space integrals at genus one. Moreover, it remains to explore connections of (\ref{prevs.01}) to the general formalism for single-valued period matrices in \cite{Brown:2018omk}.
\end{remark}

\begin{remark}
\label{3.rmk:2}
The non-holomorophic modular forms $\lab{j_1 &\ldots &j_r}{k_1 &\ldots &k_r}{z, \tau }$ 
obtained as the coefficients (\ref{cor.3.4d}) of the series 
$\mathbbm{\Lambda}^{\rm sv}_{x,y}(z,\tau)$ in (\ref{cor.3.4c})
are the right class of iterated integrals for eMGFs in one variable, see section \ref{sec:2.2.4}. As detailed in section \ref{sec:6.4}, the counting of shuffle-independent $\lab{j_1 &\ldots &j_r}{k_1 &\ldots &k_r}{z, \tau }$ governs that of indecomposable eMGFs in one variable that are independent under algebraic relations with $\mathbb Q$-combinations of MGFs and/or single-valued MZVs as coefficients. More precisely, the total degree $k_1{+}\ldots{+}k_r$ of the $\lab{j_1 &\ldots &j_r}{k_1 &\ldots &k_r}{z, \tau }$ will match the sum of the exponents $a_i, b_i$ of the eMGFs in (\ref{defdiC}) and their generalizations to trihedral or more involved topologies.
\end{remark}

\subsection{Behavior under complex conjugation}
\label{sec:3.cc}

The placement of the factors $\mathbb M^{\rm sv}$ in equivariant generating series (\ref{lieg1.51}) or (\ref{grteq.01}) creates an asymmetry between the building blocks with meromorphic and antimeromorphic dependence on $\tau$. Already at genus zero, this is a well-known feature of the single-valued map \cite{svpolylog, Schnetz:2013hqa, Brown:2013gia, Broedel:2016kls, DelDuca:2016lad, Brown:2018omk}, so it is not surprising to see the genus-one echo in the series (\ref{cor.3.4a}) in single-valued eMPLs. The asymmetry becomes even more pronounced in modular frame since the transformation $\Umod(\tau)$ in (\ref{usl2}) is engineered to produce modular forms of purely antiholomorphic modular weight by distributing suitable powers of $\pi \Im \tau$ among the series coefficients.

The goal of this section is to quantify the asymmetry between meromorphic and antimeromorphic quantities in equivariant series:  we will express the complex conjugates of $\mathbb I^{\rm eqv}_{\ep^{\rm TS}}(\tau)$, $\mathbb I^{\rm eqv}_{\ep,b}(u,v,\tau)$, $\mathbbm{\Gamma}^{\rm sv}_{x,y}(z,\tau)$ and their counterparts in modular frame in terms of the original series conjugated by additional series that depend on fewer variables $z,\tau$. In this way, the single-valued eMPLs produced by (\ref{cor.3.4b}) or (\ref{cor.3.4d}) are shown to close under complex conjugation, with single-valued MZVs, powers of $\pi \Im \tau$ and equivariant iterated Eisenstein integrals among their coefficients.

After spelling out the differential equations of the equivariant series in $z$, $\bar z$, we will see that, by the Brown-Levin connection in (\ref{notsec.06}), the holomorphic $z$-derivative of $\mathbbm{\Gamma}^{\rm sv}_{x,y}(z,\tau)$ does not take the same form as that of $\mathbbm{\Gamma}_{x,y}(z,\tau)$. Our eMPL series therefore do not share the property of MPL series at genus zero that the single-valued map $\mathbb G \rightarrow \mathbb G^{\rm sv}$ preserves the holomorphic differential equations in all variables.

\subsubsection{Complex conjugation in holomorphic frame}
\label{s:3.c.a}

Throughout this section, it will be convenient to act on the complex conjugates of the generating series of interest with the transposition operation $(\ldots)^T$ in section \ref{sec:2.4.4}. In holomorphic frame, the resulting links of three equivariant series to their complex conjugates are gathered in the following proposition: 

\begin{prop}
\label{ccholo}
We find the following transposes of the complex-conjugate series of

\begin{itemize}
\item[(i)] equivariant iterated Eisenstein integrals (also see section 8.2 of \cite{Brown:2017qwo2})
\beq
\overline{ \mathbb I^{\rm eqv}_{\ep^{\rm TS}}( \tau)^T } = \mathbb M^{\rm sv}_z \, \mathbb I^{\rm eqv}_{\ep^{\rm TS}}(\tau)\, (\mathbb M^{\rm sv}_z)^{-1}
\label{ccsc.01}
\eeq
\item[(ii)] their generalizations to integration kernels ${\rm d} \tau \,f^{(k)}(u\tau{+}v,\tau)$ at fixed $u,v \in \mathbb R$
\beq
\overline{ \mathbb I^{\rm eqv}_{\ep,b}(u,v,\tau)^T } = \mathbb M^{\rm sv}_z \, \mathbb I^{\rm eqv}_{\ep,b}(u,v,\tau)\, (\mathbb M^{\rm sv}_z)^{-1}
\label{ccsc.02}
\eeq
\item[(iii)] single-valued eMPLs
\begin{align}
\overline{ \mathbbm{\Gamma}^{\rm sv}_{x,y}(z,\tau)^T } &= \mathbb M^{\rm sv}_z \, \mathbb I^{\rm eqv}_{\ep^{\rm TS}}(\tau) \, \mathbbm{\Gamma}^{\rm sv}_{x,y}(z,\tau)\, \big( \mathbb I^{\rm eqv}_{\ep^{\rm TS}}(\tau) \big)^{-1} \, (\mathbb M^{\rm sv}_z )^{-1}
\label{ccsc.03} \\
&= \mathbb M^{\rm sv}_\sigma \, \mathbb I^{\rm sv}_{\ep^{\rm TS}}(\tau) \, \mathbbm{\Gamma}^{\rm sv}_{x,y}(z,\tau)\, \big( \mathbb I^{\rm sv}_{\ep^{\rm TS}}(\tau) \big)^{-1} \, (\mathbb M^{\rm sv}_\sigma )^{-1}
\notag
\end{align}
\end{itemize}
\end{prop}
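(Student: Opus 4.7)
The strategy is to view the combined operation $J(X) := \overline{X^T}$ as an anti-involution of the generating-series algebras: $J$ conjugates scalar coefficients, transposes Lie-algebra letters with the parities (\ref{lieg1.32})--(\ref{lieg1.34}), and reverses concatenation products so that $J(AB) = J(B)J(A)$ and $J^2 = \mathrm{id}$. In particular $J(\overline{\mathbb I_{\ep^{\rm TS}}(\tau)^T}) = \mathbb I_{\ep^{\rm TS}}(\tau)$ and analogously $J(\overline{\mathbb I_{\ep,b}(u,v,\tau)^T}) = \mathbb I_{\ep,b}(u,v,\tau)$.

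For $(i)$, apply $J$ to the defining formula (\ref{lieg1.51}) of $\mathbb I^{\rm eqv}_{\ep^{\rm TS}}(\tau)$ and use the anti-homomorphism property together with the previous observation to get
$J(\mathbb I^{\rm eqv}_{\ep^{\rm TS}}(\tau)) = \overline{\mathbb I_{\ep^{\rm TS}}(\tau)^T} \cdot J(\mathbb M^{\rm sv}_\sigma) \cdot \mathbb I_{\ep^{\rm TS}}(\tau) \cdot J((\mathbb M^{\rm sv}_z)^{-1})$.
On the other hand, the target right-hand side $\mathbb M^{\rm sv}_z \mathbb I^{\rm eqv}_{\ep^{\rm TS}}(\tau) (\mathbb M^{\rm sv}_z)^{-1}$ unfolds via (\ref{lieg1.51}) into $\overline{\mathbb I_{\ep^{\rm TS}}(\tau)^T} \, \mathbb M^{\rm sv}_\sigma \, \mathbb I_{\ep^{\rm TS}}(\tau) \, (\mathbb M^{\rm sv}_z)^{-1}$. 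The assertion $(i)$ thereby reduces to the two $J$-invariance identities
\begin{align*}
J(\mathbb M^{\rm sv}_\sigma) \;=\; \mathbb M^{\rm sv}_\sigma, \qquad J(\mathbb M^{\rm sv}_z) \;=\; \mathbb M^{\rm sv}_z,
\end{align*}
from which $J((\mathbb M^{\rm sv}_z)^{-1}) = (\mathbb M^{\rm sv}_z)^{-1}$ is automatic. For $(ii)$ the same scheme applied to (\ref{grteq.01}) reduces the claim to $J(\mathbb M^{\rm sv}_{\Sigma(u)}) = \mathbb M^{\rm sv}_{\Sigma(u)}$. Expanding $\Sigma_w(u)$ via (\ref{not.06}) and using $J(e^{ux}) = e^{ux}$ (a consequence of $x^T = -x$ together with $u \in \mathbb R$), the transposition properties $t_{12}^T = t_{12}$ and $t_{01}^T = t_{01}$ (both inherited from $(\mathrm{ad}_x^k y)^T = \mathrm{ad}_x^k y$), together with the sign $(-1)^{w-1} = 1$ for the Lie polynomials $P_w$ of odd degree $w$, this in turn reduces to the $J$-invariance of $\mathbb M^{\rm sv}_\sigma$ established in $(i)$.

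For $(iii)$, apply $J$ to the defining ratio (\ref{cor.3.4a}); anti-homomorphism combined with $(i)$ and $(ii)$ yields
\begin{align*}
J(\mathbbm{\Gamma}^{\rm sv}_{x,y}(z,\tau))
&= J(\mathbb I^{\rm eqv}_{\ep,b}) \cdot \bigl(J(\mathbb I^{\rm eqv}_{\ep^{\rm TS}})\bigr)^{-1} \\
&= \mathbb M^{\rm sv}_z \, \mathbb I^{\rm eqv}_{\ep,b} \, (\mathbb M^{\rm sv}_z)^{-1} \cdot \mathbb M^{\rm sv}_z \, (\mathbb I^{\rm eqv}_{\ep^{\rm TS}})^{-1} \, (\mathbb M^{\rm sv}_z)^{-1} \\
&= \mathbb M^{\rm sv}_z \, \mathbb I^{\rm eqv}_{\ep,b} \, (\mathbb I^{\rm eqv}_{\ep^{\rm TS}})^{-1} \, (\mathbb M^{\rm sv}_z)^{-1}
\end{align*}
with the middle pair $(\mathbb M^{\rm sv}_z)^{-1} \mathbb M^{\rm sv}_z$ cancelling. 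Substituting $\mathbb I^{\rm eqv}_{\ep,b} = \mathbb I^{\rm eqv}_{\ep^{\rm TS}} \mathbbm{\Gamma}^{\rm sv}_{x,y}$ from (\ref{cor.3.4a}) yields the first line of (\ref{ccsc.03}); the second line follows from the factorization $\mathbb M^{\rm sv}_z \, \mathbb I^{\rm eqv}_{\ep^{\rm TS}}(\tau) = \mathbb M^{\rm sv}_\sigma \, \mathbb I^{\rm sv}_{\ep^{\rm TS}}(\tau)$, itself an immediate consequence of comparing the definitions (\ref{lieg1.51}) of $\mathbb I^{\rm eqv}_{\ep^{\rm TS}}$ and (\ref{lieg1.61}) of $\mathbb I^{\rm sv}_{\ep^{\rm TS}}$.

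The hard part is the $J$-invariance of $\mathbb M^{\rm sv}_\sigma$ and $\mathbb M^{\rm sv}_z$. While the coefficients $\rho^{-1}(\mathrm{sv}(f_{i_1}\cdots f_{i_r}))$ are manifestly real, the transposed words $\sigma_{i_r}^T \cdots \sigma_{i_1}^T$ unfold through (\ref{lieg1.21}) into reordered and signed Lie series in Tsunogai derivations reaching arbitrarily high modular depth, so palindromy of the coefficients is not tautological. The argument should parallel Theorem 8.2 of \cite{Brown:2017qwo2}: combine the symmetry $\mathrm{sv}(f_{i_1}\cdots f_{i_r}) = \mathrm{sv}(f_{i_r}\cdots f_{i_1})$ of the single-valued map on the $f$-alphabet with the depth-graded propagation of the bracket relations (\ref{dfprpsig}) and (\ref{lieg1.23}) that allow one to reshuffle zeta generators past each other and past Tsunogai derivations. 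An equivalent route, in the spirit of Proposition \ref{dequiv} and Theorem \ref{2.thm:1}, is to verify that both sides of $(i)$ share ${\rm SL}_2(\mathbb Z)$-equivariance (using that $\mathbb M^{\rm sv}_z$ commutes with $U_\gamma$ by $\mathfrak{sl}_2$-invariance of $z_w$) and the same $\tau \to i\infty$ asymptotics, and to conclude by uniqueness of the equivariant completion.
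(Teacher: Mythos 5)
Your proof follows the same route as the paper's: apply the anti-involution $J(X)=\overline{X^T}$ to the defining products (\ref{lieg1.51}), (\ref{grteq.01}) and (\ref{cor.3.4a}), and reduce all three statements to the $J$-invariance of the series in zeta generators. Two remarks. First, your intermediate identity $J(e^{ux})=e^{ux}$ is false: $x^T=-x$ gives $(e^{ux})^T=e^{-ux}$ and hence $J(e^{ux})=e^{-ux}$ for real $u$. The conclusion survives only because $J$ reverses the order of the factors, so that $J\big(e^{ux}(P_w{+}\sigma_w)e^{-ux}\big)=J(e^{-ux})\,J(P_w{+}\sigma_w)\,J(e^{ux})=e^{ux}\,J(P_w{+}\sigma_w)\,e^{-ux}$; as written, your step combines two compensating errors, so fix the sign. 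Second, the ``hard part'' you defer, $J(\mathbb M^{\rm sv}_\sigma)=\mathbb M^{\rm sv}_\sigma$, is the only substantive input the paper uses and admits a short direct proof rather than the indirect routes you sketch: by (\ref{appA.00}) the single-valued image ${\rm sv}(f_{i_1}\cdots f_{i_r})$ is invariant under reversal of the word, so it suffices to check $\sigma_w^T=\sigma_w$ for each individual generator. This follows from $z_w^T=z_w$ in (\ref{lieg1.32}) together with a parity count on the geometric part: every depth-$r$ Lie monomial in $\ep_{k_i}^{(j_i){\rm TS}}$ contributing to $\sigma_w$ satisfies $\sum_i (j_i{+}1)=w$ (matching of $y$-degrees), so transposition produces the sign $(-1)^{r-1}(-1)^{\sum_i j_i}=(-1)^{w-1}=+1$ for odd $w$. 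No appeal to ${\rm SL}_2(\mathbb Z)$-equivariance or to the asymptotics at the cusp is needed, and the same count (with $t_{12},t_{01},b_k^{(j)}$ in place of the Tsunogai derivations) settles the invariance of $\mathbb M^{\rm sv}_z$ and $\mathbb M^{\rm sv}_{\Sigma(u)}$ used in items $(ii)$ and $(iii)$.
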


\begin{proof}
\phantom{x}

$(i)$: readily follows from the complex conjugate transpose of (\ref{lieg1.51})
\beq
\overline{ \mathbb I^{\rm eqv}_{\ep^{\rm TS}}( \tau)^T } = \overline{ \mathbb I_{\ep^{\rm TS}}(\tau)^T} \, \mathbb M^{\rm sv}_{\sigma}\,
\mathbb I_{\ep^{\rm TS}}(\tau) \, (\mathbb M^{\rm sv}_{z})^{-1}
\label{ccsc.04}
\eeq
by identifying the first three terms on the right side as $\mathbb M^{\rm sv}_{z} \mathbb I^{\rm eqv}_{\ep^{\rm TS}}( \tau)$. We have used that $(\mathbb M^{\rm sv}_{\sigma})^T = \mathbb M^{\rm sv}_{\sigma}$ since the single-valued map (\ref{appA.00}) in the $f$-alphabet organizes the zeta generators in the expansion of (\ref{not.05}) in palindromic combinations $\sigma_{i_1}\ldots \sigma_{i_r} + \sigma_{i_r}\ldots \sigma_{i_1}$.
This argument applies to any variant $\sigma_w \rightarrow z_w$ or $\sigma_w \rightarrow \Sigma_w(u)$ of genus-one zeta generators which are all unaffected by complex conjugation.

$(ii)$: similarly follows from the complex conjugate transpose of (\ref{grteq.01})
\beq
\overline{ \mathbb I^{\rm eqv}_{\ep,b}(u,v, \tau)^T } = \overline{ \mathbb I_{\ep,b}(u,v,\tau)^T} \, \mathbb M^{\rm sv}_{\Sigma(u)}\,
\mathbb I_{\ep,b}(u,v,\tau) \, (\mathbb M^{\rm sv}_{z})^{-1}
\label{ccsc.05}
\eeq
by identifying the first three terms on the right side as $\mathbb M^{\rm sv}_{z} \mathbb I^{\rm eqv}_{\ep,b}(u,v, \tau)$.

$(iii)$: As a result of  (\ref{cor.3.4a}),  (\ref{ccsc.01}) and (\ref{ccsc.02}), we have
\beq
\overline{ \mathbbm{\Gamma}^{\rm sv}_{x,y}(z,\tau)^T } = \mathbb M^{\rm sv}_z \, \mathbb I^{\rm eqv}_{\ep,b}(u,v, \tau) \, \big( \mathbb I^{\rm eqv}_{\ep^{\rm TS}}(\tau) \big)^{-1} \, (\mathbb M^{\rm sv}_z )^{-1}
\label{ccsc.06}
\eeq
The first line of the claim (\ref{ccsc.03}) then follows by identifying $\mathbb I^{\rm eqv}_{\ep,b}(u,v, \tau) = \mathbb I^{\rm eqv}_{\ep^{\rm TS}}(\tau) \mathbbm{\Gamma}^{\rm sv}_{x,y}(z,\tau)$, and the second line is due to $\mathbb M^{\rm sv}_z  \mathbb I^{\rm eqv}_{\ep^{\rm TS}}(\tau) = \mathbb M^{\rm sv}_\sigma  \mathbb I^{\rm sv}_{\ep^{\rm TS}}(\tau)$, see (\ref{lieg1.61}).
\end{proof}

\begin{corollary}
\label{closegam}
The single-valued eMPLs $\Gamma^{\rm sv}[\ldots;z,\tau]$ defined by the expansion (\ref{cor.3.4b}) of the series $\mathbbm{\Gamma}^{\rm sv}_{x,y}(z,\tau)$ close under complex conjugation. The coefficients in relating $\overline{\Gamma^{\rm sv}[\ldots;z,\tau]}$ to $\Gamma^{\rm sv}[\ldots;z,\tau]$ are determined by (\ref{ccsc.03}) as $\mathbb Q$-linear combinations of products of the equivariant iterated Eisenstein integrals in (\ref{nwieqv})  and single-valued MZVs.
\end{corollary}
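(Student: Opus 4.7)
The strategy is to derive Corollary~\ref{closegam} directly from equation~(\ref{ccsc.03}) of Proposition~\ref{ccholo}(iii), which already expresses $\overline{\mathbbm{\Gamma}^{\rm sv}_{x,y}(z,\tau)^T}$ as a double conjugation of $\mathbbm{\Gamma}^{\rm sv}_{x,y}(z,\tau)$: an inner conjugation by $\mathbb I^{\rm eqv}_{\ep^{\rm TS}}(\tau)$ followed by an outer conjugation by $\mathbb M^{\rm sv}_z$. The task reduces to showing that this two-fold conjugation preserves the $\mathfrak{L}_b$-valuedness established in Theorem~\ref{3.cor:1}(i), so that comparing coefficients of words in the free generators $b_k^{(j)}$ on both sides yields the desired closure identities.

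First I would expand each conjugation as a nested-commutator series of the type~(\ref{svmpl.14}), (\ref{remzs}) or (\ref{intro.35}). The inner conjugation by $\mathbb I^{\rm eqv}_{\ep^{\rm TS}}(\tau)^{\pm 1}$ produces iterated brackets of $\mathfrak{L}_b$-valued expressions with Tsunogai derivations $\ep_k^{(j){\rm TS}}$; by Lemma~\ref{braklem}(ii), each such bracket lies back in $\mathfrak{L}_b$. The outer conjugation by $(\mathbb M^{\rm sv}_z)^{\pm 1}$ produces nested brackets with the arithmetic zeta generators $z_w$, whose action on the $b_k$ lands in $\mathfrak{L}_b$ by Lemma~\ref{braklem}(v); this extends to all $b_k^{(j)}$ via iterated application of ${\rm ad}_{\ep_0}$, which is permissible since $[\ep_0, z_w]=0$ by (\ref{lieg1.22}). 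Consequently $\overline{\mathbbm{\Gamma}^{\rm sv}_{x,y}(z,\tau)^T}$ is again a series in $b_k^{(j)}$, and the transposition on the left is absorbed by reversing word order and inserting the signs $(-1)^j$ dictated by~(\ref{lieg1.34}).

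By freeness of $\mathfrak{L}_b$, the coefficient of each word in $b_k^{(j)}$ is uniquely determined on both sides of (\ref{ccsc.03}). Reading off these coefficients expresses every $\overline{\Gamma^{\rm sv}[\ldots;z,\tau]}$ as a finite $\mathbb Q$-linear combination of products of three families: the $\Gamma^{\rm sv}[\ldots;z,\tau]$ themselves from the central factor, equivariant iterated Eisenstein integrals ${\cal E}^{\rm eqv}[\ldots;\tau]$ emanating from $(\mathbb I^{\rm eqv}_{\ep^{\rm TS}}(\tau))^{\pm 1}$, and single-valued MZVs $\rho^{-1}({\rm sv}(f_{i_1}\cdots f_{i_r}))$ coming from $(\mathbb M^{\rm sv}_z)^{\pm 1}$. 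Finiteness at each word in $b_k^{(j)}$ follows from the positive $(x,y)$-grading: commutators with $\ep_k^{(j){\rm TS}}$ or $z_w$ strictly raise the $b$-degree by $k\geq 2$ or $2w\geq 6$ respectively, so only finitely many terms of each series contribute.

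The main obstacle is a bookkeeping subtlety at degrees $\geq 14$: Pollack relations render individual ${\cal E}^{\rm eqv}[\ldots;\tau]$ ambiguous in isolation, as noted below (\ref{nwieqv}). However, Lemma~\ref{braklem}(ii) guarantees that all $\ep_k^{(j){\rm TS}}$-brackets entering the unwinding of the inner conjugation evaluate to fixed Lie polynomials in $b_k^{(j)}$, so only Pollack-invariant combinations of ${\cal E}^{\rm eqv}[\ldots;\tau]$ can survive in the final coefficients and the ambiguity cancels automatically.
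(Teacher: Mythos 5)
Your proposal is correct and follows exactly the route the paper intends: Corollary \ref{closegam} is drawn directly from (\ref{ccsc.03}), expanding the conjugations by $(\mathbb M^{\rm sv}_z)^{\pm 1}$ and $(\mathbb I^{\rm eqv}_{\ep^{\rm TS}}(\tau))^{\pm 1}$ as nested commutators that preserve $\mathfrak{L}_b$ via Lemma \ref{braklem}, and then matching coefficients of words in the free generators $b_k^{(j)}$. Your explicit handling of the degree-grading for finiteness and of the Pollack ambiguity at degree $\geq 14$ spells out details the paper leaves implicit, but introduces nothing different in substance.
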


\subsubsection{Towards modular frame}
\label{s:3.c.b}

The following definition introduces an interpolation between the series $\mathbb H^{\rm eqv}_{\ep^{\rm TS}}(\tau)$ and $\mathbbm{\Lambda}^{\rm sv}_{x,y}(z,\tau)$ in modular frame given by (\ref{defheqv}) and (\ref{cor.3.4c}):

\begin{definition}
\label{newdefheqv}
The modular-frame analogue of $\mathbb I^{\rm eqv}_{\ep,b}(u,v,\tau)$ in (\ref{grteq.01}) is given by the series $\mathbb H^{\rm eqv}_{\ep,b}(u,v,\tau)$ defined by
\beq
\mathbb H^{\rm eqv}_{\ep,b}(u,v,\tau) = 
U_{\rm mod}(\tau) \, \mathbb I^{\rm eqv}_{\ep,b}(u,v,\tau) \, U_{\rm mod}(\tau)^{-1}
\label{thm.3.4a}
\eeq
\end{definition}
By the arguments
in the proof of item $(ii)$ of Theorem \ref{3.cor:1}, we have
\beq
\mathbb H^{\rm eqv}_{\ep,b}(u,v{-}u,\tau{+}1) =  \mathbb H^{\rm eqv}_{\ep,b}(u,v,\tau)
\, , \ \ \ \ \ \
\mathbb H^{\rm eqv}_{\ep,b}\bigg({-}v,u,{-}\frac{1}{\tau} \bigg) = \bar \tau^{ -{\rm h} }\,
\mathbb H^{\rm eqv}_{\ep,b}(u,v,\tau) \,
\bar \tau^{ {\rm h} }
\label{thm.3.4b}
\eeq
such that each term in the expansion in $\ep_k^{(j)}, b_k^{(j)}$ is a non-holomorphic
modular form after modding out by the bracket relations among the generators. We note in passing that the images of the generators $x,y$ in modular frame are given by
\begin{align}\label{Umodxy}
    \Umod(\tau) \begin{pmatrix}
        x\\y
    \end{pmatrix}
    \Umod^{{-}1}(\tau)=\begin{pmatrix}\frac{\tau}{\tau{-}\bar\tau} x+2\pi i\bar\tau y\\
    y+\frac{1}{2\pi i(\tau{-}\bar\tau)}x\end{pmatrix}
\end{align}
The following lemma sets the stage for investigating the complex-conjugation properties of equivariant series in modular frame and pinpointing
the effect of the transformation $\Umod(\tau)$:

\begin{lemma}
\label{lemusl2}
\phantom{x}

\begin{itemize}
\item[(i)] The generalized reflection operators ${\cal Q}_{\alpha}$ in the Lie group of $\mathfrak{sl}_2$ defined by
\beq
{\cal Q}_{\alpha} = \exp\big({-}\ep_0^\vee/\alpha \big) \, \exp(\alpha \ep_0) \, \exp\big({-}\ep_0^\vee/\alpha \big)
\label{lemq.01}
\eeq
with arbitrary commuting $\alpha \in \mathbb C \setminus \{ 0 \}$ act on the generators $x,y$ via
\beq
{\cal Q}_{\alpha}^{-1} \bigg( \begin{array}{c} x \\   y \end{array} \bigg) {\cal Q}_{\alpha} = \bigg( \begin{array}{c} {-}\alpha y \\   x/\alpha \end{array} \bigg) 
\label{lemq.02}
\eeq
which leads to the following action on $b_k^{(j)}$ and $\ep_k^{(j)}$
\begin{align}
{\cal Q}_{\alpha}^{-1} 
\bigg( \begin{array}{c} b_k^{(j)}  \\  \ep_k^{(j)}  \end{array} \bigg) 
{\cal Q}_{\alpha}  &= \frac{(-)^j j! (-\alpha)^{k-2j-2}}{(k{-}j{-}2)!} \, \bigg( \begin{array}{c} b_k^{(k-j-2)}  \\  \ep_k^{(k-j-2)}  \end{array} \bigg) 
\label{lemq.03}
\end{align}
\item[(ii)] The transformation $U_{\rm mod}(\tau)$ in (\ref{usl2}) to modular frame and its complex conjugate transpose combine to the $\alpha = 4\pi \Im \tau$ instance of the generalized reflection operator (\ref{lemq.01}),
\beq
U_{\rm mod}(\tau) \, \overline{ U_{\rm mod}(\tau)^T} = {\cal Q}_{4\pi \Im \tau} 
\label{lemq.04}
\eeq
\end{itemize}
\end{lemma}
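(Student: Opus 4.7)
}

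The plan is to handle part (i) in two stages --- first establishing the action on the defining pair $(x,y)$ by a direct nilpotent computation, then extending to the multiplets $b_k^{(j)}, \epsilon_k^{(j)}$ via $\mathfrak{sl}_2$-representation theory --- while part (ii) will follow from a one-line collapse of exponentials.

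For (\ref{lemq.02}), I would exploit two-step nilpotency: $[\epsilon_0, y] = 0$ and $[\epsilon_0^\vee, x] = 0$ imply that each of the three exponential factors in ${\cal Q}_\alpha^{-1} = \exp(\epsilon_0^\vee/\alpha)\exp(-\alpha \epsilon_0)\exp(\epsilon_0^\vee/\alpha)$ truncates after its linear term when conjugating $x$ or $y$, and a direct sequential computation yields $x \mapsto -\alpha y$ and $y \mapsto x/\alpha$. For (\ref{lemq.03}), the brackets (\ref{lieg1.28}) identify $\{b_k^{(j)}\}_{j=0}^{k-2}$ and $\{\epsilon_k^{(j)}\}_{j=0}^{k-2}$ as two isomorphic copies of the $(k{-}1)$-dimensional irreducible $\mathfrak{sl}_2$-representation (with $\epsilon_0$ raising, $\epsilon_0^\vee$ lowering, and lowest-weight vectors $b_k$ and $\epsilon_k$); since ${\cal Q}_\alpha$ lies in the corresponding Lie group, it must send $b_k^{(j)}$ to a scalar multiple of $b_k^{(k-j-2)}$. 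The proportionality constant is pinned down by one direct calculation at $j=0$: since ${\cal Q}_\alpha^{-1}(\cdot){\cal Q}_\alpha$ is an algebra automorphism,
\[
{\cal Q}_\alpha^{-1} b_k\, {\cal Q}_\alpha \,=\, -{\rm ad}_{-\alpha y}^{k-1}(x/\alpha) \,=\, -(-\alpha)^{k-2}\,{\rm ad}_y^{k-1}(x),
\]
and applying ${\rm ad}_{\epsilon_0}^{k-2}$ to $b_k = -{\rm ad}_x^{k-1}(y)$ via Leibniz --- noting that only the configuration with the remaining $x$ in the innermost slot survives, since any configuration with two adjacent $y$'s at the innermost bracket is killed by $[y,y]=0$ --- yields the identity ${\rm ad}_y^{k-1}(x) = b_k^{(k-2)}/(k-2)!$, matching the $j=0$ instance of the claim. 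Extension to arbitrary $j$ then proceeds either by induction, using $b_k^{(j+1)} = [\epsilon_0, b_k^{(j)}]$ together with the lowering/raising structure (\ref{lieg1.28}) on the image side, or equivalently by invoking the symmetric-power model $V_k \cong S^{k-2}(\langle x,y\rangle)$ under which $b_k^{(j)}$ corresponds to an explicit multiple of $x^{k-j-2}y^j$ and the action of ${\cal Q}_\alpha$ is read off from the defining-rep formulas. The same argument applies verbatim to $\epsilon_k^{(j)}$.

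Part (ii) is a short unwinding. Using $\epsilon_0^T = \epsilon_0$, the natural convention $\epsilon_0^{\vee T} = \epsilon_0^\vee$ consistent with the transposition rule (\ref{lieg1.33}) applied to the exponentials, the reversal $(AB)^T = B^T A^T$, and $\overline{2\pi i\bar\tau} = -2\pi i\tau$, one obtains
\[
\overline{\Umod(\tau)^T} \,=\, \exp(-2\pi i\tau\, \epsilon_0)\,\exp\!\big({-}\epsilon_0^\vee/(4\pi\Im\tau)\big).
\]
Multiplying on the left by $\Umod(\tau) = \exp(-\epsilon_0^\vee/(4\pi\Im\tau))\exp(2\pi i\bar\tau\, \epsilon_0)$ and combining the two middle exponentials in $\epsilon_0$ via $2\pi i(\bar\tau - \tau) = 4\pi\Im\tau$ immediately yields the Gauss form ${\cal Q}_{4\pi\Im\tau}$. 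The only nontrivial obstacle is the combinatorial bookkeeping of signs and factorials in Part (i) for general $j$; the $\mathfrak{sl}_2$-representation-theoretic viewpoint reduces this to fixing a single normalization constant at the lowest-weight vector, after which the $b_k^{(j)} \to b_k^{(k-j-2)}$ prefactor is dictated by the ladder relations.
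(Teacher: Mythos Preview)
Your proof is correct and runs parallel to the paper's. For part~(i), the paper derives the $(x,y)$-action by the same nilpotent expansion you describe and obtains the $b_k^{(j)}$-case from the Lie-polynomial representation~(\ref{lieg1.24}); the one genuine difference is the treatment of $\epsilon_k^{(j)}$. Rather than your observation that $\{b_k^{(j)}\}$ and $\{\epsilon_k^{(j)}\}$ carry isomorphic $(k{-}1)$-dimensional $\mathfrak{sl}_2$-irreps with identical ladder normalizations~(\ref{lieg1.28}) --- so that the group element ${\cal Q}_\alpha$ acts by the same matrix on both and the $\epsilon_k^{(j)}$-case comes for free --- the paper instead conjugates the explicit Tsunogai action~(\ref{lieg1.05}) by ${\cal Q}_\alpha$ and then uses $\epsilon_k^{(j)} = \epsilon_k^{(j){\rm TS}} - b_k^{(j)}$. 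Your route is marginally cleaner as it bypasses the Tsunogai formula entirely. Part~(ii) matches the paper, which simply records that it ``follows from the definitions.'' One cosmetic slip: in your displayed computation for ${\cal Q}_\alpha^{-1} b_k\, {\cal Q}_\alpha$, the intermediate expression should read $+(-\alpha)^{k-2}\,{\rm ad}_y^{k-1}(x)$ rather than $-(-\alpha)^{k-2}\,{\rm ad}_y^{k-1}(x)$ (check $k=2$); this does not affect the final match with~(\ref{lemq.03}) at $j=0$.
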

Note that the $\alpha=-2\pi i$ instance of the generalized reflection operator (\ref{lemq.01}) realizes the $U_S$ in (\ref{lieg1.39}) for modular $S$-transformations,
\beq
U_S = {\cal Q}_{-2\pi i}
\label{lemq.05}
\eeq

\begin{proof}
\phantom{x}

$(i)$: Both components of (\ref{lemq.02}) are straightforwardly derived from (\ref{lemq.01}) by expanding the exponentials to subleading order. By the Lie-polynomial representation (\ref{lieg1.24}) of $b_k^{(j)}$, the upper component of (\ref{lemq.03}) follows from (\ref{lemq.02}), and the lower component can be derived by conjugating (\ref{lieg1.05}) with ${\cal Q}_\alpha$ and using $\ep_k^{(j)} = \ep_k^{(j){\rm TS}}- b_k^{(j)}$.

$(ii)$: follows from the definitions (\ref{usl2}) and (\ref{lemq.01}) of $\Umod(\tau)$ and ${\cal Q}_\alpha$, respectively.
\end{proof}

\subsubsection{Complex conjugation in modular frame}
\label{s:3.c.c}

Lemma \ref{lemusl2} of the previous section yields convenient representations for the complex conjugates of equivariant series in modular frame, gathered in the following proposition:

\begin{prop}
\label{moreccholo}
We find the following transposes of the complex-conjugate series of

\begin{itemize}
\item[(i)] equivariant iterated Eisenstein integrals or modular graph forms 
\beq
\overline{ \mathbb H^{\rm eqv}_{\ep^{\rm TS}}(\tau)^T } = \mathbb M^{\rm sv}_z \, {\cal Q}_{4\pi \Im \tau}^{-1} \, \mathbb H^{\rm eqv}_{\ep^{\rm TS}}(\tau) \, {\cal Q}_{4\pi \Im \tau} \, (\mathbb M^{\rm sv}_z )^{-1}
\label{ccsc.21}
\eeq
\item[(ii)] their generalizations to integration kernels ${\rm d} \tau \,f^{(k)}(u\tau{+}v,\tau)$ or eMGFs
\beq
\overline{ \mathbb H^{\rm eqv}_{\ep,b}(u,v,\tau)^T } = \mathbb M^{\rm sv}_z \, {\cal Q}_{4\pi \Im \tau}^{-1} \, \mathbb H^{\rm eqv}_{\ep,b}(u,v,\tau) \, {\cal Q}_{4\pi \Im \tau} \, (\mathbb M^{\rm sv}_z )^{-1}
\label{ccsc.22}
\eeq
\item[(iii)] single-valued eMPLs adapted to eMGFs
\beq
\overline{ \mathbbm{\Lambda}^{\rm sv}_{x,y}(z,\tau)^T } = \mathbb M^{\rm sv}_z \, {\cal Q}_{4\pi \Im \tau}^{-1} \, \mathbb H^{\rm eqv}_{\ep^{\rm TS}}(\tau) \, \mathbbm{\Lambda}^{\rm sv}_{x,y}(z,\tau) \, \big(\mathbb H^{\rm eqv}_{\ep^{\rm TS}}(\tau) \big)^{-1} \, {\cal Q}_{4\pi \Im \tau} \, (\mathbb M^{\rm sv}_z )^{-1}
\label{ccsc.23}
\eeq
\end{itemize}
\end{prop}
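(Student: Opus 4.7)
My plan is to derive all three identities by the same two-step recipe: first pass from the modular frame to the holomorphic frame via the defining conjugation by $U_{\rm mod}(\tau)$, then apply the corresponding identity from Proposition \ref{ccholo}, and finally interpret the answer back in modular frame using the key identity $U_{\rm mod}(\tau) \, \overline{U_{\rm mod}(\tau)^T} = {\cal Q}_{4\pi \Im \tau}$ of Lemma \ref{lemusl2}(ii). The single recurring technical ingredient is that the $\mathfrak{sl}_2$-singlet property of the arithmetic zeta generators $z_w$ (see (\ref{lieg1.22})) makes $\mathbb{M}^{\rm sv}_z$ commute both with $U_{\rm mod}(\tau)$ and with the Lie-group element ${\cal Q}_{4\pi \Im \tau}$, which will allow me to freely move $\mathbb{M}^{\rm sv}_z$ past these factors at the end of the computation.

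Concretely, for part $(i)$ I would start from $\mathbb{H}^{\rm eqv}_{\ep^{\rm TS}}(\tau) = U_{\rm mod}(\tau)\,\mathbb{I}^{\rm eqv}_{\ep^{\rm TS}}(\tau)\,U_{\rm mod}(\tau)^{-1}$ of (\ref{defheqv}), apply the transposition rule $(ABC)^T = C^T B^T A^T$ and complex conjugation to obtain
\begin{equation}
\overline{\mathbb{H}^{\rm eqv}_{\ep^{\rm TS}}(\tau)^T}
= \overline{U_{\rm mod}(\tau)^T}^{-1}\,\overline{\mathbb{I}^{\rm eqv}_{\ep^{\rm TS}}(\tau)^T}\,\overline{U_{\rm mod}(\tau)^T}.
\end{equation}
I then insert Proposition \ref{ccholo}$(i)$ for the middle factor and use Lemma \ref{lemusl2}$(ii)$ in the form $\overline{U_{\rm mod}(\tau)^T} = U_{\rm mod}(\tau)^{-1}\,{\cal Q}_{4\pi \Im \tau}$ (so that $\overline{U_{\rm mod}(\tau)^T}^{-1} = {\cal Q}_{4\pi \Im \tau}^{-1}\,U_{\rm mod}(\tau)$). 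The resulting expression contains the combination $U_{\rm mod}(\tau)\,\mathbb{I}^{\rm eqv}_{\ep^{\rm TS}}(\tau)\,U_{\rm mod}(\tau)^{-1} = \mathbb{H}^{\rm eqv}_{\ep^{\rm TS}}(\tau)$ sandwiched between $\mathbb{M}^{\rm sv}_z$, $U_{\rm mod}(\tau)$ and ${\cal Q}_{4\pi \Im \tau}$ factors; commuting $\mathbb{M}^{\rm sv}_z$ through both $U_{\rm mod}(\tau)^{\pm 1}$ and ${\cal Q}_{4\pi \Im \tau}^{\pm 1}$ then yields the claim (\ref{ccsc.21}).

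Part $(ii)$ is proven verbatim by the same argument, starting from Definition \ref{newdefheqv} of $\mathbb{H}^{\rm eqv}_{\ep,b}(u,v,\tau)$ and using Proposition \ref{ccholo}$(ii)$ in place of $(i)$. Part $(iii)$ proceeds along the same lines starting from $\mathbbm{\Lambda}^{\rm sv}_{x,y}(z,\tau) = U_{\rm mod}(\tau)\,\mathbbm{\Gamma}^{\rm sv}_{x,y}(z,\tau)\,U_{\rm mod}(\tau)^{-1}$ of (\ref{cor.3.4c}) and feeding in Proposition \ref{ccholo}$(iii)$. In the resulting expression, I would insert $1 = U_{\rm mod}(\tau)^{-1} U_{\rm mod}(\tau)$ at the two internal positions between $\mathbb{I}^{\rm eqv}_{\ep^{\rm TS}}(\tau)^{\pm 1}$ and $\mathbbm{\Gamma}^{\rm sv}_{x,y}(z,\tau)$ so as to recognize three successive modular-frame building blocks $\mathbb{H}^{\rm eqv}_{\ep^{\rm TS}}(\tau)$, $\mathbbm{\Lambda}^{\rm sv}_{x,y}(z,\tau)$, $\big(\mathbb{H}^{\rm eqv}_{\ep^{\rm TS}}(\tau)\big)^{-1}$, and then use the same commutation of $\mathbb{M}^{\rm sv}_z$ with $U_{\rm mod}(\tau)$ and ${\cal Q}_{4\pi \Im \tau}$ to reach (\ref{ccsc.23}).

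There is no genuine obstacle, but two places deserve care. The first is bookkeeping of transposition under inverses, where I will rely on $(A^{-1})^T = (A^T)^{-1}$ together with the reversal rule (\ref{lieg1.33}); in particular one must check that reversing $e^{-\ep_0^\vee/(4\pi \Im \tau)}e^{2\pi i \bar\tau \ep_0}$ interacts correctly with complex conjugation of its scalar arguments, which is precisely the content of Lemma \ref{lemusl2}$(ii)$. The second is ensuring the commutation $[\mathbb{M}^{\rm sv}_z, U_{\rm mod}(\tau)] = [\mathbb{M}^{\rm sv}_z, {\cal Q}_{4\pi \Im \tau}] = 0$; this follows from (\ref{lieg1.22}) because the defining $\mathfrak{sl}_2$ generators $\ep_0$ and $\ep_0^\vee$ of both $U_{\rm mod}(\tau)$ (see (\ref{usl2})) and ${\cal Q}_\alpha$ (see (\ref{lemq.01})) commute with every $z_w$, and the entire argument of the proof is homogeneous in this commutation.
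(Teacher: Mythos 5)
Your proposal is correct and follows essentially the same route as the paper: the paper also writes $\overline{\mathbb Y^T}=(\overline{\Umod^T})^{-1}\mathbb W\,\Umod^{-1}\mathbb Y\,\Umod\,\mathbb W^{-1}\overline{\Umod^T}$ for $\mathbb Y=\Umod\mathbb X\Umod^{-1}$ with $\overline{\mathbb X^T}=\mathbb W\mathbb X\mathbb W^{-1}$, then specializes to the three cases, inserts Proposition \ref{ccholo}, identifies ${\cal Q}_{4\pi\Im\tau}=\Umod(\tau)\overline{\Umod(\tau)^T}$ via Lemma \ref{lemusl2}$(ii)$, and uses the commutation of $\mathbb M^{\rm sv}_z$ with $\Umod(\tau)$ and ${\cal Q}_{4\pi\Im\tau}$. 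The only cosmetic difference is that the paper states the general conjugation formula once and then specializes, whereas you carry out each case separately.
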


By Lemma \ref{lemusl2}, the conjugation by ${\cal Q}_{4\pi \Im \tau} = \Umod(\tau)  \overline{ \Umod(\tau)^T}$ merely switches the generators $b_k^{(j)} \leftrightarrow b_k^{(k-j-2)}$ and $\ep_k^{(j)} \leftrightarrow \ep_k^{(k-j-2)}$ of the series expansions up to powers of $(\pi \Im \tau)^{\pm 1}$ with rational prefactors as in (\ref{lemq.03}). Note that $\mathbb M^{\rm sv}_z$ commutes with ${\cal Q}_{4\pi \Im \tau}$ by the composing $\mathfrak{sl}_2$ singlets $z_w$.

\begin{proof}
For any triplet $(\mathbb X,\mathbb Y,\mathbb W)$ of group-like series related by
\begin{align}
\mathbb Y &= \Umod(\tau) \, \mathbb X \, \Umod(\tau)^{-1}  \, , \ \ \ \ \ \ 
\overline{\mathbb X ^T} = \mathbb W  \, \mathbb X \, (\mathbb W)^{-1}
\label{ccsc.11}
\end{align}
we have
\beq
\overline{\mathbb Y^T} = (\overline{ \Umod(\tau)^T })^{-1} \, \mathbb W \, \Umod(\tau)^{-1} \, \mathbb Y \, \Umod(\tau) \, (\mathbb W)^{-1} \, \overline{ \Umod(\tau)^T }
\label{ccsc.12}
\eeq

$(i)$: follows from (\ref{ccsc.12}) specialized to $(\mathbb X,\mathbb Y,\mathbb W) = ( \mathbb I^{\rm eqv}_{\ep^{\rm TS}}(\tau) , \mathbb H^{\rm eqv}_{\ep^{\rm TS}}(\tau) , \mathbb M^{\rm sv}_{z} )$, where (\ref{ccsc.01}) is the required relation between $\mathbb W $ and $\mathbb X$,
\beq
\overline{ \mathbb H^{\rm eqv}_{\ep^{\rm TS}}(\tau)^T }
=
(\overline{ \Umod(\tau)^T })^{-1} \, \mathbb M^{\rm sv}_{z} \, \Umod(\tau)^{-1} \, \mathbb H^{\rm eqv}_{\ep^{\rm TS}}(\tau) \, \Umod(\tau) \, (\mathbb M^{\rm sv}_{z})^{-1} \, \overline{ \Umod(\tau)^T }
\label{ccsc.13}
\eeq
which readily implies the claim (\ref{ccsc.21}) since $ \Umod(\tau)$ commutes with $\mathbb M^{\rm sv}_{z}$ and ${\cal Q}_{4\pi \Im \tau}$ can be identified via (\ref{lemq.04}).

$(ii)$: follows in the same way  by specializing $(\mathbb X,\mathbb Y,\mathbb W) = ( \mathbb I^{\rm eqv}_{\ep,b}(u,v,\tau),\mathbb H^{\rm eqv}_{\ep,b}(u,v,\tau),\mathbb M^{\rm sv}_{z} )$.

$(iii)$: follows from (\ref{ccsc.12}) specialized to $(\mathbb X,\mathbb Y,\mathbb W) = ( \mathbbm{\Gamma}^{\rm sv}_{x,y}(z,\tau) , \mathbbm{\Lambda}^{\rm sv}_{x,y}(z,\tau)  , \mathbb M^{\rm sv}_{z} \mathbb I^{\rm eqv}_{\ep^{\rm TS}}(\tau)  )$, where (\ref{ccsc.03}) is the required relation between $\mathbb W $ and $\mathbb X$
\begin{align}
\overline{ \mathbbm{\Lambda}^{\rm sv}_{x,y}(z,\tau) ^T }
&=
(\overline{ \Umod(\tau)^T })^{-1} \, \mathbb M^{\rm sv}_{z} \, \mathbb I^{\rm eqv}_{\ep^{\rm TS}}(\tau) \, \Umod(\tau)^{-1} \, \mathbbm{\Lambda}^{\rm sv}_{x,y}(z,\tau)  \notag \\
&\quad \times \Umod(\tau) \, \big( \mathbb I^{\rm eqv}_{\ep^{\rm TS}}(\tau) \big)^{-1} \, (\mathbb M^{\rm sv}_{z})^{-1} \, \overline{ \Umod(\tau)^T }
\label{ccsc.17}
\end{align}
The claim (\ref{ccsc.23}) then follows by rewriting $\mathbb I^{\rm eqv}_{\ep^{\rm TS}}(\tau)  \Umod(\tau)^{-1} = \Umod(\tau)^{-1} \mathbb H^{\rm eqv}_{\ep^{\rm TS}}(\tau) $ as well as $\Umod(\tau)  ( \mathbb I^{\rm eqv}_{\ep^{\rm TS}}(\tau) )^{-1} = ( \mathbb H^{\rm eqv}_{\ep^{\rm TS}}(\tau) )^{-1} \Umod(\tau)$, by commuting $ \Umod(\tau)$ with $\mathbb M^{\rm sv}_{z}$ and by identifying ${\cal Q}_{4\pi \Im \tau}$ via (\ref{lemq.04}).
\end{proof}

\begin{corollary}
\label{closelamb}
The single-valued eMPLs $\Lambda^{\rm sv}[\ldots;z,\tau]$ defined by the expansion (\ref{cor.3.4d}) of the series $\mathbbm{\Lambda}^{\rm sv}_{x,y}(z,\tau)$ close under complex conjugation. The coefficients in relating $\overline{\Lambda^{\rm sv}[\ldots;z,\tau]}$ to $\Lambda^{\rm sv}[\ldots;z,\tau]$ are determined by (\ref{ccsc.23}) as $\mathbb Q$-linear combinations of products of equivariant iterated Eisenstein integrals $\beta^{\rm eqv}$ in modular frame, see (\ref{defheqv}), single-valued MZVs and, as an additional feature of modular frame, powers of $(\pi \Im \tau)^{\pm 1}$. 
\end{corollary}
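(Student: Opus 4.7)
The plan is to start from the identity (\ref{ccsc.23}) of Proposition \ref{moreccholo} and extract coefficients of words in $b_k^{(j)}$ on both sides. On the left-hand side, the expansion (\ref{cor.3.4d}) of $\mathbbm{\Lambda}^{\rm sv}_{x,y}(z,\tau)$ together with the transposition parities (\ref{lieg1.34}) for $b_k^{(j)}$ gives a coefficient of a word $b_{k_1}^{(j_1)}\ldots b_{k_r}^{(j_r)}$ in $\overline{\mathbbm{\Lambda}^{\rm sv}_{x,y}(z,\tau)^T}$ equal to the complex conjugate $\overline{\Lambda^{\rm sv}[\ldots;z,\tau]}$ multiplied by $(-1)^{j_1+\ldots+j_r}$. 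This identifies the coefficient-extraction on the right-hand side as the systematic way to express $\overline{\Lambda^{\rm sv}[\ldots;z,\tau]}$ in terms of the advertised constituents.

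Next, I would rewrite the RHS of (\ref{ccsc.23}) as a single conjugation acting on $\mathbbm{\Lambda}^{\rm sv}_{x,y}(z,\tau)$. Each nested conjugation preserves the Lie algebra generated by $b_k^{(j)}$: for $\mathbb M^{\rm sv}_z$ this follows from item $(v)$ of Lemma \ref{braklem}, which ensures $[z_w,b_k^{(j)}]$ is a Lie polynomial in $b_{k'}^{(j')}$; for $\mathbb H^{\rm eqv}_{\ep^{\rm TS}}(\tau)$ this follows from item $(ii)$ of Lemma \ref{braklem}, which provides the explicit form of $[\ep_{k_1}^{(j_1)\mathrm{TS}},b_{k_2}^{(j_2)}]$ as Lie polynomials in $b_k^{(j)}$; and for ${\cal Q}_{4\pi \Im \tau}$ this follows from Lemma \ref{lemusl2}$(i)$, equation (\ref{lemq.03}), which sends $b_k^{(j)}$ to a rational multiple of $b_k^{(k-j-2)}$ times $(4\pi \Im \tau)^{k-2j-2}$.

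I would then read off the coefficients contributed by each factor in turn. The innermost ${\cal Q}_{4\pi \Im \tau}$ conjugation only produces half-integer powers of $(\pi \Im \tau)^{\pm 1}$ with rational prefactors via (\ref{lemq.03}), reshuffling which $b_k^{(j)}$ appears at each slot. The conjugation by $\mathbb H^{\rm eqv}_{\ep^{\rm TS}}(\tau)^{\pm 1}$ introduces, by its expansion (\ref{defheqv}), $\mathbb Q$-linear combinations of products of $\beta^{\rm eqv}[\ldots;\tau]$ multiplying nested ${\rm ad}_{\ep_k^{(j)\mathrm{TS}}}$ of the word in $b_k^{(j)}$, which by Lemma \ref{braklem}$(ii)$ is itself a Lie polynomial in $b_{k'}^{(j')}$. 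Finally the outermost conjugation by $\mathbb M^{\rm sv}_z$, expanded as in (\ref{remzs}), contributes $\mathbb Q$-linear combinations of single-valued MZVs multiplying nested brackets with $z_w$, which by Lemma \ref{braklem}$(v)$ again stay within Lie polynomials in $b_{k'}^{(j')}$. Since the generators in the expansion of the RHS are precisely the $b_k^{(j)}$ in which $\overline{\Lambda^{\rm sv}[\ldots;z,\tau]}$ is expressed, equating coefficients furnishes the claimed expression of each $\overline{\Lambda^{\rm sv}[\ldots;z,\tau]}$ as a finite $\mathbb Q$-linear combination of products of $\Lambda^{\rm sv}[\ldots;z,\tau]$, $\beta^{\rm eqv}[\ldots;\tau]$, single-valued MZVs, and $(\pi \Im \tau)^{\pm 1}$.

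The main obstacle, and the only non-formal input, is the well-definedness of the coefficients in $b_k^{(j)}$ in the presence of Pollack relations among the $\ep_k^{(j)\mathrm{TS}}$ at degrees $\geq 14$. Since the RHS of (\ref{ccsc.23}) involves $\mathbb H^{\rm eqv}_{\ep^{\rm TS}}(\tau)$ and its inverse sandwiching $\mathbbm{\Lambda}^{\rm sv}_{x,y}(z,\tau)$, one must check that after moving all $\ep_k^{(j)\mathrm{TS}}$ through $\mathbbm{\Lambda}^{\rm sv}_{x,y}(z,\tau)$ via iterated commutators that land in ${\rm Lie}[b_{k'}^{(j')}]$, the Tsunogai generators recombine and cancel between $\mathbb H^{\rm eqv}_{\ep^{\rm TS}}(\tau)$ and its inverse. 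This is the analogue of the cancellation argument in the proof of Theorem \ref{3.cor:1}$(i)$, where the sandwiching of $\mathbbm{\Gamma}_{x,y}(z,\tau)$ by $\mathbb I^{\rm eqv}_{\ep^{\rm TS}}(\tau)^{\pm 1}$ produced a series purely in $b_k^{(j)}$; I would repeat the same telescoping argument here, now in modular frame, so that the resulting expansion of $\overline{\mathbbm{\Lambda}^{\rm sv}_{x,y}(z,\tau)^T}$ is unambiguous and gives the stated closure of $\Lambda^{\rm sv}[\ldots;z,\tau]$ under complex conjugation.
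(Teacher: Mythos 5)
Your proposal is correct and follows essentially the route the paper intends: Corollary \ref{closelamb} is read off directly from (\ref{ccsc.23}) by extracting coefficients of words in $b_k^{(j)}$, using that each conjugating factor ($\mathbb M^{\rm sv}_z$, $\mathbb H^{\rm eqv}_{\ep^{\rm TS}}(\tau)$, ${\cal Q}_{4\pi\Im\tau}$) normalizes ${\rm Lie}[b_{k}^{(j)}]$ and contributes exactly the advertised coefficients, with the Pollack ambiguity resolved as in the proof of Theorem \ref{3.cor:1}$(i)$. The only slip is terminological: the exponents $k{-}2j{-}2$ in (\ref{lemq.03}) are integers, so ${\cal Q}_{4\pi\Im\tau}$ produces integer (not half-integer) powers of $(\pi\Im\tau)^{\pm 1}$.
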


\subsubsection{Differential equations of equivariant and single-valued series}
\label{sec.4.diff}

This section gathers the total differentials $\dd_z = \dd z \,\partial_z+\dd \bar z \,\partial_{\bar z}$ at constant $\tau$ of the equivariant generating series $\mathbb I^{\rm eqv}_{\ep,b}(u,v,\tau)$, $\mathbbm{\Gamma}^{\rm sv}_{x,y}(z,\tau)$, $\mathbb H^{\rm eqv}_{\ep,b}(u,v,\tau)$, $\mathbbm{\Lambda}^{\rm sv}_{x,y}(z,\tau)$ and pinpoints why $\mathbbm{\Gamma}_{x,y}(z,\tau)$ and $\mathbbm{\Gamma}^{\rm sv}_{x,y}(z,\tau)$ do not obey the same holomorphic differential equations with respect to $z$.

\begin{prop}\label{thm_deriv}
The $z$- and $\bar z$-derivatives of the equivariant generating series $\mathbb I^{\rm eqv}_{\ep,b}(u,v,\tau)$, $\mathbbm{\Gamma}^{\rm sv}_{x,y}(z,\tau)$ and $\mathbb H^{\rm eqv}_{\ep,b}(u,v,\tau)$, $\mathbbm{\Lambda}^{\rm sv}_{x,y}(z,\tau)$ in holomorphic and modular frame are given as follows in terms of the Brown-Levin connection $\mathbb J^{\rm BL}$ in (\ref{notsec.06}) and its modular analogue $\mathbb J^{\rm mod}$ in (\ref{notsec.07}):
\begin{itemize}
    \item[(i)] the series $\mathbb I^{\rm eqv}_{\ep,b}(u,v,\tau)$ in (\ref{altex.01}) obeys
\begin{align}
{\rm d}_z \mathbb I^{\rm eqv}_{\ep,b}(u,v,\tau) &=
(\mathbb{M}_z^{\rm sv})^{-1}  \, \overline{ \mathbb J^{\rm BL}_{x+2\pi i \tau y,-y}(z,\tau) } \, \mathbb{M}_z^{\rm sv} \,  \mathbb I^{\rm eqv}_{\ep,b}(u,v,\tau) \notag \\
&\quad
-   \mathbb I^{\rm eqv}_{\ep,b}(u,v,\tau)\, \mathbb J^{\rm BL}_{x-2\pi i \tau y,y}(z,\tau)
\label{nwdzi}
\end{align}
or in component form
\begin{align}
\partial_z \mathbb I^{\rm eqv}_{\ep,b}(u,v,\tau) &=  
(\mathbb{M}_z^{\rm sv})^{-1} \, \frac{x {-}2\pi i \bar \tau y}{\tau {-} \bar \tau} \, \mathbb{M}_z^{\rm sv} \, \mathbb I^{\rm eqv}_{\ep,b}(u,v,\tau)
- \mathbb I^{\rm eqv}_{\ep,b}(u,v,\tau)\, \frac{x {-} 2\pi i  \tau y}{\tau {-} \bar \tau} \label{dzbarzbis} \\
&\quad
-  \mathbb I^{\rm eqv}_{\ep,b}(u,v,\tau)\,  e^{-2\pi i \tau \epsilon_0}\,  {\rm ad}_x \Omega(z,{\rm ad}_{\frac{x}{2\pi i}},\tau )\,y\,  e^{2\pi i \tau \epsilon_0}
\notag \\
\partial_{\bar z} \mathbb I^{\rm eqv}_{\ep,b}(u,v,\tau) &=  {-}(\mathbb{M}_z^{\rm sv})^{-1} \,\frac{x {-}2\pi i \bar \tau y}{\tau {-} \bar \tau} \, \mathbb{M}_z^{\rm sv} \, \mathbb I^{\rm eqv}_{\ep,b}(u,v,\tau)
+ \mathbb I^{\rm eqv}_{\ep,b}(u,v,\tau) \,\frac{x {-} 2\pi i  \tau y}{\tau {-} \bar \tau} \notag \\
&\quad - (\mathbb{M}_z^{\rm sv})^{-1}\, e^{-2\pi i \bar \tau  \epsilon_0} \, \ad_x \overline{ \Omega( z,{\rm ad}_{\frac{x}{2\pi i}}, \tau )}\,y\, e^{2\pi i \bar \tau  \epsilon_0}  \,
\mathbb{M}_z^{\rm sv} \, \mathbb I^{\rm eqv}_{\ep,b}(u,v,\tau)  
\notag
\end{align}
\item[(ii)] the series $\mathbb H^{\rm eqv}_{\ep,b}(u,v,\tau)$ in (\ref{thm.3.4a}) obeys
\begin{align}
{\rm d}_z \mathbb H^{\rm eqv}_{\ep,b}(u,v,\tau) &=
(\mathbb{M}_z^{\rm sv})^{-1}  \, \overline{ \mathbb J^{\rm mod}_{x,-y}(z,\tau) } \, \mathbb{M}_z^{\rm sv} \,  \mathbb H^{\rm eqv}_{\ep,b}(u,v,\tau) \notag \\
&\quad
-   \mathbb H^{\rm eqv}_{\ep,b}(u,v,\tau)\, \mathbb J^{\rm mod}_{4\pi\Im\tau y,\frac{-x}{4\pi \Im \tau}}(z,\tau)
\label{nwdzh}
\end{align}
or in component form
 \begin{align}
\partial_z\mathbb H^{\rm eqv}_{\ep,b}(u,v,\tau) &= \frac{1}{\tau {-} \bar \tau}\,  \Big(
(\mathbb{M}_z^{\rm sv})^{-1} \, x \,\mathbb{M}_z^{\rm sv} 
\mathbb H^{\rm eqv}_{\ep,b}(u,v,\tau) - \mathbb H^{\rm eqv}_{\ep,b}(u,v,\tau)\, x \Big) \label{altdz.4bis} \\
&\quad
{+} \sum_{k=2}^\infty \frac{ (\bar \tau {-} \tau)^{k{-}2} }{(k{-}2)!} \, f^{(k{-}1)}(u\tau{+}v,\tau)\, \mathbb{H}^{\rm eqv}_{\ep,b}(u,v,\tau) \, b_k^{(k{-}2)} 
 \notag \\
\partial_{\bar z}\mathbb H^{\rm eqv}_{\ep,b}(u,v,\tau) &=   -2\pi i\, \Big( \mathbb H^{\rm eqv}_{\ep,b}(u,v,\tau)\, y
- (\mathbb{M}_z^{\rm sv})^{-1} \, y\, \mathbb{M}_z^{\rm sv} \,\mathbb H^{\rm eqv}_{\ep,b}(u,v,\tau) \Big)\notag \\
&\quad
+ \sum_{k=2}^{\infty} \frac{1}{({-}2\pi i)^{k{-}2}} \,\overline{ f^{(k{-}1)}(u\tau{+}v,\tau) } \,
(\mathbb{M}_z^{\rm sv})^{-1} \, b_k \,\mathbb{M}_z^{\rm sv}\,  \mathbb H^{\rm eqv}_{\ep,b}(u,v,\tau) 
\notag
 \end{align} 
\item[(iii)] the series $\mathbbm{\Gamma}^{\rm sv}_{x,y}(z,\tau)$ in (\ref{cor.3.4a}) obeys
\begin{align}
{\rm d}_z \mathbbm{\Gamma}^{\rm sv}_{x,y}(z,\tau)&= \big( \mathbb I^{\rm eqv}_{\ep^{\rm TS}}(\tau) \big)^{-1}\,
(\mathbb{M}_z^{\rm sv})^{-1}  \, \overline{ \mathbb J^{\rm BL}_{x+2\pi i \tau y,-y}(z,\tau) } \, \mathbb{M}_z^{\rm sv} \, \mathbb I^{\rm eqv}_{\ep^{\rm TS}}(\tau) \,  \mathbbm{\Gamma}^{\rm sv}_{x,y}(z,\tau)\notag \\
&\quad
-  \mathbbm{\Gamma}^{\rm sv}_{x,y}(z,\tau)\, \mathbb J^{\rm BL}_{x-2\pi i \tau y,y}(z,\tau)
\label{nwdzga}
\end{align}
\item[(iv)] the series $\mathbbm{\Lambda}^{\rm sv}_{x,y}(z,\tau)$ in (\ref{cor.3.4c}) obeys
\begin{align}
{\rm d}_z \mathbbm{\Lambda}^{\rm sv}_{x,y}(z,\tau)&= \big(  \mathbb H^{\rm eqv}_{\ep^{\rm TS}}(\tau) \big)^{-1}\,
(\mathbb{M}_z^{\rm sv})^{-1}  \, \overline{ \mathbb J^{\rm mod}_{x,-y}(z,\tau) } \, \mathbb{M}_z^{\rm sv} \, \mathbb H^{\rm eqv}_{\ep^{\rm TS}}(\tau) \,  \mathbbm{\Lambda}^{\rm sv}_{x,y}(z,\tau) \notag \\
&\quad
-  \mathbbm{\Lambda}^{\rm sv}_{x,y}(z,\tau)\, \mathbb J^{\rm mod}_{4\pi\Im\tau y,\frac{-x}{4\pi \Im \tau}}(z,\tau)
\label{nwdzla}
\end{align}
\end{itemize}
\end{prop}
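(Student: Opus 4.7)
\textbf{Proof proposal for Proposition \ref{thm_deriv}.}
The plan is to reduce the four statements to a single computation on $\mathbbm{\Gamma}_{x,y}(z,\tau)$ and $\overline{\mathbbm{\Gamma}_{x,y}(z,\tau)^T}$ and then propagate the result through the factorizations (\ref{altex.01}), (\ref{cor.3.4a}), (\ref{thm.3.4a}), (\ref{cor.3.4c}). The crucial input is that $\mathbb M^{\rm sv}_z$, $\mathbb I^{\rm eqv}_{\ep^{\rm TS}}(\tau)$ and $\Umod(\tau)$ are all $z$-independent, so Leibniz on (\ref{altex.01}) produces exactly two contributions, one for each copy of the eMPL series. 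From the third form of (\ref{notsec.19}) the prefactor $\exp(\ee{0}{2}{\tau} b_2)$ is $z$-independent, so the standard differential identity for path-ordered exponentials with varying lower endpoint yields $d_z \mathbbm{\Gamma}_{x,y}(z,\tau) = -\mathbbm{\Gamma}_{x,y}(z,\tau)\, \mathbb J^{\rm BL}_{x-2\pi i\tau y,y}(z,\tau)$. The analogous statement applied to (\ref{ahologam}), whose antiholomorphic Pexp has varying upper endpoint, gives $d_z \overline{\mathbbm{\Gamma}_{x,y}(z,\tau)^T} = \overline{\mathbb J^{\rm BL}_{x+2\pi i\tau y,-y}(z,\tau)}\, \overline{\mathbbm{\Gamma}_{x,y}(z,\tau)^T}$.

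Inserting both derivatives into the Leibniz expansion of (\ref{altex.01}) and regrouping the remaining factors into $\mathbb I^{\rm eqv}_{\ep,b}(u,v,\tau)$ on the right (for the $\overline{\mathbbm{\Gamma}^T}$ contribution) or on the left (for the $\mathbbm{\Gamma}$ contribution) establishes (\ref{nwdzi}). For the component form (\ref{dzbarzbis}), one uses the key algebraic identity $\mathbb J^{\rm BL}_{x-2\pi i\tau y,y}(z,\tau) = e^{-2\pi i\tau\ep_0} \mathbb J^{\rm BL}_{x,y}(z,\tau) e^{2\pi i\tau\ep_0}$, which follows from $e^{-2\pi i\tau\ep_0} x e^{2\pi i\tau\ep_0} = x - 2\pi i\tau y$ together with $[\ep_0,y]=0$. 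Decomposing $\mathbb J^{\rm BL}_{x,y}$ into its $\dd z$ and $\dd\bar z$ pieces via (\ref{notsec.06}), and observing that the scalar pole term combines as $\frac{x-2\pi i\tau y}{2i\Im\tau} + 2\pi i y = \frac{x-2\pi i\bar\tau y}{\tau-\bar\tau}$, leaves the $\ad_x \Omega(z,\ad_{x/(2\pi i)},\tau)y$ series wrapped in $e^{\mp 2\pi i\tau\ep_0}$, which is precisely the shape of the $\Omega$-terms in (\ref{dzbarzbis}). The antiholomorphic component follows analogously using the conjugate identity $\overline{\mathbb J^{\rm BL}_{x+2\pi i\tau y,-y}(z,\tau)} = e^{-2\pi i\bar\tau\ep_0}\,\overline{\mathbb J^{\rm BL}_{x,-y}(z,\tau)}\,e^{2\pi i\bar\tau\ep_0}$, with the pole combination now producing $\frac{x-2\pi i\bar\tau y}{\tau-\bar\tau}$.

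Part (ii) is then obtained by conjugating (i) with $\Umod(\tau)$, which is $z$-independent and commutes with $\mathbb M^{\rm sv}_z$ by the $\mathfrak{sl}_2$-singlet property (\ref{lieg1.22}) of $z_w$. Thus only the two Brown-Levin connections transform, and from (\ref{Umodxy}) one computes $\Umod(x-2\pi i\tau y)\Umod^{-1} = 4\pi\Im\tau\cdot y$ and $\Umod y\Umod^{-1} = y - \tfrac{x}{4\pi\Im\tau}$, so that $\Umod \mathbb J^{\rm BL}_{x-2\pi i\tau y,y}(z,\tau)\Umod^{-1} = \mathbb J^{\rm BL}_{4\pi\Im\tau\, y,\, y-x/(4\pi\Im\tau)}(z,\tau) = \mathbb J^{\rm mod}_{4\pi\Im\tau\, y,\, -x/(4\pi\Im\tau)}(z,\tau)$ by the redefinition (\ref{notsec.07}). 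An identical calculation on the antiholomorphic side gives $\overline{\mathbb J^{\rm mod}_{x,-y}(z,\tau)}$, yielding (\ref{nwdzh}); the component form (\ref{altdz.4bis}) then follows by expanding $\mathbb J^{\rm mod}$ in $\dd z$, $\dd\bar z$ and recognizing the resulting generator strings as $b_k^{(k-2)}$ or $b_k$. Finally, parts (iii) and (iv) follow by differentiating $\mathbbm{\Gamma}^{\rm sv}_{x,y}(z,\tau) = \mathbb I^{\rm eqv}_{\ep^{\rm TS}}(\tau)^{-1}\mathbb I^{\rm eqv}_{\ep,b}(u,v,\tau)$ (with $\mathbb I^{\rm eqv}_{\ep^{\rm TS}}$ inert under $d_z$) and re-expressing $\mathbb I^{\rm eqv}_{\ep,b} = \mathbb I^{\rm eqv}_{\ep^{\rm TS}}\mathbbm{\Gamma}^{\rm sv}_{x,y}$ on the resulting right factors, and similarly for $\mathbbm{\Lambda}^{\rm sv}_{x,y}$ via the $\Umod$ conjugation of Step (ii). The main obstacle is the conjugation-by-$\Umod$ step: one must carefully verify that the explicit action on $x,y$ in (\ref{Umodxy}) combined with (\ref{notsec.07}) produces the precise arguments of $\mathbb J^{\rm mod}$ claimed in (\ref{nwdzh}) and (\ref{nwdzla}), while tracking the fact that this conjugation acts only on the Lie-algebra generators and leaves the differential forms and Kronecker-Eisenstein coefficients $f^{(n)}(z,\tau)$ untouched.
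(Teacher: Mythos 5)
Your proposal is correct and follows essentially the same route as the paper's proof: differentiate the path-ordered exponentials in (\ref{notsec.19}) and (\ref{ahologam}) to get the two one-sided derivatives of $\mathbbm{\Gamma}_{x,y}$ and $\overline{\mathbbm{\Gamma}_{x,y}^T}$, apply Leibniz to the factorized form (\ref{altex.01}) with the $z$-independent factors $\mathbb M^{\rm sv}_z$, $\mathbb I^{\rm eqv}_{\ep^{\rm TS}}(\tau)$ inert, conjugate by $\Umod(\tau)$ via (\ref{Umodxy}) and (\ref{notsec.07}) for the modular frame, and prepend the $z$-independent inverse series for items (iii) and (iv). The only (cosmetic) discrepancy is in the component form: after you absorb the $n=0$ term $2\pi i y$ of the $\Omega$-series into the pole to produce $\frac{x-2\pi i\bar\tau y}{\tau-\bar\tau}$, the leftover series is $\ad_x\Omega_0 y$ rather than the full $\ad_x\Omega y$, whereas (\ref{dzbarzbis}) keeps the full $\Omega$ together with the pole term $\frac{x-2\pi i\tau y}{\tau-\bar\tau}$ — the two groupings are equal but not literally of the same shape.
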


\begin{proof}
\phantom{x}

$(i)$: The definitions (\ref{notsec.19}) and (\ref{ahologam}) of $\mathbbm{\Gamma}_{x,y}(z,\tau)$ and $\overline{\mathbbm{\Gamma}_{x,y}(z,\tau)^T}$ as path-ordered exponentials readily imply that
\begin{align}
\dd_z\mathbbm{\Gamma}_{x,y}(z,\tau) &= - \mathbbm{\Gamma}_{x,y}(z,\tau) \, \mathbb J^{\rm BL}_{x-2\pi i \tau y,y}(z,\tau) \label{notsame} \\
\dd_z\overline{\mathbbm{\Gamma}_{x,y}(z,\tau)^T} &= \overline{ \mathbb J^{\rm BL}_{x+2\pi i \tau y,-y}(z,\tau) }\, \overline{\mathbbm{\Gamma}_{x,y}(z,\tau)^T}
\notag
\end{align}
When applied to the second representation (\ref{altex.01}) of $ \mathbb I^{\rm eqv}_{\ep,b}(u,v,\tau)$, we arrive at the $\dd_z$ differential in (\ref{nwdzi}) which, after inserting (\ref{notsec.06}) for the Brown-Levin connection, unpacks to the component form (\ref{dzbarzbis}).

$(ii)$: Given that $ \mathbb H^{\rm eqv}_{\ep,b}(u,v,\tau)$ in (\ref{thm.3.4a}) is the conjugate of $ \mathbb I^{\rm eqv}_{\ep,b}(u,v,\tau)$ by the $z$-independent transformation $\Umod(\tau)$ towards modular frame, its $\dd_z$-derivatives follow from the analogous conjugations of the one-forms in the expression (\ref{nwdzi}) for $\dd_z \mathbb I^{\rm eqv}_{\ep,b}(u,v,\tau)$. Since the letters $x,y$ in the Brown-Levin connection are related to those in its modular version $\mathbb J^{\rm mod}$ via (\ref{notsec.07}) and transform under $\Umod(\tau)$ according to (\ref{Umodxy}), we arrive at
\begin{align}
\Umod(\tau)\, \mathbb J^{\rm BL}_{x- 2\pi i \tau y,y}(z,\tau) 
\, \Umod(\tau)^{-1}  &= 
\Umod(\tau)\, \mathbb J^{\rm mod}_{x- 2\pi i \tau y, \frac{x - 2\pi i \bar \tau y}{2\pi i (\tau - \bar \tau)}}(z,\tau)  \, \Umod(\tau)^{-1}  \notag \\
&= \mathbb J^{\rm mod}_{ 4\pi (\Im \tau) y , \frac{-x }{4 \pi \Im \tau}}(z,\tau) 
\end{align}
and
\begin{align}
\Umod(\tau)\, \overline{ \mathbb J^{\rm BL}_{x+ 2\pi i \tau y,-y}(z,\tau)  } \, \Umod(\tau)^{-1} &=  \Umod(\tau)\,
\overline{  
\mathbb J^{\rm mod}_{x+ 2\pi i \tau y,\frac{x+2\pi i \bar \tau y}{2\pi i (\tau - \bar \tau)}}(z,\tau)  } \, \Umod(\tau)^{-1}
 \notag \\
&=  \overline{    \mathbb J^{\rm mod}_{x,-y}(z,\tau)   } 
\end{align}
This determines the letters of $\mathbb J^{\rm mod}$ in the target expression (\ref{nwdzh}) for $ \dd_z\mathbb H^{\rm eqv}_{\ep,b}(u,v,\tau)$ from those of $\mathbb J^{\rm BL}$ in item $(i)$. We have also used that the $\mathfrak{sl}_2$ invariant $(\mathbb M^{\rm sv}_z)^{\pm 1}$ commutes with $\Umod(\tau)$. The component form (\ref{altdz.4bis}) follows from (\ref{nwdzh}) through the definition (\ref{notsec.07}) of~$\mathbb J^{\rm mod}$.

$(iii)$ and $(iv)$: follow by relating $\mathbbm{\Gamma}^{\rm sv}_{x,y}(z,\tau)$ and $\mathbbm{\Lambda}^{\rm sv}_{x,y}(z,\tau)$ to $\mathbb I^{\rm eqv}_{\ep,b}(u,v,\tau)$ and $\mathbb H^{\rm eqv}_{\ep,b}(u,v,\tau)$ via left-multiplication with $z$-independent series $(\mathbb I^{\rm eqv}_{\ep^{\rm TS}}(\tau))^{-1}$ and $(\mathbb H^{\rm eqv}_{\ep^{\rm TS}}(\tau))^{-1}$ together with the respective expressions (\ref{nwdzi}) and (\ref{nwdzh}) for ${\rm d}_z \mathbb I^{\rm eqv}_{\ep,b}(u,v,\tau)$ and ${\rm d}_z \mathbb H^{\rm eqv}_{\ep,b}(u,v,\tau)$.
\end{proof}

With the $z$-derivatives of Proposition \ref{thm_deriv} in place, we can compare $\partial_z\mathbbm{\Gamma}_{x,y}(z,\tau)$ with $\partial_z\mathbbm{\Gamma}^{\rm sv}_{x,y}(z,\tau)$ and settle if the parallels between holomorphic differential equations at genus~zero
\beq
\partial_z \mathbb G_{e_0,e_1}(z) =   \mathbb G_{e_0,e_1}(z) \, \bigg( \frac{e_0}{z} + \frac{e_1}{z{-}1} \bigg)
\ \ \ 
\leftrightarrow \ \ \ 
\partial_z \mathbb G^{\rm sv}_{e_0,e_1}(z) =   \mathbb G^{\rm sv}_{e_0,e_1}(z) \, \bigg( \frac{e_0}{z} + \frac{e_1}{z{-}1} \bigg)
\label{g0par.01}
\eeq
following from (\ref{svmpl.12})
have an echo in our construction of single-valued eMPLs at genus one.

\begin{corollary}
\label{cor.cav}
The holomorphic $z$-derivatives of multi-valued and single-valued eMPLs are generated by
\begin{align}
\partial_z \mathbbm{\Gamma}_{x,y}(z,\tau) &=  - \mathbbm{\Gamma}_{x,y}(z,\tau)\,\bigg\{ \frac{x {-} 2\pi i  \tau y}{\tau {-} \bar \tau}
+  e^{-2\pi i \tau \epsilon_0}\,  {\rm ad}_x \Omega(z,{\rm ad}_{\frac{x}{2\pi i}},\tau )\,y\,  e^{2\pi i \tau \epsilon_0} \bigg\}
 \label{diffdzs}
\end{align}
and
\begin{align}
\partial_z \mathbbm{\Gamma}^{\rm sv}_{x,y}(z,\tau) &=  
\big(\mathbb I^{\rm eqv}_{\ep^{\rm TS}}(\tau) \big)^{-1}\,
(\mathbb{M}_z^{\rm sv})^{-1} \, \frac{x {-}2\pi i \bar \tau y}{\tau {-} \bar \tau} \, \mathbb{M}_z^{\rm sv} \, 
\mathbb I^{\rm eqv}_{\ep^{\rm TS}}(\tau)
\, \mathbbm{\Gamma}^{\rm sv}_{x,y}(z,\tau)
 \label{svdiffzs} \\
&\quad
- \mathbbm{\Gamma}^{\rm sv}_{x,y}(z,\tau)\, \bigg\{ \frac{x {-} 2\pi i  \tau y}{\tau {-} \bar \tau}
+  e^{-2\pi i \tau \epsilon_0}\,  {\rm ad}_x \Omega(z,{\rm ad}_{\frac{x}{2\pi i}},\tau )\,y\,  e^{2\pi i \tau \epsilon_0} \bigg\}
\notag
\end{align}
The terms in the curly brackets in (\ref{diffdzs}) and the second line of (\ref{svdiffzs}) are identical and come from the coefficient of ${\rm d} z$ in $\mathbb J^{\rm BL}_{x- 2\pi i \tau y,y}$. However, the first line of (\ref{svdiffzs}) prevents the parallels in the holomorphic derivatives of multi-valued and single-valued genus-zero MPLs in (\ref{g0par.01}) from extending to our genus-one construction. 
\end{corollary}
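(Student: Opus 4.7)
My plan is to deduce Corollary \ref{cor.cav} directly from Proposition \ref{thm_deriv} by extracting the $\dd z$ components of the total differentials $\dd_z$. The total differential $\dd_z\mathbbm{\Gamma}_{x,y}(z,\tau) = -\mathbbm{\Gamma}_{x,y}(z,\tau)\,\mathbb J^{\rm BL}_{x-2\pi i\tau y,y}(z,\tau)$ was already established in (\ref{notsame}) from the path-ordered exponential definition (\ref{notsec.19}). For the single-valued series, (\ref{nwdzga}) of Proposition \ref{thm_deriv} gives the total differential $\dd_z \mathbbm{\Gamma}^{\rm sv}_{x,y}(z,\tau)$, so all that remains in both cases is to isolate the $\dd z$-components of the Brown-Levin connections appearing on the right-hand sides.

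First I would extract the $\dd z$-part of $\mathbb J^{\rm BL}_{x-2\pi i\tau y,y}(z,\tau)$ from the explicit form (\ref{notsec.06}). Using $2i\Im\tau = \tau-\bar\tau$, the $\dd z$-coefficient is $\frac{x-2\pi i\tau y}{\tau-\bar\tau} + {\rm ad}_{x-2\pi i\tau y}\,\Omega(z,{\rm ad}_{\frac{x-2\pi i\tau y}{2\pi i}},\tau)\,y$. Since $[\ep_0,x]=y$ and $[\ep_0,y]=0$ from (\ref{lieg1.04}), conjugation by $e^{\pm 2\pi i\tau\ep_0}$ implements the shift $x\mapsto x-2\pi i\tau y$ and fixes $y$; this rewrites the series piece in the equivalent form $e^{-2\pi i\tau\ep_0}\,{\rm ad}_x\,\Omega(z,{\rm ad}_{\frac{x}{2\pi i}},\tau)\,y\,e^{2\pi i\tau\ep_0}$ that appears in the curly bracket of (\ref{diffdzs}), finishing the first identity.

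The second identity (\ref{svdiffzs}) is the main point of the corollary. The second line of (\ref{nwdzga}) contributes precisely the curly bracket of the second line of (\ref{svdiffzs}) by the same manipulation as above. For the first line, the key observation is that complex conjugation interchanges $\dd z$ and $\dd\bar z$, so the $\dd z$-coefficient of $\overline{\mathbb J^{\rm BL}_{x+2\pi i\tau y,-y}(z,\tau)}$ equals the complex conjugate of the $\dd\bar z$-coefficient of $\mathbb J^{\rm BL}_{x+2\pi i\tau y,-y}(z,\tau)$. From (\ref{notsec.06}), the $\dd\bar z$-part of $\mathbb J^{\rm BL}$ is the purely linear term $-\tfrac{x}{2i\Im\tau}\,\dd\bar z$, with no contribution from the Kronecker-Eisenstein series $\Omega$. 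Replacing $x\mapsto x+2\pi i\tau y$ and complex-conjugating numerical coefficients only (since $x,y$ are abstract generators) produces $\tfrac{x-2\pi i\bar\tau y}{\tau-\bar\tau}$, exactly the middle factor of the first line of (\ref{svdiffzs}).

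This will complete the proof. The step that is worth emphasizing (rather than technically hard) is the asymmetry highlighted in the final sentence of the corollary: because $\overline{\mathbb J^{\rm BL}}$ contributes only the linear letter $\tfrac{x-2\pi i\bar\tau y}{\tau-\bar\tau}$ in its $\dd z$-component, while $\mathbb J^{\rm BL}$ itself contributes both a linear letter and the full $\Omega$-series, the two pieces cannot combine to reproduce the naive holomorphic differential equation $\partial_z\mathbbm{\Gamma}^{\rm sv}_{x,y}(z,\tau) = -\mathbbm{\Gamma}^{\rm sv}_{x,y}(z,\tau)\,\mathbb J^{\rm BL}_{x-2\pi i\tau y,y}(z,\tau)$ that would parallel the genus-zero situation (\ref{g0par.01}). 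This is a structural feature of the Brown-Levin connection rather than an obstacle to the proof.
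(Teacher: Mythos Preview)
Your proposal is correct and follows essentially the same route as the paper: both derive (\ref{diffdzs}) from (\ref{notsame}) and (\ref{svdiffzs}) by extracting the $\dd z$ component of (\ref{nwdzga}), noting that the complex-conjugated Brown-Levin connection $\overline{\mathbb J^{\rm BL}_{x+2\pi i\tau y,-y}}$ contributes only the linear letter $\tfrac{x-2\pi i\bar\tau y}{\tau-\bar\tau}$ to $\partial_z$. Your explicit verification of the $e^{\pm 2\pi i\tau\ep_0}$ conjugation and the complex-conjugation computation is slightly more detailed than the paper's brief remark, but the argument is the same.
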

While (\ref{diffdzs}) follows from the component form of the differential equation (\ref{notsame}) of the series $\mathbbm{\Gamma}_{x,y}(z,\tau)$, we obtained (\ref{svdiffzs}) by isolating the $\dd z$ (as opposed to $\dd \bar z$) components of $\dd_z \mathbbm{\Gamma}^{\rm sv}_{x,y}(z,\tau)$ in (\ref{nwdzga}). Given that the left-multiplicative $\overline{ \mathbb J^{\rm BL}_{x+2\pi i \tau y,-y}(z,\tau) } $ in (\ref{nwdzga}) also contributes terms $\sim \dd z$, see the complex conjugate of (\ref{notsec.06}), we obtain the first line of (\ref{svdiffzs}) which prevents $\partial_z$ from commuting with a formal replacement $\mathbbm{\Gamma}_{x,y}(z,\tau) \rightarrow \mathbbm{\Gamma}^{\rm sv}_{x,y}(z,\tau)$. 

The analogous replacement $\mathbb{G} \rightarrow \mathbb{G}^{\rm sv}$ for genus-zero MPLs in any number of variables implements the single-valued map which commutes with holomorphic derivatives in any variable \cite{Schnetz:2013hqa, Brown:2013gia, Brown:2018omk}. By the statement of Corollary \ref{cor.cav}, the present form of our construction of single-valued eMPLs via $\mathbbm{\Gamma}^{\rm sv}_{x,y}(z,\tau)$ cannot be identified as the image of $\mathbbm{\Gamma}_{x,y}(z,\tau)$ under the single-valued map of the references, at least when the latter is required to commute with all holomorphic derivatives also beyond genus zero.

\subsection{Expansion around the cusp $\tau \rightarrow i\infty$}
\label{sec:3.3}

This section is dedicated to the asymptotic behavior of the equivariant
series $\mathbb H^{\rm eqv}_{\ep,b}(u,v,\tau)$, $\mathbbm{\Lambda}^{\rm sv}_{x,y}(z,\tau)$ in modular frame and single-valued eMPLs as $\tau$ approaches
the cusp $i\infty$ while keeping $u$ and $v$ fixed. 
The analogous expansion
of equivariant iterated Eisenstein integrals in the modular frame (\ref{defheqv})
produces Laurent polynomials in %
\beq
Y = 4\pi \Im \tau
\label{Yvar}
\eeq
with $\mathbb Q$-linear combinations of single-valued MZVs as coefficients
at each degree in the $\ep_k^{(j){\rm TS}}$ (which is preserved by Pollack relations).
More specifically, this asymptotic behavior of equivariant iterated Eisenstein integrals in the series 
$\mathbb H^{\rm eqv}_{\ep^{\rm TS}}(\tau)$ of (\ref{defheqv}) given in (\ref{thm.3.4k}) below
\begin{itemize}
\item is contained in Theorem 10.6 of Brown's work \cite{Brown:2017qwo2} upon
restriction to the zero mode of the Fourier expansion in $\Re \tau$;
\item can be verified by combining (\ref{lieg1.51}), (\ref{defheqv}) and the fact that
the expansion of non-holomorphic modular forms around the cusp
takes the form of $\sum_{m,n=0}^\infty c_{m,n}(\Im \tau) q^{m} \bar q^n$
where $c_{m,n}(\Im \tau)$ cannot depend on $\Re \tau$;
\item is conjectured to carry over to MGFs \cite{Zerbini:2015rss, DHoker:2015wxz},
tested in a wealth of explicitly known cases and proven for infinite families of
examples \cite{DHoker:2019xef, Zagier:2019eus}. Still, it is an open problem to prove in generality that the
asymptotics of arbitrary MGFs at $\tau \rightarrow i \infty$ described by
Laurent polynomials in $Y$ solely features
$\mathbb Q$-linear combinations of single-valued MZVs as their coefficients.\footnote{It is for instance not proven (though commonly expected) that all MGFs are
linear combinations of equivariant iterated Eisenstein integrals in the
expansion (\ref{defheqv}) with rational combinations of single-valued MZVs as
coefficients, e.g.\ it remains to exclude coefficients of the form $\mathbb Q \pi^{2k}$ with $k \in \mathbb N$.}
 \end{itemize}
The goal of this section is to extend the number-theoretic analysis
of the asymptotics at the cusp to the $u,v$-dependent equivariant
series of the previous section and to the single-valued eMPLs obtained from their expansion in $b_k^{(j)}$ in Theorem \ref{3.cor:1}.

We start by degenerating the equivariant series $\mathbb H^{\rm eqv}_{\ep,b}(u,v,\tau)$ in Definition \ref{newdefheqv} with the modular properties (\ref{thm.3.4b}) which produces non-uniquely defined non-holomorphic
modular forms through its expansion in $\ep_k^{(j)}, b_k^{(j)}$:

\begin{theorem}
\label{3.thm:5}
The $\tau \rightarrow i\infty$ asymptotics of the series $\mathbb H^{\rm eqv}_{\ep,b}(u,v,\tau) $ in (\ref{thm.3.4a})
at fixed values of $0\leq u,v< 1$ is given by
\beq
\mathbb H^{\rm eqv}_{\ep,b}(u,v,\tau) = \exp\bigg( {-} \frac{\ep_0^\vee}{Y} \bigg)
\, (\mathbb M^{\rm sv}_z)^{-1} \, \mathbb M^{\rm sv}_{\Sigma(u)} \, \mathbb G^{\rm sv}_{E_0(u),E_1(u)}(e^{2\pi i z})
\, e^{-Y \ep_0} \,  \exp\bigg( \frac{\ep_0^\vee}{Y} \bigg)+ {\cal O}(q^{1-u},\bar q^{1-u})
\label{thm.3.4c}
\eeq
where $Y=4\pi \Im \tau$, and the $u$-dependent letters of the series
(\ref{svmpl.12}) in single-valued MPLs in one variable are given by
\beq
E_0(u) = \dfrac{1}{u}\biggl({-}\epsilon_0+\sum_{k=4}^\infty (k{-}1) \dfrac{B_k}{k!}\epsilon_k +\sum_{k=2}^\infty (k{-}1)\dfrac{B_k(u)}{k!}b_k\biggr) \, , \ \ \ \ E_1(u) = \sum_{k=2}^\infty \dfrac{u^{k-2}}{(k{-}2)! } \, b_k
\label{thm.3.4d}
\eeq
Moreover, ${\cal O}(q^{1-u},\bar q^{1-u})$ indicates that at each
degree of the expansion in $\ep_k^{(j)}, b_k^{(j)}$, terms of
the form $(\Im \tau)^k q^{1-u}$ and $(\Im \tau)^k \bar q^{1-u}$
with $k \in \mathbb Z$ or with higher powers of $q,\bar q$ are discarded.
\end{theorem}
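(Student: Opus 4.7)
The plan is to start from the holomorphic-frame factorization
\begin{equation*}
\mathbb I^{\rm eqv}_{\ep,b}(u,v,\tau) = (\mathbb M^{\rm sv}_z)^{-1}\,\overline{\mathbb I_{\ep,b}(u,v,\tau)^T}\,\mathbb M^{\rm sv}_{\Sigma(u)}\,\mathbb I_{\ep,b}(u,v,\tau),
\end{equation*}
degenerate each of the three $\tau$-dependent factors separately as $\tau\to i\infty$ at fixed $u,v\in[0,1)$, reassemble, and finally conjugate by $U_{\rm mod}(\tau)$ to pass to modular frame.

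First I would extract the cusp asymptotics of the holomorphic factor $\mathbb I_{\ep,b}(u,v,\tau)$. Splitting the connection $\mathbb D_{\ep,b}(u,v,\tau)$ of (\ref{defdepb}) into its $\tfrac{dq}{q}$-singular part and its regular remainder via the tangential-base-point expansion (\ref{tanbsp}) of $f^{(k)}(u\tau+v,\tau)$ and the constant term $-B_k/k$ in the $q$-series (\ref{lieg1.03}) of $G_k$, one recognizes the leading behavior as a KZ-type connection on the nodal sphere $\mathbb{CP}^1\setminus\{0,1,\infty\}$ in the coordinate $\sigma=e^{2\pi iz}$, in direct analogy with the degeneration (\ref{lieg1.16}) of the CEE connection. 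The residue at $\sigma=0$ produces the letter $E_0(u)$ of (\ref{thm.3.4d}), where the term $-\ep_0/u$ reflects the $\ep_0$ piece of the underlying CEE connection that was gauge-removed from $\mathbb D_{\ep,b}$ (see the footnote below (\ref{tsconn}) for the $z$-independent analogue); undoing this gauge at the cusp deposits a right-multiplicative cocycle $e^{2\pi i\tau\ep_0}$. The residue at $\sigma=1$ assembles, via the Bernoulli generating function $\sum_{k\geq 2}u^{k-2}b_k/(k-2)!$ applied to the sub-leading Fourier modes of $f^{(k)}(u\tau+v,\tau)$, into the second letter $E_1(u)$, with the remaining $q^m\bar q^n$ corrections entering at order $q^{1-u}$ or $\bar q^{1-u}$ and producing the error term in (\ref{thm.3.4c}). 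Altogether
\begin{equation*}
\mathbb I_{\ep,b}(u,v,\tau)\sim\mathbb G_{E_0(u),E_1(u)}(e^{2\pi iz})\,e^{2\pi i\tau\ep_0},
\end{equation*}
and the complex-conjugate transpose yields $\overline{\mathbb I_{\ep,b}(u,v,\tau)^T}\sim e^{-2\pi i\bar\tau\ep_0}\,\overline{\mathbb G_{E_0(u),E_1(u)}(e^{2\pi iz})^T}$.

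Next I would invoke Lemma \ref{zwprop} together with the definition (\ref{not.06}) of $\Sigma_w(u)$ to match the middle factor. The summand $P_w(t_{12},t_{01})$ and the conjugation by $e^{ux}(\cdot)e^{-ux}$ in $\Sigma_w(u)$ are precisely tuned so that the Ihara-type action of $\Sigma_w(u)$ on the letters $E_0(u), E_1(u)$ reproduces the genus-zero bracket relations (\ref{svmpl.07}) of $M_w$ on the braid operators $e_0, e_1$: the Lie polynomials $P_w$ match because of the genus-zero KZ degeneration (\ref{lieg1.16}) of the CEE connection at $\tau\to i\infty$, and the $e^{ux}$-dressing produces the $u$-dependent shift that lines up with $E_0(u), E_1(u)$. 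Granting this match, the genus-zero single-valued formula (\ref{svmpl.12}) yields
\begin{equation*}
\overline{\mathbb G_{E_0(u),E_1(u)}(\sigma)^T}\,\mathbb M^{\rm sv}_{\Sigma(u)}\,\mathbb G_{E_0(u),E_1(u)}(\sigma)=\mathbb M^{\rm sv}_{\Sigma(u)}\,\mathbb G^{\rm sv}_{E_0(u),E_1(u)}(\sigma),
\end{equation*}
and substitution into the degenerated factorization produces
\begin{equation*}
\mathbb I^{\rm eqv}_{\ep,b}(u,v,\tau)\sim(\mathbb M^{\rm sv}_z)^{-1}\,e^{-2\pi i\bar\tau\ep_0}\,\mathbb M^{\rm sv}_{\Sigma(u)}\,\mathbb G^{\rm sv}_{E_0(u),E_1(u)}(e^{2\pi iz})\,e^{2\pi i\tau\ep_0}.
\end{equation*}

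Finally, conjugating by $U_{\rm mod}(\tau)=e^{-\ep_0^\vee/Y}e^{2\pi i\bar\tau\ep_0}$: since $(\mathbb M^{\rm sv}_z)^{-1}$ commutes with $e^{\pm 2\pi i\bar\tau\ep_0}$ by $\mathfrak{sl}_2$-invariance of $z_w$, the left factor $e^{2\pi i\bar\tau\ep_0}$ of $U_{\rm mod}$ cancels the adjacent $e^{-2\pi i\bar\tau\ep_0}$, while on the right the combination $e^{2\pi i\tau\ep_0}e^{-2\pi i\bar\tau\ep_0}=e^{-Y\ep_0}$ emerges from $Y=-2\pi i(\tau-\bar\tau)$; the outer $e^{\pm\ep_0^\vee/Y}$ factors remain as displayed in (\ref{thm.3.4c}). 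The main obstacle will be step two: rigorously identifying the action of $\Sigma_w(u)$ on $(E_0(u),E_1(u))$ with that of $M_w$ on $(e_0,e_1)$. This requires combining the extension-lemma argument of section \ref{sec:2.4.2} -- which lifts the defining bracket relations of $\sigma_w$ from $(t_{12},t_{01})$ to $(x,y)$ -- with a careful bookkeeping of how the $e^{ux}$-conjugation intertwines with the $\tau$-dependent degeneration of the Kronecker-Eisenstein kernels, so that the relations transport to the nodal-sphere letters $E_0(u), E_1(u)$ without acquiring extra corrections.
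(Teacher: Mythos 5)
Your proposal follows essentially the same route as the paper's proof (sections \ref{sec:3.3.1}--\ref{sec:3.3.4}): degenerate $\mathbb I_{\ep,b}(u,v,\tau)$ to a KZ connection on the nodal sphere in $\sigma=e^{2\pi i z}$ with letters $E_0(u),E_1(u)$, match the action of $\Sigma_w(u)$ on these letters with that of $M_w$ on $e_0,e_1$ to identify $\mathbb G^{\rm sv}_{E_0(u),E_1(u)}$, and conjugate by $\Umod(\tau)$. Two caveats. First, the correct degeneration (\ref{thm.3.4e}) carries a left-multiplicative phase $e^{-2\pi i v E_0(u)}$ from the direction of the tangential base point at $\sigma\to 0$; you drop it, and although it ultimately cancels between $\overline{\mathbb I_{\ep,b}^T}$ and $\mathbb I_{\ep,b}$, that cancellation itself requires $[E_0(u),\Sigma_w(u)]=0$, so it cannot be omitted silently. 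Second, the commutators $[E_0(u),\Sigma_w(u)]=0$ and $[E_1(u),\Sigma_w(u)]=[P_w(E_0(u),E_1(u)),E_1(u)]$, which you defer as ``the main obstacle,'' are where the paper spends most of its effort, and it proves them not via the extension lemma but by introducing $T_{01}(u)$ and $N(u)$, establishing the vanishing of all brackets at $u=0$ from (\ref{dfprpsig}) and flatness of the CEE connection at the cusp, and then propagating to general $u$ through the first-order equations $\partial_u=\ad_x$ with vanishing initial data, together with the conjugation identities $e^{-ux}E_1(u)e^{ux}=t_{12}$ and $e^{-ux}E_0(u)e^{ux}=t_{01}+N(0)/u$. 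Your outline is sound, but as written it is a plan for the decisive step rather than a proof of it.
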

The proof will be split into four steps to be presented in sections \ref{sec:3.3.1} to \ref{sec:3.3.4} below: 
\begin{itemize}
\item[(i)] show that the analogous asymptotics of the series $\mathbb I_{\ep,b}(u,v,\tau)$ in (\ref{notsec.17}) is given by
\beq
\mathbb I_{\ep,b}(u,v,\tau) = e^{-2\pi i v E_0(u)} \, \mathbb G_{E_0(u),E_1(u)}(e^{2\pi i z})  \, e^{2\pi i \ep_0 \tau}
+ {\cal O}(q^{1-u})
\label{thm.3.4e}
\eeq
\item[(ii)] establish numerous bracket relations among the generators
including
\beq
[E_0(u),\Sigma_w(u)]=0
\label{goaltwo}
\eeq
\item[(iii)] show that the series (\ref{svmpl.12}) in single-valued MPLs arises by proving the analogue
\beq
\big[ E_1(u) , \Sigma_w(u) \big] = \big[ P_w\big(E_0(u), E_1(u) \big) , E_1(u) \big]
\label{thm.3.4f}
\eeq
of the genus-zero bracket relations (\ref{svmpl.07}) among zeta generators and braid operators;
\item[(iv)] assemble the asymptotics of the series $\mathbb H^{\rm eqv}_{\ep,b}(u,v,\tau) $ from (\ref{thm.3.4e}).
\end{itemize}

With Theorem \ref{3.thm:5} on the degeneration of the $z$-dependent equivariant series $\mathbb H^{\rm eqv}_{\ep,b}(u,v,\tau)$ in place, we can now proceed to investigating the generating series
\beq
\mathbbm{\Lambda}^{\rm sv}_{x,y}(z,\tau) = \big( \mathbb H^{\rm eqv}_{\ep^{\rm TS}}(\tau) \big)^{-1}\mathbb H^{\rm eqv}_{\ep,b}(u,v,\tau)
\label{thm.3.4g}
 \eeq
of single-valued eMPLs defined by (\ref{cor.3.4c}).\footnote{Its alternative representation as a ratio of two equivariant series in modular frame follows from (\ref{cor.3.4c}) together with (\ref{defheqv}), (\ref{cor.3.4a}) and (\ref{thm.3.4a}).}

\begin{corollary}
\label{3.cor:9} \phantom{x}

\begin{itemize}
\item[(i)] The series $\mathbbm{\Lambda}^{\rm sv}_{x,y}(z,\tau) $ in single-valued eMPLs in modular frame exhibits the following $\tau \rightarrow i \infty$ asymptotics 
at fixed $u,v$ analogous to (\ref{thm.3.4c})
\begin{align}
\mathbbm{\Lambda}^{\rm sv}_{x,y}(z,\tau) &= \exp\bigg( {-} \frac{\ep_0^\vee}{Y} \bigg) \, e^{Y \ep_0} \, 
e^{N^{\rm TS}Y} \, (\mathbb M^{\rm sv}_\sigma)^{-1} \, \mathbb M^{\rm sv}_{\Sigma(u)} 
\label{thm.3.4h} \\
&\quad \times
 \mathbb G^{\rm sv}_{E_0(u),E_1(u)}(e^{2\pi i z})
\, e^{-Y \ep_0} \,  \exp\bigg( \frac{\ep_0^\vee}{Y} \bigg)+ {\cal O}(q^{1-u},\bar q^{1-u})
\notag
\end{align}
with the series $N^{\rm TS}$ in Tsunogai derivations
given by (\ref{notsec.22}) and letters $E_0, E_1$ in (\ref{thm.3.4d}).
\item[(ii)] The $\tau \rightarrow i\infty$ asymptotics of the 
single-valued eMPLs $\lab{j_1 &\ldots &j_r}{k_1 &\ldots &k_r}{z, \tau }$ 
generated by the $b_k^{(j)}$-expansion (\ref{cor.3.4d}) of $\mathbbm{\Lambda}^{\rm sv}_{x,y}(z,\tau)$ consists of Laurent polynomials in $Y$ whose coefficients are
$\mathbb Q$-linear combinations of products of single-valued MZVs and single-valued
MPLs $G^{\rm sv}(a_1,\ldots,a_r;e^{2\pi i z})$ in one variable, i.e.\ with $a_i \in \{0,1\}$. 
\end{itemize} 
\end{corollary}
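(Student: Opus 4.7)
The corollary splits into two parts that I would treat separately. For item $(i)$ my starting point is the factorization (\ref{thm.3.4g}) expressing $\mathbbm{\Lambda}^{\rm sv}_{x,y}(z,\tau)$ as a left quotient, into which I substitute the cusp asymptotics of each factor. The numerator $\mathbb H^{\rm eqv}_{\ep,b}(u,v,\tau)$ is handed over by Theorem \ref{3.thm:5}. For the denominator I would rely on the companion result (\ref{thm.3.4k}) for $\mathbb H^{\rm eqv}_{\ep^{\rm TS}}(\tau)$, which takes the shape
\[
\mathbb H^{\rm eqv}_{\ep^{\rm TS}}(\tau) = \exp\!\bigl({-}\ep_0^\vee/Y\bigr)\,(\mathbb M^{\rm sv}_z)^{-1}\,\mathbb M^{\rm sv}_\sigma\,e^{-N^{\rm TS}Y}\,e^{-Y\ep_0}\,\exp\!\bigl(\ep_0^\vee/Y\bigr) + {\cal O}(q,\bar q),
\]
and which I would either import from Brown's framework or recover by formally specializing Theorem \ref{3.thm:5} (with appropriate handling of the $f^{(2)}$ subtraction and the quasi-modular ${\rm G}_2$).

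Inverting this asymptotic by reversing the order and flipping the signs of the exponents, and then concatenating with (\ref{thm.3.4c}), the adjacent factors $\exp(\ep_0^\vee/Y)\exp({-}\ep_0^\vee/Y)$ telescope at the joining seam and the neighboring $\mathbb M^{\rm sv}_z(\mathbb M^{\rm sv}_z)^{-1}$ collapses to the identity. What is left is exactly the sandwich on the right-hand side of (\ref{thm.3.4h}). The combined error is ${\cal O}(q^{1-u},\bar q^{1-u})$ because $q^{1-u}\geq q$ for $0\leq u<1$, so the denominator's ${\cal O}(q,\bar q)$ error is absorbed.

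For item $(ii)$ I would extract the coefficient of each word in $b_k^{(j)}$ from the right-hand side of (\ref{thm.3.4h}) and check the claimed structure. Within any irreducible $(k{-}1)$-dimensional $\mathfrak{sl}_2$-multiplet spanned by $\{b_k^{(j)}\}$ the actions of $\ep_0,\ep_0^\vee$ are nilpotent by (\ref{lieg1.28}), so conjugation by $\exp(\pm\ep_0^\vee/Y)$ and $e^{\pm Y\ep_0}$ truncates to polynomial dependence in $1/Y$ and $Y$ respectively, producing only $\mathbb Q$-multiples of powers of $Y^{\pm 1}$. The factor $e^{N^{\rm TS}Y}$ contributes via nested commutators $[N^{\rm TS},\cdot]$ that land in ${\rm Lie}[b_{k'}^{(j')}]$ by Lemma \ref{braklem}; at any fixed target $b_k^{(j)}$-degree only finitely many brackets contribute, again yielding a $Y$-polynomial. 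The middle factors $(\mathbb M^{\rm sv}_\sigma)^{-1}\mathbb M^{\rm sv}_{\Sigma(u)}$ contribute only $\mathbb Q$-linear combinations of svMZVs: by Lemma \ref{zwprop} and the identification of arithmetic parts, the $z_w$-contributions of $\mathbb M^{\rm sv}_{\Sigma(u)}$ cancel against those hidden inside $\mathbb M^{\rm sv}_\sigma$ upon commutation, leaving sv-MZV-coefficients multiplying Lie series in $b_k^{(j)}$. Finally, expanding $\mathbb G^{\rm sv}_{E_0(u),E_1(u)}(e^{2\pi i z})$ as in (\ref{svmpl.11}) produces exactly the svMPLs $G^{\rm sv}(a_1,\dots,a_r;e^{2\pi i z})$ with $a_i\in\{0,1\}$ as $z$-dependent coefficients multiplying words in $E_0(u),E_1(u)$ whose further expansion in $b_k^{(j)}$ introduces $\mathbb Q$-polynomial dependence on $u$ through $B_k(u)$.

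The main obstacle is the combinatorial verification of (\ref{thm.3.4k}) with the precise middle factor $\mathbb M^{\rm sv}_\sigma e^{-N^{\rm TS}Y}$: the presence of $N^{\rm TS}$ rather than $\ep_0$ alone reflects the $T$-cocycle (\ref{intr.76}) and must be tracked through the equivariance construction of Theorem \ref{2.thm:1}. Once (\ref{thm.3.4k}) is in place, the combination with Theorem \ref{3.thm:5} is purely algebraic. A secondary technical point is that the coefficient extraction in (ii) must mod out by Pollack relations at degrees $\geq 14$; this is handled as in the proof of Theorem \ref{3.cor:1} by first eliminating $\ep_k^{(j)}$ via $\ep_k^{(j){\rm TS}}=\ep_k^{(j)}+b_k^{(j)}$ and the brackets of Lemma \ref{braklem}, after which $\mathbbm{\Lambda}^{\rm sv}_{x,y}(z,\tau)$ has unique coefficients in the free generators $b_k^{(j)}$, so the Laurent polynomials in $Y$ and their svMZV- and svMPL-valued coefficients are unambiguously defined.
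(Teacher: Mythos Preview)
Your treatment of item $(i)$ is correct and matches the paper's argument: invert the asymptotics (\ref{thm.3.4k}) of $\mathbb H^{\rm eqv}_{\ep^{\rm TS}}(\tau)$, concatenate with Theorem \ref{3.thm:5} for $\mathbb H^{\rm eqv}_{\ep,b}(u,v,\tau)$, and observe the telescoping of the $\mathfrak{sl}_2$ exponentials and of $\mathbb M^{\rm sv}_z(\mathbb M^{\rm sv}_z)^{-1}$ at the seam.

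For item $(ii)$ your argument has the right structure but stops short of the claim in two respects. First, you conclude with ``$\mathbb Q$-polynomial dependence on $u$ through $B_k(u)$'', whereas the statement asserts that the coefficients of each Laurent monomial in $Y$ are $\mathbb Q$-linear combinations of svMZVs and svMPLs in $e^{2\pi i z}$ \emph{only}. The paper closes this by the identity (\ref{getbacku}), namely $G^{\rm sv}(0;e^{2\pi i z})=-uY$, which lets any monomial $u^n$ be rewritten as $(-1)^n Y^{-n}\bigl(G^{\rm sv}(0;e^{2\pi i z})\bigr)^n$ and hence absorbed into the svMPL/Laurent-in-$Y$ structure. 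Without this step your argument does not establish the form asserted in the corollary. Second, your assertion of polynomial $u$-dependence is not yet justified: the letter $E_0(u)$ in (\ref{thm.3.4d}) carries an overall $1/u$, so a naive expansion of words in $E_0(u),E_1(u)$ produces negative powers of $u$ as well. The paper rules these out by observing that the full series $\mathbbm{\Lambda}^{\rm sv}_{x,y}(z,\tau)$ (equivalently $\mathbb I^{\rm eqv}_{\ep,b}$, whose $z$-dependence enters through the path-ordered exponentials $\mathbbm{\Gamma}_{x,y}$ and $\overline{\mathbbm{\Gamma}_{x,y}^T}$) is regular as $z=u\tau{+}v$ approaches real values, so no pole at $u=0$ can survive at any order in $b_k^{(j)}$. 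Once both of these points are supplied, your argument goes through.
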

Note that the single-valued
MPLs $G^{\rm sv}(a_1,\ldots,a_r;e^{2\pi i z})$ in one variable already cover polynomials in $u$ when admitting variable powers of $Y$
since
\beq
G^{\rm sv}(0;e^{2\pi i z}) = 2\pi i (z{-}\bar z) = -4\pi u \Im \tau = - u\,Y
\label{getbacku}
\eeq

\begin{proof}
\phantom{x}

$(i)$: The techniques of section \ref{sec:3.3.1} below straightforwardly carry over
to determine the asymptotics
\beq
\mathbb I_{\ep^{\rm TS}}(\tau)  = e^{2\pi i \tau N^{\rm TS}} e^{2\pi i \tau \ep_0} + {\cal O}(q,\bar q)
\label{thm.3.4i}
\eeq
Upon combination with its transposed complex conjugate in the
expression (\ref{lieg1.51}) for $\mathbb I^{\rm eqv}_{\ep^{\rm TS}}(\tau)$
and conjugating with $\Umod(\tau) $ in (\ref{usl2}), we find
\beq
\mathbb H^{\rm eqv}_{\ep^{\rm TS}}(\tau)  =
\exp\bigg( {-} \frac{\ep_0^\vee}{Y} \bigg)  \, ( \mathbb M_{z}^{\rm sv})^{-1} \,   \mathbb M_{\sigma}^{\rm sv} \,  e^{-Y N^{\rm TS}}
\, e^{-Y \ep_0} \,  \exp\bigg( \frac{\ep_0^\vee}{Y} \bigg)
 + {\cal O}(q,\bar q)
\label{thm.3.4k}
\eeq
By concatenating its inverse with the asymptotics of $\mathbb H^{\rm eqv}_{\ep,b}(u,v,\tau)$ 
in (\ref{thm.3.4c}), one arrives at the first claim (\ref{thm.3.4h}).

$(ii)$: Concentrating on the inner factors $e^{N^{\rm TS}Y}  (\mathbb M^{\rm sv}_\sigma)^{-1}  \mathbb M^{\rm sv}_{\Sigma(u)}  \mathbb G^{\rm sv}_{E_0(u),E_1(u)}(e^{2\pi i z})$ of the asymptotics (\ref{thm.3.4h})
of $\mathbbm{\Lambda}^{\rm sv}_{x,y}(z,\tau)$, one arrives at a series
in $b_k^{(j)}$ where the coefficients of each word comprise $\mathbb Q$ combinations of
$G^{\rm sv}(a_1,\ldots,a_r;e^{2\pi i z})$, single-valued MZVs and polynomials in $u$,
the latter coming from $\Sigma_w(u), E_0(u), E_1(u)$.
While $\Sigma_w(u)$ in (\ref{not.06}) and $E_1(u)$ in (\ref{thm.3.4d}) are evident to only feature non-negative powers of $u$, the expression for $E_0(u)$ in (\ref{thm.3.4d}) has a simple pole in $u$. These poles cannot persist in $\mathbbm{\Lambda}^{\rm sv}_{x,y}(z,\tau)$ or at any order of its expansion in $b_k^{(j)}$ since its $z$-dependence is entirely carried by $\mathbb I^{\rm eqv}_{\ep,b}(u,v,\tau)$ and ultimately by the path-ordered exponentials $\overline{\mathbbm{\Gamma}_{x,y}(z,\tau)^T} $ and $
\mathbbm{\Gamma}_{x,y}(z,\tau)$. The latter are known to be non-singular as $z=u\tau{+}v$ approaches generic real values $v\in \mathbb R$ (at each order of their expansions in $x,y$) from the properties of the connections (\ref{defKcon}) or (\ref{notsec.06}).
The resulting polynomial dependence of the inner factors $e^{N^{\rm TS}Y}  (\mathbb M^{\rm sv}_\sigma)^{-1}  \mathbb M^{\rm sv}_{\Sigma(u)}  \mathbb G^{\rm sv}_{E_0(u),E_1(u)}(e^{2\pi i z})$ on $u$ can then be absorbed into single-valued MPLs via (\ref{getbacku}).

Finally, the conjugation by $\exp( {-} \frac{\ep_0^\vee}{Y} )  e^{Y \ep_0} $ in (\ref{thm.3.4h}) can only give a finite range of 
integer powers of $Y$ along with a fixed word in $b_k^{(j)}$: the lowest- and highest-weight
conditons (\ref{lieg1.27}) imply that both exponentials truncate on a given word
$\sim b_{k_1}^{(j_1)} \ldots b_{k_r}^{(j_r)}$ whose coefficient determines the asymptotics of  
$\lab{j_1 &\ldots &j_r}{k_1 &\ldots &k_r}{z, \tau }$.
\end{proof}

\subsubsection{Step (i) of proving Theorem \ref{3.thm:5}}
\label{sec:3.3.1}

We recall that the target series $\mathbb H^{\rm eqv}_{\ep,b}(u,v,\tau)$ is given by the $\Umod(\tau)$ conjugate (\ref{thm.3.4a}) of the equivariant series $\mathbb I^{\rm eqv}_{\ep,b}(u,v,\tau) $ built from
$\mathbb I_{\ep,b}(u,v,\tau)$ and its complex conjugate according to (\ref{grteq.01}).
The asymptotics of the meromorphic series $\mathbb I_{\ep,b}(u,v,\tau)$ 
is most conveniently determined
by rewriting the defining path-ordered exponential in (\ref{notsec.17})  as
\beq
\mathbb I_{\ep,b}(u,v,\tau) = {\rm Pexp} \bigg( \int^{i\infty}_\tau \, \dd \tau_1 \, \mathbb J^{(\tau)}_{x,y,\ep}(u,v,\tau_1)    \bigg) \, e^{2\pi i \tau \ep_0}
\label{thm.3.4l}
\eeq
with
\beq
\mathbb J^{(\tau)}_{x,y,\ep}(u,v,\tau) = 2\pi i   \bigg( \epsilon_0+\sum_{k=2}^\infty \frac{(k{-}1)}{(2\pi i)^k} \big[ {\rm G}_k(\tau)\epsilon_k
 - f^{(k)}(u\tau {+} v,\tau) b_k \big]  \bigg)
 \label{thm.3.4m}
 \eeq
The rewriting (\ref{thm.3.4l}) can for instance be verified by checking that it reproduces the differential
equation and initial value of (\ref{notsec.17}). The $\tau \rightarrow i \infty$ behavior at fixed $u,v$ is
then studied at the level of the connection, namely using
\begin{align}
{\rm G}_k(\tau) &= - (2\pi i)^k\, \frac{B_k}{k!}+\mathcal{O}(q)
 \label{thm.3.4o} \\
f^{(k)}(u\tau{+}v,  \tau) &= \dfrac{(2\pi i)^k}{(k{-}1)!} \biggl( \frac{ B_k(u) }{k} + \dfrac{ u^{k-1} \, e^{2\pi i z} }{ e^{2\pi i z} {-} 1 }\biggr)+\mathcal{O}(q^{1-u})
\notag 
\end{align}
together with the change of variables
\beq
\sigma = e^{2\pi i z} = e^{2\pi i (u \tau_1 + v)}  \ \ \ \Rightarrow \ \ \ 2\pi i \, \dd\tau_1 = \frac{\dd \sigma}{u \, \sigma}
 \label{thm.3.4p}
\eeq
Inserting both of (\ref{thm.3.4o}) and (\ref{thm.3.4p}) into $\dd \tau \, \mathbb J^{(\tau)}_{x,y,\ep}(u,v,\tau)$ then results in
\begin{align}
\dd \tau \, \mathbb J^{(\tau)}_{x,y,\ep}(u,v,\tau) &= \dd \sigma\, \bigg\{ \frac{1}{u \, \sigma}
\, \bigg( \ep_0 - \sum_{k=4}^\infty (k{-}1) \, \frac{B_k}{k!} \, \ep_k
- \sum_{k=2}^\infty (k{-}1) \, \frac{B_k(u)}{k!} \, b_k \bigg) \notag \\
&\quad\quad\quad\quad\quad - \frac{1}{\sigma{-}1} \sum_{k=2}^\infty \frac{u^{k-2}}{(k{-}2)!} \, b_k \bigg\} + {\cal O}(q^{1-u}) \notag \\
&= - \dd \sigma \, \bigg(  \frac{E_0(u)}{\sigma} + \frac{E_1(u)}{\sigma{-}1}\bigg) + {\cal O}(q^{1-u})
\label{thm.3.4q}
\end{align}
where we have identified the letters $E_0(u), E_1(u)$ in (\ref{thm.3.4d}) in the last step.

The integration limits $\int^{ i\infty}_\tau \dd \tau_1$ of (\ref{thm.3.4l}) naively translate into
$\int^0_\sigma \dd \sigma_1$, but it is worth noting that the upper limit $\sigma_1 \rightarrow 0$ is attained
from the $\rho \rightarrow i\infty$ limit of $e^{2\pi i (u\rho + v)}$ with real $u,v$.
The endpoint singularity of the KZ connection in
\begin{align}
\mathbb I_{\ep,b}(u,v,\tau) &= {\rm Pexp} \bigg({-}  \lim_{\rho \rightarrow i\infty}\int^{e^{2\pi i (u\rho + v)}}_{e^{2\pi i z}} \dd \sigma_1 \, \bigg(  \frac{E_0(u)}{\sigma_1} + \frac{E_1(u)}{\sigma_1{-}1}\bigg)    \bigg)  e^{2\pi i \tau \ep_0} + {\cal O}(q^{1-u}) \notag \\
&= e^{-2\pi i v E_0(u)}\, {\rm Pexp} \bigg({-}  \int^{0}_{e^{2\pi i z}} \dd \sigma_1 \, \bigg(  \frac{E_0(u)}{\sigma_1} + \frac{E_1(u)}{\sigma_1{-}1}\bigg)    \bigg)  e^{2\pi i \tau \ep_0} + {\cal O}(q^{1-u})  \notag \\
&= e^{-2\pi i v E_0(u)}\, \mathbb G_{E_0(u),E_1(u)}(e^{2\pi i z})\, e^{2\pi i \tau \ep_0} + {\cal O}(q^{1-u})
\label{thm.3.4r}
\end{align}
makes the path-ordered exponential sensitive to the direction
in which the integration path leaves the origin. The mismatch between the
tangential vector with complex phase $\sim e^{2\pi i v}$ in the first line of (\ref{thm.3.4r})
and the tangent vector in positive real direction in the definition (\ref{svmpl.00})
of the MPL series $\mathbb G_{E_0(u),E_1(u)}(e^{2\pi i z})$ leads to the phase factor
$e^{-2\pi i v E_0(u)}$. This concludes our derivation of the
expression (\ref{thm.3.4e}) for the asymptotics of $\mathbb I_{\ep,b}(u,v,\tau) $.

\subsubsection{Step (ii) of proving Theorem \ref{3.thm:5}}
\label{sec:3.3.2}

We will as a next step show that the augmented zeta generators $\Sigma_w(u)$ in
(\ref{not.06}) and the following series  $T_{01}(u) , N(u)$ all commute with each other (see (\ref{lieg1.17}) for $t_{01}$):
\begin{align}
T_{01}(u) &= e^{ux} t_{01} e^{-ux}
= - \frac{{\rm ad}_x e^{u {\rm ad}_x }}{ e^{{\rm ad}_x} - 1 }\, (y)  \label{notsec.21} \\
&=  -  \sum_{k=0}^\infty    \frac{B_k(u)}{k!}  {\rm ad}_x^k(y)
=  - y +  \sum_{k=1}^\infty    \frac{B_k(u)}{k!}  b_{k+1}
\notag
\\
N(u) &=  -\ep_0 + \sum_{k=2}^\infty (k{-}1) \bigg( \frac{B_k}{k!} \, \ep_k + \frac{B_k(u)}{k!}  b_k \bigg) 
\notag
\end{align}
namely, for any odd $w\geq3$
\begin{align}
 \big[ T_{01}(u) , \Sigma_w(u) \big] &= 0   \label{thm.3.4s} \\
  \big[ T_{01}(u) , N(u) \big]  &= 0 \notag \\
 \big[ N(u)  ,  \Sigma_w(u)  \big] &= 0 \notag 
\end{align}
In section \ref{sec:4} below, these identities will be useful for the proof of Theorem \ref{3.thm:1},
and $T_{01}(u) ,N(u)$ will be identified as components of a gauge transformed version
of the CEE connection (\ref{notsec.02}) at $\tau \rightarrow i\infty$.

In order to prove (\ref{thm.3.4s}), we first demonstrate that the commutators on the left side vanish at $u=0$.
The first one involving $t_{01} = T_{01}(0)$ follows from the defining property (\ref{dfprpsig}) of genus-one zeta generator 
\beq
 [t_{01} , \Sigma_w(0)] = 0 \,, \ \ \ \ \ \ 
\Sigma_w(0) = P_w(t_{12},t_{01}) +\sigma_w 
 \label{thm.3.4t}
\eeq
The second line of (\ref{thm.3.4s}) at $u=0$ follows from flatness
of the KZB connection (\ref{notsec.02}) at $\tau \rightarrow i \infty$:
We compute the limit
\beq
 \mathbb K_{x,y,\ep}(z,i\infty) = - \dd \sigma \, \bigg( \frac{t_{01}}{\sigma} + \frac{t_{12}}{\sigma{-}1} \bigg)
 -2\pi i \dd \tau \, N(0)
  \label{thm.3.4u}
\eeq
in the coordinate $\sigma = e^{2\pi i z}$ and use the fact that the $\dd \sigma$, $\dd \tau$ 
components already commute at generic $\tau$ \cite{KZB}. Since $t_{12} $ commutes
with itself and all the Tsunogai derivations in $N^{\rm TS}$ of (\ref{notsec.22}), we conclude
 \beq
 [t_{01} , N(0)] = 0 \,, \ \ \ \ \ \ 
  [t_{12} , N(0)] = 0 \,, \ \ \ \ \ \ 
 N(0) = N^{\rm TS}+ \frac{B_2}{2} t_{12}
   \label{thm.3.4v}
 \eeq
This implies that $N(0)$ commutes with arbitrary
words in $t_{01},t_{12}$, in particular that
\beq
\big[ N(0) , P_w(t_{12},t_{01})\big] =0
  \label{thm.3.4x}
\eeq
As a consequence, we also find that the third line of 
(\ref{thm.3.4s}) vanishes at $u=0$
\beq
\big[ N(0)  ,  \Sigma_w(0)  \big] = [ N^{\rm TS}  ,  \sigma_w  ]  + \frac{B_2}{2} \, [ t_{12}  ,  \sigma_w  ]
+  \big[ N(0)  ,  P_w(t_{12},t_{01})  \big]  = 0
  \label{thm.3.4y}
\eeq
using the fact that $  \sigma_w$ commutes with $N^{\rm TS}$ and $t_{12}$.

It remains to prove (\ref{thm.3.4s}) at non-zero values of $u$.
This will be done by means of the following differential equations
which are all elementary consequences of the definitions (\ref{not.06})
and (\ref{notsec.21})
\begin{align}
 \partial_u T_{01}(u) &= \ad_x T_{01}(u)   \label{thm.3.4z} \\
\partial_u \Sigma_w(u) &= \ad_x \Sigma_w(u)  \notag\\
\partial_u N(u) &= \ad_x N(u) + T_{01}(u) \notag
\end{align}
Since $\ad_x$ obeys the same Leibniz rule as $\partial_u$, these
differential equations imply
\begin{align}
\partial_u \big[ T_{01}(u) , \Sigma_w(u) \big] &=  \ad_x \big[ T_{01}(u) , \Sigma_w(u) \big] 
 \label{thm.3.4a2}  \\
\partial_u \big[ T_{01}(u) , N(u) \big] &=  \ad_x \big[ T_{01}(u) , N(u) \big]  + \big[ T_{01}(u) , T_{01}(u) \big]  
\notag \\
\partial_u \big[ N(u)  ,  \Sigma_w(u)  \big]  &= \ad_x \big[ N(u)  ,  \Sigma_w(u)  \big]  + \big[ T_{01}(u)  ,  \Sigma_w(u)  \big]
\notag
\end{align}
for the commutators under investigation. Since all the commutators vanish  at $u=0$
by the discussion above, the first two lines of (\ref{thm.3.4a2}) imply 
$ [ T_{01}(u) , \Sigma_w(u) ] = 0$ and $[ T_{01}(u) , N(u) ] =0$ as the unique solutions 
of the differential equations with vanishing initial values. The third line of (\ref{thm.3.4a2}) then reduces
to $(\partial_u - \ad_x) [ N(u)  ,  \Sigma_w(u)  ]=0$ which again implies 
$[ N(u)  ,  \Sigma_w(u)  ] = 0$ by the vanishing initial value at $u=0$.
Hence, we have established the vanishing of all the three commutators in (\ref{thm.3.4s}).

The expression (\ref{thm.3.4d}) then reveals that $E_0(u) = N(u)/u$, 
so the third line of (\ref{thm.3.4s}) implies the
vanishing of $ [ E_0(u)  ,  \Sigma_w(u)  ]$ in (\ref{goaltwo}).

\subsubsection{Step (iii) of proving Theorem \ref{3.thm:5}}
\label{sec:3.3.3}

The Lie-algebra techniques of the previous section will now be extended to prove (\ref{thm.3.4f}), i.e.\ to show that $[ E_1(u) , \Sigma_w(u)  ] =[P_w(E_0(u),E_1(u)), E_1(u)]$.
For this purpose, we relate the letters $E_0(u), E_1(u)$ in (\ref{thm.3.4d})
to the letters $t_{01}, t_{12}$ in (\ref{lieg1.17}) seen in the definition
(\ref{not.06}) of $\Sigma_w(u)$:
\beq
e^{-ux} E_1(u) e^{ux} = t_{12} \, , \ \ \ \ \ \
e^{-ux} E_0(u) e^{ux} = t_{01} + \frac{N(0)}{u} 
\label{thm.3.31}
\eeq
see (\ref{notsec.21}) for the series of generators $N(u)$. The first identity of (\ref{thm.3.31})
is an immediate consequence of the expression (\ref{thm.3.4d}) for $E_1(u)$
together with $b_k = \ad_x^{k-2} b_2$ and $b_2 = t_{12}$.
The second identity of (\ref{thm.3.31}) is equivalent to
\beq
e^{-ux} N(u) e^{ux} = u \, t_{01}+ N(0)
\label{thm.3.32}
\eeq
since $N(u) = u E_0(u)$ which follows from the fact that both sides match at $u=0$ and
have the same $u$-derivative: the right side clearly reduces to $t_{01}$ under $\partial_u$
while the left side yields
\begin{align}
\partial_u \big( e^{-ux} N(u) e^{ux}  \big) &= e^{-ux} \, \Big({-}x  N(u)+  N(u) x +  \partial_u N(u) \Big) \, e^{ux}
\notag \\
&= e^{-ux} \, \Big({-}x  N(u)+  N(u) x + \ad_x N(u) + T_{01}(u) \Big) \, e^{ux} 
\notag \\
&= e^{-ux}\, T_{01}(u)  \,e^{ux} = t_{01}
\label{thm.3.33}
\end{align}
using the differential equation (\ref{thm.3.4z}) of $N(u)$ in passing to the second line.
This establishes (\ref{thm.3.32}) and therefore the second identity of (\ref{thm.3.31}).

With both identities of (\ref{thm.3.31}) in place, we can prove the equivalent
\beq
e^{-ux}\, \big[ E_1(u) , \Sigma_w(u) \big] \, e^{ux} =  e^{-ux} \, \big[ P_w\big(E_0(u), E_1(u) \big) , E_1(u) \big]\, e^{ux}
\label{thm.3.34}
\eeq
of the target identity (\ref{thm.3.4f}). The left side of (\ref{thm.3.34}) simplifies to
\beq
\big[ t_{12} , \Sigma_w(0) \big]  = \big[ t_{12} , P_w(t_{12},t_{01})+\sigma_w  \big] 
=  \big[ t_{12} ,P_w(t_{12},t_{01}) \big] 
\label{thm.3.35}
\eeq
using (\ref{thm.3.31}), the definition (\ref{not.06}) of $ \Sigma_w(u)$ 
and the defining property of $\sigma_w$ to commute with $t_{12}$.
The right side of (\ref{thm.3.34}) in turn can be rewritten as
\begin{align}
 \big[ P_w\big(e^{-ux}  E_0(u) e^{ux} ,\, e^{-ux}  E_1(u) e^{ux}  \big) , e^{-ux}  E_1(u)  e^{ux} \big]
 &=  \big[ P_w\big( t_{01} + \tfrac{1}{u} N(0),\, t_{12}  \big) , t_{12} \big] \notag \\
& =  \big[ P_w( t_{01},t_{12}) , t_{12} \big] 
\label{thm.3.36}
\end{align}
where we have first pulled the 
conjugation by $e^{-ux}$ into the letters of the Lie polynomials $P_w$,
then rewrote both instances of $e^{-ux}  E_i(u) e^{ux}$ via (\ref{thm.3.31}).
The second step is based on the fact that $N(0)$ commutes with both $t_{01}$
and $t_{12}$, see (\ref{thm.3.4v}), which guarantees that it drops out of the
Lie polynomials in $P_w\big( t_{01} {+} \tfrac{1}{u} N(0), t_{12}  \big)  = P_w( t_{01},t_{12})$.
Hence, the two sides of (\ref{thm.3.34}) in (\ref{thm.3.35}) and (\ref{thm.3.36}) match by $P_w( t_{01},t_{12})= - P_w( t_{12},t_{01})$
which concludes the proof of (\ref{thm.3.4f}).

\subsubsection{Step (iv) of proving Theorem \ref{3.thm:5}}
\label{sec:3.3.4}

The asymptotics (\ref{thm.3.4e})
of $\mathbb I_{\ep,b}(u,v,\tau) $ and the commutation relations 
(\ref{goaltwo}) and (\ref{thm.3.4f}) for $ [ E_i(u)  ,  \Sigma_w(u)  ]$ derived in the previous steps will now be combined to bring the asymptotics of $\mathbb I^{\rm eqv}_{\ep,b}(u,v,\tau) $ into the desired form. In the first place, (\ref{grteq.01})
gives rise to
\begin{align}
\mathbb I^{\rm eqv}_{\ep,b}(u,v,\tau) &=
(\mathbb M^{\rm sv}_{z})^{-1}\, 
e^{-2\pi i \bar \tau \ep_0} \, \overline{  \mathbb G_{E_0(u),E_1(u)}(e^{2\pi i z})^T } \, e^{2\pi i v E_0(u)}
 \, \mathbb M^{\rm sv}_{\Sigma(u)}
\, e^{-2\pi i v E_0(u)}\notag \\
&\quad \quad \quad \times \mathbb G_{E_0(u),E_1(u)}(e^{2\pi i z})\, e^{2\pi i \tau \ep_0}  + {\cal O}(q^{1-u},\bar q^{1-u})
\notag\\
&= (\mathbb M^{\rm sv}_{z})^{-1}\, 
e^{-2\pi i \bar \tau \ep_0} \, \overline{  \mathbb G_{E_0(u),E_1(u)}(e^{2\pi i z})^T } 
 \, \mathbb M^{\rm sv}_{\Sigma(u)}
 \, \mathbb G_{E_0(u),E_1(u)}(e^{2\pi i z})\, e^{2\pi i \tau \ep_0} \notag \\
&\quad + {\cal O}(q^{1-u},\bar q^{1-u})
\label{thm.3.4a3} 
\end{align}
where we have used that $ E_0(u) $ commutes with $\Sigma_w(u) $ to cancel the factors of $e^{\pm 2\pi i v E_0(u)}$. 

In the next step, it would be desirable to convert the following combination of series 
$ \overline{  \mathbb G_{E_0(u),E_1(u)}(e^{2\pi i z})^T }  \mathbb M^{\rm sv}_{\Sigma(u)}
 \mathbb G_{E_0(u),E_1(u)}(e^{2\pi i z})$ to the series $\mathbb M^{\rm sv}_{\Sigma(u)}
 \mathbb G^{\rm sv}_{E_0(u),E_1(u)}(e^{2\pi i z})$ in single-valued MPLs via (\ref{svmpl.12}).
 However, this is tied to the bracket relations between the
 augmented zeta generators $\Sigma_w(u)$ and $E_0(u), E_1(u)$ which need to mirror those among $M_w$ and $e_0,e_1$ in (\ref{svmpl.07}).
This is indeed the case by the
commutation relations 
(\ref{goaltwo}) and (\ref{thm.3.4f}) for $ [ E_i(u)  ,  \Sigma_w(u)  ]$ derived in the previous steps (ii) and (iii). Hence, we can identify the series in single-valued MPLs in (\ref{thm.3.4a3})
and bring the asymptotics of $\mathbb I^{\rm eqv}_{\ep,b}(u,v,\tau)$ into the form
\begin{align}
\mathbb I^{\rm eqv}_{\ep,b}(u,v,\tau) &= (\mathbb M^{\rm sv}_{z})^{-1}\, 
e^{-2\pi i \bar \tau \ep_0} \, \mathbb M^{\rm sv}_{\Sigma(u)}
 \, \mathbb G^{\rm sv}_{E_0(u),E_1(u)}(e^{2\pi i z})\, e^{2\pi i \tau \ep_0} + {\cal O}(q^{1-u},\bar q^{1-u})
\label{thm.3.4a4}
\end{align}
Finally, the $\Umod(\tau)$ conjugation (\ref{thm.3.4a}) leads to
\begin{align}
\mathbb H^{\rm eqv}_{\ep,b}(u,v,\tau) &= 
\exp\bigg( {-} \frac{\ep_0^\vee}{Y} \bigg)\,
e^{2\pi i \bar \tau \ep_0}\,
(\mathbb M^{\rm sv}_{z})^{-1}\, 
e^{-2\pi i \bar \tau \ep_0} \, \mathbb M^{\rm sv}_{\Sigma(u)}
 \label{thm.3.4a5} \\
 &\quad \times  \mathbb G^{\rm sv}_{E_0(u),E_1(u)}(e^{2\pi i z})\, e^{2\pi i \tau \ep_0} \,
e^{-2\pi i \bar \tau \ep_0}\,
\exp\bigg(  \frac{\ep_0^\vee}{Y} \bigg)
+ {\cal O}(q^{1-u},\bar q^{1-u})
\notag
\end{align}
where $Y= 4\pi \Im \tau$ allows us to merge $e^{2\pi i \tau \ep_0} 
e^{-2\pi i \bar \tau \ep_0} = e^{-Y \ep_0}$ in the second line.
By $\mathfrak{sl}_2$ invariance of the arithmetic zeta
generators, we have $e^{2\pi i \bar \tau \ep_0}
(\mathbb M^{\rm sv}_{z})^{-1}
e^{-2\pi i \bar \tau \ep_0} = (\mathbb M^{\rm sv}_{z})^{-1}$
such that (\ref{thm.3.4a5}) reproduces the desired expression (\ref{thm.3.4c})
and concludes the proof of Theorem \ref{3.thm:5}.

\section{Proof of Theorem \ref{3.thm:1}}
\label{sec:4}

This section is dedicated to the proof of Theorem \ref{3.thm:1}, stating the equivalence of the two expression (\ref{grteq.01}) and (\ref{altex.01}) for $\mathbb I^{\rm eqv}_{\ep,b}(u,v,\tau)$ which was needed to establish its equivariance. The claim to be proven can be restated as
\beq
\overline{ \mathbb I_{\ep,b}(u,v,\tau)^T} \, \mathbb M^{\rm sv}_{\Sigma(u)}\,
\mathbb I_{\ep,b}(u,v,\tau) 
= 
\overline{\mathbbm{\Gamma}_{x,y}(z,\tau)^T} \,
  \mathbb M^{\rm sv}_z\,
\mathbb I^{\rm eqv}_{\ep^{\rm TS}}(\tau)  \, 
\mathbbm{\Gamma}_{x,y}(z,\tau)
\label{restatcl}
\eeq
and amounts to matching iterated integrals over $\tau$ on the left side with iterated $z$-integrals on the right side. This requires a change of fibration basis which we shall perform by unifying the connections $\mathbb D_{\ep,b}(u,v,\tau_1) \sim \dd \tau_1$ in (\ref{defdepb}) governing $\mathbb I_{\ep,b}(u,v,\tau) $ on the left side and  $\tilde{\mathbb K}_{x,y}(z_1,\tau) \sim \dd z_1$ in (\ref{defKcon}) governing $\mathbbm{\Gamma}_{x,y}(z,\tau)$ on the right side of (\ref{restatcl}). More specifically, the proof of (\ref{restatcl}) and thus Theorem \ref{3.thm:1} is organized as follows:
\begin{itemize}
\item using a flat connection unifying $\dd \tau$ and $\dd z$ components to compute  differential equations of $\mathbb I_{\ep,b}(u,v,\tau) $ in $u$, $v$ or $z$, $\bar z$ in section \ref{sec:4.1};
\item integrating the differential equations of the previous point in terms of  the series $\mathbb I_{\ep^{\rm TS}}(\tau)\mathbbm{\Gamma}_{x,y}(z,\tau)$ in iterated Eisenstein integrals and eMPLs in section \ref{sec:4.2};
\item determining the $z$- and $\tau$-independent initial values in the solution $\sim \mathbb I_{\ep^{\rm TS}}(\tau)\mathbbm{\Gamma}_{x,y}(z,\tau)$ of the earlier differential equation in section \ref{sec:4.3};
\item assembling the second expression (\ref{altex.01}) for $\mathbb I^{\rm eqv}_{\ep,b}(u,v,\tau)$ including $\mathbb I^{\rm eqv}_{\ep^{\rm TS}}(\tau)$ in section \ref{sec:4.5}.
\end{itemize}

\subsection{Deriving differential equations of $\mathbb I_{\ep,b}(u,v,\tau) $ in $z$}
\label{sec:4.1}

The definition (\ref{notsec.17}) of $\mathbb I_{\ep,b}(u,v,\tau) $ as a path-ordered exponential  in $\tau$ exposes its differential equations in the modular parameter. In this section, we will derive the differential equations in the variables $u,v$ that are kept constant in the $\tau$-integrals.

\subsubsection{A doubly-periodic flat connection}
\label{sec:4.1.1}

The $u$- and $v$-derivatives of $\mathbb I_{\ep,b}(u,v,\tau) $ will be determined by means of a flat connection $ \mathbb J_{x,y,\ep}(u,v,\tau)$ involving
doubly-periodic $f^{(k)}(u\tau{+}v,\tau)$ kernels in all of its $\dd u$, $\dd v$ and $\dd \tau$ components. A connection with these properties can be constructed from the meromorphic but multi-valued CEE connection $\mathbb K_{x,y,\ep}(z,\tau)$ in (\ref{notsec.02}) by performing the gauge transformation by $e^{ux}$ that produced the Brown-Levin connection in  (\ref{notsecCEE}) from its $\dd z$ component. By eliminating $\dd z$
from the CEE connection via
\beq
\dd z = \dd(u\tau{+}v)  =u\, \dd\tau + \tau \, \dd u + \dd v\,, \ \ \ \ \ \
\dd \bar z = \dd(u\bar \tau{+}v)  =u\, \dd \bar \tau + \bar \tau \, \dd u + \dd v  
\label{sc5.01}
\eeq
 and considering $u,v,\tau,\bar \tau$ instead of $z,\bar z,\tau,\bar \tau$ as independent variables, the desired gauge transformed connection takes the form
\begin{align}
  \mathbb J_{x,y,\ep}(u,v,\tau) &=  e^{ux}  \mathbb K_{x,y,\ep}(z,\tau) e^{-ux} + (\dd e^{ux} ) e^{-ux}
 \label{notsec.03} \\
 &= x\, \dd u + (\tau \,\dd u+ \dd v)  \ad_x \Omega(u\tau{+}v,\ad_{\frac{x}{2\pi i}},\tau)y \notag \\
 &\quad 
+ \dd \tau \, {\rm ad}_x  \partial_{{\rm ad}_x} \Omega_0(u\tau{+}v,{\rm ad}_{ \frac{ x}{2\pi i}} ,\tau)y 
 + 2\pi i \dd \tau  \bigg( \epsilon_0+\sum_{k=2}^\infty \frac{(k{-}1)}{(2\pi i)^k} \,  {\rm G}_k(\tau)\epsilon_k
    \bigg)  \notag \\
&=  x\, \dd u + 2\pi i (\tau \,\dd u+ \dd v)  \sum_{k=0}^\infty \frac{ f^{(k)}(u\tau {+}v,\tau)}{(2\pi i)^k } \, {\rm ad}_{x}^{k}(y) \notag \\
&\quad + 2\pi i \dd \tau  \bigg( \epsilon_0+\sum_{k=2}^\infty \frac{(k{-}1)}{(2\pi i)^k} \, \big[ {\rm G}_k(\tau)\epsilon_k
 - f^{(k)}(u\tau {+} v,\tau) b_k \big]  \bigg) \notag
\end{align}
where $\Omega(z,\alpha,\tau)$ is the doubly-periodic Kronecker-Eisenstein series defined in (\ref{appA.11}) and $\Omega_0(z,\alpha,\tau)=\Omega(z,\alpha,\tau)-\frac{1}{\alpha}$.
In passing to the last line, we have used 
\begin{align}
\ad_x\partial_{\ad_x}\Omega_0(u\tau{+}v,\ad_{\frac{x}{2\pi i}},\tau)y&=-\sum_{k=2}^\infty (k{-}1) (2\pi i)^{1-k}  f^{(k)}(u\tau{+}v,\tau)b_k\notag\\
\ad_x \Omega(u\tau{+}v,\ad_{\frac{x}{2\pi i}},\tau)y&=\sum_{k=0}^\infty (2\pi i)^{1-k} f^{(k)}(u\tau{+}v,\tau)\ad_x^k(y)
\label{sc5.02}
\end{align}
As a gauge transform of the flat CEE connection $\mathbb K_{x,y,\epsilon}(z,\tau)$, the connection $\mathbb J_{x,y,\epsilon}(u,v,\tau)$ in (\ref{notsec.03}) is guaranteed to be flat (away from the singular points at $z \notin \mathbb Z {+} \tau \mathbb Z$): 
\beq
\dd \mathbb J_{x,y,\ep}(u,v,\tau) = \mathbb J_{x,y,\ep}(u,v,\tau) \wedge \mathbb J_{x,y,\ep}(u,v,\tau) 
\label{sc5.03}
\eeq
We decompose $\mathbb J_{x,y,\ep}(u,v,\tau)$ into components $\sim \dd u , \, \dd v, \, \dd \tau$,
\beq
\mathbb J_{x,y,\ep}(u,v,\tau)  = \dd u\,\mathbb J^{(u)}_{x,y,\ep}(u,v,\tau)  + \dd v \,
\mathbb J^{(v)}_{x,y,\ep}(u,v,\tau) +  \dd \tau\, 
\mathbb J^{(\tau)}_{x,y,\ep}(u,v,\tau) 
\label{sc5.04}
\eeq
which can be read off from (\ref{notsec.03}):
\begin{align}
\mathbb J^{(u)}_{x,y,\ep}(u,v,\tau) &= 
x + \tau  \, \ad_x \Omega(u\tau{+}v,\ad_{\frac{x}{2\pi i}},\tau)y
\label{sc5.05}
\\
\mathbb J^{(v)}_{x,y,\ep}(u,v,\tau) &=  \ad_x \Omega(u\tau{+}v,\ad_{\frac{x}{2\pi i}},\tau)y
\notag \\
\mathbb J^{(\tau)}_{x,y,\ep}(u,v,\tau) &= 2\pi i   \bigg( \epsilon_0+\sum_{k=2}^\infty \frac{(k{-}1)}{(2\pi i)^k} \, \big[ {\rm G}_k(\tau)\epsilon_k
 - f^{(k)}(u\tau {+} v,\tau) b_k \big]  \bigg)
\notag
 \end{align}
The flatness condition (\ref{sc5.03}) then implies three separate identities 
\begin{align}
\partial_\tau \mathbb J^{(v)}_{x,y,\ep}(u,v,\tau)-\partial_v \mathbb J^{(\tau)}_{x,y,\ep}(u,v,\tau)&=\big[\mathbb J^{(\tau)}_{x,y,\ep}(u,v,\tau),\mathbb J^{(v)}_{x,y,\ep}(u,v,\tau) \big]\label{flatness_1}\\
\partial_\tau \mathbb J^{(u)}_{x,y,\ep}(u,v,\tau)-\partial_u \mathbb J^{(\tau)}_{x,y,\ep}(u,v,\tau)&=\big[\mathbb J^{(\tau)}_{x,y,\ep}(u,v,\tau),\mathbb J^{(u)}_{x,y,\ep}(u,v,\tau) \big]\notag\\
\partial_u \mathbb J^{(v)}_{x,y,\ep}(u,v,\tau)-\partial_v \mathbb J^{(u)}_{x,y,\ep}(u,v,\tau)&=\big[\mathbb J^{(u)}_{x,y,\ep}(u,v,\tau),\mathbb J^{(v)}_{x,y,\ep}(u,v,\tau)\big]\notag
\end{align}
In contrast to the flatness condition of the CEE connection \cite{KZB}, the curls on the left side and the commutators on the right side typically do not vanish individually. However, the first line of (\ref{flatness_1}) is an exception, where the mixed heat equation (\ref{appA.32}) immediately implies the vanishing of the left side.\footnote{A direct proof of $\big[\mathbb J^{(\tau)}_{x,y,\ep}(u,v,\tau),\mathbb J^{(v)}_{x,y,\ep}(u,v,\tau) \big]=0$ on the right side of (\ref{flatness_1}) is more challenging and requires Fay identities as, for instance, in section 4.4 of \cite{Broedel:2020tmd} as well as the action of $\epsilon_k$ on $x,y$ as in section~\ref{sec:2.4}.} 

\subsubsection{Differentiating path-ordered exponentials}
\label{sec:4.1.2}

The $\dd \tau$-component of the flat connection  $\mathbb J_{x,y,\ep}(u,v,\tau)$ in the third line of (\ref{sc5.05}) was already encountered in (\ref{thm.3.4m}) and used in the alternative representation (\ref{thm.3.4l}) of $\mathbb I_{\ep,b}(u,v,\tau)$. We shall generalize this alternative path-ordered exponential further and introduce the series 
\beq 
\mathbb L_{x,y,\epsilon}(u,v;a,b)=\text{Pexp}\biggl(\int_b^a \dd \tau \, \mathbb J_{x,y,\ep}^{(\tau)}(u,v,\tau)\biggr)
\label{sc5.11}
\eeq
depending on two generic endpoints $a,b$ in the upper half plane and retrieve $\mathbb I_{\ep,b}(u,v,\tau)$ via (\ref{thm.3.4l}) as
\beq
\mathbb I_{\ep,b}(u,v,\tau) =
\mathbb L_{x,y,\epsilon}(u,v;i\infty,\tau) e^{2\pi i \tau \ep_0}
\label{sc5.12}
\eeq
We can now differentiate the path-ordered exponential (\ref{sc5.11}) with respect to $v$ or $u$ via
\begin{align}
\partial_u \mathbb L_{x,y,\epsilon}(u,v;i\infty, \tau)&=\int_\tau^{i\infty}\text{d}s\,\mathbb L_{x,y,\epsilon}(u,v;i\infty,s)\bigl(\partial_u \mathbb J_{x,y,\epsilon}^{(s)}(u,v,s)\bigr)\mathbb L_{x,y,\ep}(u,v;s,\tau)
\label{der_u} \\
&=\int_\tau^{i\infty}\text{d}s\,\mathbb L_{x,y,\epsilon}(u,v;i\infty,s)\bigl(\partial_s \mathbb J_{x,y,\epsilon}^{(u)}(u,v,s)\bigr)\mathbb L_{x,y,\ep}(u,v;s,\tau)\notag\\
&\quad+\int_\tau^{i\infty}\text{d}s\,\mathbb L_{x,y,\epsilon}(u,v;i\infty,s)\bigl[ \mathbb J_{x,y,\epsilon}^{(u)}(u,v,s),\mathbb J_{x,y,\epsilon}^{(s)}(u,v,s)\bigr]\mathbb L_{x,y,\ep}(u,v;s,\tau)\notag\\
&=\int_\tau^{i\infty}\text{d}s \, \partial_s\biggl(\mathbb L_{x,y,\epsilon}(u,v;i\infty,s)\mathbb J_{x,y,\epsilon}^{(u)}(u,v,s)\mathbb L_{x,y,\ep}(u,v;s,\tau)\biggr)\notag\\
&=\mathbb J^{(u)}_{x,y,\epsilon}(u,v,i\infty)\mathbb L_{x,y,\epsilon}(u,v;i\infty,\tau)-\mathbb L_{x,y,\epsilon}(u,v;i\infty,\tau)\mathbb J^{(u)}_{x,y,\epsilon}(u,v,\tau)
\notag
\end{align}
where the second and third line are derived from the flatness condition in  (\ref{flatness_1}). The total derivative in the fourth line arises from $\partial_s \mathbb L_{x,y,\epsilon}(u,v;i\infty,s) = - \mathbb L_{x,y,\epsilon}(u,v;i\infty,s) \mathbb J_{x,y,\epsilon}^{(s)}(u,v,s)$ and $\partial_s \mathbb L_{x,y,\ep}(u,v;s,\tau) = \mathbb J_{x,y,\epsilon}^{(s)}(u,v,s) \mathbb L_{x,y,\ep}(u,v;s,\tau)$ which produce the $\mathbb J_{x,y,\epsilon}^{(s)}$ terms of the commutator in the third line.
The derivative in $v$ is obtained from the same method:
\begin{align}\label{der_v}
\partial_v \mathbb L_{x,y,\epsilon}(u,v;i\infty,\tau)=\mathbb J^{(v)}_{x,y,\epsilon}(u,v,i\infty)\mathbb L_{x,y,\epsilon}(u,v;i\infty,\tau)-\mathbb L_{x,y,\epsilon}(u,v;i\infty,\tau)\mathbb J^{(v)}_{x,y,\epsilon}(u,v,\tau)
\end{align}
Both derivatives in
(\ref{der_u}), (\ref{der_v})
involve degeneration limits $\tau \rightarrow i\infty$ of the connection $\mathbb J_{x,y,\epsilon}$ at fixed $u,v$ which follows from inserting the leading terms
$f^{(k)}(u\tau{+}v,  \tau) \rightarrow  (2\pi i)^k \frac{ B_k(u) }{k!}$ and ${\rm G}_k(\tau) \rightarrow - (2\pi i)^k \frac{B_k}{k!}$ of their expansion around the cusp into (\ref{notsec.03}),
\begin{align}
\mathbb J^{(u)}_{x,y,\epsilon}(u,v,i\infty)&=x\label{bound_u}\\
\mathbb J^{(v)}_{x,y,\epsilon}(u,v,i\infty)&=-2\pi i T_{01}(u) \notag \\
\mathbb J^{(\tau)}_{x,y,\epsilon}(u,v,i\infty)&=-2\pi i N(u)
\notag
\end{align}
The first line is understood as a regularized limit where the second term $\sim \tau$ in the first line of (\ref{sc5.05}) is suppressed. The last two lines of (\ref{bound_u}) reproduce the $u$-dependent series $T_{01}(u)$ and $N(u)$ of generators defined in (\ref{notsec.21}), where the $\tau \rightarrow i\infty$ limit is understood to be taken at non-zero $u$.\footnote{Among the leading terms $f^{(k)}(u\tau{+}v,  \tau) = \dfrac{(2\pi i)^k}{(k{-}1)!} \Bigl( \frac{ B_k(u) }{k} + \dfrac{ u^{k-1} \, e^{2\pi i (u\tau+v)} }{ e^{2\pi i (u\tau+v)} {-} 1 }\Bigr)+\mathcal{O}(q^{1-u})$ noted in (\ref{thm.3.4o}), the second one is exponentially suppressed as $\tau \rightarrow i\infty$ by virtue of the numerator factor $e^{2\pi i (u\tau+v)}$.}

By combining the derivatives (\ref{der_u}), (\ref{der_v}) with the degeneration limits (\ref{bound_u}), the total differential of the path-ordered exponential (\ref{sc5.11}) becomes 
\begin{align}\label{djplus} 
\dd \mathbb L_{x,y,\ep}(u,v;i\infty,\tau) &= \big(x \, \dd u - 2\pi i T_{01}(u) \, \dd v \big)\, \mathbb L_{x,y,\ep}(u,v;i\infty,\tau)\notag\\
&\quad - \mathbb L_{x,y,\ep}(u,v;i\infty,\tau) \, \mathbb J_{x,y,\ep}(u,v,\tau)
\end{align}
where the right-multiplicative $\mathbb J^{(\tau)}_{x,y,\epsilon}$ from the $\tau$-derivative completes the flat connection~(\ref{notsec.03}).

\subsubsection{Total differential of the series $\mathbb I_{\ep,b}(u,v,\tau)$}
\label{sec:4.1.3}

The path-ordered exponentials $\mathbb L_{x,y,\ep}(u,v;i\infty,\tau)$ of the previous section and $\mathbb I_{\ep,b}(u,v,\tau)$ relevant to Theorem \ref{3.thm:1} are related by the right-multiplicative $e^{2\pi i \tau \ep_0}$ in (\ref{sc5.12}). Accordingly, the connections in their total differentials are related by the gauge transformation
\begin{align}
\mathbb D_{\ep,b}(u,v,\tau) &= \dd \tau\, e^{-2\pi i \ep_0 \tau} \mathbb J^{(\tau)}_{x,y,\ep}(u,v,\tau) 
e^{2\pi i \ep_0 \tau} + (\dd e^{-2\pi i \ep_0 \tau}) \,e^{2\pi i \ep_0 \tau}
\label{sc5.15}
 \end{align}
and similar conjugations of the $\dd u$ and $\dd v$ components of $\mathbb J_{x,y,\ep}(u,v,\tau)$ in (\ref{notsec.03}), see (\ref{defdepb}) for $\mathbb D_{\ep,b}(u,v,\tau)$. This translates the total differential (\ref{djplus})
into
\begin{align}
\dd \mathbb I_{\ep,b}(u,v,\tau) &= \big(x \, \dd u - 2\pi i T_{01}(u) \, \dd v \big)\, \mathbb I_{\ep,b}(u,v,\tau) \label{diplus} \\
&\quad  - \mathbb I_{\ep,b}(u,v,\tau)\, \mathbb D_{\ep,b}(u,v,\tau)
- \mathbb I_{\ep,b}(u,v,\tau) \, (x - 2\pi i \tau y) \, \dd u
\notag \\
&\quad - \mathbb I_{\ep,b}(u,v,\tau) \, e^{-2\pi i \tau\ep_0} \,
\ad_x \Omega(x,\ad_{\frac{x}{2\pi i} },\tau) \,  y \, 
e^{2\pi i \tau\ep_0} \,(\dd v + \tau \, \dd u)
\notag 
\end{align}
which also encodes the $z$- and $\bar z$- derivatives of $\mathbb I_{\ep,b}(u,v,\tau)$ by inverting the relation (\ref{sc5.01}) between  differentials $\dd u$, $\dd v$ and $\dd z$, $\dd \bar z$.
Note that the transposed complex conjugates of (\ref{djplus}) and (\ref{diplus}) can be found in appendix \ref{secdifc}.

\subsection{Solving differential equations of $\mathbb I_{\ep,b}(u,v,\tau) $ in $z$}
\label{sec:4.2}

The differential equations of both $\mathbb L_{x,y,\ep}(u,v;i\infty,\tau)$ in (\ref{djplus}) and $\mathbb I_{\ep,b}(u,v,\tau)$ in (\ref{diplus}) mix right-multiplicative connections $-\mathbb J_{x,y,\ep}(u,v,\tau)$ and its gauge transform by $e^{-2\pi i \tau\ep_0}$ with a left-multiplicative contribution $x \, \dd u - 2\pi i T_{01}(u) \, \dd v $. The latter can be absorbed into the total differential of a left-multiplicative 
\beq
\dd \Big( e^{ux} \, e^{-2\pi i v  t_{01}}  \Big) = 
\big(  x  \, \dd u - 2\pi i T_{01}(u) \, \dd v \big) \,\Big( e^{ux} \, e^{-2\pi i v  t_{01}}  \Big)
\label{sc5.16}
\eeq
where the relation $T_{01}(u)= e^{ux} t_{01} e^{-ux}$ of (\ref{notsec.21}) leads to the alternative form $e^{-2\pi i v T_{01}(u)}e^{ux} = e^{ux} e^{-2\pi i v t_{01}}$ of the exponentials. By passing to the ``reduced'' series $\mathbb I^{\rm red}_{\ep,b}(u,v,\tau)$ defined by
\begin{align}
 \mathbb I_{\ep,b}(u,v,\tau) &= 
e^{ux} \, e^{-2\pi i v  t_{01}} \,
 \mathbb I^{\rm red}_{\ep,b}(u,v,\tau)\, e^{2\pi i \tau \ep_0}
  \label{defired}
\end{align}
we take advantage of (\ref{sc5.16}) to eliminate the left-multiplicative terms from (\ref{djplus}) and to find the particularly simple differential equation
\beq
\text{d}
    \mathbb I^{\rm red}_{\ep,b}(u,v,\tau) =  -  \mathbb I^{\rm red}_{\ep,b}(u,v,\tau) \, \mathbb J_{x,y,\ep}(u,v,\tau)
\label{sc5.18}
\eeq

\subsubsection{Towards Brown-Levin eMPLs}
\label{sc:5.2.1}

Our next goal is to integrate the differential equation (\ref{sc5.18}) on a path in the moduli space of $(z,\tau)$ that retrieves eMPLs. For this purpose, it is advantageous to rewrite $\dd u$, $\dd v$ in terms of $\dd z$, $\dd \tau$ and their complex conjugates by inverting the relations of (\ref{sc5.01}). Indeed, this rewriting organizes the $\dd z$, $\dd \bar z$ components into the Brown-Levin connection (\ref{notsec.06})
\beq
\mathbb J_{x,y,\ep}(u,v,\tau) =
\mathbb J^{\rm BL}_{x,y}(z,\tau) + \mathbb T_{x,y,\ep}(z,\tau)
\label{sc5.17}
\eeq
at the cost of more lengthy $\dd \tau$, $\dd\bar \tau$ components
\begin{align}
 \mathbb T_{x,y,\ep}(z,\tau)  &= 2\pi i\text{d}\tau\, \biggl(\epsilon_0+\sum_{k=4}^\infty \dfrac{(k{-}1)}{(2\pi i)^k}\, {\rm G}_k(\tau)\epsilon_k-\sum_{k=2}^\infty \dfrac{(k{-}1)}{(2\pi i)^k} \, f^{(k)}(u\tau{+}v,\tau)b_k\biggr)\notag\\
    &\quad-\text{d}\tau \, u \, \ad_x\Omega(z,\ad_{\frac{x}{2\pi i}},\tau)y - u\, x\, \dfrac{\text{d}\tau -\text{d}\bar\tau}{\tau-\bar\tau}
    \label{sc5.19}
\end{align}
In the desired path-ordered exponential $\mathbbm{\Gamma}_{x,y}(z,\tau)$ of the Brown-Levin connection in (\ref{notsec.19}) and (\ref{restatcl}), the integration over $z$ is performed at generic but fixed $\tau$. This is obtained by
integrating the equivalent
\begin{align}
    \text{d}
    \mathbb I^{\rm red}_{\ep,b}(u,v,\tau)&= 
 -  \mathbb I^{\rm red}_{\ep,b}(u,v,\tau) \,\Big( \mathbb J^{\rm BL}_{x,y}(z,\tau) + \mathbb T_{x,y,\ep}(z,\tau) \Big)  
 \label{sc5.20}
\end{align}
of (\ref{sc5.18}) from $(0,i\infty)$ via $(0,\tau)$ to $(z,\tau)$.
This in turn necessitates the $z\rightarrow 0$ limit of the $\dd \tau$, $\dd \bar \tau$ components $\mathbb T_{x,y,\ep}(z,\tau)$ in (\ref{sc5.19}) which enters the left
path-ordered exponential in
\begin{align}
\mathbb I^{\rm red}_{\ep,b}(u,v,\tau) &=
\mathbb I^{\rm in}_{x,y,\ep} \,
{\rm Pexp}\bigg( \int^{i\infty}_\tau 
\lim_{z \rightarrow 0} \mathbb T_{x,y,\ep}(z,\tau_1) \bigg)\,
{\rm Pexp}\bigg( \int^{0}_z
\mathbb J^{\rm BL}_{x,y}(z_1,\tau) \bigg)
 \label{sc5.22}
\end{align}
for the path segment $(0,i\infty)$ via $(0,\tau)$ for the integration variables $(z_1,\tau_1)$. The notation $\mathbb I^{\rm in}_{x,y,\ep}$
refers to an initial value to be determined in section \ref{sec:4.3} below that does not depend on $z$ or $\tau$ but may be a series in arbitrary Lie-algebra generators.

\subsubsection{Restoring the $\tau$-dependence}
\label{sc:5.2.2}

The first path-ordered exponential on the right side of (\ref{sc5.22}) ensures a $\tau$-dependence of $\mathbb I^{\rm red}_{\ep,b}(u,v,\tau)$ as mandated by the differential equation (\ref{sc5.20}). The limit $z\rightarrow 0$ of the connection (\ref{sc5.19}) in the integrand is delicate and would be ill-defined for some of its contributions in isolation, namely $-\frac{\dd \tau}{2\pi i} f^{(2)}(z,\tau)b_2$ in the first line and $\dd \tau \, u f^{(1)}(z,\tau) b_2$ in the second line. However, their sum is found to be free of problematic terms $\sim \frac{u}{z}$ by converting the single-valued $f^{(k)}(z,\tau)$ to meromorphic
Kronecker-Eisenstein kernels $g^{(k)}(z,\tau)$ via (\ref{appA.12})
\begin{align}
 \lim_{z\rightarrow 0} \bigg( u f^{(1)}(z,\tau) 
-\frac{1}{2\pi i} \, f^{(2)}(z,\tau) \bigg) = 
 \lim_{z\rightarrow 0} \bigg( i \pi u^2
 -\frac{1}{2\pi i} \, g^{(2)}(z,\tau)\bigg) = \frac{1}{2\pi i}\, {\rm G}_2(\tau)
 \label{sc5.21}
\end{align}
where we have stripped off $\dd \tau \, b_2$ and identified the quasi-modular holomorphic Eisenstein series via ${\rm G}_2(\tau) = - g^{(2)}(0,\tau)$. With the resolution (\ref{sc5.21}) of the delicate parts of the $z\rightarrow 0$ limit of (\ref{sc5.19}) as well as $f^{(k)}(0,\tau) = - {\rm G}_k(\tau)$ for all $k\geq 3$, we find 
\begin{align}
\lim_{z\rightarrow 0}\mathbb T_{x,y,\ep}(z,\tau)
&= 2\pi i \dd \tau \, \bigg(\ep_0 + \frac{1}{(2\pi i)^2} \, {\rm G}_2(\tau) b_2 + \sum_{k=4}^\infty
\frac{(k{-}1)}{(2\pi i)^k} \,{\rm G}_k(\tau) \ep_k^{\rm TS} \bigg)  \label{gtrfD}\\
&=  e^{2\pi i \tau \ep_0} \,\bigg( \frac{\dd \tau}{2\pi i} \, {\rm G}_2(\tau) b_2 + \mathbb D_{\ep^{\rm TS}}(\tau) \bigg) \,e^{-2\pi i \tau \ep_0} + (\dd e^{2\pi i \tau \ep_0}) e^{-2\pi i \tau \ep_0} \notag
\end{align}
and recover a gauge transformed version of $\frac{\dd \tau}{2\pi i} {\rm G}_2(\tau) b_2$ plus the solely $\tau$-dependent connection $ \mathbb D_{\ep^{\rm TS}}(\tau)$ in (\ref{tsconn}) which was used for the construction of iterated Eisenstein integrals.
Indeed, by translating the gauge transformation in the second line of (\ref{gtrfD}) into a right-multiplicative $e^{-2\pi i \tau \ep_0}$ at the level of the path-ordered exponential, we arrive at
\begin{align}
{\rm Pexp}\bigg( \int^{i\infty}_\tau 
\lim_{z \rightarrow 0} \mathbb T_{x,y,\ep}(z,\tau_1) \bigg)
&= \mathbb I_{\ep^{\rm TS}}(\tau) \,
\exp \bigg( \int^{i\infty}_\tau \frac{\dd \tau_1}{2\pi i} \, {\rm G}_2(\tau_1) b_2 \bigg) \, e^{-2\pi i \tau \ep_0}
 \label{sc5.23}
\end{align}
using the vanishing of all $[b_2 ,\ep_k^{\rm TS}]$ to move the primitive of ${\rm G}_2(\tau)$ to the~right of $\mathbb I_{\ep^{\rm TS}}(\tau)$. 

\subsubsection{Assembling the $z$- and $\tau$-dependence}
\label{sc:5.2.3}

We are now in a position to assemble the path-ordered exponentials in (\ref{sc5.22}),
\begin{align}
\mathbb I^{\rm red}_{\ep,b}(u,v,\tau) &=
\mathbb I^{\rm in}_{x,y,\ep}\,
\mathbb I_{\ep^{\rm TS}}(\tau)
\,\exp \big( \ee{0}{2}{\tau} b_2 \big) \, e^{-2\pi i \tau \ep_0}
\, {\rm Pexp}\bigg( \int^{0}_z
\mathbb J^{\rm BL}_{x,y}(z_1,\tau) \bigg)
 \label{sc5.24} \\
&=
\mathbb I^{\rm in}_{x,y,\ep}
\, \mathbb I_{\ep^{\rm TS}}(\tau)  \, 
\exp \big( \ee{0}{2}{\tau} b_2 \big)\, 
{\rm Pexp}\bigg( \int^{0}_z
\mathbb J^{\rm BL}_{x-2\pi i\tau  y,y}(z_1,\tau) \bigg)\, e^{-2\pi i \tau \ep_0}
 \notag
\end{align}
with the primitive $\ee{0}{2}{\tau}$ of ${\rm G}_2(\tau)$ defined in (\ref{nmodg2}). In passing to the second line, the factor of $e^{-2\pi i \tau \ep_0}$ has been moved to the rightmost position which shifts the letters $(x,y)$ of the Brown-Levin connection to $(x-2\pi i \tau y , y)$. When furthermore inserting the reduced series into the original one in (\ref{defired}), we arrive at the close-to-final form
\begin{align}
\mathbb I_{\ep,b}(u,v,\tau) &=  e^{ux}\, e^{-2\pi i v  t_{01}}\,
\mathbb I^{\rm in}_{x,y,\ep}\,
\mathbb I_{\ep^{\rm TS}}(\tau)\,  
\exp \big( \ee{0}{2}{\tau} b_2 \big)\, 
{\rm Pexp}\bigg( \int^{0}_z
\mathbb J^{\rm BL}_{x-2\pi i\tau y,y}(z_1,\tau) \bigg)  \notag \\
&=  e^{ux}\, e^{-2\pi i v  t_{01}} \,
\mathbb I^{\rm in}_{x,y,\ep}\,
\mathbb I_{\ep^{\rm TS}}(\tau)\,
\mathbbm{\Gamma}_{x,y}(z,\tau)
 \label{sc5.25}
\end{align}
by identifying the series $\mathbbm{\Gamma}_{x,y}(z,\tau)$ in (\ref{notsec.19}) where it only remains to determine the initial value $\mathbb I^{\rm in}_{x,y,\ep}$. 

\subsection{Fixing the initial value}
\label{sec:4.3}

The expression (\ref{sc5.25}) captures the complete $z$- and $\tau$-dependence of $\mathbb I_{\ep,b}(u,v,\tau)$ but leaves an undetermined initial value $\mathbb I^{\rm in}_{x,y,\ep}$ in the middle of the concatenation product. We shall here demonstrate that the initial value is given in terms of the Drinfeld associator (\ref{notsec.11}) by
\beq
\mathbb I^{\rm in}_{x,y,\ep} = \Phinew^{-1}(t_{01},t_{12}) = \Phinew(t_{12},t_{01})
\label{inisc.01}
\eeq

\subsubsection{The consistency condition}
\label{sec:4.3.1}

The expression (\ref{sc5.22}) for $\mathbb I^{\rm red}_{\ep,b}(u,v,\tau)$ defining the initial value has been obtained by integrating its differential equation (\ref{sc5.18}) from $(0,i\infty)$ via $(0,\tau)$ to $(z,\tau)$. By flatness of the gauge-transformed CEE connection $\mathbb J_{x,y,\ep}(z,\tau)$ in (\ref{sc5.18}), the series $\mathbb I^{\rm red}_{\ep,b}(u,v,\tau)$ must equivalently arise from alternative integration paths in the same homotopy class. 
In real co-moving coordinates $u,v$ with $0<u,v\leq 1$, we now integrate along the alternative path from $(0,i\infty)$ via $(u,v,i\infty)$ to $(u,v,\tau)$ with the same (as of yet unknown) initial conditions where
\begin{align}
\mathbb I^{\rm red}_{\ep,b}(u,v,\tau) &= \mathbb I^{\rm in}_{x,y,\ep} \,
  {\rm Pexp}\bigg( \int^{(0,0,i\infty)}_{(u,v,i\infty)}\mathbb J_{x,y,\ep}(u_1,v_1,\tau_1) \ \bigg)  
\,
{\rm Pexp}\bigg( \int^{i\infty}_\tau \dd \tau_1 \,
\mathbb J^{(\tau)}_{x,y,\ep}(u,v,\tau_1) \bigg)
\notag \\
&= \mathbb I^{\rm in}_{x,y,\ep} \,
 \lim_{\tau \rightarrow i\infty} {\rm Pexp}\bigg( \int^{0}_{u\tau+v} 
\tilde{\mathbb K}_{x,y}(z_1,\tau) \bigg)\, e^{-ux}
\, \mathbb I_{\ep,b}(u,v,\tau) \,e^{-2\pi i \tau \ep_0}
\label{inisc.02}
\end{align}
using (\ref{thm.3.4l}) for the rewriting of the
rightmost path-ordered exponential in the first line. The left path-ordered exponential is performed at $\dd \tau_1= \dd \bar \tau_1 = 0$ where $\mathbb J_{x,y}(u_1,v_1,\tau_1)$ reduces to the Brown-Levin connection by (\ref{sc5.17}). The latter is gauge equivalent to the $\dd z_1$ component $\tilde{\mathbb K}_{x,y}(z_1,\tau)$ of the meromorphic CEE connection in (\ref{defKcon}), so the respective path-ordered exponentials are related by a right-multiplicative factor of $e^{-ux}$ as in (\ref{gaugepexp}) that leads to the second line of (\ref{inisc.02}).

Comparing the alternative representation (\ref{inisc.02}) with the defining equation (\ref{defired}) for $\mathbb I^{\rm red}_{\ep,b}(u,v,\tau)$, we find the consistency condition
\beq
 \mathbb I^{\rm in}_{x,y,\ep} \,
 \lim_{\tau \rightarrow i\infty} {\rm Pexp}\bigg( \int^{0}_{u\tau+v} 
\tilde{\mathbb K}_{x,y}(z_1,\tau) \bigg) \, 
 e^{-2\pi i v  t_{01}}  = 1
\label{inisc.03}
\eeq
which fixes the initial value $\mathbb I^{\rm in}_{x,y,\ep}$ in terms of the path-ordered exponential of the connection $\tilde{\mathbb K}_{x,y}$ at the cusp.

\subsubsection{Evaluating the path-ordered exponential at the cusp}
\label{sec:4.3.2}

We shall now evaluate the $\tau \rightarrow i\infty$ limit of the following path-ordered exponential that determines the initial value $\mathbb I^{\rm in}_{x,y,\ep}$ through the consistency condition (\ref{inisc.03}):
\begin{align} 
{\rm Pexp}\bigg( \int^{0}_{z} \lim_{\tau \rightarrow i\infty} 
\tilde{\mathbb K}_{x,y}(z_1,\tau) \bigg) &= 
{\rm Pexp}\bigg( - \int^{1}_{\sigma} \dd \sigma_1 \, \bigg(\frac{t_{12}}{\sigma_1{-}1} + \frac{t_{01}}{\sigma_1}  \bigg) \bigg)\notag \\
&= \Phinew^{-1}(t_{01},t_{12}) \,  \mathbb G_{t_{01},t_{12}}(e^{2\pi i z}) \label{tkpexp.01}
\end{align}
The first step is based on the degeneration (\ref{lieg1.16}) of the $\dd z$ component of the meromorphic CEE connection with coordinate $\sigma_1 = e^{2\pi i z_1}$ and generators $t_{12},t_{01}$ given by (\ref{lieg1.17}). In the second step of (\ref{tkpexp.01}), we have used the composition-of-paths formula to decompose ${\rm Pexp}(\int^1_\sigma  \tilde{\mathbb K}_{x,y} )
={\rm Pexp}(\int^1_0  \tilde{\mathbb K}_{x,y} ) {\rm Pexp}(\int^0_\sigma \tilde{\mathbb K}_{x,y} )$ and then identified the Drinfeld associator and the MPL series $\mathbb G_{t_{01},t_{12}}$ via (\ref{notsec.11}) and (\ref{svmpl.00}), respectively. 

The limit $\tau \rightarrow i\infty$ in (\ref{inisc.03}) is in fact taken at fixed co-moving coordinates $u,v \in \mathbb R$ of the endpoint $z=u\tau{+}v$ of the integration path in (\ref{tkpexp.01}). Hence, it remains to evaluate
\beq
\lim_{\tau \rightarrow i\infty} \mathbb G_{t_{01},t_{12}}(e^{2\pi i (u\tau{+}v)}) 
= \lim_{\tau \rightarrow i\infty}  e^{2\pi i (u\tau{+}v) t_{01}} = e^{2\pi i  v t_{01}}
\label{tkpexp.02}
\eeq
where the first step exploits the fact that shuffle-regularized MPLs $G(a_1,\ldots,a_r;\sigma)$ with $G(0;\sigma) = \log(\sigma)$ are suppressed by at least one power of $\sigma = e^{2\pi i (u\tau{+}v)} = q^u e^{2\pi i v}$ as $q\rightarrow 0$ if one or more of the $a_1,\ldots, a_r$ are non-zero. In the second step of (\ref{tkpexp.02}), we have taken the regularized $\tau \rightarrow i\infty$ limit suppressing any positive power of $\tau$ along with any word in generators $t_{01}$ or~$x,y$.

Inserting the combination of (\ref{tkpexp.01}) and (\ref{tkpexp.02}) into (\ref{inisc.03}) simplifies the consistency condition to
\beq
 \mathbb I^{\rm in}_{x,y,\ep} \, \Phinew^{-1}(t_{01},t_{12}) = 1
\label{altkpexp.03}
\eeq
and thus determines the initial value to be the Drinfeld associator as previewed in (\ref{inisc.01}).

\subsubsection{Assembling the complete change of fibration basis}
\label{sec:4.3.3}

With the $z,\tau$ dependence of $\mathbb I_{\ep,b}(u,v,\tau) $ in (\ref{sc5.25}) and the result (\ref{inisc.01}) for the initial value, our final expression is
\begin{align}
\mathbb I_{\ep,b}(u,v,\tau) &=  e^{ux}  \,  e^{-2\pi i v  t_{01}} \, \Phinew^{-1}(  t_{12}, t_{01} )\,  \mathbb I_{\ep^{\rm TS}}(\tau) \,
\mathbbm{\Gamma}_{x,y}(z,\tau)
\label{soliplus} \\
&=
e^{-2\pi i v T_{01}(u)} \, e^{ux} \, \Phinew^{-1}(  t_{12}, t_{01} )\,  \mathbb I_{\ep^{\rm TS}}(\tau) \,
\mathbbm{\Gamma}_{x,y}(z,\tau)
\notag
\end{align}
This is the advertised change of fibration basis converting the iterated $\tau$ integrals in the definition (\ref{notsec.17}) of $\mathbb I_{\ep,b}(u,v,\tau)$ to eMPLs generated by $\mathbbm{\Gamma}_{x,y}(z,\tau)$. The second line is obtained from the aforementioned identity $e^{-2\pi i v T_{01}(u)}e^{ux} = e^{ux} e^{-2\pi i v t_{01}}$ due to (\ref{notsec.21}) and will be more convenient for the next section.

\subsection{Producing the $\mathbb I^{\rm eqv}_{\ep^{\rm TS}}(\tau)$ constituent of 
$\mathbb I^{\rm eqv}_{\ep,b}(u,v,\tau)$}
\label{sec:4.5}

This section finishes the proof of Theorem \ref{3.thm:1} by inserting the change of fibration basis (\ref{soliplus}) into (\ref{grteq.01}). 
Its right-hand side then takes the form
\begin{align}
\mathbb I^{\rm eqv}_{\ep,b}(u,v,\tau)  &= 
(\mathbb M^{\rm sv}_{z})^{-1}\, 
\overline{\mathbbm{\Gamma}_{x,y}(z,\tau)^T} \, \overline{ \mathbb I_{\ep^{\rm TS}}(\tau)^T} \, \overline{ \Phinew^{-1}(  t_{12}, t_{01} )^T}\, e^{-ux} \,
e^{2\pi i v T_{01}(u)}  \notag \\
&\quad \times \mathbb M^{\rm sv}_{\Sigma(u)}\,
e^{-2\pi i v T_{01}(u)} \, e^{ux} \, \Phinew^{-1}(  t_{12}, t_{01} )\,  \mathbb I_{\ep^{\rm TS}}(\tau) \,
\mathbbm{\Gamma}_{x,y}(z,\tau)
\label{tkpexp.03}
\end{align}
which we shall here match with the right side of (\ref{altex.01}).

\subsubsection{Simplifying the MZV sector}
\label{sec:4.5.1}

A first step in simplifying the series between the Drinfeld associators in (\ref{tkpexp.03}) is to note that
\beq
e^{-ux}\,
e^{2\pi i v  T_{01}(u)} \, \mathbb M^{\rm sv}_{\Sigma(u)} \, e^{-2\pi i v  T_{01}(u)} \, e^{ux} = \mathbb M^{\rm sv}_{\Sigma(0)} 
\label{tkpexp.04}
\eeq
Since $T_{01}(u)$ commutes with the augmented zeta generators $\Sigma_w(u)$ in $\mathbb M^{\rm sv}_{\Sigma(u)}$, see section \ref{sec:3.3.2}, the exponentials of $\pm 2\pi i v T_{01}(u)$ on the left side of (\ref{tkpexp.04}) cancel. The simple $u$-dependence in the definition (\ref{not.06}) of augmented zeta generators propagates to the entire series $\mathbb M^{\rm sv}_{\Sigma(u)} = e^{ux} \mathbb M^{\rm sv}_{\Sigma(0)} e^{-ux}$ and explains the cancellation of $e^{\pm ux}$ from (\ref{tkpexp.04}).

The transposition properties of $x,y$ in (\ref{lieg1.32}) imply that both of $t_{12},t_{01}$ in (\ref{lieg1.17}) are even in the sense that $t_{12}^T=t_{12}$ and $t_{01}^T=t_{01}$. As a result, the transposition of the Drinfeld associator $\overline{ \Phinew^{-1}(  t_{12}, t_{01} )^T}$ compensates for the inversion up to an alternating minus sign with the number of letters,
\beq
\overline{ \Phinew^{-1}(  t_{12}, t_{01} )^T} = \Phinew( {-} t_{12},{-} t_{01} )
\label{tkpexp.05}
\eeq
using its reality to remove the complex-conjugation bar. The simplifications of (\ref{tkpexp.04}) and (\ref{tkpexp.05}) cast (\ref{tkpexp.03}) into the form of
\begin{align}
\mathbb I^{\rm eqv}_{\ep,b}(u,v,\tau)  &= 
(\mathbb M^{\rm sv}_{z})^{-1}\, 
\overline{\mathbbm{\Gamma}_{x,y}(z,\tau)^T} \, \overline{ \mathbb I_{\ep^{\rm TS}}(\tau)^T} \,  \Phinew( {-} t_{12}, {-} t_{01} )  \, \mathbb M^{\rm sv}_{\Sigma(0)}  \, \Phinew^{-1}(  t_{12}, t_{01} )\,  \mathbb I_{\ep^{\rm TS}}(\tau) \,
\mathbbm{\Gamma}_{x,y}(z,\tau)
\label{tkpexp.06}
\end{align}
The next simplification step is based on the following lemma:

\begin{lemma}
\label{sigphilem}
The series $\mathbb M^{\rm sv}_{\bullet}$ in (\ref{not.05}) with genus-one zeta generators $\sigma_w$ and their augmented versions $\Sigma_w(0) = P_w(t_{12},t_{01}){+}\sigma_w $ in (\ref{not.06}) are related by
\beq
 \Phinew( {-} t_{12}, {-} t_{01} )  \, \mathbb M^{\rm sv}_{\Sigma(0)}  \, \Phinew^{-1}(  t_{12}, t_{01} )
= \mathbb M^{\rm sv}_{\sigma}
\label{tkpexp.07s}
\eeq
\end{lemma}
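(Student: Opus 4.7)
The plan is to reduce (\ref{tkpexp.07s}) to an identity in the $f$-alphabet and to verify it by induction on the $f$-word length. Since all three of $\Phinew(\pm t_{12}, \pm t_{01})$, $\mathbb M^{\rm sv}_{\Sigma(0)}$ and $\mathbb M^{\rm sv}_{\sigma}$ are group-like series whose coefficients are polynomials in (single-valued) MZVs, applying the isomorphism $\rho$ of appendix \ref{sec:A.2} places the identity in the graded algebra of formal power series in $f_3, f_5, \ldots$ tensored with the universal enveloping algebra of the Lie algebra generated by $t_{12}, t_{01}$ and $\sigma_3, \sigma_5, \ldots$, where equality can be checked order by order in the $f$-word length.

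As a base case I would verify the identity at depth one in $f$. Since $P_w$ is a Lie polynomial of odd total degree $w$, we have $P_w(-t_{12}, -t_{01}) = - P_w(t_{12}, t_{01})$, so both outer factors contribute the same depth-one piece:
\begin{align*}
\Phinew(-t_{12}, -t_{01}) &= 1 - \sum_w \zeta_w P_w(t_{12}, t_{01}) + (\text{higher depth}),\\
\Phinew^{-1}(t_{12}, t_{01}) = \Phinew(t_{01}, t_{12}) &= 1 - \sum_w \zeta_w P_w(t_{12}, t_{01}) + (\text{higher depth}).
\end{align*}
Combined with the depth-one truncation $\mathbb M^{\rm sv}_{\Sigma(0)} = 1 + 2\sum_w \zeta_w [P_w(t_{12}, t_{01}) + \sigma_w] + (\text{higher depth})$, which uses $\sv(f_w) = 2 f_w$, the depth-one part of the left-hand side of (\ref{tkpexp.07s}) collapses to $2\sum_w \zeta_w \sigma_w$, matching the depth-one truncation of $\mathbb M^{\rm sv}_{\sigma}$.

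Conceptually, the full identity reflects the fact that the shift $\sigma_w \mapsto \Sigma_w(0) = P_w(t_{12}, t_{01}) + \sigma_w$ swaps the commutation frame: while $\sigma_w$ commutes with $t_{12}$ by (\ref{dfprpsig}), an immediate corollary of $P_w(t_{12}, t_{01}) = -P_w(t_{01}, t_{12})$ together with $[t_{01}, \sigma_w] = [P_w(t_{12}, t_{01}), t_{01}]$ is that $[t_{01}, \Sigma_w(0)] = 0$. Conjugation by the Drinfeld associator is precisely the Ihara-type transformation interpolating between these two presentations of the genus-zero zeta generators, and together with the group-like property of $\mathbb M^{\rm sv}_\bullet$ it drives the inductive step: at depth $r \geq 2$ the new contributions from expanding $\Phinew(\pm t_{12}, \pm t_{01})$ are expected to compensate the discrepancy between $\mathbb M^{\rm sv}_{\Sigma(0)}$ and $\mathbb M^{\rm sv}_{\sigma}$ order by order.

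The main obstacle will be making the higher-depth step rigorous. At depth $\geq 2$ in $f$, the nested Lie polynomials in the logarithm of $\Phinew$ interact non-trivially with both the brackets $[\sigma_{w_1}, \sigma_{w_2}]$ implicit in $\mathbb M^{\rm sv}_{\sigma}$ and with the higher-depth commutators mixing $P_w$'s and $\sigma_w$'s in $\mathbb M^{\rm sv}_{\Sigma(0)}$. The cleanest route is likely to recast (\ref{tkpexp.07s}) as an identity in the group of group-like elements of the Hopf algebra of motivic MZVs and to invoke the Ihara-compatibility of the single-valued map, effectively reducing the claim to a standard statement about the change of base point in the unipotent motivic fundamental group of the thrice-punctured sphere transferred along the map $(e_0, e_1) \mapsto (t_{12}, t_{01})$. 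Alternatively, one may argue directly from the factorized form $\mathbb M_\sigma = \overline{\mathbb M_\sigma^T}\cdot\mathbb M^{\rm sv}_\sigma$ of the single-valued projection, applying it in parallel to both $\sigma_w$ and $\Sigma_w(0)$ presentations and exploiting that the $\Phinew$-conjugation is real and respects the transposition operation.
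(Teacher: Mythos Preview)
Your strategy of passing to the $f$-alphabet and arguing by induction on $f$-word length is exactly the paper's approach, and your depth-one verification is correct. However, you have a genuine gap at the inductive step, which you yourself flag as ``the main obstacle''. The appeals to ``Ihara-compatibility of the single-valued map'' or a ``standard statement about the change of base point'' are not a proof, and the factorization you write at the end, $\mathbb M_\sigma = \overline{\mathbb M_\sigma^T}\cdot\mathbb M^{\rm sv}_\sigma$, is not correct: the relation in the paper is $\mathbb M_\sigma^{\rm sv} = \mathbb M_\sigma^T \, \mathbb M_\sigma$, with the single-valued series on the left and no complex conjugation.

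The missing ingredient is the infinitesimal coaction $\partial_{f_w}$ (clipping off the rightmost odd $f$-letter) and its explicit action on the Drinfeld associator. The paper uses the known formula
\[
\partial_{f_w}\, \rho\big(\Phi(-t_{12}, -t_{01})\big) = -\,\rho\big(\Phi(-t_{12}, -t_{01})\big) \circ P_w(t_{12}, t_{01}),
\]
where $V \circ W = VW - D_W(V)$ is the Ihara product with $D_L(t_{12}) = 0$ and $D_L(t_{01}) = [t_{01}, L]$. The decisive observation, which follows directly from the defining relations (\ref{dfprpsig}) of $\sigma_w$, is that $D_{P_w(t_{12},t_{01})}$ coincides with $\mathrm{ad}_{\sigma_w}$ on any word in $t_{01}, t_{12}$. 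This converts the Ihara-type formulae into ordinary concatenations and commutators with $\sigma_w$, after which a short direct computation shows that the left-hand side $\Psi$ of (\ref{tkpexp.07s}) satisfies the same recursion $\partial_{f_w} \Psi = \sigma_w \Psi + \Psi \sigma_w$ as $\mathbb M^{\rm sv}_\sigma$. Together with the matching of the purely-$f_2$ parts (both equal $1$, since $\Phi(-t_{12},-t_{01})\big|_{f_2}=\Phi(t_{12},t_{01})\big|_{f_2}$ by parity and $\mathbb M^{\rm sv}_{\Sigma(0)}\big|_{f_2}=1$), this closes the induction at every depth. Your conceptual remark that $[\sigma_w,t_{12}]=0$ while $[\Sigma_w(0),t_{01}]=0$ is correct and points in the right direction, but it is the Ihara-derivation identity that turns it into a working inductive step rather than a heuristic.
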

The lemma is proven in appendix \ref{app:phi} and simplifies (\ref{tkpexp.06}) to
\begin{align}
\mathbb I^{\rm eqv}_{\ep,b}(u,v,\tau)
&= (\mathbb M^{\rm sv}_z)^{-1}
\overline{\mathbbm{\Gamma}_{x,y}(z,\tau)^T} \,
 \overline{\mathbb I_{\ep^{\rm TS}}(\tau)^T} \,
\mathbb M^{\rm sv}_{\sigma}\,
\mathbb I_{\ep^{\rm TS}}(\tau)\, 
\mathbbm{\Gamma}_{x,y}(z,\tau)
\label{tkpexp.00}
\end{align}

\subsubsection{Finishing the proof}
\label{sec:4.5.2}

Now that the first expression (\ref{grteq.01}) for $\mathbb I^{\rm eqv}_{\ep,b}(u,v,\tau)$ has been brought into the form of (\ref{tkpexp.00}), it is straightforward to match it with the target expression in (\ref{altex.01}): Rewriting three of the middle terms of (\ref{tkpexp.00}) as
\beq
\overline{ \mathbb I_{\ep^{\rm TS}}(\tau)^T}\,  
\mathbb M^{\rm sv}_\sigma \, \mathbb I_{\ep^{\rm TS}}(\tau)
=
\mathbb M^{\rm sv}_z \, \mathbb I^{\rm eqv}_{\ep^{\rm TS}}(\tau) 
\eeq
using (\ref{lieg1.51}) clearly reproduces (\ref{altex.01}) which completes the  proof of Theorem \ref{3.thm:1} and the objective of this section.

\section{Connection with eMGFs}
\label{sec:5}

In this section, we connect the equivariant iterated integrals and single-valued eMPLs constructed in section \ref{sec:3.2} with eMGFs which were reviewed in section \ref{sec:2.2}. A particularly convenient flavor of iterated integrals in modular frame is identified in section \ref{sec:itinemgf}. We then display the resulting expressions for several eMGFs in one variable in section \ref{sec:6.big2} and comment on connections with the literature. Finally, section \ref{sec:6.4} describes a group-theoretic method to infer the counting of independent eMGFs in one variable from shuffle-independent single-valued eMPLs.

\subsection{Iterated integrals for eMGFs}
\label{sec:itinemgf}

The equivariant and single-valued generating series of section \ref{sec:3} were analyzed in both holomorphic frame -- $\mathbb I^{\rm eqv}_{\ep,b}(u,v,\tau)$ in (\ref{grteq.01}) or $\mathbbm \Gamma_{x,y}^{\rm sv}(z,\tau)$ in (\ref{cor.3.4a}) -- and in modular frame -- $\mathbb H^{\rm eqv}_{\ep,b}(u,v,\tau)$ in (\ref{thm.3.4a}) or $\mathbbm \Lambda_{x,y}^{\rm sv}(z,\tau)$ in (\ref{cor.3.4c}). 
While the composing iterated integrals over modular parameters in holomorphic frame are indeed (anti)holomorphic in $\tau$, the series coefficients of $\mathbb H^{\rm eqv}_{\ep,b}(u,v,\tau)$ in (\ref{thm.3.4a}) or $\mathbbm \Lambda_{x,y}^{\rm sv}(z,\tau)$ in modular frame transform as modular forms of ${\rm SL}_2(\mathbb Z)$, see for instance (\ref{cor.3.4e}) and (\ref{cor.3.4f}).

Particularly natural iterated-integral representations of eMGFs arise from the above series in modular frame, since
\begin{itemize}
\item eMGFs individually transform as modular forms, as spelled out in (\ref{modeMGF}) for their dihedral cases and manifested by their surface-integral representations (\ref{genint}) for arbitrary graph topologies;
\item the $\tau$ derivatives of eMGFs \cite{DHoker:2016mwo, Dhoker:2020gdz} only produce kernels $\dd \tau \, {\rm G}_k$ and $\dd \tau \, f^{(k)}(u\tau{+}v,\tau)$ in (\ref{emgfkers}) and their complex conjugates, without any admixtures of independent powers $\tau^{j=1,\ldots,k-2}$ that one would get from the iterated integrals ${\cal E}[\ldots;\tau]$ in (\ref{bsc.07}) of the series in holomorphic frame in~(\ref{notsec.17}).
\end{itemize}
This section \ref{sec:itinemgf} offers a more detailed description of the iterated-integral constituents of eMGFs in modular frame.

\subsubsection{Integration kernels in modular frame}
\label{sec:itemgf.1}

We start by spelling out the integration kernels $\sim \dd \tau,\, \dd \bar \tau$ that arise from conjugating the connection $ {\mathbb D}_{\ep,b}(u,v,\tau_1) $ in (\ref{defdepb}) by the $\exp(\mathfrak{sl}_2)$ transformation $\Umod(\tau)$ towards modular frame in (\ref{usl2}). By distinguishing the integration variable $\tau_1$ of the path-ordered exponential (\ref{notsec.17}) from the endpoint $\tau$ of its path entering $\Umod(\tau)$, the Lie-algebra generators $b_k^{(j)}$ and $\ep_k^{(j)}$ in (recall that $\ep_2=0$)
\beq
\Umod(\tau) \,  {\mathbb D}_{\ep,b}(u,v,\tau_1)  \,\Umod(\tau)^{-1}
= \sum_{k=2}^\infty (k{-}1)\sum_{j=0}^{k-2} \frac{(-1)^j}{j!} \, \bigg\{
\wpz{j \\ k \\ z}{\tau_1}\, b_k^{(j)}
+ \wpz{j \\ k }{\tau_1} \ep_k^{(j)}
\bigg\}
\label{intemgf.01}
\eeq
are now accompanied by $z$-dependent one-forms in $\tau_1$ scalar in $\tau$,
\begin{align}
\wpz{j \\ k \\ z}{\tau_1}&=- \frac{\text{d}\tau_1}{2\pi i} \,(-1)^{k} \,\bigg( \frac{\tau{-}\tau_1}{ 4\pi \Im \tau } \bigg)^{k-2-j} (\bar\tau{-}\tau_1)^{j}\, f^{(k)} (u\tau_1{+}v,\tau_1) \notag \\
\wmz{j \\ k \\ z}{\tau_1}&=\frac{ \text{d}\bar{\tau}_1}{2\pi i } \,\bigg( \frac{\tau{-}\bar\tau_1}{4\pi \Im \tau}  \bigg)^{k-2-j} (\bar\tau{-}\bar\tau_1)^{j} \,\overline{f^{(k)}(u \tau_1{+}v,\tau_1)}
\label{intemgf.02}
\end{align}
The modular Eisenstein kernels $\omega_{\pm}[  \smallmatrix j \\ k\endsmallmatrix;\tau,\tau_1]$ with the same form degrees known from~\cite{Dorigoni_2022}
\begin{align}
\wpz{j \\ k }{\tau_1}&= \frac{\text{d}\tau_1}{2\pi i} \, \bigg( \frac{\tau{-}\tau_1}{ 4\pi \Im \tau } \bigg)^{k-2-j} (\bar\tau{-}\tau_1)^{j} \, {\rm G}_k(\tau_1) \notag \\
\wmz{j \\ k}{\tau_1}&=-\frac{ \text{d}\bar{\tau}_1}{2\pi i } \,\bigg( \frac{\tau{-}\bar\tau_1}{4\pi \Im \tau}  \bigg)^{k-2-j} (\bar\tau{-}\bar\tau_1)^{j} \, \overline{{\rm G}_k(\tau_1)}
\label{intemgf.03}
\end{align}
For both cases, we have also defined $(0,1)$-forms $\omega_-[\ldots;\tau,\tau_1]$ in $\tau_1$ that will be relevant to the image of the complex conjugate series $\overline{\mathbb I_{\ep,b}(u,v,\tau)^T}$ in modular frame. 
Note that the replacement of integration kernels $f^{(k)} (u\tau_1{+}v,\tau_1) \rightarrow -{\rm G}_k(\tau_1)$ with a minus sign in passing from $\wpmz{j \\ k \\ z}{\tau_\ell} $ to the analogue $\wpmz{j \\ k }{\tau_\ell}$ with an empty third line is chosen in view of $f^{(k)} (0,\tau_1) = -{\rm G}_k(\tau_1)$ for $k\geq 3$.

\subsubsection{Iterated integrals in modular frame}
\label{sec:itemgf.2}

Upon transformation to modular frame, the series $\mathbb I_{\ep,b}(u,v,\tau)$ and $\overline{\mathbb I_{\ep,b}(u,v,\tau)^T}$ generate iterated integrals of the one-forms (\ref{intemgf.02}) and (\ref{intemgf.03}) which we shall denote by
\begin{align}
\bplusz{j_1 &j_2&\cdots&j_\ell}{k_1 &k_2&\cdots&k_\ell}{z &z&\cdots&z}{\tau}=&\int_{\tau}^{i \infty}\wpz{j_\ell \\ k_\ell \\ z}{\tau_\ell}\cdots\int_{\tau_3}^{i \infty}\wpz{j_2 \\ k_2 \\ z}{\tau_2}\int_{\tau_2}^{i \infty}\wpz{j_1 \\ k_1 \\ z}{\tau_1} \label{intemgf.04}\\
\bminusz{j_1 &j_2&\cdots&j_\ell}{k_1 &k_2&\cdots&k_\ell}{z &z&\cdots&z}{\tau}=&\int_{\bar\tau}^{-i\infty}\wmz{j_\ell \\ k_\ell \\ z}{\tau_\ell}\cdots\int_{\bar\tau_3}^{-i\infty}\wmz{j_2 \\ k_2 \\ z}{\tau_2}\int_{\bar\tau_2}^{-i\infty}\wmz{j_1 \\ k_1 \\ z}{\tau_1} \notag
\end{align}
and symbols like $\bpmz{j_1 &\cdots&j_{\ell-1}&j_\ell}{k_1 &\cdots&k_{\ell-1}&k_\ell}{z &\cdots&z&}{\tau}$ with empty slots in the third line if some of the integration kernels are replaced by their $z$-independent counterparts, $\wpmz{j_i \\ k_i \\ z}{\tau_i} \rightarrow \wpmz{j_i \\ k_i }{\tau_i}$. By the summation range in (\ref{intemgf.01}) and $\ep_2=0$, the entries in (\ref{intemgf.04}) are understood to take values $k_i\geq 2$ and $0\leq j_i\leq k_i{-}2$, further restricted to $k_i\geq 4$ even in $z$-independent columns.

The advantage of the one-forms (\ref{intemgf.02}), (\ref{intemgf.03}) and their iterated integrals (\ref{intemgf.04}) for applications to eMGFs is that their $\tau$-derivatives directly produce the kernels (\ref{emgfkers}) in their differential equations (avoiding separate appearance of $\tau^j f^{(k)}(u\tau{+}v,\tau)$ for different $0\leq j\leq k{-}2$),
\begin{align}
2\pi i(\tau{-}\bar \tau)^2 \partial_\tau  \bplusz{j_1 &j_2& \ldots &j_\ell}{k_1 &k_2 &\ldots &k_\ell}{z&z&\ldots&z}{\tau} &= \sum_{i=1}^\ell (k_i{-}j_i{-}2) \bplusz{j_1 & \ldots &j_i+1 &\ldots &j_\ell}{k_1 &\ldots &k_i &\ldots &k_\ell}{z&\ldots&z&\ldots&z}{\tau}\label{dtauzbeta} \\
&\quad  + \delta_{j_\ell,k_\ell-2} (\tau{-}\bar \tau)^{k_\ell} f^{(k_\ell)}(u\tau{+}v,\tau) 
 \bplusz{j_1 &j_2& \ldots &j_{\ell-1}}{k_1 &k_2 &\ldots &k_{\ell-1}}{z&z&\ldots&z}{\tau} 
 \notag \\
2\pi i(\tau{-}\bar \tau)^2 \partial_\tau  \bminusz{j_1 &j_2& \ldots &j_\ell}{k_1 &k_2 &\ldots &k_\ell}{z&z&\ldots&z}{\tau}
&= \sum_{i=1}^\ell (k_i{-}j_i{-}2) \bminusz{j_1 & \ldots &j_i+1 &\ldots &j_\ell}{k_1 &\ldots &k_i &\ldots &k_\ell}{z&\ldots&z&\ldots&z}{\tau} \notag
\end{align} 
with the following variant in case of an Eisenstein kernel associated with the rightmost column
\begin{align}
2\pi i(\tau{-}\bar \tau)^2 \partial_\tau  \bplusz{j_1 & \ldots &j_{\ell-1} &j_\ell}{k_1  &\ldots &k_{\ell-1} &k_\ell}{z&\ldots&z&}{\tau} &= \sum_{i=1}^\ell (k_i{-}j_i{-}2) \bplusz{j_1 & \ldots &j_i+1 &\ldots &j_{\ell-1} &j_\ell}{k_1 &\ldots &k_i &\ldots &k_{\ell-1} &k_\ell}{z &\ldots &z &\ldots &z &}{\tau}\label{dtaubeta} \\
&\quad  - \delta_{j_\ell,k_\ell-2} (\tau{-}\bar \tau)^{k_\ell} {\rm G}_{k_\ell}(\tau) 
 \bplusz{j_1   \ldots &j_{\ell-1}}{k_1  &\ldots &k_{\ell-1}}{z&\ldots&z}{\tau} 
\notag
\end{align} 
The antiholomorphic derivatives have a similar structure but involve an extra term which can be absorbed into the Maa\ss{} operator $(\bar \tau{-}\tau)\partial_{\bar \tau}+w$ tailored to modular forms of antiholomorphic weight $w=\sum_{i=1}^{\ell}(k_i{-}2{-}2j_i)$  (see section 2.1.4 of \cite{Dorigoni:2024oft}) and can be found in appendix \ref{secdifc}.

The iterated integrals $\beta_{\pm}[\ldots;\tau]$ of modular frame in (\ref{intemgf.04}) can be straightforwardly translated into finite combinations of the ${\cal E}[\ldots;\tau]$ of holomorphic frame in (\ref{bsc.07}), 
\begin{align}
\bplusz{j_1 & j_2 &\cdots & j_\ell}{k_1& k_2& \cdots & k_\ell}{z & z&\cdots & z}{\tau}&=\sum_{p_1=0}^{k_1-j_1-2}\sum_{p_2=0}^{k_2-j_2-2}\cdots\sum_{p_\ell=0}^{k_\ell-j_\ell-2}{k_1{-}j_1{-}2\choose p_1}{k_2{-}j_2{-}2\choose p_2}\cdots{k_\ell{-}j_\ell{-}2\choose p_\ell}  \label{intemgf.07}\\
&\quad \times \bigg(\frac{1}{4\pi \Im \tau} \bigg)^{p_1+\cdots+p_\ell}\sum_{r_1=0}^{j_1+p_1}\sum_{r_2=0}^{j_2+p_2}\cdots\sum_{r_\ell=0}^{j_\ell+p_\ell}{j_1{+}p_1\choose r_1}{j_2{+}p_2\choose r_2}\cdots{j_\ell{+}p_\ell\choose r_\ell} \notag \\
&\quad\times(-2\pi i\bar{\tau})^{r_1+\cdots+r_\ell}\eezno{j_1+p_1-r_1 & j_2+p_2-r_2 &\cdots&j_\ell+p_\ell-r_\ell}{k_1&k_2&\cdots&k_\ell}{z&z&\cdots&z} \notag \\
\bminusz{j_1 & j_2 &\cdots & j_\ell}{k_1& k_2& \cdots & k_\ell}{z & z&\cdots & z}{\tau}&=\sum_{p_1=0}^{k_1-j_1-2}\sum_{p_2=0}^{k_2-j_2-2}\cdots\sum_{p_\ell=0}^{k_\ell-j_\ell-2}{k_1{-}j_1{-}2\choose p_1}{k_2{-}j_2{-}2\choose p_2}\cdots{k_\ell{-}j_\ell{-}2\choose p_\ell} \notag\\
&\quad\times \bigg(\frac{1}{4\pi \Im \tau} \bigg)^{p_1+\cdots+p_\ell}\sum_{r_1=0}^{j_1+p_1}\sum_{r_2=0}^{j_2+p_2}\cdots\sum_{r_\ell=0}^{j_\ell+p_\ell}{j_1{+}p_1\choose r_1}{j_2{+}p_2\choose r_2}\cdots{j_\ell{+}p_\ell\choose r_\ell} \notag\\
&\quad \times(-1)^{\sum_{i=1}^\ell (j_i+p_i-r_i)}(-2\pi i\bar{\tau})^{r_1+\cdots+r_\ell}\eezbno{j_1+p_1-r_1 & j_2+p_2-r_2 &\cdots&j_\ell+p_\ell-r_\ell}{k_1&k_2&\cdots&k_\ell}{z&z&\cdots&z}
\notag
\end{align}
as can be verified by binomial expansion of factors such as $(\tau{-}\tau_1)^{k-2-j}$ or $(\bar \tau{-}\tau_1)^j$ in (\ref{intemgf.02}), (\ref{intemgf.03}). The  relations (\ref{intemgf.07}) still hold if we replace any subset of the entries $z$ on the third line by empty slots, provided that this is done consistently in the same columns of the $\beta_{\pm}\!\left[\begin{smallmatrix}
    \dots\\\dots\\\dots
\end{smallmatrix}\right]$ and $\cal{E}\!\left[\begin{smallmatrix}
    \dots\\\dots\\\dots
\end{smallmatrix}\right]$ (or $\overline{\cal{E}\!\left[\begin{smallmatrix}
    \dots\\\dots\\\dots
\end{smallmatrix}\right]}$) on both sides.

\subsubsection{Generating series in modular frame}
\label{sec:itemgf.3}

The conjugation by $U_{\rm mod}(\tau)$ that produces the equivariant and single-valued series $\mathbb H^{\rm eqv}_{\ep,b}(u,v,\tau)$, $\mathbbm \Lambda_{x,y}^{\rm sv}(z,\tau)$ in modular frame is most conveniently performed at the level of the individual  series factors in (\ref{grteq.01}). The factors $\mathbb I_{\ep,b}(u,v,\tau)$, $\overline{\mathbb I_{\ep,b}(u,v,\tau)^T}$ (anti-)holomorphic in $\tau$ then transform into generating series of the iterated integrals $\beta_{\pm}[\ldots;\tau]$ in (\ref{intemgf.04}), 
\begin{align}
\mathbb{H}^{+}_{\ep,b}(u,v,\tau) &=U_{\rm mod}(\tau)\, \mathbb{I}_{\ep,b}(u,v,\tau) \, U_{\rm mod}(\tau)^{-1} \label{intemgf.08}\\
\mathbb{H}^{-}_{\ep,b}(u,v,\tau)&=U_{\rm mod}(\tau) \, \overline{\mathbb{I}_{\ep,b}(u,v,\tau)^T} \, U_{\rm mod}(\tau)^{-1}
\notag
\end{align}
namely
\begin{align}
\mathbb H^{+}_{\ep,b}(u,v,\tau) 
&= 1+\sum_{k_1=2}^\infty (k_1{-}1) \sum_{j_1=0}^{k_1-2} \dfrac{(-1)^{j_1}}{j_1!}\biggl\{\bplus{j_1}{k_1}{\tau}\epsilon_{k_1}^{(j_1)}+\bplusz{j_1}{k_1}{z}{\tau}b_{k_1}^{(j_1)}\biggr\}
\label{intemgf.10} \\
    &\quad+\sum_{k_1,k_2=2}^\infty (k_1{-}1)(k_2{-}1)  \sum_{j_1=0}^{k_1-2}  \sum_{j_2=0}^{k_2-2}\dfrac{ (-1)^{j_1+j_2}}{j_1!j_2!}\biggl\{\bplus{j_1&j_2}{k_1&k_2}{\tau}\epsilon_{k_1}^{(j_1)}\epsilon_{k_2}^{(j_2)}  \notag\\
    &\quad \ \
   +\bplusz{j_1&j_2}{k_1&k_2}{z&z}{\tau}b_{k_1}^{(j_1)}b_{k_2}^{(j_2)} 
    {+}\bplusz{j_1&j_2}{k_1&k_2}{&z}{\tau}\epsilon_{k_1}^{(j_1)}b_{k_2}^{(j_2)}
    {+}\bplusz{j_1&j_2}{k_1&k_2}{z&}{\tau}b_{k_1}^{(j_1)}\epsilon_{k_2}^{(j_2)}\biggr\}+\dots \notag
\end{align}
and the series $\mathbb{H}^{-}_{\ep,b}(u,v,\tau)$ in $\beta_-[\ldots,\tau]$ is given by (\ref{ccH}). Their equivariant combination then takes the form
\beq
\mathbb H^{\rm eqv}_{\ep,b}(u,v,\tau) = 
(\mathbb M^{\rm sv}_z)^{-1} \,
\mathbb{H}_{\ep,b}^-(u,v,\tau)
\, U_{\rm mod}(\tau) \, \mathbb M^{\rm sv}_{\Sigma(u)} \, U_{\rm mod}(\tau)^{-1}\, 
\mathbb{H}_{\ep,b}^+(u,v,\tau)
\label{defhplmi}
\eeq
which reveals a minor downside of modular frame: The $\tau$-independent series $\mathbb M^{\rm sv}_{\Sigma(u)}$ in single-valued MZVs and augmented zeta generators of section \ref{sec:3.1.3} now acquires polynomial dependence on $2\pi i \bar \tau$ and $(4\pi \Im \tau)^{-1}$ at each order in the expansion of $U_{\rm mod}(\tau)  \mathbb M^{\rm sv}_{\Sigma(u)}  U_{\rm mod}(\tau)^{-1}$ with respect to $\ep_k^{(j)}$, $b_k^{(j)}$ and $z_w$. Still, the $\mathfrak{sl}_2$-invariant factors of $z_w$ can eventually be made to cancel those of the leftmost factor $(\mathbb M^{\rm sv}_z)^{-1}$ in (\ref{defhplmi}) by the reasoning around (\ref{remzs}).

The resulting expansion of (\ref{defhplmi}) in words in $\ep_k^{(j)}$, $b_k^{(j)}$ does not have well-defined coefficients unless we prescribe a scheme of modding out by the bracket relations among the generators.\footnote{See item $(v)$ of Theorem \ref{3.thm:2} for the analogous ambiguities in holomorphic frame.} A first type of bracket relations expressing any $[\ep_{k_1}^{(j_1)}, b_{k_2}^{(j_2)}]$ as a Lie polynomials in $b_k^{(j)}$ is discussed in (\ref{intrel}) and appendix \ref{sec:D.2}. These relations are systematically accounted for by moving all the $\ep_k^{(j)}$ in (\ref{defhplmi}) to the left of the free-Lie-algebra generators~$b_{k_i}^{(j_i)}$, e.g.
\begin{align}
\mathbb H^{\rm eqv}_{\ep,b}(u,v,\tau) &= 1+\sum_{k_1=2}^\infty (k_1{-}1) \sum_{j_1=0}^{k_1-2} \dfrac{(-1)^{j_1}}{j_1!}\biggl\{ 
\beqvtau{j_1}{k_1}{ \tau }
\epsilon_{k_1}^{(j_1)}+
\beqvztau{j_1 }{k_1 }{z }{\tau}
b_{k_1}^{(j_1)}\biggr\}\notag\\
&\quad+\sum_{k_1,k_2=2}^\infty (k_1{-}1)(k_2{-}1)  \sum_{j_1=0}^{k_1-2}  \sum_{j_2=0}^{k_2-2}\dfrac{ (-1)^{j_1+j_2}}{j_1!j_2!}\biggl\{
\beqvtau{j_1 &j_2}{k_1 &k_2}{ \tau }
\epsilon_{k_1}^{(j_1)}\epsilon_{k_2}^{(j_2)}  \notag\\
    &\qquad \qquad 
    +\beqveqvtau{j_1}{k_1}{\phantom{z}}{j_2}{k_2}{z}{\tau}
\epsilon_{k_1}^{(j_1)}b_{k_2}^{(j_2)}
   +
\beqvztau{j_1 &j_2}{k_1 &k_2}{z &z}{\tau}  
 b_{k_1}^{(j_1)}b_{k_2}^{(j_2)} \biggr\}+\dots\label{sec5.1_1}
\end{align}
after eliminating $ b_{k_2}^{(j_2)}\epsilon_{k_1}^{(j_1)} = [b_{k_2}^{(j_2)},\epsilon_{k_1}^{(j_1)}]+ \epsilon_{k_1}^{(j_1)}b_{k_2}^{(j_2)}$.
However, terms involving two or more $\ep_k^{(j)} = \epsilon^{(j){\rm TS}}_{k}{-}b^{(j)}_{k}$ obey the Pollack relations among Tsunogai derivations such that the coefficients $\beta^{\rm eqv}[\ldots;\tau]$ are only well-defined for degrees $k_1{+}k_2{+}\ldots \leq 13$, and for arbitrary degrees at modular depth one. In these cases below degree 14, the series factors $\mathbb{H}^{-}_{\ep,b}(u,v,\tau) \ldots
\mathbb{H}^{+}_{\ep,b}(u,v,\tau)$ of $\mathbb H^{\rm eqv}_{\ep,b}(u,v,\tau)$ in (\ref{defhplmi}) contribute the simple deconcatenation~sum
\beq
\beqvztau{j_1 &j_2 &\ldots &j_\ell}{k_1 &k_2 &\ldots &k_\ell}{z &z&\ldots &z}{\tau} = \sum_{i=0}^\ell  
\bminusz{j_i &\ldots &j_2 &j_1}{k_i &\ldots &k_2  &k_1}{z &\ldots &z &z}{\tau}
\bplusz{j_{i+1} &j_{i+2}  &\ldots &j_\ell}{k_{i+1} &k_{i+2}  &\ldots &k_\ell}{z &z &\ldots &z}{\tau} 
+ \ldots
\label{intemgf.13}
\eeq
to the coefficient of $b_{k_1}^{(j_1)} b_{k_2}^{(j_2)} \ldots b_{k_\ell}^{(j_\ell)}$. However, even at degrees $k_1{+}k_2{+}\ldots \leq 13$, there are two types of extra contributions in the ellipsis of (\ref{intemgf.13}) needed for the modular completion of $\beta^{\rm eqv}[\ldots;\tau]$: 
\begin{itemize}
\item[(i)] terms involving single-valued MZVs from the series $(\mathbb M^{\rm sv}_z)^{-1}$ \& $U_{\rm mod}(\tau) \mathbb M^{\rm sv}_{\Sigma(u)} U_{\rm mod}(\tau)^{-1}$ in (\ref{defhplmi});
\item[(ii)] mixed integrals like
$\bpmz{j_1 &j_2}{k_1 &k_2}{z &}{\tau}$ from situations where $b_{k_1}^{(j_1)} \ep_{k_2}^{(j_2)}$ needs to be reordered to $ \ep_{k_2}^{(j_2)} b_{k_1}^{(j_1)}$ at the cost of producing a Lie polynomial in $b_k^{(j)}$ through the bracket $[\ep_{k_2}^{(j_2)} ,b_{k_1}^{(j_1)}]$.
\end{itemize}
The only way of obtaining well-defined coefficients to all orders is to further left-concatenate (\ref{defhplmi}) with the inverse of $\mathbb H^{\rm eqv}_{\ep^{\rm TS}}(\tau)$ in (\ref{defheqv}). This leads to the generating series $\mathbbm \Lambda_{x,y}^{\rm sv}(z,\tau)$ in (\ref{thm.3.4g}) of single-valued eMPLs $\Lambda^{\rm sv}[\ldots;z,\tau]$ defined by the expansion (\ref{cor.3.4d}) solely in $b_k^{(j)}$. The deconcatenation term in (\ref{intemgf.13}) will then persist in the single-valued eMPLs
\beq
\lab{j_1 &\ldots &j_\ell}{k_1 &\ldots &k_\ell}{z, \tau }
= \sum_{i=0}^\ell  
\bminusz{j_i &\ldots &j_2 &j_1}{k_i &\ldots &k_2  &k_1}{z &\ldots &z &z}{\tau}
\bplusz{j_{i+1} &j_{i+2}  &\ldots &j_\ell}{k_{i+1} &k_{i+2}  &\ldots &k_\ell}{z &z &\ldots &z}{\tau} 
+ \ldots
\label{intemgf.14}
\eeq
at arbitrary modular depth $\ell$ and degree $k_1{+}\ldots{+} k_\ell$. However, the extra terms in the ellipsis are not only composed of the above (i) and (ii) but also of the equivariant iterated Eisenstein integrals $\beqvtau{j_1&\ldots &j_r}{k_1 &\ldots &k_r}{\tau} $ from the expansion (\ref{defheqv}) of $\mathbb H^{\rm eqv}_{\ep^{\rm TS}}(\tau)^{-1}$. 

On the one hand, closed all-degree formulae for the decompositions of $\lab{j_1 &\ldots &j_\ell}{k_1 &\ldots &k_\ell}{z, \tau }$ at modular depth $\ell \geq 2$ into $\beta_{\pm}[\ldots;\tau]$ appear out of reach. On the other hand, the explicitly known form of the series in (\ref{defhplmi}), $\mathbb H^{\rm eqv}_{\ep^{\rm TS}}(\tau)$ in (\ref{defheqv}) and the bracket relations needed to expand $\mathbbm \Lambda_{x,y}^{\rm sv}(z,\tau)$ in terms of $b_k^{(j)}$ give access to $\lab{j_1 &\ldots &j_\ell}{k_1 &\ldots &k_\ell}{z, \tau }$ at any desired degree in a finite number of steps. The iterated integrals $\beta_{\pm}[\ldots;\tau]$ in modular frame defined by (\ref{intemgf.04}) give rise to particularly economic decompositions of single-valued eMPLs (with $\mathbb Q[u,2\pi i \bar \tau,(4\pi \Im \tau)^{-1}]$-linear combinations of single-valued MZVs as coefficients), and we will discuss their value for eMGFs in section \ref{sec:6.big2} below.

\subsubsection{Examples of equivariant integrals in modular frame}
\label{sec:itemgf.4}

We conclude this section with a discussion of simple instances of the above $\beta^{\rm eqv}[\ldots;\tau]$ and $\Lambda^{\rm sv}[\ldots;z,\tau]$ at modular depth $\ell \leq 2$. First of all, the depth-one cases of equivariant iterated Eisenstein integrals are well-known in different formulations \cite{Brown:2017qwo, Dorigoni_2022}
\beq
\beqvtau{j}{k}{\tau} = \bplus{j}{k}{\tau}+ \bminus{j}{k}{\tau} -\frac{2\zeta_k}{(k{-}1)\, (4\pi\Im \tau)^{k-2-j}} 
\label{intemgf.17}
\eeq
where the odd zeta values stem from the term $\sigma_w \rightarrow -\frac{\ep_{w+1}^{(w-1){\rm TS}}}{(w{-}1)!}$ in (\ref{lieg1.51}). The analogous depth-one expressions for eMGFs do not involve any analogue of the zeta value \cite{Hidding:2022vjf},
\beq
\beqvztau{j}{k}{z}{\tau} = \bplusz{j}{k}{z}{\tau}+ \bminusz{j}{k}{z}{\tau} 
\label{intemgf.18}
\eeq
which can here be seen from (\ref{defhplmi}) and (\ref{sec5.1_1}) together with the fact that the only geometric depth-one term of augmented zeta generators is given by $\Sigma_w(u) \rightarrow -\frac{\ep_{w+1}^{(w-1)}}{(w{-}1)!}$.

Single-valued eMPLs at modular depth one resulting from the construction of $\mathbbm \Lambda_{x,y}^{\rm sv}(z,\tau)$ as the ratio (\ref{thm.3.4g}) are given by
\beq
\lab{j}{k}{z,\tau}=\beqvztau{j}{k}{z}{\tau}-\left\{\begin{array}{cl}\beqvtau{j}{k}{\tau}   &: \ k\geq 4\,\,\text{even}\\ 0&: \  k\geq 3\,\,\text{odd} \ \text{or} \, \, k=2\end{array}\right.
\label{intemgf.19}
\eeq
At modular depth two, the coefficient of the mixed word $\ep_{k_1}^{(j_1)} b_{k_2}^{(j_2)}$ in (\ref{sec5.1_1}) factorizes into 
\beq
\beqveqvtau{j_1}{k_1}{\phantom{z}}{j_2}{k_2}{z}{\tau}=\beqvtau{j_1}{k_1}{\tau}\beqvztau{j_2}{k_2}{z}{\tau}
\label{intemgf.20}
\eeq
It is well-defined even if  $k_1{+}k_2\geq 14$ since all the $\ep_{k}^{(j)}$ are understood to be moved to the left of $b_{k}^{(j)}$ and Tsunogai relations only relate alike brackets $[\ep_{k_1}^{(j_1)}, \ep_{k_2}^{(j_2)}]$ and $[b_{k_1}^{(j_1)}, b_{k_2}^{(j_2)}]$ after eliminating any $[\ep_{k_1}^{(j_1)}, b_{k_2}^{(j_2)}]$ via (\ref{intrel}) and the results of appendix \ref{sec:D.2}. The right side of (\ref{intemgf.20}) follows from the fact that any $\ep_{k}^{(j)}$ in the ratio (\ref{thm.3.4g}) can be eliminated in favor of $b_k^{(j)}$, see the proof of item $(i)$ of Theorem \ref{3.cor:1} for details.

Among the $\beqvY{j_1 &j_2\\ k_1 &k_2\\ z&z}$ which are defined at degrees $k_1{+}k_2\leq 13$ in the ordering conventions of $\ep_{k_1}^{(j_1)}, b_{k_2}^{(j_2)}$ in  (\ref{sec5.1_1}), we content ourselves to providing the simple examples
\begin{align}
\beqvY{0 &0\\ 2 &3\\ z&z}&=\bplusY{0&0\\2&3\\z&z}+\bminusY{0&0\\3&2\\z&z}+\bminusY{0\\2\\z}\bplusY{0\\3\\z}-\dfrac{1}{4\pi {\rm Im}\,\tau} \zeta_3 B_1(u)
 \notag\\
   \beqvY{1 &0\\ 3 &2\\ z&z}&= \bplusY{1&0\\3&2\\z&z}+\bminusY{0&1\\2&3\\z&z}+\bminusY{1\\3\\z}\bplusY{0\\2\\z}+\zeta_3B_1(u)
\notag \\
\beqvY{1 &0\\ 3 &3\\ z&z}&= \bplusY{1&0\\3&3\\z&z}+\bminusY{0&1\\3&3\\z&z}+\bminusY{1\\3\\z}\bplusY{0\\3\\z}+\dfrac{1}{4}\zeta_3B_2(u)-\frac{1}{8}\zeta_3
\notag \\
\beqvY{1&2\\3&4\\z&z}&=\bplusY{1&2\\3&4\\z&z}+\bminusY{2&1\\4&3\\z&z}+\bminusY{1\\3\\z}\bplusY{2\\4\\z}\notag\\
&\quad -\bplusY{1&2\\3&4\\z&}-\bminusY{2&1\\4&3\\&z}-\bminusY{1\\3\\z}\bplusY{2\\4\\}\notag\\
&\quad+\dfrac{2}{3}\zeta_3\biggl(\bminusY{1\\3\\z}-\dfrac{B_3(u)}{12}(-2\pi i \bar{\tau})^2\biggr)+\dfrac{\zeta_5}{3}B_1(u)
\label{exbeqvd2}
\end{align}
and a few other cases at degrees $k_1{+}k_2=7$ and $8$ in appendix \ref{apbeqv}.
The first three terms in each case reproduce the deconcatenation sums in (\ref{intemgf.13}). The  example at degree seven exhibits another deconcatenation sum of the same type in the second line from below which arises from the process of reordering $b_3^{(1)}\ep_4^{(2)}$ through the bracket,
\begin{align}
[\ep_4^{(2)}, b_3^{(1)}] &=
[b_3^{(1)},b_4^{(2)}] + \frac{1}{3} \, [b_2, b_5^{(3)}]
\label{bepbrks}
\end{align}
Finally, the odd zeta values in (\ref{exbeqvd2}) along with the Bernoulli polynomials $B_k(u)$ are due to the expansion of $\Sigma_3(u)$ and $\Sigma_5(u)$ in (\ref{explSIG}) up to degree seven. The accompanying factors of $2\pi i\bar \tau$ or $(4\pi \Im \tau)^{-1}$ can be traced back to the transformation to modular frame, and the product $\zeta_3\bminusY{1\\3\\z}$ arises from the reordering of $b_3^{(1)}\Sigma_3(u) \rightarrow - \frac{1}{2} b_3^{(1)} \ep_4^{(2)}$, again using (\ref{bepbrks}).

Finally, the single-valued eMPLs $\lab{j_1&j_2}{k_1&k_2}{z,\tau}$ associated with the $j_1,j_2,k_1,k_2$ in (\ref{exbeqvd2}) and (\ref{wt8exbs}) of even degree $k_1{+}k_2$ are given by
\begin{align}
\lab{1&0}{3&3}{z,\tau}&=\beqvztau{1&0}{3&3}{z&z}{\tau}
\label{intemgf.27} \\
\lab{2&1}{4&4}{z,\tau} &=\beqvY{2&1\\4&4\\z&z}+\beqvY{2&1\\4&4}-\beqvY{2\\4\\}\beqvY{1\\4\\z}
 \notag \\
\lab{2&1}{5&3}{z,\tau}&=\beqvY{2&1\\5&3\\z&z}-\frac{3}{2}\beqvY{2&1\\4&4}-\frac{3}{4}\beqvY{1&2\\4&4}
\notag
\end{align}
The remaining cases in
(\ref{exbeqvd2}) have odd degree and line up with the general formulae (valid for $k_1{+}k_2\leq 13$ where the $\beta^{\rm eqv}[\ldots;\tau]$ on the right side are well-defined)
\begin{align}
\lab{j_1&j_2}{k_1&k_2}{z,\tau}&= \left\{ \begin{array}{cl}
\bigg. \beqvztau{j_1&j_2}{k_1&k_2}{z&z}{\tau} \bigg.&: \ k_1 \ \textrm{odd and} \ k_2 \ \textrm{even} \\
\beqvztau{j_1&j_2}{k_1&k_2}{z&z}{\tau}  - \beqvtau{j_1}{k_1}{\tau} \beqvztau{j_2}{k_2}{z}{\tau} &: \ k_1\geq 4 \ \textrm{even and} \ k_2 \ \textrm{odd} \\
\bigg.\beqvztau{j_1&j_2}{k_1&k_2}{z&z}{\tau}  \bigg. &: \ k_1=2 \ \textrm{and} \ k_2 \ \textrm{odd}
\end{array} \right.
\label{intemgf.21}
\end{align}
They follow from left-multiplication of (\ref{sec5.1_1}) by $\mathbb H^{\rm eqv}_{\ep^{\rm TS}}(\tau)^{-1} = 1- \sum_{k=4}^\infty(k{-}1)\sum_{j=0}^{k-2}\frac{(-1)^j}{j!}$ $\times \beqvtau{j}{k}{\tau} (\ep_k^{(j)} {+}b_k^{(j)})+\ldots$ and noticing that this does not introduce any terms $b_{k_1}^{(j_1)} \ep_{k_2}^{(j_2)}$ at odd $k_1{+}k_2$ into  $\mathbb H^{\rm eqv}_{\ep^{\rm TS}}(\tau)^{-1}\mathbb H^{\rm eqv}_{\ep,b}(u,v,\tau)$. On these grounds, there is no need to use bracket relations for $[b_{k_1}^{(j_1)} ,\ep_{k_2}^{(j_2)}]$ at
odd $k_1{+}k_2$ to maintain the ordering of 
$\ep_{k_1}^{(j_1)} b_{k_2}^{(j_2)}$ in (\ref{sec5.1_1}) which would have otherwise modified the coefficient of $b_{k_1}^{(j_1)} b_{k_2}^{(j_2)}$ beyond the terms in (\ref{intemgf.21}).


\subsection{eMGFs in terms of equivariant iterated integrals and single-valued eMPLs}
\label{sec:6.big2}

We shall here translate several showcases of eMGFs into iterated integrals and their complex conjugates. By the results of the previous section, the representations of eMGFs in terms of $\beta^{\rm eqv}[\ldots;\tau]$ in section \ref{sec:6.1} and $\Lambda^{\rm sv}[\ldots;z,\tau]$ in section \ref{sec:6.3} boil down to the iterated integrals ${\cal E}[\ldots;\tau]$ meromorphic in $\tau$ through their counterparts $\beta_{\pm}[\ldots;\tau]$ in modular frame, with known $\mathbb Q[u,2\pi i \bar \tau,(4\pi \Im \tau)^{-1}]$ combinations of single-valued MZVs as coefficients.


\subsubsection{$\mathbb H^{\rm eqv}_{\ep,b}(u,v,\tau)$ as a generating series of eMGFs}
\label{sec:6.1}

A first representation of eMGFs employs the modular forms $\beqvztau{j}{k}{z}{\tau}$ and $\beqvztau{j_1 &j_2}{k_1 &k_2}{z &z}{\tau}$ in the expansion (\ref{sec5.1_1}) of the series $\mathbb H^{\rm eqv}_{\ep,b}(u,v,\tau)$ which are well-defined at the degrees $k_1{+}k_2\leq 13$ under consideration in this section.

The simplest family of $z$-dependent eMGFs is furnished by Zagier's \cite{Ramakrish} single-valued elliptic polylogartithms ${\cal D}^+[\smallmatrix a \\ b \endsmallmatrix](z,\tau)$ defined in (\ref{defZag}). Their instances with $a,b\in \mathbb N$ are given by  equivariant integrals $\beqvtau{j}{k}{\tau}$ and $\beqvztau{j}{k}{z}{\tau}$ at modular depth one \cite{Dorigoni_2022, Hidding:2022vjf}
\begin{align}
\beqvtau{j}{k}{\tau}&=-(2i)^{2j-k+2} \, \frac{j!(k{-}2{-}j)!}{(k{-}1)!} \, {\cal D}^+[\smallmatrix j+1 \\ k-1-j \endsmallmatrix](0,\tau) \, , \ \ \ \ \ \ k\geq 4 \ {\rm even}
\notag\\
\beqvztau{j}{k}{z}{\tau}&=-(2i)^{2j-k+2}\, \frac{j!(k{-}2{-}j)!}{(k{-}1)!} \, {\cal D}^+[\smallmatrix j+1 \\ k-1-j \endsmallmatrix](z,\tau) \, , \ \ \ \ \ \ k\geq 2
\label{intemgf.31}
\end{align}
Equivariant iterated integrals at modular depth $\geq 2$ arise from eMGFs (\ref{defdiC}) involving three or more columns.\footnote{See \cite{Dorigoni:2021jfr} for presentations of three-column MGFs in terms of equivariant double Eisenstein~integrals and \cite{DHoker:2015gmr, DHoker:2017zhq} for earlier discussions of their Laplace equations and Fourier expansion.} The simplest non-trivial instances that are independent under shuffles $\beqvztau{j_1 &j_2}{k_1 &k_2}{z &z}{\tau}+ \beqvztau{j_2 &j_1}{k_2 &k_1}{z &z}{\tau} = \beqvztau{j_1}{k_1}{z }{\tau} \beqvztau{j_2}{k_2}{z }{\tau}$ read as follows 
\begin{align}
\beqvztau{0&0}{2&3}{z&z}{\tau}&=-\frac{i}{8}{\cal C}^+\!\left[\smallmatrix 0&1&1 \\ 1&1&1 \\ z&0&0 \endsmallmatrix\right](\tau)-\frac{i}{12}{\cal D}^+[\smallmatrix 2 \\ 3 \endsmallmatrix](z,\tau)
\label{intemgf.32}
\\
\beqvztau{1&0}{3&3}{z&z}{\tau}&=
\frac{1}{8}
\mathcal C^+\!\left[\begin{smallmatrix}1&1&1\\1&1&1\\z&0&0\end{smallmatrix}\right](\tau)
-\frac{1}{12}{\cal D}^+[\smallmatrix 3 \\ 3 \endsmallmatrix](z,\tau)-\frac{1}{24}{\cal D}^+[\smallmatrix 3 \\ 3 \endsmallmatrix](0,\tau)-\frac{1}{8}\zeta_3 \notag \\
\beqvztau{2&0}{4&3}{z&z}{\tau}&=\frac{1}{12}\nabla_z
\mathcal C^+\!\left[\begin{smallmatrix}
    2&1&1\\2&1&1\\z&0&0
\end{smallmatrix}\right](\tau)+\frac{i}{3}{\cal D}^+[\smallmatrix 1 \\ 2 \endsmallmatrix](z,\tau){\cal D}^+[\smallmatrix 3 \\ 1 \endsmallmatrix](0,\tau)-\frac{i}{6}{\cal D}^+[\smallmatrix 4 \\ 3 \endsmallmatrix](z,\tau) \notag
\end{align}
We follow the conventions of \cite{DHoker:2016mwo, Dhoker:2020gdz, Hidding:2022vjf} for differential operators that preserve the vanishing holomorphic modular weight of ${\cal C}^+[\ldots](\tau)$,
\beq
\nabla_\tau = 2i (\Im \tau)^2 \, \partial_\tau \, , \ \ \ \ \ \
\nabla_z = 2i \Im \tau \, \partial_z
\label{intemgf.33}
\eeq
see (\ref{nabdiC}) and (\ref{dzdiC}) for their
action on general dihedral eMGFs.

Conversely, we can also express the eMGFs in (\ref{intemgf.32}) as simple combinations of the equivariant integrals $\beta^\text{eqv}[\ldots;\tau]$.  
We give all such relations for the $\mathbb Q$-basis of eMGFs in \cite{Hidding:2022vjf} up to and including degree seven in the ancillary file {\tt eMGFdata.nb}. Some of their examples~read 
\begin{align}
\nabla_z 
\mathcal{C}^+\!\left[\begin{smallmatrix}1&1&1\\1&1&1\\z&0&0\end{smallmatrix}\right](\tau)&= 4\beqvztau{1&0}{3&2}{z&z}{\tau}  -8\beqvztau{2}{5}{z}{\tau} \label{intemgf.34}\\
\mathcal{C}^+\!\left[\begin{smallmatrix}1&1&1\\1&1&1\\z&0&0\end{smallmatrix}\right](\tau)&=8\beqvztau{1&0}{3&3}{z&z}{\tau}-10\beqvtau{2}{6}{\tau}-20\beqvztau{2}{6}{z}{\tau}+\zeta_3 \notag \\
\nabla_z
\mathcal{C}^+\!\left[\begin{smallmatrix}2&1&1\\2&1&1\\z&0&0\end{smallmatrix}\right](\tau)&=12\beqvztau{2&0}{4&3}{z&z}{\tau}-60\beqvztau{3}{7}{z}{\tau}-12\beqvtau{2}{4}{\tau}\beqvztau{0}{3}{z}{\tau} \notag \\
\pi \nabla_\tau\nabla_z 
\mathcal{C}^+\!\left[\begin{smallmatrix}2&1&1\\2&1&1\\z&0&0\end{smallmatrix}\right](\tau)&=-3\beqvztau{2&1}{4&3}{z&z}{\tau}+30\beqvztau{4}{7}{z}{\tau}+3\beqvtau{2}{4}{\tau}\beqvztau{1}{3}{z}{\tau}
\notag
\end{align}
and illustrate the appearance of single-valued MZVs and equivariant iterated Eisenstein integrals $\beqvtau{j_1 &\ldots &j_\ell}{k_1 &\ldots &k_\ell}{\tau}$ besides their $z$-dependent counterparts.
The double integrals on the right side incorporate contributions of $\zeta_{w}B_k(u)$ with $w\geq 3$ odd that are known from the expansion of eMGFs around the cusp in \cite{DHoker:2018mys} and their iterated-integral representations in \cite{Hidding:2022vjf}. The terms $4\beqvztau{1&0}{3&2}{z&z}{\tau}$ and $8\beqvztau{1&0}{3&3}{z&z}{\tau}$ in the first two rows for instance lead to the terms $4\zeta_3 B_1(u)$ in (6.20) and $2 \zeta_3 B_2(u)$ in (6.11) of \cite{Hidding:2022vjf} through their decomposition (\ref{exbeqvd2}).

\subsubsection{$ \mathbbm{\Lambda}^{\rm sv}(z,\tau)$ as a generating series of eMGFs}
\label{sec:6.3}

A second and closely related representation of eMGFs is based on the modular formulation of single-valued eMPLs $\Lambda^\text{sv}\!\left[\begin{smallmatrix}j_1&\dots&j_\ell\\k_1&\dots&k_\ell\end{smallmatrix};z,\tau\right]$ which are generated by the series $\mathbbm{\Lambda}^{\rm sv}_{x,y}(z,\tau)$ according to (\ref{cor.3.4d}) and well-defined for arbitrary degrees.

Cases of modular depth $\ell=1$ recover Zagier's single-valued eMPLs \cite{Ramakrish}
\beq
\lab{j}{k}{z,\tau}=
-(2i)^{2j-k+2}\,\frac{j!(k{-}2{-}j)!}{(k{-}1)!} \times \left\{\begin{array}{cl} {\cal D}^+[\smallmatrix j+1 \\ k-1-j \endsmallmatrix](z,\tau)-{\cal D}^+[\smallmatrix j+1 \\ k-1-j \endsmallmatrix](0,\tau)
&: \ k \geq 3 \\
{\cal D}^+[\smallmatrix j+1 \\ k-1-j \endsmallmatrix](z,\tau) &: \ k=2
\end{array}
\right.
\label{intemgf.38}
\eeq
where the values ${\cal D}^+[\smallmatrix j+1 \\ k-1-j \endsmallmatrix](0,\tau)$ at the origin vanish at odd $k$ and are absent in case of $k=2$ since ${\cal D}^+[\smallmatrix 1 \\1 \endsmallmatrix](z,\tau)$ has a logarithmic singularity $\sim -\log|z|^2$ as $z\rightarrow 0$.

At modular depth two, the expressions (\ref{intemgf.32}) of $\beta^{\rm eqv}[\ldots;\tau]$ in terms of eMGFs together with the translations (\ref{intemgf.27}) and (\ref{intemgf.21}) into single-valued eMPLs lead to the examples
\begin{align}
\lab{0&0}{2&3}{z,\tau}&=-\frac{i}{8}{\cal C}^+\!\left[\smallmatrix 0&1&1 \\ 1&1&1 \\ z&0&0 \endsmallmatrix\right](\tau)-\frac{i}{12}{\cal D}^+[\smallmatrix 2 \\ 3 \endsmallmatrix](z,\tau) \label{intemgf.39}\\
\lab{1&0}{3&3}{z,\tau}&=\frac{1}{8}
\mathcal C^+\!\left[\begin{smallmatrix}1&1&1\\1&1&1\\z&0&0\end{smallmatrix}\right](\tau)-\frac{1}{12}{\cal D}^+[\smallmatrix 3 \\ 3 \endsmallmatrix](z,\tau)-\frac{1}{24}{\cal D}^+[\smallmatrix 3 \\ 3 \endsmallmatrix](0,\tau)-\frac{1}{8}\zeta_3 \notag \\
\lab{2&0}{4&3}{z,\tau}&=\frac{1}{12}\nabla_z
\mathcal{C}^+\!\left[\begin{smallmatrix}
    2&1&1\\2&1&1\\z&0&0
\end{smallmatrix}\right](\tau)-\frac{i}{6}{\cal D}^+[\smallmatrix 4 \\ 3 \endsmallmatrix](z,\tau)
\notag
\end{align}
Conversely we can also express eMGFs in terms of $\lab{\dots}{\dots}{z,\tau}$ where the examples in (\ref{intemgf.34}) translate into
\begin{align}
\nabla_z 
\mathcal C^+\!\left[\begin{smallmatrix}
    1&1&1\\1&1&1\\z&0&0
\end{smallmatrix}\right](\tau)&= 4\lab{1&0}{3&2}{z,\tau} -8\lab{2}{5}{z,\tau} \label{cplexs} \\
\mathcal C^+\!\left[\begin{smallmatrix}
    1&1&1\\1&1&1\\z&0&0
\end{smallmatrix}\right](\tau)&=8\lab{1&0}{3&3}{z,\tau}-30\beqvtau{2}{6}{\tau}-20\lab{2}{6}{z,\tau}+\zeta_3 \notag \\
\nabla_z
\mathcal C^+\!\left[\begin{smallmatrix}
    2&1&1\\2&1&1\\z&0&0
\end{smallmatrix}\right](\tau)&=12\lab{2&0}{4&3}{z,\tau}-60\lab{3}{7}{z,\tau}
\notag \\
\pi \nabla_\tau\nabla_z 
\mathcal C^+\!\left[\begin{smallmatrix}
    2&1&1\\2&1&1\\z&0&0
\end{smallmatrix}\right](\tau)&=-3\lab{2&1}{4&3}{z,\tau}+30\lab{4}{7}{z,\tau}
\notag
\end{align}
The second line illustrates that generic eMGFs are linear combinations of single-valued eMPLs $\Lambda^{\rm sv}[\ldots;\tau]$ with rational polynomials in MGFs (represented by $z$-independent $\beta^{\rm eqv}[\ldots;\tau]$) and single-valued MZVs as coefficients. This generalizes the decomposition of MGFs into equivariant iterated Eisenstein integrals accompanied by $\mathbb Q$-linear combinations of single-valued  MZVs to the case of an additional modulus $z$. In particular, the contribution $\beqvtau{2}{6}{\tau}$ to the second line of (\ref{cplexs}) realizes the simple MGF $\sim {\cal D}^+[\smallmatrix 3 \\ 3 \endsmallmatrix](0,\tau)$ which by its independence on $z$ is not expressible in terms of $\Lambda^{\rm sv}[\ldots;\tau]$ -- the special value $\lab{2}{6}{0,\tau}$ vanishes by (\ref{intemgf.38}).
In the ancillary file {\tt eMGFdata.nb}, we give all relations between $\Lambda^{\rm sv}[\ldots;z,\tau]$ and eMGFs in the $\mathbb Q$-bases of \cite{Hidding:2022vjf} up to and including degree 7.

As detailed in Corollary \ref{corof44} the single-valued eMPLs $\Lambda^{\rm sv}[\ldots;z,\tau]$ are determined as combinations of eMPLs and their complex conjugates through their generating series (\ref{grteq.02}) and (\ref{cor.3.4c}). At modular depth one, this relates Zagier's single-valued eMPLs to complex combinations of eMPLs via (\ref{intemgf.38}) and should reproduce the representations of ${\cal D}^+[\smallmatrix a \\b \endsmallmatrix](z,\tau)$ in section 4 of \cite{Broedel:2019tlz} when the contributing elliptic MZVs are expressed in terms of $\beqvtau{j}{k}{\tau}$ and $\ee{0}{2}{\tau}{+}\overline{\ee{0}{2}{\tau}}$. It would be interesting to extract the explicit form of the decomposition of $\Lambda^{\rm sv}[\ldots;z,\tau]$ at modular depth $\geq 2$ in terms of eMPLs and their complex conjugates from the generating series of this work.

\subsubsection{Comments on $\beta^{\rm sv}$ of \cite{Hidding:2022vjf}}
\label{comsec623}

Reference \cite{Hidding:2022vjf} describes a method to decompose arbitrary eMGFs into iterated $\tau$ integrals through a generalization of the sieve algorithm \cite{DHoker:2016mwo, DHoker:2016quv} for MGFs to the $z$-dependent case. These iterated integrals are organized into complex and $T$-invariant combinations denoted by $\bsvtau{j_1 &\ldots &j_\ell}{k_1 &\ldots &k_\ell}{\tau}$ and singled out by integrating the holomorphic $\tau$ derivatives of the generating series of genus-one Koba-Nielsen integrals in \cite{Dhoker:2020gdz}. 

More specifically, the $\bsvtau{j_1 &\ldots &j_\ell}{k_1 &\ldots &k_\ell}{\tau}$ in section 3 of \cite{Hidding:2022vjf} supplement the deconcatenation sums of (\ref{intemgf.13}) or (\ref{intemgf.14}) over products of $\beta_{\pm}[\ldots;\tau]$ in (\ref{intemgf.04}) by antiholomorphic iterated integrals of ${\rm G}_k$, $f^{(k)}(u\tau{+}v,\tau)$ multiplied by MZVs and lower-depth instances of $\beta^{\rm sv}[\ldots;\tau]$. Following the strategy of \cite{Gerken:2020yii, Dorigoni:2021jfr}, the antiholomorphic admixtures are determined by imposing the complex-conjugation properties of the Koba-Nielsen integrals in \cite{Dhoker:2020gdz} on each eMGF in their low-energy expansion.

In the approach of this work, the additional terms that complete the deconcatenation sums of (\ref{intemgf.13}) or (\ref{intemgf.14}) to modular forms are entirely determined by the group-like series in section \ref{sec:3.2} and the bracket relations among zeta generators, Tsunogai derivations, and Lie-polynomials in $x,y$. Hence, there is no need for extra information from Koba-Nielsen integrals, and the complex-conjugation properties of eMGFs and single-valued eMPLs are the outcome of the compositions of the group-like series, see section \ref{sec:3.cc}.

\subsubsection{Comments on $q$-expansions and numerical methods}
\label{comsec624}

In early stages of this work, the iterated integrals entering the $\beta^{\rm sv}[\ldots;\tau]$ of \cite{Hidding:2022vjf} were translated into the $\beta_{\pm}[\ldots;\tau]$ in (\ref{intemgf.04}). Hence, the $\beta^{\rm sv}[\ldots;\tau]$-representations of eMGFs up to degree ten in the reference identified numerous modular combinations of $\beta_{\pm}[\ldots;\tau]$ and thereby low-degree instances of the $\beta^{\rm eqv}[\ldots;\tau]$ in (\ref{sec5.1_1}). This step of data-mining was helpful to anticipate the ingredients of the equivariant and single-valued generating series of this work before their rigorous derivations were found.

Moreover, the constituents of $\beta^{\rm sv}[\ldots;\tau]$ have a known all-order expansion around the cusp at fixed $u,v$, see section 3.5 of \cite{Hidding:2022vjf}. Hidding's {\sc Mathematica} implementation of these expansion methods \cite{Hidding:2022zzz} leads to numerical evaluations of the $\beta_{\pm}[\ldots;\tau]$ at high precision. 
In intermediate stages of developing the generating series of this work, numerical checks of modular transformations via \cite{Hidding:2022zzz} were a valuable tool to test candidate expressions or to determine unknown rational coefficients of MZVs via PSLQ. 
This gave access to higher orders of the expansion (\ref{explSIG}) of augmented zeta generators $\Sigma_w(u)$ before their closed formula (\ref{not.06}) was derived.

\subsection{Implications for counting of eMGFs}
\label{sec:6.4}

This section applies the iterated-integral representations of eMGFs to count their independent representatives under algebraic relations with $\mathbb Q$-linear combinations of (conjecturally single-valued) MZVs and MGFs as coefficients \cite{DHoker:2020tcq, Basu:2020pey, Basu:2020iok, Dhoker:2020gdz}. 
For dihedral cases $\cplus{a_1 &\ldots &a_r \\ b_1 &\ldots &b_r \\ z_1 &\ldots &z_r}(\tau)$ in (\ref{defdiC}), the sums
\beq
|A|=\sum_{i=1}^r a_i \, , \ \ \ \  |B|=\sum_{i=1}^r b_i \, , \ \ \ \ {\rm in} \ (\ref{defdiC})
\label{ctemgf.01}
\eeq
of the exponents $a_i$ and $b_i$ of the holomorphic and antiholomorphic factors $(m_i\tau{+}n_i)^{-1}$ and $(m_i\bar \tau{+}n_i)^{-1}$ in the summand are conserved quantities in these types of relations among eMGFs when working modulo products involving MZVs. 
Also for eMGFs of trihedral or more general graph topologies \cite{Dhoker:2020gdz}, the total exponents of all factors $(m_i\tau{+}n_i)^{-1}$ and $(m_i\bar \tau{+}n_i)^{-1}$ in their lattice-sum representation are conserved modulo MZVs and given as follows for the representation (\ref{genint}) of general eMGFs as integrals over products of $\dplus{a_{ij} \\ b_{ij}}(z_i{-}z_j,\tau) $:
\beq
|A|=\sum_{i=1}^r \sum_{j=i+1}^{r+s} a_{ij} \, , \ \ \ \  |B|= \sum_{i=1}^r \sum_{j=i+1}^{r+s}  b_{ij} \, , \ \ \ \ {\rm in} \ (\ref{genint})
\label{ctemgf.00}
\eeq
Accordingly, we refer to these characteristic integers $|A|$, $|B|$ of eMGFs as their {\it lattice weights} which are understood to be additive in products among eMGFs and MGFs.

\subsubsection{Bounding the counting of indecomposable eMGFs}
\label{sec631bd}

For eMGFs in one variable with $z_i \in \{0,z\}$ in (\ref{defdiC}) and (\ref{genint}), the following lemma correlates their lattice weights $|A|$ and $|B|$ with the entries $k_i,j_i$ in their representation via single-valued eMPLs $\Lambda^\text{sv}\!\left[\begin{smallmatrix}j_1&\dots&j_r\\k_1&\dots&k_r\end{smallmatrix};z,\tau\right]$.

\begin{lemma}
\label{ABKJ}
When expressing eMGFs in one variable of lattice weights $|A|$ and $|B|$ as a linear combination of single-valued eMPLs $\Lambda^{\rm sv}\!\left[\begin{smallmatrix}j_1&\dots&j_\ell\\k_1&\dots&k_\ell\end{smallmatrix};z,\tau\right]$, then the entries $j_i$, $k_i$ in each term without factors of MZVs in their coefficients satisfy
\beq
|A|=\sum_{i=1}^\ell (j_i{+}1) \, , \ \ \ \ \ \ |B|=\sum_{i=1}^\ell (k_i{-}j_i{-}1)
\label{ctemgf.02}
\eeq
\end{lemma}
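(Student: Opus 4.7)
The proof matches two independent gradings carried by both eMGFs in one variable and the single-valued eMPLs $\lab{j_1 \,\ldots\, j_\ell}{k_1 \,\ldots\, k_\ell}{z,\tau}$, each yielding a linear equation in $(|A|,|B|)$ and $(\sum_i k_i,\sum_i j_i)$.

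The first grading is the antiholomorphic modular weight. By (\ref{modeMGF}), the dihedral eMGF $\cplus{A\\ B\\ Z}(\tau)$ has modular weight $(0,|B|-|A|)$, and this extends to eMGFs of arbitrary graph topology in one variable through (\ref{genint}), since the integration measure $\dd^2 z/\Im\tau$ is modular invariant. By (\ref{cor.3.4e}), the single-valued eMPL $\lab{j_1\, \ldots\, j_\ell}{k_1 \,\ldots\, k_\ell}{z,\tau}$ carries antiholomorphic modular weight $\sum_{i=1}^\ell(k_i-2j_i-2)$. Since single-valued MZVs are modular invariants and therefore preserve the weight grading, matching antiholomorphic modular weights on both sides of the claimed decomposition yields $|B|-|A|=\sum_{i=1}^\ell(k_i-2j_i-2)$.

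The second grading is the total lattice weight $|A|+|B|$, defined as the degree of homogeneity of the lattice-sum representation (\ref{defdiC}) or (\ref{genint}) in the summation variables $(m_i,n_i)$. Inspecting the building blocks introduced in section \ref{sec:itinemgf}, each kernel $\omega_{\pm}[\ldots]$ in (\ref{intemgf.02}) carries a Kronecker-Eisenstein factor $f^{(k)}$ or its complex conjugate of lattice weight $k$, while the polynomial factors in $2\pi i\bar\tau$ and $(4\pi\Im\tau)^{-1}$ that arise through (\ref{intemgf.07}) carry no lattice weight. Hence the $\beta_{\pm}$ ingredients of $\lab{\ldots}{\ldots}{z,\tau}$ carry total lattice weight $\sum_{i=1}^\ell k_i$. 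MZVs carry a distinct transcendental weight but no lattice weight, which is precisely why the lemma restricts to terms without MZV factors: for such terms, matching lattice weights directly yields $|A|+|B|=\sum_{i=1}^\ell k_i$.

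Solving the two linear equations $|B|-|A|=\sum_i(k_i-2j_i-2)$ and $|A|+|B|=\sum_i k_i$ produces the claimed formulae. The main technical step is the second equation, which requires verifying that the full expansion of $\lab{\ldots}{\ldots}{z,\tau}$ -- including admixtures from $(\mathbb M^{\rm sv}_z)^{\pm 1}$, equivariant iterated Eisenstein integrals and the $\Umod(\tau)$-conjugation to modular frame -- consistently respects the lattice-weight grading inherited from the $\beta_{\pm}$ building blocks. This can be checked by tracking each factor in the defining relations (\ref{cor.3.4a}), (\ref{cor.3.4c}) together with the component form (\ref{gavsla}), or indirectly by invoking the sieve algorithm of \cite{Dhoker:2020gdz, Hidding:2022vjf} which manifestly preserves $|A|+|B|$ at each step of trading derivatives of eMGFs for the kernels $\dd\tau\,f^{(k)}(u\tau+v,\tau)$ and $\dd\tau\,{\rm G}_k(\tau)$.
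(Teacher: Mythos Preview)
Your proof is correct and follows essentially the same two-step strategy as the paper: first match antiholomorphic modular weights to obtain $|B|-|A|=\sum_i(k_i-2j_i-2)$, then match total degree to obtain $|A|+|B|=\sum_i k_i$, and solve. Your justification of the second identity via a ``lattice-weight grading'' on the $\beta_\pm$ building blocks is more heuristic than the paper's argument, which runs the sieve algorithm explicitly (each $\tau$-derivative plus holomorphic subgraph reduction peels off a kernel $f^{(k)}$ or ${\rm G}_k$ while reducing the eMGF's lattice weights accordingly, so that iterating until termination forces $\sum_i k_i = |A|+|B|$); but you do reference this route as an alternative, so the two proofs ultimately coincide.
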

As exemplified by (\ref{cplexs}), the modular depth $\ell$ of the single-valued eMPLs for a given eMGF can vary as long as (\ref{ctemgf.02}) at fixed $|A|$, $|B|$ has solutions within the admissible range of $k_i\geq 2$ and $0\leq j_i \leq k_i{-}2$. For dihedral eMGFs $\cplus{a_1 &\ldots &a_r \\ b_1 &\ldots &b_r \\ z_1 &\ldots &z_r}(\tau)$, the highest modular depth $\ell$ in its representation via single-valued eMPLs is expected to be bounded by the number $r$ of columns as $\ell \leq r{+}1$ (not necessarily saturating the bound). The correlation (\ref{ctemgf.02}) was also proposed in section 4.1 of \cite{Hidding:2022vjf} for representations of one-variable eMGFs in terms of the quantities $\bsvtau{j_1 &\ldots &j_\ell}{k_1 &\ldots &k_\ell}{\tau}$ in the reference (see section \ref{comsec623}), again for those terms in an iterated-integral representation of eMGFs without any accompanying factors of MZVs.

\begin{proof}
The proof of Lemma \ref{ABKJ} proceeds in two steps, namely by separately showing the two identities
\beq
(i) \ \ |B|{-}|A| = \sum_{i=1}^\ell (k_i{-}2j_i{-}2) \, , \ \ \ \ \ \ 
(ii) \ \ |A|{+}|B| = \sum_{i=1}^\ell k_i
\label{ctemgf.03}
\eeq
which are equivalent to the claim (\ref{ctemgf.02}).

$(i)$: follows by matching the modular weight $(0,\sum_{i=1}^\ell (k_i{-}2j_i{-}2))$ of the single-valued eMPL $\Lambda^\text{sv}\!\left[\begin{smallmatrix}j_1&\dots&j_\ell\\k_1&\dots&k_\ell\end{smallmatrix};z,\tau\right]$ in (\ref{cor.3.4e}) with that of eMGFs: the lattice weights (\ref{ctemgf.01}) and (\ref{ctemgf.00}) of dihedral and general eMGFs give rise to modular weight $(0,|B|{-}|A|)$ by (\ref{modeMGF}) and by~(\ref{genint}).\footnote{In case of the integral representations (\ref{genint}) of general eMGFs, we have used the modular weights $(0,0)$ and $(0,b_{ij}{-}a_{ij})$ of the integration measures and the factors $\dplus{a_{ij} \\ b_{ij}}(z_i{-}z_j,\tau) $ in the integrand, respectively.}

$(ii)$: follows from the structure of $\tau$-derivatives of eMGFs computed from the method of holomorphic subgraph reduction in \cite{Dhoker:2020gdz}: modulo products involving MGFs and/or MZVs, each factor of $f^{(k)}(u\tau{+}v,\tau)$ or ${\rm G}_k(\tau)$ in their $\tau$-derivatives is accompanied by eMGFs whose initial lattice weights $|A|$ and $|B|$ have dropped to $|A|{-}k$ and $|B|$. By iteratively integrating back these $\tau$-derivatives, we identify terms of the form $\bplusz{j_{1} &\ldots &j_\ell}{k_{1} &\ldots &k_\ell}{z &\ldots &z}{\tau} $ (or their analogues with empty slots in the last line) in the iterated-integral representation of that eMGF which by (\ref{intemgf.14}) signal a single-valued eMPL with the same entries $j_i$, $k_i$.
\end{proof}
Based on this lemma, we infer an upper bound on the number of indecomposable eMGFs in one variable at given lattice weights $|A|$ and $|B|$. An eMGF is referred to as {\it indecomposable} if it cannot be expressed in terms of eMGFs of lower lattice weights, MZVs and/or MGFs.

\begin{corollary}
\label{corcount}
The number of indecomposable eMGFs in one variable of lattice weights $|A|$ and $|B|$ is bounded from above by the number of shuffle-independent single-valued eMPLs $\Lambda^{\rm sv}\!\left[\begin{smallmatrix}j_1&\dots&j_\ell\\k_1&\dots&k_\ell\end{smallmatrix};z,\tau\right]$ of arbitrary modular depth $\ell\geq 1$ whose entries in the range $k_i\geq 2$ and $0\leq j_i\leq k_i{-}2$ obey (\ref{ctemgf.02}).
\end{corollary}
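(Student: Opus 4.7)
The plan is to combine three ingredients in sequence: (a) an iterated-integral representation of arbitrary one-variable eMGFs, (b) the mod-out by MZVs and MGFs that produces a $\Lambda^{\rm sv}$-representation, and (c) the shuffle relations that cut down the spanning set. First I would argue that every eMGF in one variable of lattice weights $(|A|,|B|)$ admits a presentation as a $\mathbb Q$-linear combination of single-valued eMPLs $\Lambda^{\rm sv}\!\left[\begin{smallmatrix}j_1&\dots&j_\ell\\k_1&\dots&k_\ell\end{smallmatrix};z,\tau\right]$, modulo products involving MGFs and/or MZVs. This follows from the total-differential integration scheme of section \ref{sec:2.2.4}: the $\tau$- and $z$-derivatives provided by (\ref{nabdiC}), (\ref{dzdiC}) together with holomorphic subgraph reduction (\ref{HSRemgf}) iteratively express the action of $\partial_\tau$ and $\partial_z$ on any eMGF as linear combinations of the integration kernels in (\ref{emgfkers}) times eMGFs of strictly lower complexity. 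Integrating against these kernels while demanding the full modular behaviour (\ref{modeMGF}) forces the answer to lie in the $\beta^{\rm eqv}[\ldots;\tau]$ span in modular frame, which by Theorem \ref{3.cor:1} and the dictionary (\ref{gavsla}) is equivalently expressed via the $\Lambda^{\rm sv}[\ldots;z,\tau]$.

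Next I would invoke Lemma \ref{ABKJ} to restrict the spanning set: among all $\Lambda^{\rm sv}\!\left[\begin{smallmatrix}j_1&\dots&j_\ell\\k_1&\dots&k_\ell\end{smallmatrix};z,\tau\right]$ with $k_i\geq 2$ and $0\leq j_i\leq k_i{-}2$, only those whose entries satisfy (\ref{ctemgf.02}) can appear in the decomposition of an eMGF with prescribed lattice weights $(|A|,|B|)$, after MZV-valued coefficients are stripped off. Working modulo the ideal generated by MZVs and MGFs, the space spanned by indecomposable eMGFs of lattice weights $(|A|,|B|)$ is therefore contained in the $\mathbb Q$-span of this finite collection of $\Lambda^{\rm sv}$.

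The final ingredient is the shuffle relations of item $(ii)$ of Corollary \ref{corof44}, which express any shuffle product of two $\Lambda^{\rm sv}$ (both preserving the constraint (\ref{ctemgf.02}) by additivity of $|A|,|B|$ across a shuffle) as a reducible combination. These shuffle identities are \emph{intrinsic} to the generating series $\mathbbm{\Lambda}^{\rm sv}_{x,y}(z,\tau)$ by its group-likeness and hold independently of any ``external'' MGF/MZV-valued relations. Quotienting the span of the allowed $\Lambda^{\rm sv}$ by shuffle relations yields a $\mathbb Q$-vector space whose dimension equals the number of shuffle-independent $\Lambda^{\rm sv}$ satisfying (\ref{ctemgf.02}); the surjective map from this quotient onto the space of indecomposable eMGFs with lattice weights $(|A|,|B|)$ modulo MZVs and MGFs then delivers the claimed upper bound.

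The main obstacle is step one, namely the closed-form statement that the modular completion of the iterative integration against the kernels (\ref{emgfkers}) \emph{never} forces any $\Lambda^{\rm sv}$ with entries violating (\ref{ctemgf.02}) to appear with an MZV-free coefficient. In view of Lemma \ref{ABKJ} this reduces to tracking how each application of (\ref{nabdiC}), (\ref{dzdiC}) and (\ref{HSRemgf}) shifts $|A|$ (by $\pm k$ along with a kernel of weight $k$) while leaving $|B|$ unchanged, so the integer $|A|{+}|B|$ is conserved modulo the MZV ideal; the more delicate piece is that the required integration constants at $\tau\to i\infty$ -- which are known to involve $\zeta_w$ times Bernoulli polynomials -- are absorbed entirely into the MZV- or MGF-valued coefficients of the $\Lambda^{\rm sv}$ expansion, and do not leak into the ``free'' part. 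This absorption is guaranteed structurally by the shape of our generating series, since all MZV content is packaged into $\mathbb M^{\rm sv}_\sigma$ and $\mathbb M^{\rm sv}_{\Sigma(u)}$, but writing out a rigorous induction on modular depth that confirms it for every graph topology of one-variable eMGFs is the step that requires the most care.
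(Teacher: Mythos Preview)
Your proposal is correct and follows essentially the same route as the paper. The paper's proof is a one-liner that takes Lemma~\ref{ABKJ} as given and then observes that shuffle products of $\Lambda^{\rm sv}$ correspond to products of eMGFs of lower lattice weight, which are by definition decomposable; hence indecomposable eMGFs need only shuffle-independent $\Lambda^{\rm sv}$. Your steps (b) and (c) are exactly this argument, and your formulation via a surjection from the shuffle quotient onto the indecomposable-eMGF space is a clean way to phrase the bound.

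The main difference is that you spend most of your effort on step (a) and on the ``main obstacle'' of controlling integration constants at the cusp --- but these are precisely the content of Lemma~\ref{ABKJ} and its proof (together with the sieve-algorithm machinery of section~\ref{sec:2.2.4}), which the corollary is allowed to invoke. In other words, you are re-deriving the lemma inside the proof of its corollary. Once Lemma~\ref{ABKJ} is in hand, the corollary really does reduce to the single observation about shuffles, and your concern about MZV leakage is already handled upstream.
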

The corollary follows from the fact that products of eMGFs of lower lattice weights are not counted into their indecomposable representatives which rules out any shuffle product (\ref{shrls}) of single-valued eMPLs.

If the bound in Corollary \ref{corcount} is saturated, then eMGFs in one variables can be used to express arbitrary $\Lambda^{\rm sv}\!\left[\begin{smallmatrix}j_1&\dots&j_\ell\\k_1&\dots&k_\ell\end{smallmatrix};z,\tau\right]$ with entries in the above range when combined with MGFs and single-valued MZVs. The counting of indecomposable eMGFs at $|A|{+}|B| \leq 10$ in section~4 of \cite{Hidding:2022vjf} and the KZB equations obeyed by the generating series of genus-one Koba-Nielsen integrals in sections 4 and 5 of \cite{Dhoker:2020gdz} with matrix representations of the generators $b_k^{(j)},\ep_k^{(j)}$ suggest that this is the case. Hence, the counting of shuffle-independent single-valued eMPLs $\Lambda^{\rm sv}\!\left[\begin{smallmatrix}j_1&\dots&j_\ell\\k_1&\dots&k_\ell\end{smallmatrix};z,\tau\right]$ is actually expected to {\it equal} the number of indecomposable eMGFs in one variable with lattice weights in (\ref{ctemgf.02}).

\begin{table}[h]
\centering
\begin{tabular}{c||c |c|c|c|c|c|c|c|c|c|c|c|c|c|c|c|c|c}
\backslashbox{$k$}{$w$}&{-}8&{-}7&{-}6&{-}5&{-}4&{-}3&{-}2&{-}1&0&1&2&3&4&5&6&7&8
\\\hline\hline
2&&&&&&&&&1&&&&&&&&
\\\hline
3&&&&&&&&1&&1&&&&&&&
\\\hline
4&&&&&&&1&&1&&1&&&&&&
\\\hline
5&&&&&&1&&2&&2&&1&&&&&
\\\hline
6&&&&&1&&2&&3&&2&&1&&&&
\\\hline
7&&&&1&&3&&5&&5&&3&&1&&&
\\\hline
8&&&1&&3&&7&&8&&7&&3&&1&&
\\\hline
9&&1&&4&&9&&14&&14&&9&&4&&1&
\\\hline
10&1&&4&&12&&20&&25&&20&&12&&4&&1
\end{tabular}
\caption{\label{tab:count} Counting of independent eMGFs, sorting the columns by modular weight $(0,w)$ with $w=|B|{-}|A|$ and the rows by degree $k= |A|{+}|B|$.}
\end{table}

Table \ref{tab:count} displays the numbers of indecomposable eMGFs in one variable up to and including $|A|{+}|B|=10$, organized by the degree $k= \sum_{i}k_i = |A|+|B|$ of the associated single-valued eMPL and its antiholomorphic modular weight $(0,w)$ with $w = \sum_i(k_i{-}2j_i{-}2) =|B|{-}|A|$. All entries are consistent with table 2 of \cite{Hidding:2022vjf} in early section 4 of the reference, which actually counts the union of indecomposable MGFs and one-variable eMGFs (excluding MZVs). The counting of indecomposable MGFs one needs to subtract from the entries of table 2 of \cite{Hidding:2022vjf} can be found in table 2 in section 6.2 \cite{Gerken:2020yii}. Combining the input from the references leads to the numbers in table \ref{tab:count} obtained from the counting of shuffle-independent single-valued eMPLs.

\subsubsection{Counting shuffle-independent single-valued eMPLs}
\label{sec632sh}

The shuffle-independent coefficients $\Lambda^{\rm sv}\!\left[\begin{smallmatrix}j_1&\dots&j_\ell\\k_1&\dots&k_\ell\end{smallmatrix};z,\tau\right]$ of the series $\mathbbm{\Lambda}^{\rm sv}_{x,y}(z,\tau)$ in (\ref{cor.3.4d}) are in one-to-one correspondence with Lie polynomials in the accompanying generators $b_k^{(j)}$ \cite{Reutenauer}. 
Hence, the problem of bounding the number of indecomposable eMGFs in one variable via Corollary \ref{corcount} reduces to the counting of Lie polynomials in the free-Lie-algebra generators $b_k^{(j)}$ with $k\geq 2$ and $0\leq j \leq k{-}2$. For given lattice weights $|A|$, $|B|$, the admissible Lie polynomials are built from $b_{k_1}^{(j_1)}\ldots b_{k_\ell}^{(j_\ell)}$ at different modular depth $\ell$ whose entries $j_i$, $k_i$ satisfy (\ref{ctemgf.02}).

A convenient way of simultaneously addressing all lattice weights compatible with a given degree $k_1{+}\ldots{+}k_\ell = |A|{+}|B|$ is to exploit the action (\ref{lieg1.28}) of the $\mathfrak{sl}_2$ generators $\ep_0$ and $\ep^\vee_0$ on $b_k^{(j)}$. Given that the set $\{ b_k^{(j)} , \, j=0,1,\ldots,k{-}2 \}$ at fixed $k$ forms a $(k{-}1)$-dimensional irreducible representation (irrep) of the $\mathfrak{sl}_2$ algebra to be denoted by $(k{-}1)$, both concatenation products and Lie polynomials in $b_{k_1}^{(j_1)}\ldots b_{k_\ell}^{(j_\ell)}$ fall into the tensor products $(k_1{-}1)\otimes \ldots\otimes (k_\ell{-}1)$. The latter can be straightforwardly decomposed into irreps of $\mathfrak{sl}_2$ by iteratively using
\beq
(i)\otimes (j)= \big(|i{-}j|{+}1 \big)\oplus \big(|i{-}j|{+}3\big)\oplus\dots\oplus (i{+}j{-}3)\oplus (i{+}j{-}1)
\label{tensssl2}
\eeq
Whenever an irrep $(d)$ in this tensor product corresponds to Lie polynomials in $b_k^{(j)}$ as opposed to shuffles, then the middle $d$ entries in the relevant row of table \ref{tab:count} are incremented by one. In other words, irreps $(d)$ of $\mathfrak{sl}_2$ corresponding to Lie polynomials in $b_{k_1}^{(j_1)}\ldots b_{k_\ell}^{(j_\ell)}$ predict indecomposable eMGFs with $|A|{+}|B|=k_1{+}\ldots{+}k_\ell$, namely one representative per antiholomorphic modular weight ${-}d{+}1,\,{-}d{+}3,\,\ldots,d{-}3,\, d{-}1$.

It remains to identify the irreps of $\mathfrak{sl}_2$ in the tensor product $(k_1{-}1)\otimes \ldots\otimes (k_\ell{-}1)$ that correspond to Lie polynomials. The criterion depends on the modular depth $\ell$ of the $b_{k_1}^{(j_1)}\ldots b_{k_\ell}^{(j_\ell)}$ under consideration and whether some of the $k_r=k_s$ coincide for $1\leq r<s\leq \ell$. The analogous identification of ``shuffle-irreducible'' combinations of $\ep_k^{(j)}$ up to $\ell=3$ has been performed in section 3.4 of \cite{Dorigoni:2024oft} and readily translates into the setting of~$b_k^{(j)}$:
\begin{itemize}
\item $b_{k_1}^{(j_1)}$ at $\ell=1$ is a particularly simple case of a Lie polynomial, and the associated $(k_1{-}1)$ contributes to the entries of table \ref{tab:count};
\item for $b_{k_1}^{(j_1)} b_{k_2}^{(j_2)}$ at $\ell=2$ with $k_1=k_2$, only the antisymmetric part of the tensor product $(k_1{-}1)^{\otimes 2}$ corresponds to Lie polynomials which is given by $\big(|i{-}j|{+}3 \big)\oplus\big(|i{-}j|{+}7 \big)\oplus \dots\oplus (i{+}j{-}7)\oplus (i{+}j{-}3) 
= (3)\oplus (7)\oplus \dots \oplus (2k_1{-}9)\oplus (2k_1{-}5)$, skipping every other irrep in (\ref{tensssl2}) including the outermost~ones;
\item for $b_{k_1}^{(j_1)} b_{k_2}^{(j_2)}$ at $\ell=2$ with $k_1\neq k_2$, independent commutators $[b_{k_1}^{(j_1)}, b_{k_2}^{(j_2)}]$ are characterized by ordered pairs $k_1<k_2$ such that each irrep in the tensor product $(k_1{-}1) \otimes (k_2{-}1)$ with $k_1<k_2$ contributes to table \ref{tab:count};\footnote{At $k_1=3$ and $k_2=4$, for instance, the brackets $[b^{(j_1)}_3,b^{(j_2)}_4]$ built from a doublet $(2)$ and triplet $(3)$ of $\mathfrak{sl}_2$ fall into the tensor product $(2)\otimes (3)= (2)\oplus (4)$. The irreps on the right side signal indecomposable eMGFs with $|A|{+}|B|=7$ and modular weights
$(0,{-}1)$ and $(0,1)$ from $(2)$ as well as $(0,{-}3)$, $(0,{-}1)$, $(0,1)$ and $(0,3)$ from~$(4)$. The additional irreps contributing to the relevant line at $k=7$ in table \ref{tab:count} are one $(6)$ from $b_7^{(j)}$, one $(4)$ from $[b_2,b^{(j)}_5]$ and one $(2)$ from $[b_2,[b_2,b^{(j)}_3]]$.}
\item for $b_{k_1}^{(j_1)} b_{k_2}^{(j_2)} b_{k_3}^{(j_3)}$ at $\ell=3$ with all of $k_1= k_2=k_3$ equal, Lie-polynomials fall into half of the mixed-symmetry projection of the triple tensor product; in other words, contributions to table \ref{tab:count} are obtained by removing the totally symmetric and totally antisymmetric parts of $(k_1{-}1)^{\otimes 3}$ and diving the multiplicities of the leftover irreps by~2, see (3.89) of \cite{Dorigoni:2024oft};
\item for $b_{k_1}^{(j_1)} b_{k_2}^{(j_2)} b_{k_3}^{(j_3)}$ at $\ell=3$ with $k_1= k_2$ but one distinct $k_3\neq k_1,k_2$, one can take $[b_{k_1}^{(j_1)}, [b_{k_2}^{(j_2)},b_{k_3}^{(j_3)} ]]$ as representatives of the independent Lie polynomials; each irrep within $(k_1{-}1) \otimes (k_2{-}1) \otimes (k_3{-}1)$ contributes to table \ref{tab:count};
\item for $b_{k_1}^{(j_1)} b_{k_2}^{(j_2)} b_{k_3}^{(j_3)}$ at $\ell=3$ with $k_1, k_2,k_3$ pairwise distinct, independent Lie polynomials can be spanned by $[b_{k_1}^{(j_1)}, [b_{k_2}^{(j_2)},b_{k_3}^{(j_3)} ]]$ and $[b_{k_2}^{(j_2)}, [b_{k_1}^{(j_1)},b_{k_3}^{(j_3)} ]]$ after imposing an ordering $k_1<k_2<k_3$. By the two inequivalent bracketings under Jacobi identities, each irrep within $(k_1{-}1) \otimes (k_2{-}1) \otimes (k_3{-}1)$ contributes twice to table \ref{tab:count}.
\end{itemize}
Similar group-theoretic methods can be applied to $b_{k_1}^{(j_1)} \ldots b_{k_\ell}^{(j_\ell)}$ at higher modular depth $\ell\geq 4$ to organize their independent Lie polynomials into irreps of $\mathfrak{sl}_2$.

\begin{table}[h]
\centering
\begin{tabular}{c||c }
\text{Lie polynomials in}&\text{irreps of $\mathfrak{sl}_2$}
\\\hline\hline
$b_9^{(j)}$ & $$(8)$$\\\hline
$b_2\, b_7^{(j_2)}$& $(1)\otimes (6)= (6)$
\\\hline
$b_3^{(j_1)}\, b_6^{(j_2)}$& $(2)\otimes (5)=(4)\oplus (6)$
\\\hline
$b_4^{(j_1)}\, b_5^{(j_2)}$& $(3)\otimes (4)=(2)\oplus (4)\oplus (6)$
\\\hline
$b_2\, b_2\, b_5^{(j_3)}$& $(1)\otimes (1)\otimes (4)=(4)$
\\\hline
$b_2\, b^{(j_2)}_3\, b^{(j_3)}_4$& $(1)\otimes (2)\otimes (3)=(2)\oplus (4)$
\\\hline
$b^{(j_2)}_3\, b_2 \,b^{(j_3)}_4$& $(1)\otimes (2)\otimes (3)=(2)\oplus (4)$
\\\hline
$b^{(j_1)}_3 \, b^{(j_2)}_3\, b^{(j_3)}_3$& 
$\frac{1}{2}\big[(2)^{\otimes 3} \ominus (4) \big] =(2)$
\\\hline
$b_2\,b_2\,b_2\,b^{(j_4)}_3$ & $(1)\otimes (1)\otimes (1)\otimes (2)=(2)$
\end{tabular}
\caption{\label{tab:count2} Irreps of $\mathfrak{sl}_2$ for Lie polynomials in $b^{(j_1)}_{k_1}\ldots b^{(j_\ell)}_{k_\ell}$ of degree $k_1{+}\ldots{+}k_\ell = 9$ which give rise to the entries in the second line from below of table \ref{tab:count}. By footnote \ref{ftaltsym}, the $\ominus$ symbol in the second line from below instructs to remove the totally symmetric irrep $(4)$ from $(2)^{\otimes 3}$.}
\end{table}

As an example at degree $k_1{+}\ldots{+}k_\ell = 9$, the irreps of $\mathfrak{sl}_2$ contributing to indecomposable eMGFs with lattice weights $ |A|{+}|B|=9$ are listed in the right column of table \ref{tab:count2}. This reproduces the second line from below of table \ref{tab:count} by assembling its entries from the irreps $(8)\oplus 3(6)\oplus 5(4)\oplus 5(2)$ of the Lie polynomials (recalling that $(d)$ contributes 1 to the middle $d$ entries). Note that for the case of $b^{(j_1)}_3 b^{(j_2)}_3 b^{(j_3)}_3$, the  relevant tensor product is $(2)\otimes (2)\otimes (2) = (4)\oplus(2)\oplus(2)$ but only one of the $(2)$ on the right side corresponds to Lie polynomials by the above  criteria.\footnote{\label{ftaltsym}The triple tensor product $(2)\otimes (2)\otimes (2) = (4)\oplus(2)\oplus(2)$ has $(4)$ as its totally symmetric part and no totally antisymmetric part. Hence, the mixed-symmetry projection is $(2)\oplus(2)$, resulting in a single $(2)$ contribution to the Lie polynomials $[b^{(j_1)}_3 ,[b^{(j_2)}_3, b^{(j_3)}_3]]$ by taking half of the multiplicities according to the above criterion.} The case of $b_2 b_2 b_2 b^{(j)}_3$ at modular depth four gives rise to one $(2)$ of Lie polynomials since $[b_2,[ b_2,[ b_2 ,b^{(j)}_3]]]$ is the unique bracketing of the generators.

\section{Conclusion and outlook}
\label{sec:outopen}

In this work, the construction of non-holomorphic modular forms from equivariant iterated Eisenstein integrals \cite{Brown:2017qwo, Brown:2017qwo2, Dorigoni_2022, Dorigoni:2024oft} is generalized to the case of additional Kronecker-Eisenstein integration kernels $\dd \tau \, f^{(k)}(u\tau{+}v,\tau)$ that depend on a point $z=u\tau{+}v$ on the torus. We present a generating series that combines iterated $\tau$ integrals over (Kronecker-)Eisenstein kernels with their complex conjugates and single-valued multiple zeta values to attain the desired modular properties. The differential equations and double periodicity of our equivariant series in~$z$ identifies its coefficients as single-valued elliptic polylogarithms. Our construction relies on Lie-algebraic tools including Tsunogai derivations and zeta generators, as well as their action on the fundamental group of the torus. The counting of single-valued elliptic polylogarithms at fixed modular weight and degrees in the accompanying Lie-algebra generators translates into that of independent elliptic modular graph forms.

The techniques and results of this work raise a variety of follow-up questions at the interface of string perturbation theory, algebraic geometry, and number theory, with potential applications to Feynman integrals in particle physics and gravity:
\begin{itemize}
\item An immediate task is to relate the single-valued elliptic polylogarithms in this work to those of Baune, Broedel and M\"ockli obtained from complementary methods~\cite{Baune:2025svempl}.
\item Another direction of
 follow-up research is the construction of single-valued elliptic polylogarithms in multiple variables $z_1,\ldots,z_n$. Their generating series is expected to mirror the structure of the genus-zero formulae for single-valued polylogarithms in $n{+}1$ variables in terms of zeta generators \cite{Frost:2023stm}. It will be interesting to compare different fibration bases, including formulations of single-valued $z_1,\ldots,z_n$-dependent elliptic polylogarithms as iterated $\tau$~integrals.
\item The generating series of elliptic modular graph forms in \cite{Dhoker:2020gdz} obey the same type of differential equations as that of equivariant iterated integrals in this work, but with matrix representations of some of the Lie-algebra generators. It would be interesting to investigate echos of zeta generators in these matrix representations and to understand the general link between the Lie-algebra structure of this work and twisted cohomology, for instance, via double-copy formulae for Riemann-Wirtinger integrals \cite{Bhardwaj:2023vvm, Mazloumi:2024wys, Pokraka:2025zlh}. 
\item Upon specialization to rational points $z=u\tau{+}v$ with $u,v \in \mathbb{Q}$, the Kronecker-Eisenstein kernels $\dd \tau \, f^{(k)}(u\tau{+}v,\tau)$ become Eisenstein series of congruence subgroups of ${\rm SL}_2(\mathbb Z)$ \cite{Broedel:2018iwv}. It would be interesting to extend the construction of their equivariant iterated integrals in \cite{Drewitt:2023con, Duhr:2025lvr} to higher depth and to compare with special values of the single-valued elliptic polylogarithms constructed in this work.
\item The recent literature offers several approaches to polylogarithms on Riemann surfaces of arbitrary genus \cite{BEFZ:2110, Ichikawa:2022qfx, BEFZ:2212, DHoker:2023vax, Baune:2024biq, DHoker:2024ozn, Baune:2024ber, DHoker:2025szl, Baune:2025sfy, Ichikawa:2025kbi}. It would be interesting to explore the analogues of the Tsunogai derivations or zeta generators beyond genus one, to construct single-valued versions of higher-genus polylogarithms and to pinpoint their connection with higher-genus modular graph functions \cite{DHoker:2013fcx, DHoker:2014oxd, Pioline:2015qha, DHoker:2017pvk, DHoker:2018mys, DHoker:2020tcq, Basu:2020goe, Basu:2021xdt} and tensors \cite{DHoker:2020uid}.
\item Zeta generators give rise to a formal similarity between the motivic coaction and the single-valued map of genus-zero polylogarithms \cite{Frost:2023stm, Frost:2025lre}. In the same way as the description of single-valued iterated Eisenstein integrals \cite{Dorigoni:2024oft} via zeta generators translated into a recent proposal for the motivic coaction of their holomorphic counterparts \cite{Kleinschmidt:2025dtk}, we plan to study the implications of the single-valued elliptic polylogarithms of this work for explicit coaction formulae for meromorphic or Brown-Levin elliptic polylogarithms.
\end{itemize}

\appendix

\section{Standard definitions and background material}
\label{sec:A}

This appendix gathers standard definitions and techniques relevant for different types of iterated integrals at genus zero and genus one.

\subsection{Multiple polylogarithms and multiple zeta values}
\label{sec:A.1}

Our conventions for multiple polylogarithms (MPLs) are fixed by \cite{Goncharov:2001iea}
\beq
G(a_1,\ldots,a_r;z) = \int^z_0 \frac{\dd t}{t{-}a_1} \, G(a_2,\ldots,a_r;t)
\, , \ \ \ \ \ \  G(\emptyset;z) = 1
\label{appA.01}
\eeq
where $r\in \mathbb N$ and $z,a_1,\ldots,a_r \in \mathbb C$. Endpoint divergences for $a_1=z$
and $a_r= 0$ are shuffle-regularized by assigning regularized values (see for instance \cite{Panzer:2015ida, Abreu:2022mfk})
\beq
G(0;z) = \log(z) \, , \ \ \ \ \ \ G(z;z) = - \log(z)
\label{appA.02}
\eeq
Multiple zeta values (MZVs) are defined by nested sums over integers $k_i$ and can alternatively be obtained from special values of MPLs at $z=1$
\begin{align}
    \zeta_{n_1,n_2,\dots, n_r}&=\sum_{0<k_1<\dots <k_r}k_1^{-n_1}k_2^{-n_2}\dots k_r^{-n_r}
\label{appA.03}\\
    &=(-1)^r \, G( \underbrace{0,\dots, 0}_{n_r-1},1,\dots,\underbrace{0,\dots, 0}_{n_2-1},1,\underbrace{0,\dots, 0}_{n_1-1},1;1)
    \notag
\end{align}
where $n_1,\ldots,n_r \in \mathbb N$ with $n_r\geq 2$.

\subsection{$f$-alphabet description of multiple zeta values}
\label{sec:A.2}

MZVs obey infinite families of relations over $\mathbb Q$ that follow from their representations (\ref{appA.03}) as nested sums or iterated integrals \cite{Jianqiang, BurgosFresan}. In several situations of this work, we shall rely on a description of conjecturally $\mathbb Q$-independent MZVs through the $f$-alphabet \cite{Brown:2011ik, BrownTate}: Its key ingredients are non-commuting generators $f_3,f_5,\ldots$ for odd Riemann zeta values and a commuting generator $f_2$ for even Riemann zeta values obtained from an isomorphism $\rho$ via
\beq
\rho( \zeta_{2k+1} ) = f_{2k+1}
\, , \ \ \ \ \ \
\rho( \zeta_{2k} ) = f_{2k} = \frac{\zeta_{2k}}{(\zeta_2)^k}\, (f_2)^k
\label{appA.04}
\eeq
The extension of the $f$-alphabet isomorphism $\rho$ to MZVs of depth $\geq 2$ and to products of MZVs is largely determined by the Hopf-algebra structure of motivic MZVs \cite{Goncharov:2005sla, BrownTate, Brown2014MotivicPA, brown2017notes}\footnote{We refer to the references in the algebraic-geometry literature for the definition of motivic MZVs $\zeta^\mot_{n_1,\ldots,n_r}$. Even though the $f$-alphabet isomorphism $\rho$ and the single-valued map are only well-defined in a motivic setting, we shall write $\zeta_{n_1,\ldots,n_r}$ instead of $\zeta^\mot_{n_1,\ldots,n_r}$ in (\ref{appA.06}), (\ref{appA.08}), (\ref{appA.sv}) and (\ref{sv335}) in abuse of notation.}: multiplication of (motivic) MZVs translates into the shuffle product of the $f$-alphabet
\beq
\rho( \zeta_{n_1,\ldots,n_r} \, \zeta_{m_1,\ldots ,m_s} ) = \rho(  \zeta_{n_1,\ldots,n_r} ) \shuffle  \rho(\zeta_{m_1,\ldots ,m_s} )
\label{appA.05}
\eeq
where $f_2^m f_{i_1}\cdots f_{i_r} \shuffle f_2^n f_{j_1}\cdots f_{j_s} = f_2^{m+n} (f_{i_1}\cdots f_{i_r} \shuffle  f_{j_1}\cdots f_{j_s})$ for any $m,n \in \mathbb N_0$ and $i_k,j_k\in 2\mathbb N{+}1$, with the standard shuffle product for words in $f_w$ with $w \in 2\mathbb N{+}1$.\footnote{The shuffle product among words in $f_w$ with $w\in 2\mathbb N{+}1$ is defined by the role of the empty word $\emptyset$ as a neutral element, $\emptyset \shuffle f_{i_1}\cdots f_{i_r} = f_{i_1}\cdots f_{i_r} \shuffle  \emptyset = f_{i_1}\cdots f_{i_r} $ and the following recursion for $r,s\geq 1$:
\[
f_{i_1} f_{i_2}\cdots f_{i_r} \shuffle  f_{j_1} f_{j_2}\cdots f_{j_s}
= f_{i_1}(f_{i_2}\cdots f_{i_r} \shuffle  f_{j_1}\cdots f_{j_s})
+ f_{j_1} (f_{i_1}\cdots f_{i_r} \shuffle  f_{j_2}\cdots f_{j_s})
\]}
The Goncharov-Brown coaction $\Delta$ of (motivic) MZVs \cite{Goncharov:2001iea, Goncharov:2005sla, BrownTate, Brown:2011ik} is imposed to translate into the deconcatenation coaction $\Delta_{\rm dec}$ of words in $f_w$ with $w \in 2\mathbb N{+}1$ according to $\Delta_{\rm dec}\circ \rho = \rho \circ \Delta$. This fixes the action of the $f$-alphabet isomorphism $\rho$ on each independent indecomposable higher-depth MZV $\zeta_{n_1,\ldots,n_r}$ up to a rational number $q_i \in \mathbb Q$ multiplying $f_{n_1+\ldots+n_r}$, for instance
\begin{align}
\rho( \zeta_{3,5}) &= - 5 f_3 f_5 + q_8 f_8  \label{appA.06} \\
\rho( \zeta_{3,7}) &= - 14 f_3 f_7 - 6 f_5 f_5 + q_{10} f_{10} \notag \\
\rho( \zeta_{3,3,5}) &= - 5 f_3 f_3 f_5 - 45 f_2 f_9 - \frac{6}{5} f_2^2 f_7  + \frac{4}{7} f_2^3 f_5 +q_{11} f_{11} 
\notag
\end{align}
By choosing $q_i=0$ for all depth $\geq 2$ MZVs in the reference $\mathbb Q$-bases of the datamine \cite{Blumlein:2009cf}, one arrives at the 
$f$-alphabet images up to weight 16 given in \cite{Schlotterer:2012ny}. The canonical choice of $f$-alphabet isomorphism described in \cite{Dorigoni:2024iyt}, by contrast, yields $(q_8,q_{10},q_{11}) = ( \frac{100471 }{ 35568 } ,  \frac{408872741707 }{40214998720} , \frac{1119631493 }{14735232 } )$ for the rational parameters in (\ref{appA.06}), also see \cite{Keilthy} for an earlier discussion of canonical zeta generators and the ancillary file of \cite{Dorigoni:2024oft} for the $f$-alphabet images of the indecomposable MZVs up to and including weight 17.

The definition (\ref{not.05}) of the series $\mathbb M^{\rm sv}_\sigma$ in zeta generators involves the inverse of the isomorphism $\rho$ which is given as follows up to weight 11 as a consequence of (\ref{appA.05}) and (\ref{appA.06}):
\begin{align}
\rho^{-1}(f_3 f_5) &= - \frac{1}{5} \zeta_{3, 5} +\frac{q_8}{5}\zeta_8   \label{appA.08}\\
\rho^{-1}(f_3 f_7) &=  - \frac{1}{14} \zeta_{3,7}   - \frac{ 3}{14} (\zeta_5)^2 
+\frac{q_{10}}{14}\zeta_{10}  
\notag \\
\rho^{-1}(f_3 f_3 f_5) &= -\frac{1}{5} \zeta_{3, 3, 5} +
\frac{1}{2}\zeta_5 \zeta_6 - 
  \frac{3}{5}  \zeta_7 \zeta_4 -
  9 \zeta_9 \zeta_2+
  \frac{q_{11}}{5} \zeta_{11}   \notag
\end{align}
As a key virtue of the $f$-alphabet, the single-valued map of MZVs \cite{Schnetz:2013hqa, Brown:2013gia} takes the simple~form
\beq
{\rm sv}(f_2^N f_{i_1}\ldots f_{i_r}) = \delta_{N,0} \sum_{j=0}^r f_{i_j}\ldots f_{i_2} f_{i_1} \shuffle f_{i_{j+1}} f_{i_{j+2}} \ldots f_{i_{r}}
 \label{appA.00}
\eeq
By (\ref{appA.04}) and (\ref{appA.06}), this implies
\beq
{\rm sv}(\zeta_{2k+1}) = 2 \zeta_{2k+1} \, , \ \ \ \ \ \ 
{\rm sv}(\zeta_{2k}) = 0 \, , \ \ \ \ \ \ 
{\rm sv}(\zeta_{3,5}) = -10 \zeta_{3}\zeta_5 
 \label{appA.sv}
\eeq
and in particular identifies the first (conjecturally) indecomposable single-valued MZV beyond depth one to occur at weight 11:
\beq
{\rm sv}(\zeta_{3,3,5})  = 2 \zeta_{3,3,5} - 5 \zeta_3^2 \zeta_5 + 90 \zeta_2 \zeta_9
+ \frac{12}{5} \zeta_2^2 \zeta_7 - \frac{8}{7} \zeta_2^3 \zeta_5
 \label{sv335}
\eeq

\subsection{Kronecker-Eisenstein series and kernels}
\label{sec:A.3}

Following the normalization conventions
\beq
\theta_1(z,\tau) = q^{1/8} (e^{i\pi z} - e^{-i\pi z} ) \prod_{n=1}^\infty (1-q^n)(1-e^{2\pi i z} q^n)(1-e^{-2\pi i z} q^n)
\label{appA.09}
\eeq
for the odd Jacobi theta function
(with $z,\tau \in \mathbb C$ and $\Im \tau > 0$), we define the meromorphic Kronecker-Eisenstein series $F(z,\alpha,\tau)$ and the associated integration kernels
$g^{(n)}(z,\tau)$ by
\beq
F(z,\alpha,\tau) = \frac{ \theta'_1(0,\tau)  \theta_1(z{+}\alpha,\tau)  }{ \theta_1(z,\tau)  \theta_1(\alpha,\tau)  }
= \sum_{n=0}^\infty \alpha^{n-1} g^{(n)}(z,\tau)
\label{appA.10}
\eeq
leading, for instance, to $g^{(0)}(z,\tau) = 1$ and $g^{(1)}(z,\tau) = \partial_z \log \theta_1(z,\tau)$. In the CEE connection (\ref{defKcon}), the commuting indeterminate $\alpha \in \mathbb C \setminus \{0\}$ is promoted to the derivation ${\rm ad}_x$ at the level of the series expansion in (\ref{appA.10}).

The quasi-periodicity of $\theta_1$ gives rise to the following monodromies along the $A$- and $B$-cycles (for $n\geq 0$)
\begin{align}
g^{(n)}(z{+}1,\tau) &= g^{(n)}(z,\tau) 
\label{bmon} \\
g^{(n)}(z{+}\tau,\tau) &= g^{(n)}(z,\tau)  + \sum^n_{k=1}  \frac{ (-2\pi i)^k }{k!}  g^{(n-k)}(z,\tau)
\notag
\end{align}
As a consequence, $g^{(1)}(z,\tau)$ has simple poles at all $z \in \mathbb Z {+} \tau \mathbb Z$ with unit residue whereas $g^{(k)}(z,\tau)$ with $k \geq 2$ are regular at the origin but have simple poles
at $z = m\tau {+}n$ with $m,n\in \mathbb Z$ and $m\neq 0$ as mandated by the $B$-cycle monodromies in (\ref{bmon}).

A doubly-periodic but non-meromorphic completion of $F(z,\alpha,\tau)$ and $g^{(n)}(z,\tau)$ in (\ref{appA.10}) is achieved through an extra factor of $\exp (2\pi i \alpha  \frac{\Im z}{\Im \tau} )$ in the generating series
\beq
\Omega(z,\alpha,\tau) = \exp \bigg(2\pi i \alpha \, \frac{\Im z}{\Im \tau} \bigg) \, F(z,\alpha,\tau)
= \sum_{n=0}^\infty \alpha^{n-1} f^{(n)}(z,\tau)
\label{appA.11}
\eeq
such that, for instance, $f^{(0)}(z,\tau) = 1$ and 
$f^{(1)}(z,\tau) = \partial_z \log \theta_1(z,\tau) + 2\pi i \frac{\Im z}{\Im \tau}$. The non-meromorphicity of $\Omega$ and $f^{(n)}$ with $n\geq 1$ in $z$ is reflected by the derivatives
\begin{align}
\partial_{\bar z}  f^{(n)}(z,\tau)  &= -\frac{\pi}{\Im \tau} \, f^{(n-1)}(z,\tau) + \pi \delta_{n,1} \delta^2(z)
\notag \\
\partial_{\bar z} \Omega(z,\alpha,\tau) &= - \frac{\pi \alpha}{\Im \tau}\, \Omega(z,\alpha,\tau) + \pi  \delta^2(z)
\label{pbarf}
\end{align}
where the delta distribution arises from the simple pole of $ f^{(1)}(z,\tau) $ at the origin. The entire tower of $f^{(n)}(z,\tau) $ with $n \geq 2$ is regular throughout $z\in \mathbb C$. At fixed $z\in \mathbb C$, both $\Omega$ and $f^{(n)}$ in (\ref{appA.11}) have non-zero antiholomorphic derivatives under $\partial_{\bar \tau}$. However, when viewed as a function of $u,v \in \mathbb R$ instead of $z = u\tau{+}v \in \mathbb C$, the lattice-sum representation (\ref{latsumgf}) exposes that any $f^{(n)}(u\tau{+}v,\tau)$ with $n\geq 0$ is holomorphic in $\tau \in \mathbb C$ for $\Im \tau>0$.

The meromorphic and doubly-periodic integration kernels can be straightforwardly converted into each other via
\beq
 f^{(n)}(z,\tau) =  \sum_{k=0}^n  \frac{1}{k!} \, \big(  2\pi i u\big)^k \, g^{(n-k)}(z,\tau)
\label{appA.12}
\eeq
Note that the mixed heat equation
\beq
2\pi i \partial_\tau F(z,\eta,\tau) = \partial_z \partial_\eta F(z,\eta,\tau)
\label{appA.31}
\eeq
of the meromorphic Kronecker-Eisenstein series has the following doubly-periodic echo
\beq
2\pi i \partial_\tau \Omega(u\tau{+}v,\eta,\tau) = \partial_v \partial_\eta \Omega(u\tau{+}v,\eta,\tau)
\label{appA.32}
\eeq
when differentiating in the co-moving coordinate $v$.

\subsection{Path-ordered exponentials}
\label{sec:A.4}

Given a manifold $M$ and a Lie-algebra valued one-form connection $\conpl(t)$
depending on $t\in M$, the path-ordered exponential for a path between endpoints $a,b \in M$ is defined by
\begin{align}
{\rm Pexp}\bigg( \int^a_b \conpl(t) \bigg) &= 1 + \int^a_b \conpl(t) + \sum_{n=2}^\infty
 \int^a_b \conpl(t_1) \int^{t_1}_b \conpl(t_2) \ldots \int^{t_{n-1}}_b \conpl(t_n)
\notag \\
&= 1 + \int^a_b \conpl(t) + \int^a_b\conpl(t_1) \int^{t_1}_b \conpl(t_2) + \ldots
  \label{defpexp}
\end{align}
If the connection obeys the Maurer-Cartan equation or flatness condition
\beq
\dd_t \conpl(t) = \conpl(t) \wedge \conpl(t)
\label{fltness}
\eeq
then the path-ordered exponential (\ref{defpexp}) is homotopy invariant, i.e.\ only depends
on the endpoints $a,b$ and the homotopy class of the integration path on $M$.
In this case, the path-ordered exponential obeys the following differential equations
in the endpoints:
\begin{align}
\dd_a {\rm Pexp}\bigg( \int^a_b \conpl(t) \bigg) &= \conpl(a) {\rm Pexp}\bigg( \int^a_b \conpl(t) \bigg)  \notag \\
\dd_b {\rm Pexp}\bigg( \int^a_b \conpl(t) \bigg) &= - {\rm Pexp}\bigg( \int^a_b \conpl(t) \bigg) \conpl(b)   \label{notsec.01}
\end{align}
The composition-of-paths formula for iterated integrals implies that
\begin{align}
 {\rm Pexp}\bigg( \int^a_c \conpl(t) \bigg) &=   {\rm Pexp}\bigg( \int^a_b \conpl(t) \bigg)  {\rm Pexp}\bigg( \int^b_c \conpl(t) \bigg)
 \label{cpath}
\end{align}
for arbitrary points $a,b,c \in M$ as long as the homotopy classes of the paths on the right side are compatible with those on the left side. In particular, setting $a=c$ identifies the following inverse with respect to the concatenation product
\beq
 {\rm Pexp}\bigg( \int^a_b \conpl(t) \bigg)^{-1}  =  {\rm Pexp}\bigg( \int^b_a \conpl(t) \bigg) 
 \label{invpath}
\eeq

\section{Complex conjugations}
\label{sec:B}

This appendix gathers explicit expansion formulae and differential equations of complex conjugates of various generating series in the main text. The formulae of this appendix may come in handy for
concrete computations where detailed control on all minus signs is important.

\subsection{Complex conjugate generating series}
\label{sec:B.1}

The generating series (\ref{genKempl}) of eMPLs defined by (\ref{defempl}) can be complex conjugated to find
\begin{align}
{\rm Pexp} \biggl(\int^z_0
\overline{ \tilde{\mathbb K}_{x,-y}(z_1,\tau)}\biggr) &= 1- \sum_{n_1=0}^{\infty} (-2\pi i)^{1-n_1} \ad_x^{n_1}(y) \overline{ \tilde \Gamma\big( \smallmatrix n_1 \\ 0 \endsmallmatrix ;z,\tau\big) } +\ldots
\label{cctilK} \\
 &\quad + \sum_{n_1,n_2=0}^{\infty} (-2\pi i)^{2-n_1-n_2} \ad_x^{n_1}(y) \ad_x^{n_2}(y) \overline{\tilde \Gamma\big( \smallmatrix n_1 &n_2 \\ 0 &0 \endsmallmatrix ;z,\tau\big)} +\ldots
 \notag
\end{align}
The transposed complex conjugate of the generating series (\ref{notsec.18}) of iterated integrals (\ref{lieg1.35}) involving holomorphic Eisenstein series in its integration kernels $\dd \tau\, \tau^j \est_k(\tau)$ is given by
\begin{align}
\overline{ \mathbb I_{\ep^{\rm TS}}(\tau)^T } &=  1+\sum_{k_1=4}^\infty (k_1{-}1)  \sum_{j_1=0}^{k_1-2} \dfrac{ 1 }{j_1!} \, \overline{ \ee{j_1}{k_1}{\tau}} \epsilon_{k_1}^{(j_1) {\rm TS}} \label{cc.18} \\
&\quad
+\sum_{k_1,k_2=4}^\infty (k_1{-}1)(k_2{-}1) \sum_{j_1=0}^{k_1-2}  \sum_{j_2=0}^{k_2-2}\dfrac{ 1 }{j_1!j_2!}  \, \overline{ \ee{j_1&j_2}{k_1&k_2}{\tau}}  \epsilon_{k_2}^{(j_2) {\rm TS}} \epsilon_{k_1}^{(j_1) {\rm TS}} +\ldots
\notag
\end{align}
Next, we consider the transposed complex conjugate of the generating series (\ref{notsec.17}) of iterated integrals (\ref{lieg1.35}), (\ref{bsc.07}) and (\ref{absc.07}) involving Eisenstein series $\dd \tau\,\tau^j \est_k(\tau)$ and Kronecker-Eisenstein kernels via $\dd \tau\,\tau^j f^{(k)}(u\tau{+}v,\tau)$ as integration kernels,
 \begin{align}
\overline{ \mathbb I_{\ep,b}(u,v,\tau)^T }&= 1+\sum_{k_1=2}^\infty (k_1{-}1) \sum_{j_1=0}^{k_1-2} \dfrac{1}{j_1!}\biggl\{  \overline{ \ee{j_1}{k_1}{\tau} } \epsilon_{k_1}^{(j_1)}+ \overline{ \eez{j_1}{k_1}{z}{\tau} } b_{k_1}^{(j_1)}\biggr\} \label{cc.17}\\
    &\quad+\sum_{k_1,k_2=2}^\infty (k_1{-}1)(k_2{-}1)  \sum_{j_1=0}^{k_1-2}  \sum_{j_2=0}^{k_2-2}\dfrac{ 1}{j_1!j_2!}\biggl\{ \overline{ \ee{j_1&j_2}{k_1&k_2}{\tau}}\epsilon_{k_2}^{(j_2)}   \epsilon_{k_1}^{(j_1)} \notag\\
    &\qquad 
   + \overline{ \eez{j_1&j_2}{k_1&k_2}{z&z}{\tau} } b_{k_2}^{(j_2)}  b_{k_1}^{(j_1)}
    + \overline{ \eez{j_1&j_2}{k_1&k_2}{&z}{\tau} } b_{k_2}^{(j_2)} \epsilon_{k_1}^{(j_1)}
    + \overline{ \eez{j_1&j_2}{k_1&k_2}{z&}{\tau} } \epsilon_{k_2}^{(j_2)}  b_{k_1}^{(j_1)} \biggr\}+\dots
\notag
\end{align}
noting that we set $\epsilon_2 = 0$ by Proposition \ref{e2prop}.

The counterpart $\mathbb H_{\ep,b}(u,v,\tau)$ of $\mathbb I_{\ep,b}(u,v,\tau)$ in modular frame is introduced in section \ref{s:3.c.b} and decomposed into generating series $\mathbb{H}_{\ep,b}^{\pm}(u,v,\tau)$ in (\ref{intemgf.08}). 
While the expansion of $\mathbb{H}_{\ep,b}^\text{+}(u,v,\tau)$ is given by  (\ref{intemgf.10}) that of $\mathbb H^-_{\ep,b}(u,v,\tau)$ in terms of complex conjugate iterated $\tau$ integrals reads
\begin{align}
\mathbb{H}^{-}_{\ep,b}(u,v,\tau)&= 1+\sum_{k_1=2}^\infty (k_1{-}1) \sum_{j_1=0}^{k_1-2} \dfrac{1}{j_1!}\biggl\{   \bminus{j_1}{k_1}{\tau}  \epsilon_{k_1}^{(j_1)}+  \bminusz{j_1}{k_1}{z}{\tau}  b_{k_1}^{(j_1)}\biggr\}\label{ccH} \\
    &\quad+\sum_{k_1,k_2=2}^\infty (k_1{-}1)(k_2{-}1)  \sum_{j_1=0}^{k_1-2}  \sum_{j_2=0}^{k_2-2}\dfrac{ 1}{j_1!j_2!}\biggl\{  \bminus{j_1&j_2}{k_1&k_2}{\tau}\epsilon_{k_2}^{(j_2)}   \epsilon_{k_1}^{(j_1)} \notag \\
    &\qquad 
   +  \bminusz{j_1&j_2}{k_1&k_2}{z&z}{\tau}  b_{k_2}^{(j_2)}  b_{k_1}^{(j_1)}
    + \bminusz{j_1&j_2}{k_1&k_2}{&z}{\tau}  b_{k_2}^{(j_2)} \epsilon_{k_1}^{(j_1)}
    +  \bminusz{j_1&j_2}{k_1&k_2}{z&}{\tau}  \epsilon_{k_2}^{(j_2)}  b_{k_1}^{(j_1)} \biggr\}+\dots
    \notag
\end{align}

\subsection{Complex conjugate differential equations}
\label{secdifc}

For the path-ordered exponential $\mathbb L_{x,y,\ep}(u,v;i\infty,\tau)$ defined by (\ref{sc5.11}), the total differential can be found in (\ref{djplus}), see (\ref{notsec.03}) for the connection $\mathbb J_{x,y,\ep}(u,v,\tau)$. The total differential of its transposed complex conjugate is given by
\begin{align}
\dd \overline{\mathbb L_{x,y,\ep}(u,v;i\infty,\tau)^T} &=\overline{\mathbb L_{x,y,\ep}(u,v;i\infty,\tau)^T} \,\Big({-}x \, \dd u + 2\pi i T_{01}(u) \, \dd v \Big) \notag\\
&\quad -   \overline{\mathbb J_{x,y,\ep}(u,v,\tau)^T}\, \overline{\mathbb L_{x,y,\ep}(u,v;i\infty,\tau)^T}
\label{djminus} 
\end{align}
For its gauge transformed version $\mathbb{I}_{\ep,b}(u,v,\tau)$ of (\ref{notsec.17}) and (\ref{sc5.12}), the total differential can be found in section \ref{sec:4.1.3}, and its complex conjugate reads
\begin{align}
\dd \overline{\mathbb I_{\ep,b}(u,v,\tau)^T} &= \overline{\mathbb I_{\ep,b}(u,v,\tau)^T} \, \Big({-}x \, \dd u + 2\pi i T_{01}(u) \, \dd v \Big)  \label{dimin} \\
&\quad  -\overline{ \mathbb D_{\ep,b}(u,v,\tau)^T} \,\overline{\mathbb I_{\ep,b}(u,v,\tau)^T} 
+  (x - 2\pi i \bar \tau y) \, \overline{\mathbb I_{\ep,b}(u,v,\tau)^T}  \, \dd u
\notag \\
&\quad -   e^{-2\pi i \bar \tau\ep_0}\,
\ad_x \overline{ \Omega(x,\ad_{\frac{x}{2\pi i} },\tau) } \,y\, 
e^{2\pi i \bar \tau\ep_0}  \, \overline{\mathbb I_{\ep,b}(u,v,\tau)^T}\, (\dd v + \tau \, \dd u)
\notag 
\end{align}
%
%
Let us now consider the antiholomorphic derivatives of the iterated integrals $\bplusY{\dots\\\dots\\\dots}$ and $\bminusY{\dots\\\dots\\\dots}$ in (\ref{intemgf.04}) in analogy to their holomorphic $\tau$-derivatives in (\ref{dtauzbeta}). We shall express the following differential equations for $\bpmz{j_1 &\cdots &j_\ell}{k_1 &\cdots&k_\ell}{z &\cdots&z}{\tau}$ in terms of the Maa\ss{} operator $(\bar \tau{-}\tau)\partial_{\bar \tau}{+}w$
tailored to modular forms of anti-holomorphic weight $ w=\sum_{i=1}^{\ell}(k_i{-}2{-}2j_i)$:
%
\begin{align}
\bigg[(\bar \tau{-}\tau) \partial_{\bar\tau} {+}\sum_{i=1}^\ell(k_i{-}2j_i{-}2)\bigg] \bplusz{j_1 & \ldots &j_\ell}{k_1  &\ldots &k_\ell}{z&\ldots&z}{\tau} &= {-}4\pi \Im \tau \sum_{i=1}^\ell j_i \bplusz{j_1 & \ldots &j_i-1 &\ldots &j_\ell}{k_1 &\ldots &k_i &\ldots &k_\ell}{z&\ldots&z&\ldots&z}{\tau} \label{aholobpm}\\
\bigg[(\bar \tau{-}\tau) \partial_{\bar\tau} {+}\sum_{i=1}^\ell(k_i{-}2j_i{-}2)\bigg]  \bminusz{j_1 & \ldots &j_\ell}{k_1 &\ldots &k_\ell}{z&\ldots&z}{\tau}
&= {-}4\pi \Im \tau \sum_{i=1}^\ell j_i \bminusz{j_1 & \ldots &j_i-1 &\ldots &j_\ell}{k_1 &\ldots &k_i &\ldots &k_\ell}{z&\ldots&z&\ldots&z}{\tau} \notag\\
&\hspace{-16mm} + \delta_{j_\ell,0} \dfrac{4\pi \Im \tau }{(-2\pi i)^{k_\ell}} \overline{f^{(k_\ell)}(u\tau{+}v,\tau)} 
 \bminusz{j_1 &j_2& \ldots &j_{\ell-1}}{k_1 &k_2 &\ldots &k_{\ell-1}}{z&z&\ldots&z}{\tau} \notag
\end{align} 
When the rightmost column is $\smallmatrix j_\ell \\ k_\ell\endsmallmatrix$ rather than $\smallmatrix j_\ell \\ k_\ell \\ z \endsmallmatrix$ with an empty slot in the third line (corresponding to Eisenstein series as integration kernels), we proceed analogously to (\ref{dtaubeta}), where we replace $\overline{f^{(k_\ell)}(u\tau{+}v,\tau)}$ by $-\overline{\est_{k_\ell}(\tau)}$.

\section{Zeta generators}
\label{sec:C}

This appendix provides the explicit form of selected contributions to genus-one zeta generators $\sigma_w$ reviewed in section \ref{sec:2.4.2}. 
For ease of notation, we shall write
\beq
\boldsymbol{\epsilon}_k^{(j)} =  \epsilon_k^{(j) {\rm TS}}
\eeq
for the Tsunogai derivations throughout this appendix (the derivations $\ep_k = \boldsymbol{\epsilon}_k - b_k$ do not play any role in this appendix).

\subsection{Terms of modular depth two in all $\sigma_w$}
\label{sec:C.1}

For contributions $[\boldsymbol{\epsilon}_{k_1}^{(j_1)}, \boldsymbol{\epsilon}_{k_2}^{(j_2)}] = [\ep_{k_1}^{(j_1)\text{TS}}, \ep_{k_2}^{(j_2)\text{TS}}]$ to $\sigma_w$ of modular depth two, the following closed-form solution was proposed in \cite{Dorigoni:2024iyt},
\begin{align}
\label{extintr.10}
\sigma_w &= z_w - \frac{1}{(w{-}1)!} \boldsymbol{\epsilon}_{w+1}^{(w-1)} 
 \\
&\quad -\frac{1}{2}\sum_{d=3}^{w-2} \frac{\BF_{d-1}}{\BF_{w-d+2}} \sum_{k=d+1}^{w-1} \BF_{k-d+1}\BF_{w-k+1}\BF_{w-k+1} s_{k,w-k+d}^{d} \notag  \\
&\quad -\sum_{d=5}^{w} \BF_{d-1}s_{d-1,w+1}^{d} -\frac{1}{2}\BF_{w+1}s_{w+1,w+1}^{w+2} \notag \\
&\quad+\sum_{k=w+3}^\infty \BF_k \sum_{j=0}^{w-2} \frac{(-1)^j\left(\begin{smallmatrix} k-2\\ j\end{smallmatrix}\right)^{-1}}{j!(w{-}2{-}j)!} \left[\boldsymbol{\epsilon}_{w+1}^{(w{-}2{-}j)},\boldsymbol{\epsilon}_k^{(j) } \right]+\dots
\notag
\end{align}
The ellipsis refers to terms of modular depth $\geq 3$, and we have used the shorthands 
\begin{align}
    \BF_k= \frac{ B_k}{k!}
    \label{defbf}
\end{align}
as well as
\begin{align}
\label{224}
    s_{p,q}^{d}= \frac{(d{-}2)!}{(p{-}2)!(q{-}2)!} \sum_{i=0}^{d-2} (-1)^{i} \left[
    \boldsymbol{\epsilon}_p^{(p-2-i) }, \boldsymbol{\epsilon}_{q}^{(q-d+i)}\right]\, 
\end{align}

\subsection{Examples of low-degree terms in $\sigma_{w\leq 7}$}
\label{sec:C.2}

The terms of degree $\leq 14$ in the expansion of $\sigma_3,\sigma_5,\sigma_7,\sigma_9$ are given by \cite{Dorigoni:2024iyt} (recall the shorthand $\boldsymbol{\epsilon}_k^{(j)} = \ep_k^{(j){\rm TS}} = \ep_k^{(j)}{+}b_k^{(j)}$ used within this appendix)
\begin{align}
\sigma_3 &= -\tfrac{1}{2} \boldsymbol{\epsilon}_4^{(2)}+z_3+\tfrac{1}{480} [\boldsymbol{\epsilon}_4,\boldsymbol{\epsilon}_4^{(1)}]+\tfrac{1}{30240} [\boldsymbol{\epsilon}_4^{(1)},\boldsymbol{\epsilon}_6] -\tfrac{1}{120960} [\boldsymbol{\epsilon}_4,\boldsymbol{\epsilon}_6^{(1)}]+\tfrac{1}{7257600} [\boldsymbol{\epsilon}_4,\boldsymbol{\epsilon}_8^{(1)}]
\label{firsts3} \\
&\quad -\tfrac{1}{1209600} [\boldsymbol{\epsilon}_4^{(1)},\boldsymbol{\epsilon}_8]
-\tfrac{1}{58060800} [\boldsymbol{\epsilon}_4,[\boldsymbol{\epsilon}_4,\boldsymbol{\epsilon}_6]]+\tfrac{1}{47900160} [\boldsymbol{\epsilon}_4^{(1)},\boldsymbol{\epsilon}_{10}]-\tfrac{1}{383201280} [\boldsymbol{\epsilon}_4,\boldsymbol{\epsilon}_{10}^{(1)}] + \ldots  
\notag \\
\sigma_5 &= -\tfrac{1}{24} \boldsymbol{\epsilon}_6^{(4)}
-\tfrac{5  }{48} [\boldsymbol{\epsilon}_4^{(1)},\boldsymbol{\epsilon}_4^{(2)}]
+z_5
+\tfrac{1}{5760} [\boldsymbol{\epsilon}_4,\boldsymbol{\epsilon}_6^{(3)}]
-\tfrac{1}{5760} [\boldsymbol{\epsilon}_4^{(1)},\boldsymbol{\epsilon}_6^{(2)}] +\tfrac{1}{5760} [\boldsymbol{\epsilon}_4^{(2)},\boldsymbol{\epsilon}_6^{(1)}]
\notag \\
&\quad
+\tfrac{1}{3456} [\boldsymbol{\epsilon}_4,[\boldsymbol{\epsilon}_4,\boldsymbol{\epsilon}_4^{(2)}]]
+\tfrac{1}{6912} [\boldsymbol{\epsilon}_4^{(1)},[\boldsymbol{\epsilon}_4^{(1)},\boldsymbol{\epsilon}_4]]
+\tfrac{1}{145152} [\boldsymbol{\epsilon}_6^{(1)},\boldsymbol{\epsilon}_6^{(2)}] -\tfrac{1}{145152} [\boldsymbol{\epsilon}_6,\boldsymbol{\epsilon}_6^{(3)}]
\notag \\
&\quad
-\tfrac{1}{2073600}
[\boldsymbol{\epsilon}_4,[\boldsymbol{\epsilon}_4,\boldsymbol{\epsilon}_6^{(2)}]]
+\tfrac{139 }{72576000} [\boldsymbol{\epsilon}_4^{(1)},[\boldsymbol{\epsilon}_4,\boldsymbol{\epsilon}_6^{(1)}]]
-\tfrac{23 }{24192000} [\boldsymbol{\epsilon}_4,[\boldsymbol{\epsilon}_4^{(1)},\boldsymbol{\epsilon}_6^{(1)}]]
\notag \\
&\quad
-\tfrac{1007 }{145152000} [\boldsymbol{\epsilon}_4^{(2)},[\boldsymbol{\epsilon}_4,\boldsymbol{\epsilon}_6]]
-\tfrac{1}{4147200}
[\boldsymbol{\epsilon}_4^{(1)},[\boldsymbol{\epsilon}_4^{(1)},\boldsymbol{\epsilon}_6]]
+\tfrac{289 }{48384000}
[\boldsymbol{\epsilon}_4,[\boldsymbol{\epsilon}_4^{(2)},\boldsymbol{\epsilon}_6]]
\notag \\
&\quad
+\tfrac{1}{145152000}
[\boldsymbol{\epsilon}_6,\boldsymbol{\epsilon}_8^{(3)}]
-\tfrac{1}{36288000} [\boldsymbol{\epsilon}_6^{(1)},\boldsymbol{\epsilon}_8^{(2)}]
+\tfrac{1}{14515200}
[\boldsymbol{\epsilon}_6^{(2)},\boldsymbol{\epsilon}_8^{(1)}]
-\tfrac{1}{7257600}
[\boldsymbol{\epsilon}_6^{(3)},\boldsymbol{\epsilon}_8]
+ \ldots \notag 
\end{align}
as well as
\begin{align}
\sigma_7 &=
-\tfrac{1}{720} \boldsymbol{\epsilon}_8^{(6)}
+\tfrac{7 }{1152} [\boldsymbol{\epsilon}_4^{(2)},\boldsymbol{\epsilon}_6^{(3)}]
-\tfrac{7 }{1152}
[\boldsymbol{\epsilon}_4^{(1)},\boldsymbol{\epsilon}_6^{(4)}]
-\tfrac{661 }{57600}
[\boldsymbol{\epsilon}_4^{(1)},[\boldsymbol{\epsilon}_4^{(1)},\boldsymbol{\epsilon}_4^{(2)}]]
-\tfrac{661 }{57600}
[\boldsymbol{\epsilon}_4^{(2)},[\boldsymbol{\epsilon}_4^{(2)},\boldsymbol{\epsilon}_4]]
\notag \\
&\quad
+\tfrac{1}{172800}
[\boldsymbol{\epsilon}_4,\boldsymbol{\epsilon}_8^{(5)}]
-\tfrac{1}{172800}
[\boldsymbol{\epsilon}_4^{(1)},\boldsymbol{\epsilon}_8^{(4)}]
+\tfrac{1}{172800}
[\boldsymbol{\epsilon}_4^{(2)},\boldsymbol{\epsilon}_8^{(3)}]
+\tfrac{1}{13824}
[\boldsymbol{\epsilon}_6^{(1)},\boldsymbol{\epsilon}_6^{(4)}]
-\tfrac{1}{13824}
[\boldsymbol{\epsilon}_6^{(2)},\boldsymbol{\epsilon}_6^{(3)}]
\notag \\
&\quad
+z_7
-\tfrac{1}{4354560}
[\boldsymbol{\epsilon}_6,\boldsymbol{\epsilon}_8^{(5)}]
+\tfrac{1}{4354560}
[\boldsymbol{\epsilon}_6^{(1)},\boldsymbol{\epsilon}_8^{(4)}]
-\tfrac{1}{4354560}
[\boldsymbol{\epsilon}_6^{(2)},\boldsymbol{\epsilon}_8^{(3)}]
+\tfrac{1}{4354560}
[\boldsymbol{\epsilon}_6^{(3)},\boldsymbol{\epsilon}_8^{(2)}]
\notag \\
&\quad
-\tfrac{1}{4354560}
[\boldsymbol{\epsilon}_6^{(4)},\boldsymbol{\epsilon}_8^{(1)}]
+\tfrac{7 }{552960}
[\boldsymbol{\epsilon}_4,[\boldsymbol{\epsilon}_4,\boldsymbol{\epsilon}_6^{(4)}]]
+\tfrac{7 }{552960}
[\boldsymbol{\epsilon}_4,[\boldsymbol{\epsilon}_4^{(1)},\boldsymbol{\epsilon}_6^{(3)}]]
+\tfrac{7 }{184320} [\boldsymbol{\epsilon}_4^{(1)},[\boldsymbol{\epsilon}_4^{(2)},\boldsymbol{\epsilon}_6^{(1)}]]
\notag \\
&\quad
+\tfrac{7 }{552960}
[\boldsymbol{\epsilon}_4^{(2)},[\boldsymbol{\epsilon}_4,\boldsymbol{\epsilon}_6^{(2)}]]
-\tfrac{7 }{184320}
[\boldsymbol{\epsilon}_4,[\boldsymbol{\epsilon}_4^{(2)},\boldsymbol{\epsilon}_6^{(2)}]]
-\tfrac{7 }{276480}
[\boldsymbol{\epsilon}_4^{(2)},[\boldsymbol{\epsilon}_4^{(2)},\boldsymbol{\epsilon}_6]]
\notag \\
&\quad
-\tfrac{7 }{552960}
[\boldsymbol{\epsilon}_4^{(1)},[\boldsymbol{\epsilon}_4,\boldsymbol{\epsilon}_6^{(3)}]]
-\tfrac{7 }{552960}
[\boldsymbol{\epsilon}_4^{(2)},[\boldsymbol{\epsilon}_4^{(1)},\boldsymbol{\epsilon}_6^{(1)}]]
+ \ldots \notag \\
\sigma_9 &=
-\tfrac{1}{40320} \boldsymbol{\epsilon}_{10}^{(8)}
-\tfrac{1}{5184} [\boldsymbol{\epsilon}_4^{(1)},\boldsymbol{\epsilon}_8^{(6)}] +\tfrac{1}{5184}
[\boldsymbol{\epsilon}_4^{(2)},\boldsymbol{\epsilon}_8^{(5)}]
-\tfrac{7 }{20736} [\boldsymbol{\epsilon}_6^{(3)},\boldsymbol{\epsilon}_6^{(4)}]
+\tfrac{1}{9676800}
[\boldsymbol{\epsilon}_4,\boldsymbol{\epsilon}_{10}^{(7)}]
\notag \\
&\quad
-\tfrac{1}{9676800}
[\boldsymbol{\epsilon}_4^{(1)},\boldsymbol{\epsilon}_{10}^{(6)}]
+\tfrac{1}{9676800}
[\boldsymbol{\epsilon}_4^{(2)},\boldsymbol{\epsilon}_{10}^{(5)}]
+\tfrac{7 }
{4147200}
[\boldsymbol{\epsilon}_6^{(1)},\boldsymbol{\epsilon}_8^{(6)}]
-\tfrac{7 }{4147200}
[\boldsymbol{\epsilon}_6^{(2)},\boldsymbol{\epsilon}_8^{(5)}]
\notag \\
&\quad
+\tfrac{7 }{4147200}
[\boldsymbol{\epsilon}_6^{(3)},\boldsymbol{\epsilon}_8^{(4)}]
-\tfrac{7 }{4147200}
[\boldsymbol{\epsilon}_6^{(4)},\boldsymbol{\epsilon}_8^{(3)}]
-\tfrac{529 }{691200}
[\boldsymbol{\epsilon}_4,[\boldsymbol{\epsilon}_4^{(2)},\boldsymbol{\epsilon}_6^{(4)}]]
+\tfrac{2959 }{2419200}
[\boldsymbol{\epsilon}_4^{(1)},[\boldsymbol{\epsilon}_4^{(2)},\boldsymbol{\epsilon}_6^{(3)}]]
\notag \\
&\quad
+\tfrac{5891 }{6220800}
[\boldsymbol{\epsilon}_4^{(2)},[\boldsymbol{\epsilon}_4,\boldsymbol{\epsilon}_6^{(4)}]]
-\tfrac{443 }{967680}
[\boldsymbol{\epsilon}_4^{(1)},[\boldsymbol{\epsilon}_4^{(1)},\boldsymbol{\epsilon}_6^{(4)}]]
-\tfrac{799 }{1088640}
[\boldsymbol{\epsilon}_4^{(2)},[\boldsymbol{\epsilon}_4^{(2)},\boldsymbol{\epsilon}_6^{(2)}]]
\notag\\
&\quad
-\tfrac{10651 }{21772800}
[\boldsymbol{\epsilon}_4^{(2)},[\boldsymbol{\epsilon}_4^{(1)},\boldsymbol{\epsilon}_6^{(3)}]] + \ldots
\label{othersts3}
\end{align}
also see the ancillary file of \cite{Dorigoni:2024oft} for terms of degree $\leq 20$ and modular depth $\leq 3$ in $\sigma_w$.

\subsection{The action of $z_5$ on $x,y$}
\label{sec:C.3}

The action of the arithmetic zeta generators $z_w$ on $x,y$ can be determined by imposing (\ref{dfprpsig}) degree by degree as detailed in \cite{Dorigoni:2024iyt}. While the expressions for $[z_3,x]$, $[z_3,y]$ in (\ref{z3onxy}) still fit into one line each,
the action of $z_5$ gives rise to the considerably longer combination of Lie polynomials $b_k^{(j)}$ in $x,y$ (see (\ref{lieg1.24}) and (\ref{lieg1.26}))
\begin{align}
    [z_5,x]&=\dfrac{1}{5760}[b_5, b_6^{(3)}] - \dfrac{1}{2880}[b_5^{(1)}, b_6^{(2)}]+ 
\dfrac{1}{1920} [b_5^{(2)}, b_6^{(1)}] - 
 \dfrac{1}{1440}[b_5^{(3)}, b_6] + \dfrac{13}{5760}[b_2, [b_4, b_5^{(2)}]]\notag\\
 &\quad- \dfrac{13}{
  2880}[b_2, [b_4^{(1)}, b_5^{(1)}]] + \dfrac{13}{1920}[b_2, [b_4^{(2)}, b_5]] + 
 \dfrac{1}{180} [b_3, [b_3, b_5^{(2)}]] + 
 \dfrac{1}{48} [b_3, [b_4, b_4^{(2)}]] \notag\\
 &\quad- 
 \dfrac{1}{90} [b_3, 
   [b_3^{(1)}, 
    b_5^{(1)}]] - \dfrac{17}{5760}[b_4, [b_2, b_5^{(2)}]] + 
 \dfrac{1}{96} [b_4 ,[b_3, b_4^{(2)}]] - 
 \dfrac{1}{90} [b_3^{(1)}, [b_3, b_5^{(1)}]] \notag\\
 &\quad- 
 \dfrac{5}{96}[b_3^{(1)}, [b_4, b_4^{(1)}]] + 
 \dfrac{1}{30} [b_3^{(1)}, 
   [b_3^{(1)}, 
    b_5]] + \dfrac{17}{2880}[b_4^{(1)}, [b_2, b_5^{(1)}]] - 
 \dfrac{1}{192} [b_4^{(1)}, 
   [b_3, 
    b_4^{(1)}]]\notag\\
    &\quad- \dfrac{17}{1920} [b_4^{(2)}, [b_2, b_5]] - 
 \dfrac{11}{72} [b_2, [b_2, [b_3, b_4^{(1)}]]] + 
\dfrac{11}{36} [b_2, [b_2, [b_3^{(1)}, b_4]]] + 
 \dfrac{7}{48} [b_2, [b_3, [b_2, b_4^{(1)}]]]\notag\\
 &\quad- 
 \dfrac{1}{36} [b_2, [b_3, [b_3, b_3^{(1)}]]] - 
 \dfrac{7}{24} [b_2, [b_3^{(1)}, [b_2, b_4]]]
- \dfrac{43}{288} [b_3, [b_2, [b_2, b_4^{(1)}]]]\notag\\
 &\quad - 
 \dfrac{47}{72} [b_3, [b_2, [b_3, b_3^{(1)}]]] + 
 \dfrac{43}{144} [b_3^{(1)}, [b_2, [b_2, b_4]]]- 
\dfrac{1}{16} [b_2, 
   [b_2, [b_2, [b_2, b_3]]]]
   \label{z5xexpr}
\end{align}
The analogous expression for $[z_5,y]$ can be obtained from the generalized reflection operator ${\cal Q}_\alpha$ in (\ref{lemq.01}) at $\alpha=1$: We have $[z_w,y] = -{\cal Q}_1^{-1} [z_w,x]{\cal Q}_1$ by (\ref{lemq.02}) and $\mathfrak{sl}_2$ invariance of $z_w$, and the conjugation of the $b_k^{(j)}$ on the right side of (\ref{z5xexpr}) by ${\cal Q}_1$ can be evaluated via (\ref{lemq.03}).

\section{The Lie algebra of the CEE connection and proof of Lemma \ref{braklem}}
\label{sec:D}

This appendix is dedicated to the bracket relations of the generators $x,y,\ep_k$ in the CEE connection reviewed in section \ref{sec:2.4.3} and provides a proof of Lemma \ref{braklem}.

\subsection{Brackets of $\ep^{(j)}_k ,b^{(j)}_k$ with $x,y$}
\label{sec:D.1}

The purpose of this section is to prove that the commutators $[x,b_k^{(j)}]$, $[y,b_k^{(j)}]$, $[x,\epsilon_k^{(j)}]$ and $[y,\epsilon_k^{(j)}]$ evaluate to Lie polynomials in $b_{k'}^{(j')}$ of degree $k{+}1$ as stated in item $(i)$ of Lemma \ref{braklem}.

\paragraph{Commutator $[y,b_k^{(j)}]$:}

We begin by recalling the definition $b_k^{(1)}=-\ad_{\epsilon_0}\ad_x^{k-1}y$ for $k\geq 3$ and find by repeated use of the Jacobi identity as well as $[\epsilon_0,x]=y$ and
$[\epsilon_0,[x,y]]=0$ that
\begin{align}
[\epsilon_0,\ad_x^{k-1} y]&=\sum_{\ell=0}^{k-3} \ad_x^\ell [y,\ad_x^{k-2-\ell}y]\notag\\
&=\sum_{\ell=0}^{k-3}\sum_{m=0}^\ell \binom{\ell}{m} [\ad_x^{\ell-m}y,\ad_x^{k-2-\ell+m}y]\notag\\
&=\sum_{\ell=0}^{k-3}\binom{k{-}2}{\ell}[\ad_x^{k-3-\ell}y,\ad_x^{\ell+1}y]
\end{align}
Translating back to $b_k=-\ad_x^{k-1} y$ we find that
\begin{align}
b_k^{(1)}&=(k{-}2)[y,b_{k-1}]-\sum_{\ell=0}^{k-4}\binom{k{-}2}{\ell}[b_{k-\ell-2},b_{\ell+2}] \,,\quad k\geq 3
\end{align}
which can now be solved for $[y,b_k]$ after replacing $k \rightarrow k{+}1$:
\beq 
[y,b_k]=\dfrac{1}{k{-}1}\biggl(b_{k+1}^{(1)}+\sum_{\ell=0}^{k-3}\binom{k{-}1}{\ell}[b_{j-\ell-1},b_{\ell+2}]\biggr) \,,\quad k\geq 2 
\eeq
Given that $[\epsilon_0,y]=0$ we can now act with $\ad_{\epsilon_0}^j$ on both sides to find:
\beq \label{Lie_1}
[y,b_k^{(j)}]=\dfrac{1}{k{-}1}\biggl(b_{k+1}^{(j+1)}+\sum_{m=0}^j\binom{j}{m}\sum_{\ell=0}^{k-3}\binom{k{-}1}{\ell}[b_{k-\ell-1}^{(j-m)},b_{\ell+2}^{(m)}]\biggr) \,,\quad k\geq 2
\eeq

\paragraph{Commutator $[x,b_k^{(j)}]$:}
From the defining property for $b_k=-\ad_x^{k-1} y$ we find 
\beq
[x,b_k]=b_{k+1} \,,\quad k\geq 2
\eeq
When considering adjoint actions of $\epsilon_0$ on $b_k^{(j)}=\ad_{\epsilon_0}^jb_k$ we find
\begin{align}\label{Lie_2}
[x,b_k^{(j)}]&= b_{k+1}^{(j)} -\sum_{\ell=0}^{j-1} \ad_{\epsilon_0}^\ell [y,b_k^{(j-1-\ell)}]\\
&= b_{k+1}^{(j)} -j  [y,b_k^{(j-1)}]\notag\\
&=
\dfrac{k{-}j{-}1}{k{-}1}b_{k+1}^{(j)}
-\dfrac{j}{k{-}1}\sum_{m=0}^{j-1}\binom{j{-}1}{m}\sum_{\ell=0}^{k-3}\binom{k{-}1}{\ell}[b_{k-\ell-1}^{(j-1-m)},b_{\ell+2}^{(m)}] \,,\quad k\geq 2
\notag
\end{align}
where in passing to the second and third line we used $[\epsilon_0,y]=0$ and (\ref{Lie_1}), respectively.

\paragraph{Commutator $[y,\epsilon_k^{(j)}]$:}

Given that $[y,\epsilon_0]=0$ and that $[y,\epsilon_k]$ is given by (\ref{epsxy}) we find 
\beq\label{Lie_3}
[y,\epsilon_k^{(j)}]=-\sum_{m=0}^j\binom{j}{m}\sum_{\ell=1}^{\tfrac{k}{2}-1}(-1)^j [b_{\ell+1}^{(j-m)},b_{k-\ell}^{(m)}]\,,\quad k\geq 2\ \text{even}
\eeq

\paragraph{Commutator $[x,\epsilon_k^{(j)}]$:}

Given that $[x,\epsilon_k]=0$ as in (\ref{epsxy}), it follows 
\begin{align}\label{Lie_4}
[x,\epsilon_k^{(j)}]&=-j[y,\epsilon_k^{(j-1)}]\\
&=j\sum_{m=0}^{j-1}\binom{j{-}1}{m}\sum_{\ell=1}^{\tfrac{k}{2}-1}(-1)^{j-1} [b_{\ell+1}^{(j-1-m)},b_{k-\ell}^{(m)}]\,,\quad k\geq 2\ \text{even}
\notag
\end{align}

Each term on the right sides of our expressions for $[x,b_k^{(j)}]$, $[y,b_k^{(j)}]$, $[x,\epsilon_k^{(j)}]$, $[y,\epsilon_k^{(j)}]$ in (\ref{Lie_1}), (\ref{Lie_2}), (\ref{Lie_3}), (\ref{Lie_4}) is a Lie polynomial in $b_{k'}^{(j')}$ of degree $k{+}1$ which completes our constructive proof of item $(i)$ of Lemma \ref{braklem}.

\subsection{Brackets $[ \ep^{(j_1)}_{k_1} , b^{(j_2)}_{k_2} ] $}
\label{sec:D.2}

We shall next demonstrate that brackets $[ \ep^{(j_1)}_{k_1} , b^{(j_2)}_{k_2} ] $ are expressible as Lie polynomials in $b_{k'}^{(j')}$ of degree $k_1{+}k_2$ as stated in item $(ii)$ of Lemma \ref{braklem}.

In a first step let us show that the bracket $[\epsilon_{k_1}^{(j_1)},b_{k_2}]$ in the case of $j_2=0$ is expressible as Lie polynomials in $b_k^{(j)}$, 
\begin{align}\label{Lie_5}
[\epsilon_{k_1}^{(j_1)},b_{k_2}]&=
[\epsilon_{k_1}^{(j_1)},\ad_{x}^{k_2-2} b_2]\\
&=\sum_{\ell=0}^{k_2-3} \ad_x^{\ell} \big[[\epsilon_{k_1}^{(j_1)},x],\ad_x^{k_2-\ell-3}b_2 \big] + \ad_x^{k_2-2} [\ep_{k_1}^{(j_1)},b_2]\notag \\
&=\sum_{\ell=0}^{k_2-3} \ad_x^{\ell} \big[[\epsilon_{k_1}^{(j_1)},x],b_{k_2-\ell-1} \big]-\ad_x^{k_2-2} [b_{k_1}^{(j_1)},b_2]
\notag
\end{align}
The second line follows solely from iterations of the Leibniz rule of commutators. In passing to the third line, we have used the consequence $[\epsilon_k^{(j)},b_2]=-[b_k^{(j)},b_2]$ of 
$[\epsilon_k^{(j)\text{TS}},b_2]=0$ and $\epsilon_k^{(j)\text{TS}}=\epsilon_k^{(j)}+b_k^{(j)}$. Given (\ref{Lie_4}), (\ref{Lie_3}) and (\ref{Lie_2}), it is easy to see that (\ref{Lie_5}) is indeed a Lie polynomial in~$b_k^{(j)}$.

In order to identify $[ \ep^{(j_1)}_{k_1} , b^{(j_2)}_{k_2} ] $ as a Lie polynomial in $b_k^{(j)}$ for arbitrary $0\leq j_2 \leq k_2{-}2$, we proceed by induction in $j_2$, with (\ref{Lie_5}) as a base case at $j_2=0$. For the inductive step, we use the Leibniz rule of $\ad_{\ep_0}$ to rewrite
\beq
[\epsilon_{k_1}^{(j_1)},b_{k_2}^{(j_2)}] = 
\ad_{\epsilon_0}[\epsilon_{k_1}^{(j_1)},b_{k_2}^{(j_2-1)}]-[\epsilon_{k_1}^{(j_1+1)},b_{k_2}^{(j_2-1)}]
\label{indstep}
\eeq
Assuming the terms $[\epsilon_{k_1}^{(j_1)},b_{k_2}^{(j_2-1)}]$ and $[\epsilon_{k_1}^{(j_1+1)},b_{k_2}^{(j_2-1)}]$ on the right side to be Lie polynomials in $b_k^{(j)}$ by the inductive hypothesis, then the term $[\epsilon_{k_1}^{(j_1)},b_{k_2}^{(j_2)}]$ on the left side with a larger value of $j_2$ must be a Lie polynomial as well. This also relies on the fact that Lie polynomials in $b_k^{(j)}$ close under $\ad_{\epsilon_0}$. The instances of (\ref{indstep}) at $j_2=1,2,\ldots,k_2{-}2$ together with the fact that the degree $k_1{+}k_2$ is preserved by all steps of (\ref{Lie_5}) and (\ref{indstep}) then imply the statement of item $(ii)$ of Lemma \ref{braklem}. 

Note that the proof of this section is constructive: With the expressions for $[x,b_k^{(j)}]$, $[x,\epsilon_k^{(j)}]$ of section \ref{sec:D.1} at hand, both of $[\epsilon_{k_1}^{(j_1)},b_{k_2}]$ and $[\epsilon_{k_1}^{(j_1)},b_{k_2}^{(j_2)}]$ can be recursively computed via (\ref{Lie_5}) and (\ref{indstep}) to any desired order. The explicit form of all brackets $[\epsilon_{k_1}^{(j_1)},b_{k_2}^{(j_2)}]$ in terms of $b_k^{(j)}$ up to and including $k_1{+}k_2=10$ can be found in the ancillary {\tt Brackets.nb} file of the arXiv submission of this work.

\subsection{Proving items $(iii)$ to $(v)$ of the lemma}
\label{sec:D.7}

This section completes the proof of Lemma \ref{braklem} by proving items $(iii)$ to $(v)$

$(iii)$: Lie polynomials in $x,y$ can always be
written as a sequence of $\ad_x,\ad_y$ acting on $b_2 = [y,x]$. By item $(i)$ of the lemma, the
set of $b_k^{(j)}$ with $k\geq 2$ and $j=0,1,\ldots,k{-}2$ is closed under $\ad_x$ and $\ad_y$ which implies the statement of item $(iii)$.

$(iv)$: The extension lemma 2.1.2 of \cite{Schneps:2015mzv} extracts the brackets
$[\sigma_w,x]$, $[\sigma_w,y]$ from the expressions (\ref{dfprpsig})
for $[\sigma_w,t_{01}]$, $[\sigma_w,t_{12}]$ (also see section 5.3 of \cite{Dorigoni:2024iyt} for further details). By (\ref{dfprpsig}), every order in the series $[\sigma_w,t_{01}]$, $[\sigma_w,t_{12}]$ is a Lie polynomial in $x,y$ which propagates to the expressions for $[\sigma_w,x]$, $[\sigma_w,y]$ obtained from the extension lemma \cite{Schneps:2015mzv, Dorigoni:2024iyt}. Item $(iii)$ of the lemma then implies that all of $[\sigma_w,t_{01}]$, $[\sigma_w,t_{12}]$, $[\sigma_w,x]$ and $[\sigma_w,y]$ are in fact Lie series in $b^{(j)}_k$. Both of $[\sigma_w,x]$, $[\sigma_w,y]$ are Lie series in $b_k^{(j)}$ of degree $\geq w{+}2$ in $x,y$ since $\sigma_w$ are infinite series of derivations of degree $\geq w{+}1$.

$(v)$: Follows by specializing the statement of $(iv)$ to the degree-$(2w{+}1)$ part of $[\sigma_w,x]$, $[\sigma_w,y]$: The degree-$2w$ part of $\sigma_{w}$ produces Lie polynomials in $b^{(j)}_k$ of degree $2w{+}1$ via $[\sigma_w,x]$ and $[\sigma_w,y]$. It remains to show that this holds separately for $z_w$ and for the restriction of $\sigma_w{-}z_w$ to degree $2w$. For this purpose, we note that $\sigma_w{-}z_w$ is expressible via Tsunogai derivations, so item $(i)$ of the lemma implies that $[\sigma_w{-}z_w,x]$, $[\sigma_w{-}z_w,y]$ are Lie series in $b^{(j)}_k$ as well. On these grounds, the degree-$(2w{+}1)$ parts of $[\sigma_w{-}z_w,x]$, $[\sigma_w{-}z_w,y]$ and $[z_w,x]$, $[z_w,y]$ are separately Lie polynomials in $b^{(j)}_k$ of degree $(2w{+}1)$ as in the statement of item $(v)$ which completes the proof of the lemma.

\section{Proofs involving infinitesimal coactions of MZVs}
\label{genapp:thm21}

In this appendix, we review and apply a technique to prove identities between series in (motivic) MZVs. The key ingredient is the operation $\partial_{f_w}$ defined by
\beq
\partial_{f_w} f_{i_1} \ldots f_{i_{r-1}} f_{i_r}= \left\{ \begin{array}{cl} \delta_{w,i_r}  f_{i_1}\ldots f_{i_{r-1}} &, \ \ \ \ r\geq 1 \\ 
0 &, \ \ \ \ r = 0 
\end{array} \right.
\, , \ \ \ \ w,i_1,\ldots,i_r \geq 3 \ {\rm odd}
\label{clp.01}
\eeq
subject to $\partial_{f_w} f_2 = f_2 \partial_{f_w}$
which clips off the rightmost non-commuting letter in the $f$-alphabet representation of motivic MZVs (see appendix \ref{sec:A.2} for a brief review). This operation implements the infinitesimal version of the deconcatenation coaction in the $f$-alphabet and obeys the following Leibniz rule for shuffle products \cite{Brown:2011ik}\footnote{The derivations $\partial_{w}$ of the reference clip off the leftmost non-commuting letter as opposed to the rightmost one as in (\ref{clp.01}). This accounts for the fact that the $f$-alphabet of \cite{Brown:2011ik} is reversed in comparison to the one of the present work which in turn relates to the ordering conventions for the two entries of the motivic coaction.}
\beq
\partial_{f_w} (f_{i_1}\ldots f_{i_r}\shuffle f_{j_1}\ldots f_{j_s})=
( \partial_{f_w}  f_{i_1}\ldots f_{i_r})\shuffle f_{j_1}\ldots f_{j_s}
+f_{i_1}\ldots f_{i_r}\shuffle (  \partial_{f_w} f_{j_1}\ldots f_{j_s})
\label{clp.02}
\eeq
where $w,i_1,\ldots,i_r,j_1,\ldots, j_s \geq 3$ odd. The two proofs in this appendix both use the fact that two series in MZVs which obey the same ``differential equations'' with respect to $\partial_{f_w}$ and have the same coefficients of $f_2^n$ for all $n\geq 0$ as ``initial values'' (disregarding all contributions with at least one non-commuting letter $f_w$) must be identical.

\subsection{Completing the proof of Theorem \ref{2.thm:1}}
\label{app:thm21}

We shall here prove that the series $\mathbb I^{\rm eqv}_{\ep^{\rm TS}}(\tau)$ defined by (\ref{lieg1.51}) obeys the 
equivariance property (\ref{lieg1.42}) for any
$\gamma \in {\rm SL}_2(\mathbb Z)$ as stated in item $(ii)$ of Theorem \ref{2.thm:1}. Equivariance under the modular $T$ transformation was already shown in section \ref{sec:2.5.2}, and we shall here complete the proof by demonstrating equivariance under $S:\, \tau \rightarrow -\frac{1}{\tau}$, i.e.\ that
\begin{align}
\mathbb I^{\rm eqv}_{\ep^{\rm TS}}\big({-}\tfrac{1}{\tau} \big) &=
U_S^{-1} \, \mathbb I^{\rm eqv}_{\ep^{\rm TS}}( \tau)\, 
U_S
\label{clp.03}
\end{align}
see (\ref{lieg1.39}) and (\ref{lieg1.40}) for $U_S$. The first step is to evaluate the $S$-transformations of the meromorphic generating series $\mathbb I_{\ep^{\rm TS}}( \tau)$ and its complex conjugate according to (\ref{lieg1.41}) \cite{brown2017multiple, Dorigoni:2024oft},
\beq
\mathbb I_{\ep^{\rm TS}}\big({-}\tfrac{1}{\tau} \big) = \mathbb S_{\ep^{\rm TS}}  \, U_S^{-1}  \, \mathbb I_{\ep^{\rm TS}}( \tau) \,  U_S
\label{clp.04}
\eeq
where we introduced the following shorthand for the $S$ cocycle
\beq
\mathbb S_{\ep^{\rm TS}}  = 
{\rm Pexp} \bigg( \int^{0}_{i\infty}  U_\gamma^{-1}  \mathbb  D_{\ep^{\rm TS}}(\rho_1)  U_\gamma  \bigg) 
= {\rm Pexp} \bigg( \int^{i \infty}_{0}    \mathbb  D_{\ep^{\rm TS}}(\tau_1)   \bigg)
\label{clp.05}
\eeq
The reality properties of the connection $\mathbb  D_{\ep^{\rm TS}}(\tau_1)$ readily imply that $\overline{\mathbb S_{\ep^{\rm TS}}} = \mathbb S_{\ep^{\rm TS}}$ such that the $S$-transformation (\ref{clp.04}) propagates as follows to the series $\mathbb I^{\rm eqv}_{\ep^{\rm TS}}$ in (\ref{lieg1.51}):
\beq
\mathbb I^{\rm eqv}_{\ep^{\rm TS}}\big({-}\tfrac{1}{\tau} \big) =  U_S^{-1} \, (\mathbb M_z^{\rm sv})^{-1} \, \overline{ \mathbb I_{\ep^{\rm TS}}( \tau)^T }  \, U_S\, \mathbb S_{\ep^{\rm TS}}^T \, \mathbb M_\sigma^{\rm sv} \, \mathbb S_{\ep^{\rm TS}}  \, U_S^{-1}  \, \mathbb I_{\ep^{\rm TS}}( \tau) \,  U_S
\label{clp.06}
\eeq
In order to recover the desired conjugate of $\mathbb I^{\rm eqv}_{\ep^{\rm TS}}( \tau )$ on the right side of (\ref{clp.04}), we need the statement (iv) of the following lemma:

\begin{lemma}
\label{applem}
(equivalent to results in section 15 of \cite{brown2017multiple} as well as section 7 of \cite{Brown:2017qwo2} and in its present form conjectured in \cite{Kleinschmidt:2025dtk})
The $S$-cocycle $\mathbb S_{\ep^{\rm TS}} $ defined by (\ref{clp.05})
\begin{itemize}
\item[(i)] obeys the differential coaction 
\beq
\partial_{f_w} \mathbb \rho(\mathbb S_{\ep^{\rm TS}})  = \rho(\mathbb S_{\ep^{\rm TS}}) \, U_S^{-1} \, \sigma_w \, U_S - \sigma_w \, \rho( \mathbb S_{\ep^{\rm TS}}  )
\label{clp.07}
\eeq
involving zeta generators $\sigma_w$ under the operation in (\ref{clp.01});
\item[(ii)] can be written as
\beq
\mathbb S_{\ep^{\rm TS}}  = (\mathbb M_\sigma)^{-1}\, \mathbb X_{\ep^{\rm TS}}  \, U_S^{-1} \, \mathbb M_\sigma \, U_S 
\label{clp.08}
\eeq
where the ``initial value'' $\mathbb X_{\ep^{\rm TS}}$ to the ``differential equation'' (\ref{clp.07}) is a $\mathbb Q[\pi^2]$ series in Tsunogai's derivations $\ep_k^{(j){\rm TS}}$, and the series $\mathbb M_\sigma$ is closely related to $\mathbb M_\sigma^{\rm sv}$ in (\ref{not.05}); 
\beq
 \mathbb M_\sigma = \sum_{r=0}^{\infty} \sum_{i_1,\ldots,i_r \atop {\in 2\mathbb N+1} } \rho^{-1}(f_{i_1} \ldots f_{i_r}) \, \sigma_{i_1}\ldots  \sigma_{i_r}
\, , \ \ \ \ 
\mathbb M_\sigma^{\rm sv}
= \mathbb M_\sigma^{T} \mathbb M_\sigma
\label{clp.00}
\eeq
\item[(iii)] exhibits the following transposition properties in its ``initial value'' 
\beq
(\mathbb X_{\ep^{\rm TS}})^T = (\mathbb X_{\ep^{\rm TS}})^{-1}
\label{clp.09}
\eeq
\item[(iv)] satisfies the quadratic relation
\beq
\mathbb S_{\ep^{\rm TS}}^T \, \mathbb M_\sigma^{\rm sv}\, \mathbb S_{\ep^{\rm TS}} = U_S^{-1} \, \mathbb M_\sigma^{\rm sv} \, U_S 
\label{clp.10}
\eeq
\end{itemize}
\end{lemma}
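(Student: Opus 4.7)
The plan is to treat (i) as the sole analytic input and to derive (ii), (iii), (iv) purely algebraically from it. The mechanism is the uniqueness principle recalled in the opening paragraph of this appendix: any two series whose $f$-alphabet images have matching $\partial_{f_w}$-``derivatives'' for all odd $w \geq 3$ and agree on the $f_2^n$-coefficients must coincide. Once (i) is in hand, the factorization (ii) is cut out by this principle, (iii) falls out of the $S^2$-relation in $\mathrm{SL}_2(\mathbb Z)$, and (iv) is then a one-line algebraic manipulation. The genuinely hard step is (i).

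\textbf{Step 1: the differential coaction (i).} To prove (\ref{clp.07}), I would first rewrite the path-ordered exponential $\mathbb S_{\ep^{\rm TS}} = \mathrm{Pexp}\bigl(\int_0^{i\infty} \mathbb D_{\ep^{\rm TS}}(\tau_1)\bigr)$ using the tangential-base-point regularization at both ends of the integration interval. The infinitesimal coaction $\partial_{f_w}$ acts on such regularized iterated Eisenstein integrals by clipping boundary contributions at the tangent vectors sitting over $0$ and $i\infty$; this is the content of Brown's analysis in section 15 of \cite{brown2017multiple} and section 7 of \cite{Brown:2017qwo2}. The contribution at the upper endpoint $i\infty$ produces a left-multiplicative $\sigma_w$, while the contribution at the lower endpoint $0$ produces a right-multiplicative copy of $\sigma_w$ conjugated by $U_S$, reflecting the fact that the modular $S$-transformation exchanges the tangential vectors at the two cusps. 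Combining the two boundary terms and accounting for the relative orientation then gives precisely the right-hand side of (\ref{clp.07}). This is where I expect to have to spend the most care, since one has to track the shuffle regularization scheme and the relation between the Betti and de Rham sides of the motivic coaction in order for the arithmetic generator $\sigma_w$ to appear rather than just its geometric part.

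\textbf{Step 2: the factorization (ii).} Define $\mathbb X_{\ep^{\rm TS}} := \mathbb M_\sigma \, \mathbb S_{\ep^{\rm TS}} \, U_S^{-1} \, \mathbb M_\sigma^{-1} \, U_S$, so that (\ref{clp.08}) holds by construction. Applying $\partial_{f_w}$ to $\rho(\mathbb X_{\ep^{\rm TS}})$ via the Leibniz rule (\ref{clp.02}) and using both (i) and the easy identities
\begin{equation}
\partial_{f_w} \rho(\mathbb M_\sigma) = \sigma_w \, \rho(\mathbb M_\sigma) \, , \qquad
\partial_{f_w} \rho(\mathbb M_\sigma^{-1}) = - \rho(\mathbb M_\sigma^{-1}) \, \sigma_w \, ,
\end{equation}
a short telescoping calculation makes the four surviving terms cancel in pairs, giving $\partial_{f_w} \rho(\mathbb X_{\ep^{\rm TS}}) = 0$ for every odd $w \geq 3$. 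By the uniqueness principle, $\mathbb X_{\ep^{\rm TS}}$ has only pure $f_2^n$-contributions in its $f$-alphabet image, i.e.\ $\mathbb Q[\pi^2]$-coefficients. That the accompanying Lie-algebra generators are Tsunogai derivations $\ep_k^{(j)\rm TS}$ alone then follows from noting that $\mathbb S_{\ep^{\rm TS}}$ takes values in that algebra by its defining connection (\ref{tsconn}) and that the conjugation by $\mathbb M_\sigma^{\pm 1}$ and by $U_S^{\pm 1}$ preserves $\mathfrak{L}_{\ep^{\rm TS}}$ via the bracket relations (\ref{lieg1.23}) and (\ref{lieg1.40}).

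\textbf{Step 3: transposition (iii) and the quadratic relation (iv).} For (iii) I would exploit that $S^2 = -\II$ acts trivially on $\tau$: the cocycle associated with $S^2$ is (up to a controllable sign flip in $x,y$) the identity, which after expanding via (ii) and the known $T$-cocycle structure translates into $\mathbb X_{\ep^{\rm TS}}^T \mathbb X_{\ep^{\rm TS}} = 1$, i.e.\ (\ref{clp.09}). The quadratic relation (iv) is then purely formal: substituting (\ref{clp.08}) and $\mathbb M_\sigma^{\rm sv} = \mathbb M_\sigma^T \mathbb M_\sigma$ into the left-hand side of (\ref{clp.10}),
\begin{equation}
\mathbb S_{\ep^{\rm TS}}^T \, \mathbb M_\sigma^{\rm sv} \, \mathbb S_{\ep^{\rm TS}}
= \bigl(U_S^{-1} \mathbb M_\sigma^T U_S \, (\mathbb X_{\ep^{\rm TS}})^T (\mathbb M_\sigma^T)^{-1}\bigr)\bigl(\mathbb M_\sigma^T \mathbb M_\sigma\bigr)\bigl((\mathbb M_\sigma)^{-1} \mathbb X_{\ep^{\rm TS}} U_S^{-1} \mathbb M_\sigma U_S\bigr) \, ,
\end{equation}
the inner $\mathbb M_\sigma$-factors telescope, (iii) collapses $(\mathbb X_{\ep^{\rm TS}})^T \mathbb X_{\ep^{\rm TS}} = 1$, and one is left with $U_S^{-1} \mathbb M_\sigma^T \mathbb M_\sigma U_S = U_S^{-1} \mathbb M_\sigma^{\rm sv} U_S$, as desired. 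The only subtle point is to verify that $U_S$ commutes appropriately with transposition when it acts by conjugation, which follows from the even parities of $\ep_0$ and $\ep_0^\vee$ under $(\ldots)^T$ together with the reversal of concatenation order that trades $U_S$ for $U_S^{-1}$ inside transposed conjugations.
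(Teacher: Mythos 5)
Your steps (i), (ii) and (iv) follow essentially the paper's route: (i) is deferred to Brown in both treatments, (ii) amounts to checking that the ansatz (\ref{clp.08}) solves the ``differential equation'' (\ref{clp.07}), and (iv) is the same telescoping computation as (\ref{clp.16}). Two corrections to (ii), though. First, with the rightmost-clipping convention (\ref{clp.01}) the auxiliary identities read $\partial_{f_w}\rho(\mathbb M_\sigma)=\rho(\mathbb M_\sigma)\,\sigma_w$ and $\partial_{f_w}\rho(\mathbb M_\sigma^{-1})=-\sigma_w\,\rho(\mathbb M_\sigma^{-1})$; you have the $\sigma_w$ on the wrong side in both, and with your versions the four terms do not cancel against the two terms of (\ref{clp.07}). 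Second, $\partial_{f_w}\rho(\mathbb X_{\ep^{\rm TS}})=0$ only tells you that no non-commutative letters $f_w$ survive; it does not by itself place the coefficients in $\mathbb Q[\pi^2]$ rather than, say, $\mathbb Q[\pi^{\pm 2}]$ or $\mathbb Q[(2\pi i)^{\pm 1}]$. The paper supplies this as a separate transcendentality input on multiple modular values (via the degeneration of the $A$- and $B$-elliptic associators), showing that all negative powers of $\pi^2$ are accounted for by $U_S^{-1}\sigma_w U_S$. This is not cosmetic: the evenness of the weight is exactly what drives the paper's proof of (iii).

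The genuine gap is (iii). The relation coming from $S^2=-\mathbb I$ is $\mathbb S_{\ep^{\rm TS}}\cdot U_S^{-1}\mathbb S_{\ep^{\rm TS}}U_S=1$, equivalently $\mathbb X_{\ep^{\rm TS}}^{-1}=U_S\,\mathbb X_{\ep^{\rm TS}}\,U_S^{-1}$ after inserting (\ref{clp.08}). This relates the \emph{inverse} to the $U_S$-\emph{conjugate}; the transpose never enters, because $(\ldots)^T$ is an anti-automorphism of the enveloping algebra with the signs (\ref{lieg1.34}) and is not generated by the group law of ${\rm SL}_2(\mathbb Z)$. So ``$S^2$ acts trivially'' cannot yield $\mathbb X_{\ep^{\rm TS}}^T\mathbb X_{\ep^{\rm TS}}=1$ without a further input relating $\mathbb X_{\ep^{\rm TS}}^T$ to $\mathbb X_{\ep^{\rm TS}}^{-1}$, which is precisely what is to be proven. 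The paper instead argues termwise: writing $\mathbb X_{\ep^{\rm TS}}=\sum \ep_{k_1}^{(j_1){\rm TS}}\cdots\ep_{k_r}^{(j_r){\rm TS}}\,\chi[\ldots]$, the coefficient $\chi$ has MZV-weight $j_1{+}\ldots{+}j_r{+}r$ (the $y$-degree of the accompanying derivations), and since $\chi\in\mathbb Q[\pi^2]$ this weight is even, forcing $(-1)^r=(-1)^{j_1+\ldots+j_r}$; comparing the explicit expansions of $(\mathbb X_{\ep^{\rm TS}})^T$ and $(\mathbb X_{\ep^{\rm TS}})^{-1}$ then shows they coincide. You would need either this parity argument or an independent identity expressing $\mathbb S_{\ep^{\rm TS}}^T$ through a reversed/reflected path-ordered exponential (using $\overline{\mathbb S_{\ep^{\rm TS}}}=\mathbb S_{\ep^{\rm TS}}$ and the parities $(\ep_k^{(j){\rm TS}})^T=(-1)^j\ep_k^{(j){\rm TS}}$) to close this step.
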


Given that the quadratic relation (\ref{clp.10}) casts the $S$-transformation (\ref{clp.06}) into the form
\begin{align}
\mathbb I^{\rm eqv}_{\ep^{\rm TS}}\big({-}\tfrac{1}{\tau} \big) &=  U_S^{-1} \, (\mathbb M_z^{\rm sv})^{-1} \, \overline{ \mathbb I_{\ep^{\rm TS}}( \tau)^T }  \,  \mathbb M_\sigma^{\rm sv}   \, \mathbb I_{\ep^{\rm TS}}( \tau) \,  U_S \notag \\
&= U_S^{-1} \,\mathbb I^{\rm eqv}_{\ep^{\rm TS}}(\tau) \,  U_S
\label{clp.11}
\end{align}
the last step towards proving item $(ii)$ of Theorem \ref{2.thm:1} is the proof of Lemma \ref{applem}:

\begin{proof}
\phantom{x}

$(i)$: follows from translating the Galois action in section 15.3.3 of \cite{brown2017multiple} into the differential coaction (\ref{clp.07});

$(ii)$: One can easily check via $\partial_{f_w} 
\rho(U_S^{-1}\mathbb M_\sigma U_S) = \rho(U_S^{-1}\mathbb M_\sigma U_S)
U_S^{-1}\sigma_w U_S$ as well as $\partial_{f_w} \rho(\mathbb M^{-1}_\sigma) = - \sigma_w  \rho(\mathbb M^{-1}_\sigma)$ that (\ref{clp.08}) solves the ``differential equation'' (\ref{clp.07}). It remains to explain why the ``initial value'' $ \mathbb X_{\ep^{\rm TS}}$ obtained from discarding non-empty words in $f_w$ with $w\geq 3$ odd in the expansion of $ \rho(\mathbb S_{\ep^{\rm TS}} )$ is a $\mathbb Q[\pi^2]$ series in Tsunogai derivations. This follows from the fact that the multiple modular values in the expansion of $\mathbb S_{\ep^{\rm TS}}$ in $\ep_k^{(j)}$ are $\mathbb Q[\pi^{-2}]$-linear as opposed to $\mathbb Q[(2\pi i)^{\pm 1}]$-linear combinations of MZVs and all negative powers of $\pi^2$ can be traced back to $U_S^{-1} \sigma_w U_S$ along with (\ref{lieg1.40}). In order to see this, one can, for instance, determine the multiple modular values in $\mathbb S_{\ep^{\rm TS}}$ from the ratio of the $A$- and $B$-elliptic KZB associators in their $\tau \rightarrow i\infty$ degenerations \cite{KZB, EnriquezEllAss, Enriquez:Emzv} and use the transcendentality properties of $B$-elliptic MZVs derived in appendix C of \cite{Broedel:2018izr}. The link between multiple modular values and the degenerate $A$- and $B$-elliptic associators is also discussed in \cite{saad2020multiple}.

$(iii)$: When writing the expansion of the ``initial value'' in the form
\beq
\mathbb X_{\ep^{\rm TS}} = \sum_{r=0}^\infty \sum_{k_1,\ldots,k_r=4}^\infty
\sum_{j_1=0}^{k_1-2}\ldots \sum_{j_r=0}^{k_r-2}
\ep_{k_1}^{(j_1){\rm TS}} \ldots \ep_{k_r}^{(j_r){\rm TS}}
\xcomp{j_1 &\ldots &j_r}{k_1 &\ldots &k_r}
\label{clp.13}
\eeq
with coefficients $\xcomp{j_1 &\ldots &j_r}{k_1 &\ldots &k_r} \in \mathbb Q[\pi^2]$, then the transpose and inverse in (\ref{clp.09}) are given by
\begin{align}
(\mathbb X_{\ep^{\rm TS}})^T &=
\sum_{r=0}^\infty \sum_{k_1,\ldots,k_r=4}^\infty
\sum_{j_1=0}^{k_1-2}\ldots \sum_{j_r=0}^{k_r-2} (-1)^{j_1+\ldots +j_r}
\ep_{k_r}^{(j_r){\rm TS}} \ldots \ep_{k_1}^{(j_1){\rm TS}}
\xcomp{j_1 &\ldots &j_r}{k_1 &\ldots &k_r}
\notag \\
(\mathbb X_{\ep^{\rm TS}})^{-1} &= \sum_{r=0}^\infty (-1)^r\sum_{k_1,\ldots,k_r=4}^\infty
\sum_{j_1=0}^{k_1-2}\ldots \sum_{j_r=0}^{k_r-2}
\ep_{k_r}^{(j_r){\rm TS}} \ldots \ep_{k_1}^{(j_1){\rm TS}}
\xcomp{j_1 &\ldots &j_r}{k_1 &\ldots &k_r}
\label{clp.14}
\end{align}
The transcendentality properties of multiple modular values discussed in \cite{brown2017multiple, Broedel:2018izr, saad2020multiple} imply that the MZVs multiplying  $\ep_{k_1}^{(j_1){\rm TS}} \ldots \ep_{k_r}^{(j_r){\rm TS}}$ in $\mathbb S_{\ep^{\rm TS}}$ have weight $j_1{+}\ldots{+}j_r{+}r$ according to the $y$-degree of the derivations (inspection of (\ref{lieg1.04}) and (\ref{lieg1.05}) reveals that $\ep_{k}^{(j){\rm TS}} $ has $y$-degree $j{+}1$). Since this property holds for both of $(\mathbb M_\sigma)^{-1}$ and $ U_S^{-1} \mathbb M_\sigma U_S$ in (\ref{clp.08}), the same must be true for $\mathbb X_{\ep^{\rm TS}}$ such that the coefficients $\xcomp{j_1 &\ldots &j_r}{k_1 &\ldots &k_r}$ in (\ref{clp.13}) have weight $j_1{+}\ldots{+}j_r{+}r$. By the result $\xcomp{j_1 &\ldots &j_r}{k_1 &\ldots &k_r} \in \mathbb Q[\pi^2]$ of item $(ii)$, only even weights can occur, so all the $\ep_{k_1}^{(j_1){\rm TS}} \ldots \ep_{k_r}^{(j_r){\rm TS}}$ contributing to (\ref{clp.13}) have
\beq
j_1{+}\ldots{+}j_r{+}r \in 2\mathbb N \ \ \ \ \Longrightarrow \ \ \ \ (-1)^r = (-1)^{j_1+\ldots+j_r}
\label{clp.15}
\eeq
Note that this analysis is unaffected by Pollack relations among $\ep_{k}^{(j){\rm TS}} $ as they preserve the $y$ degree. In view of (\ref{clp.15}), the expansions of $(\mathbb X_{\ep^{\rm TS}})^T$ and $(\mathbb X_{\ep^{\rm TS}})^{-1}$ in (\ref{clp.14}) match and conclude the proof of (\ref{clp.09}).

$(iv)$: We can now prove (\ref{clp.10}) by direct computation, using (\ref{clp.08}), (\ref{clp.00}) and $U^T_S = U_S^{-1}$ in the first step,
\begin{align}
\mathbb S_{\ep^{\rm TS}}^T \, \mathbb M_\sigma^{\rm sv}\, \mathbb S_{\ep^{\rm TS}} &= 
U_S^{-1}  \, \mathbb M_\sigma^T \, U_S \, \mathbb X_{\ep^{\rm TS}}^T \,
(\mathbb M^T_\sigma)^{-1}  \, 
\mathbb M^T_\sigma\, \mathbb M_\sigma\,
(\mathbb M_\sigma)^{-1}\, \mathbb X_{\ep^{\rm TS}}  \, U_S^{-1} \, \mathbb M_\sigma \, U_S  \notag \\
&= U_S^{-1}  \, \mathbb M_\sigma^T \, U_S \, \mathbb X_{\ep^{\rm TS}}^T \, \mathbb X_{\ep^{\rm TS}}  \, U_S^{-1} \, \mathbb M_\sigma \, U_S  \notag \\
&= U_S^{-1} \, M_\sigma^T \, \mathbb M_\sigma \, U_S 
= U_S^{-1} \, \mathbb M_\sigma^{\rm sv} \, U_S 
\label{clp.16}
\end{align}
The last line is obtained from the statement (\ref{clp.09}) of item $(iii)$, and we have identified $\mathbb M_\sigma^{\rm sv} $ through another use of (\ref{clp.00}) which concludes the proof of item $(iv)$ and thus Lemma \ref{applem}.
\end{proof}
Since item $(iv)$ of Lemma \ref{applem} implies the desired $S$-equivariance (\ref{clp.11}) of $\mathbb I^{\rm eqv}_{\ep^{\rm TS}}$ and we have already proven $T$-equivariance in section \ref{sec:2.5.2}, this also concludes the proof of Theorem~\ref{2.thm:1}.
Note that the importance of the identity $\mathbb S_{\ep^{\rm TS}}^T  \mathbb M_\sigma^{\rm sv} \mathbb S_{\ep^{\rm TS}} =  U_S^{-1}  \mathbb M_\sigma^{\rm sv} U_S$ was stressed in section 4.2 of \cite{Dorigoni:2024oft}, and a variant without direct reference to zeta generators is the specialization of Theorem 7.2 of \cite{Brown:2017qwo2} to $\gamma=S$.

\subsection{Proof of Lemma \ref{sigphilem}}
\label{app:phi}

We shall here prove Lemma \ref{sigphilem} which  plays a crucial role in proving Theorem \ref{3.thm:1} and whose statement (\ref{tkpexp.07s}) takes the following form in the $f$-alphabet:
\beq
\rho \big( \Phinew( {-} t_{12}, {-} t_{01} ) \big) \shuffle  \rho\big( \mathbb M^{\rm sv}_{\Sigma(0)} \big) \shuffle   \rho \big( \Phinew^{-1}(  t_{12}, t_{01} ) \big)
=  \rho \big(\mathbb M^{\rm sv}_{\sigma} \big)
\label{tkpexp.07}
\eeq

\subsubsection{``Initial values''}
\label{app:phi.1}

Following the strategy in the preamble of this appendix, the first step of the proof is to show that both sides of (\ref{tkpexp.07}) have the same ``initial values'', i.e.\ terms without any non-commuting letter $f_w$ with $w\geq 3$ odd at arbitrary order $f_2^n, \ n\geq 0$ in the commutative letter $f_2$. The shuffle products act on the non-commuting letters of the $f$-alphabet, whereas the Lie algebra generators $t_{01}$, $t_{12}$ and $\Sigma_w(0)$ are concatenated from left to right.

The right side of (\ref{tkpexp.07}) does not involve any commuting letter $f_2$ by the definition (\ref{not.05}) of $\mathbb M^{\rm sv}_{\sigma}$ as a series in single-valued MZVs. The left side of (\ref{tkpexp.07}) in turn reduces to the concatenation product
$\rho( \Phinew( {-} t_{12}, {-} t_{01} )) \big|_{f_2}   \rho ( \Phinew^{-1}(  t_{12}, t_{01} ) )  \big|_{f_2} $ when isolating the ``initial value'' by discarding any $f_w$ with $w$ odd (the shuffle product acts trivially on powers of $f_2$). In the restriction of $\Phinew$ to a power series in $f_2$, only words with an even number of letters $t_{12}$, $t_{01}$ occur such that the first concatenation factor is insensitive to the minus signs of the arguments in $\rho( \Phinew( {-} t_{12}, {-} t_{01} )) \big|_{f_2} = \rho( \Phinew( t_{12},  t_{01} )) \big|_{f_2} $. Hence, the ``initial value'' for the left side of (\ref{tkpexp.07}) is $ \rho( \Phinew( t_{12},  t_{01} )) \big|_{f_2}  \rho( \Phinew^{-1}( t_{12},  t_{01} )) \big|_{f_2} = 1$, consistently with that of the right~side.

\subsubsection{``Differential equations''}
\label{app:phi.2}

It now remains to show that both sides of (\ref{tkpexp.07}) obey the same ``differential equation'' with respect to the operation $\partial_{f_w}$ in (\ref{clp.01}). The right side obeys
\beq
\partial_{f_w} \rho \big(\mathbb M^{\rm sv}_{\sigma} \big) = \sigma_w \, \rho \big(\mathbb M^{\rm sv}_{\sigma} \big) + \rho \big(\mathbb M^{\rm sv}_{\sigma} \big)\, \sigma_w
\label{tkpexp.09}
\eeq
as can be checked through the definition (\ref{not.05}) of $\mathbb M^{\rm sv}_{\sigma}$ and the Leibniz rule (\ref{clp.02}) of $\partial_{f_w}$ with respect to the shuffle product in the expression (\ref{appA.00}) for ${\rm sv}(f_{i_1}\ldots f_{i_r})$. 
The remaining task in the proof is to show that the same holds for the left side of (\ref{tkpexp.07}), i.e.\ that
\begin{align}
\partial_{f_w} \Psi &= \sigma_w \, \Psi + \Psi \, \sigma_w
\label{tkpexp.10}
\end{align}
for the following shorthand for the left side of  (\ref{tkpexp.07}):
\begin{align}
\Psi &= \rho \big( \Phinew( {-} t_{12}, {-} t_{01} ) \big) \shuffle  \rho\big( \mathbb M^{\rm sv}_{\Sigma(0)} \big) \shuffle   \rho \big( \Phinew^{-1}(  t_{12}, t_{01} ) \big)
\label{tkpexp.10psi}
\end{align}
In order to evaluate the $\partial_{f_w}$ action on the three factors of $\Psi$, we use (\ref{tkpexp.09}) with $\sigma_w \rightarrow \Sigma_w(0)$ together with the ``differential equations'' of the Drinfeld associator in \cite{FB:talk, Drummond:2013vz, privFBDKLS} and in section 5.1 of \cite{Frost:2025lre} (see (\ref{notsec.12}) for the Lie polynomial $P_w(t_{12},t_{01})= \Sigma_w(0) - \sigma_w$)
\begin{align}
\partial_{f_w} \rho \big( \Phinew( {-} t_{12}, {-} t_{01} ) \big) &= - \rho \big( \Phinew( {-} t_{12}, {-} t_{01} ) \big) \circ P_w(t_{12},t_{01})
\notag \\
\partial_{f_w} \rho \big( \Phinew^{-1}(  t_{12}, t_{01} ) \big) &= - P_w(t_{12},t_{01}) \;\tilde{\circ}\; \rho \big( \Phinew^{-1}(  t_{12}, t_{01} ) \big)
\label{tkpexp.11}
\end{align}
In the first place, the binary operations $\circ$ and $\tilde \circ$ on words $V$, $W$ in $t_{01}$, $t_{12}$ are defined by
\beq
V \circ W = V\, W - D_W(V) \, , \ \ \ \ \ \ V\; \tilde \circ \; W = V\, W + D_V(W)
\label{tkpexp.12}
\eeq
where $D_L$ are Ihara derivations associated with Lie polynomials $L$ that act in the following asymmetric way on $t_{12}$ and $t_{01}$ \cite{Ihara1989TheGR, Ihara:stable}:
\beq
D_L(t_{12}) = 0 \, , \ \ \ \ \ \ D_L(t_{01}) = [ t_{01}, L ]
\label{tkpexp.13}
\eeq
In both lines of (\ref{tkpexp.11}), the role of $L$ in (\ref{tkpexp.13}) is taken by the Lie polynomials $P_w(t_{12},t_{01})$. By the definition (\ref{dfprpsig}) of genus-one zeta generators $\sigma_w$ as Ihara derivations with respect to $P_w(t_{12},t_{01})$, we can replace $D_{P_w(t_{12},t_{01})}$ by commutators
\beq
D_{P_w(t_{12},t_{01})} V = [\sigma_w , V]
\label{tkpexp.14}
\eeq
for any word $V$ in $t_{01}$, $t_{12}$. This is obvious for one-letter words $V \in \{ t_{01},t_{12} \}$ by comparing (\ref{dfprpsig}) with (\ref{tkpexp.13}) and follows for words $V$ of arbitrary length $\geq 2$ by the fact that ${\rm ad}_{\sigma_w}$ and $D_{P_w(t_{12},t_{01})}$ obey the same Leibniz rule. This reduces the ``differential equations'' (\ref{tkpexp.11}) to the following concatenation products and commutators:
\begin{align}
\partial_{f_w} \rho \big( \Phinew( {-} t_{12}, {-} t_{01} ) \big) &= - \rho \big( \Phinew( {-} t_{12}, {-} t_{01} ) \big)  P_w(t_{12},t_{01}) + \big[\sigma_w,  \rho \big( \Phinew( {-} t_{12}, {-} t_{01} ) \big)   \big]
\notag \\
\partial_{f_w} \rho \big( \Phinew^{-1}(  t_{12}, t_{01} ) \big) &= - P_w(t_{12},t_{01})  \rho \big( \Phinew^{-1}(  t_{12}, t_{01} ) \big) - \big[ \sigma_w, \rho \big( \Phinew^{-1}(  t_{12}, t_{01} ) \big)\big]
\label{tkpexp.15}
\end{align}
With (\ref{tkpexp.15}) and (\ref{tkpexp.09}) at $\sigma_w \rightarrow \Sigma_w(0) = \sigma_w + P_w(t_{12},t_{01})$ in place, the Leibniz property of $\partial_{f_w}$ leads to the following action on the product $\Psi$ in (\ref{tkpexp.10psi})
\begin{align}
\rho^{-1}\partial_{f_w} \Psi &= 
\Big( {-}   \Phinew( {-} t_{12}, {-} t_{01} )  P_w(t_{12},t_{01}) + \big[\sigma_w,   \Phinew( {-} t_{12}, {-} t_{01} )   \big] \Big) \,  \mathbb M^{\rm sv}_{\Sigma(0)}  \,     \Phinew^{-1}(  t_{12}, t_{01} ) 
\notag \\
&\quad +
\Phinew( {-} t_{12}, {-} t_{01} )    \Big( \big(  P_w(t_{12},t_{01}){+}\sigma_w \big)\mathbb M^{\rm sv}_{\Sigma(0)}+\mathbb M^{\rm sv}_{\Sigma(0)} \big( P_w(t_{12},t_{01}){+}\sigma_w \big) \Big)     \Phinew^{-1}(  t_{12}, t_{01} ) 
\notag \\
& \quad -  \Phinew( {-} t_{12}, {-} t_{01} )  \, \mathbb M^{\rm sv}_{\Sigma(0)} \, \Big(
 P_w(t_{12},t_{01})    \Phinew^{-1}(  t_{12}, t_{01} )   + \big[ \sigma_w,   \Phinew^{-1}(  t_{12}, t_{01} ) \big]
\Big) \notag \\
&= \sigma_w\, \Phinew( {-} t_{12}, {-} t_{01} )  \, \mathbb M^{\rm sv}_{\Sigma(0)} \, \Phinew^{-1}(  t_{12}, t_{01} )
+  \Phinew( {-} t_{12}, {-} t_{01} )  \, \mathbb M^{\rm sv}_{\Sigma(0)} \, \Phinew^{-1}(  t_{12}, t_{01} ) \, \sigma_w
 \notag \\
 &= \sigma_w\, \rho^{-1}(\Psi)
 + \rho^{-1}(\Psi)\, \sigma_w
 \label{tkpexp.16}
\end{align}
reproducing the desired ``differential equation'' of $\Psi$ in (\ref{tkpexp.10}).

\subsubsection{Conclusion}
\label{app:phi.3}

In order to see why the statement (\ref{tkpexp.07}) of the lemma is implied by the above results on the ``initial conditions'' and ``differential equations'', we expand the difference between its left and right side in the $f$-alphabet:
\begin{align}
&\rho \big( \Phinew( {-} t_{12}, {-} t_{01} ) \big) \! \shuffle \!  \rho\big( \mathbb M^{\rm sv}_{\Sigma(0)} \big) \! \shuffle \!  \rho \big( \Phinew^{-1}(  t_{12}, t_{01} ) \big)
-  \rho \big(\mathbb M^{\rm sv}_{\sigma} \big) = \sum^\infty_{r=0} \sum_{i_1,\ldots,i_r\atop{\in 2\mathbb N+1}} f_{i_1}\ldots f_{i_r} \, \Upsilon_{i_1,\ldots,i_r}(f_2)
\label{tkpexp.18}
\end{align}
The coefficients $\Upsilon_{i_1,\ldots,i_r}(f_2)$ could in principle be arbitrary power series in $f_2$ with $\mathbb Q$-linear combinations of words in $t_{01},t_{12},\sigma_w$ at each order by inspection of the contributing series. The matching of ``initial values'' in section \ref{app:phi.1} translates into
\beq
\Upsilon_{\emptyset}(f_2) = 0
\label{tkpexp.19}
\eeq
for the coefficient of the empty word in the non-commuting letters $f_w$ with $w$ odd. The matching of ``differential equations'' in section \ref{app:phi.2} translates into the recursion
\beq
\Upsilon_{i_1,\ldots,i_r,w}(f_2)
= \sigma_w \,\Upsilon_{i_1,\ldots,i_r}(f_2)+\Upsilon_{i_1,\ldots,i_r}(f_2)\, \sigma_w
\label{tkpexp.20}
\eeq
in the number $r$ of non-commuting letters in (\ref{tkpexp.18}). With the vanishing (\ref{tkpexp.19}) as a base case and (\ref{tkpexp.20}) as an inductive step, it follows that all the $\Upsilon_{i_1,\ldots,i_r}(f_2)$ in  (\ref{tkpexp.18}) are identically zero for arbitrary $r\geq 0$. This implies the statement (\ref{tkpexp.07s}) of the lemma and concludes its proof.

\section{Augmented zeta generators $\Sigma_w(u)$}
\label{sec:aug}

In this appendix, we shall spell out selected contributions to the expansion of augmented zeta generators $\Sigma_w(u)$ defined by (\ref{not.06}). 

\subsection{Expansions of $\Sigma_w(u)$}
\label{sec:C.4}

The expansions of the simplest augmented zeta generators $\Sigma_w(u)$ at $w=3,5,7$ up to and including degree 8 are given by
\begin{align}
\Sigma_3(u) &= - \frac{1}{2}\, \ep_4^{(2)} + B_1(u) \,[b_2, b_3^{(1)}] + \frac{1}{2}\,B_2(u) \, [b_3,b_3^{(1)}] + \frac{1}{4}\,B_2(u) \, [b_2, b_4^{(1)}] + z_3 - \frac{1}{4}\, [b_3,b_3^{(1)}]\notag \\
&\quad + \frac{1}{18}\, B_3(u) \, [b_2, b_5^{(1)}]
+ \frac{1}{6}\, B_3(u)\, \big( [b_3,b_4^{(1)}] {-} [b^{(1)}_3,b_4]\big)
+ \bigg(\frac{B_1(u)}{12}  {-} \frac{B_3(u)}{18}  \bigg) \, \big[b_2, [b_2, b_3] \big]
\notag \\
&\quad +\frac{1}{480} \,[\ep_4, \ep_4^{(1)}] {+}\frac{1}{96}\, B_4(u) \,[b_2,b_6^{(1)}]
{+} \frac{1}{24}\, B_4(u)\, \big( [b_3,b_5^{(1)}] {-} [b^{(1)}_3,b_5]\big) {+}\frac{1}{16}\, B_4(u) \,[b_4,b_4^{(1)}]
\notag \\
&\quad + \bigg(\frac{1}{1440}{+} \frac{B_2(u)}{24}  {-} \frac{ B_4(u)}{32} \bigg) \, \big[b_2, [b_2, b_4] \big]
+ \bigg(\frac{1}{720}{-} \frac{B_2(u)}{24} {+} \frac{B_4(u)}{24}  \bigg) \, \big[b_3, [b_3, b_2] \big]+
\ldots
\notag \\
\Sigma_5(u) &= -\frac{1}{24} \, \ep_6^{(4)} - \frac{1}{2}\, B_1(u) \, [b_3^{(1)},b_4^{(2)}] + \frac{1}{6} \, B_1(u) \, [b_2, b_5^{(3)}] 
- \frac{5}{48} \, [\ep_4^{(1)}, \ep_4^{(2)}]
 \notag \\
&\quad - \frac{1}{8} \,B_2(u) [b^{(1)}_4, b_4^{(2)}]  + \frac{1}{12}\,B_2(u) \big( [b_3, b_5^{(3)}] - [b_3^{(1)},b_5^{(2)}] \big) + \frac{1}{48}\, B_2(u)\, [b_2, b_6^{(3)}] \notag\\
&\quad -\frac{1}{16}\, \big(1+ 5\,B_2(u) \big)\, \big[ b_2,[b_2, b_4^{(2)}]\big] + \frac{1}{6}\, \big(1+2B_2(u) \big) \big[ b_3^{(1)}, [b_2,b_3^{(1)}]\big] +\ldots \notag \\
\Sigma_7(u) &= -\frac{1}{6!} \, \ep_8^{(6)}  + \ldots
\label{explSIG}
\end{align}
with infinite series in $\ep_k^{(j)}$, $b_k^{(j)}$ (and $z_5$, $z_7$ in case of $\Sigma_5(u) $, $\Sigma_7(u)$) of degree $\geq 9$ in the ellipsis.

\subsection{Source terms of $\Sigma_w(u)$}
\label{sec:C.5}

Given that the augmented version (\ref{not.06}) of the genus-one zeta generators only depend on $u$ through the conjugation with $e^{ux}$, their expansion can be
written as
\beq
\Sigma_w(u) = \sum_{k=0}^\infty \frac{B_k(u)}{k!} \, {\rm ad}_x^k (\Sigma_w^{\rm sc})
\eeq
with $\Sigma_w^{\rm sc}$ denoting a $u$-independent ``source term'' which is
different from $\Sigma_w(u=0)$ but by itself an infinite series in $\ep_k^{(j)}$ and $b_k^{(j)}$. The nested commutators of $x$ with the contributing $\ep_k^{(j)}$ and $b_k^{(j)}$ can be evaluated using the methods of section \ref{sec:D.1}. The expansion of $\Sigma_w^{\rm sc}$ at $w=3,5,7$
to up to and including degree 10 is given as follows and illustrates that these source terms offer a very economic way of packing the information of the more explicit expansions in (\ref{explSIG}):

\begin{align}
    \Sigma^{\rm sc}_3&=-\dfrac{1}{2}\epsilon_4^{(2)}+z_3 - \dfrac{1}{4} [b_3, b_3^{(1)}] + 
 \dfrac{1}{480} [\epsilon_4, \epsilon_4^{(1)}] +\dfrac{1}{1440}[b_2, [b_2, b_4]] - 
 \dfrac{1}{720} [b_3, [b_2, b_3]]   \notag\\
 &\quad + \dfrac{1}{30240} [\ep_4^{(1)},\ep_6] -\dfrac{1}{120960} [\ep_4,\ep_6^{(1)}]
 -  \dfrac{1}{210080}[b_2, [b_3, b_5]] +\dfrac{1}{210080} 
  [b_3, [b_2, b_5]]\notag\\
  &\quad  -\dfrac{1}{120960}[b_2, [b_2, b_6]]+ \dfrac{1}{ 3 20160}
  [b_3, [b_3, b_4]]- \dfrac{1}{2 20160}
  [b_4, [b_2, b_4]] + \ldots
\notag  \\
    \Sigma^{\rm sc}_5&=-\dfrac{1}{4!}\epsilon_6^{(4)} -\dfrac{5}{48} [\epsilon_4^{(1)}, \epsilon_4^{(2)}] -\dfrac{1}{16} [b_2, [b_2, b_4^{(2)}]]+\dfrac{1}{6} [b_3^{(1)}, [b_2, b_3^{(1)}]]\notag\\
 &\quad  
 +  \dfrac{1}{5760} \big( [\ep_4, \ep^{(3)}_6] -  [\ep^{(1)}_4, \ep^{(2)}_6] +[\ep^{(2)}_4,\ep^{(1)}_6] \big)
 +z_5 + 
 \dfrac{1}{1440} \big( [b_5, b^{(3)}_5]  - 
 [b^{(1)}_5, b^{(2)}_5] \big) \notag\\
&\quad
+ \dfrac{1}{17280}[b_2,[b_2, b^{(2)}_6]]+ 
 \dfrac{1}{4320}[b_2, [b_3, b^{(2)}_5]] - 
 \dfrac{7}{1920}[b_2, [b_4, b^{(2)}_4]]\notag\\
 &\quad- 
 \dfrac{1}{2160}[b_2, [b^{(1)}_3, b^{(1)}_5]] - 
 \dfrac{1}{2160}[b_3, [b_2, b^{(2)}_5]] + \dfrac{19}{5760} [b^{(1)}_4, [b_2, b^{(1)}_4]] \notag\\
 &\quad+ 
 \dfrac{17}{960} [b_3, [b_3, b^{(2)}_4]] - 
 \dfrac{47}{2880} [b_3, [b^{(1)}_3, b^{(1)}_4]]  + 
 \dfrac{1}{2160}[b^{(1)}_3, [b_2, b^{(1)}_5]]\notag\\
 &\quad - 
 \dfrac{17}{960} [b^{(1)}_3, [b_3, b_4^{(1)}]] + 
 \dfrac{49}{1440} [b^{(1)}_3, [b^{(1)}_3, b_4]] - 
 \dfrac{19}{2880} [b^{(2)}_4, [b_2, b_4]]\notag\\
 &\quad 
+  \dfrac{37}{240} [b^{(1)}_3, [b_2, [b_2, b_3]]]
 - \dfrac{331}{2160} [b_3, [b_2, [b_2, b^{(1)}_3]]] 
 \notag\\
 &\quad
-  \dfrac{361}{2160} [b_2, [b_2, [b_3, b^{(1)}_3]]]  - 
\dfrac{1}{1152}[b_2, [b_2, [b_2, b^{(1)}_4]]] + \ldots
\notag \\
    \Sigma^{\rm sc}_7&=-\dfrac{1}{6!}\epsilon^{(6)}_8
    +\dfrac{7}{1152} \big([\epsilon_4^{(2)}, \epsilon_6^{(3)}] - [\epsilon_4^{(1)}, \epsilon_6^{(4)}] \big)
    -\dfrac{5}{96}[b_3^{(1)}, [b_4^{(2)}, b_3^{(1)}]]-\dfrac{17}{384}[b_4^{(2)}, [b_2, b_4^{(2)}]] \notag\\
    &\quad+\dfrac{1}{144}[b_3^{(1)}, [b_2, b_5^{(3)}]]+\dfrac{5}{96}[b_5^{(3)}, [b_2, b_3^{(1)}]]-\dfrac{13}{2304}[b_2, [b_2, b_6^{(4)}]] + \ldots
    \label{src.7}
\end{align}
with terms of degree $\geq 11$ in the ellipsis.

\section{Proof of Lemma \ref{3.lem:1}}
\label{app:gmod}

This appendix provides the proof of Lemma \ref{3.lem:1} which states the modular properties of the series $\mathbbm{\Gamma}_{x,y}(z,\tau)$ of Brown-Levin eMPLs defined by (\ref{notsec.19}). As a first step, we show in section \ref{appG.1} that the connection $\mathbb J^{\rm BL}_{x- 2\pi i \tau y,y}(z,\tau)$ in the integrand of $\mathbbm{\Gamma}_{x,y}(z,\tau)$ transforms  equivariantly in the sense of Definition \ref{def:eqv}. In a second step, we demonstrate in section \ref{appG.2} that the equivariance of the connection does not immediately propagate to $\mathbbm{\Gamma}_{x,y}(z,\tau)$ and work out correction terms from the regularization prescription of endpoint divergences. The third step in section \ref{appG.3} cancels these correction terms against contributions from the factor of $\exp\big( \ee{0}{2}{\tau} b_2\big)$ in (\ref{notsec.19}) and pinpoints the origin of the monodromy phases $e^{i\pi b_2/6}$ and $e^{-i\pi b_2/2}$ in the statement (\ref{modgser}) of the lemma.

\subsection{Equivariance of $\mathbb J^{\rm BL}_{x- 2\pi i \tau y,y}(z,\tau)$}
\label{appG.1}

The modular transformation of the Brown-Levin connection in the original alphabet of (\ref{notsec.06}) is given by
\beq
\mathbb J^{\rm BL}_{x,y}\bigg(\frac{z}{c\tau{+}d},\frac{a\tau{+}b}{c\tau{+}d} \bigg) = 
\mathbb J^{\rm BL}_{(c\tau{+}d)x,\frac{y}{c\tau{+}d} - \frac{cx}{2\pi i}}(z,\tau) \, , \ \ \ \ \ \ ( \smallmatrix a &b \\ c &d \endsmallmatrix ) \in {\rm SL}_2(\mathbb Z)
\label{prfG.01}
\eeq
where the inhomogeneity $- \frac{cx}{2\pi i}$ in the second letter can for instance be understood from its link (\ref{notsec.07}) to the modular connection $\mathbb J^{\rm mod}_{x,y}(z,\tau)$ in (\ref{jmodtrf}). The $\tau$-dependent letter in the variant $\mathbb J^{\rm BL}_{x- 2\pi i \tau y,y}(z,\tau)$ of the Brown-Levin connection in the integrand of $\mathbbm{\Gamma}_{x,y}(z,\tau)$ further modifies the modular $T$- and $S$-transformation resulting from (\ref{prfG.01}) to
\begin{align}
{\cal T} \cdot \big[\mathbb J^{\rm BL}_{x- 2\pi i \tau y,y}(z,\tau) \big] &=
\mathbb J^{\rm BL}_{x- 2\pi i (\tau{+}1) y ,y}(z,\tau)
= U_T^{-1}
\mathbb J^{\rm BL}_{x- 2\pi i \tau y,y}(z,\tau)
U_T
\notag \\
{\cal S} \cdot \big[\mathbb J^{\rm BL}_{x- 2\pi i \tau y,y}(z,\tau) \big]&= \, \!
\mathbb J^{\rm BL}_{\tau x+ 2\pi i y,-\frac{x}{2\pi i}}(z,\tau) \,
= U_S^{-1} \mathbb J^{\rm BL}_{x- 2\pi i \tau y,y}(z,\tau) U_S
\label{prfG.02}
\end{align}
where the action of
\beq
{\cal T}\cdot (z,\tau) = (z,\tau{+}1) \, , \ \ \ \ \ \
{\cal S}\cdot (z,\tau) = \bigg(\frac{z}{\tau},-\frac{1}{\tau} \bigg)
\label{prfG.03}
\eeq
does not affect the generators $x,y$. In the second step of (\ref{prfG.02}), we have identified the letters $(x- 2\pi i (\tau{+}1) y ,y)$ and 
$(\tau x+ 2\pi i y,-\frac{x}{2\pi i})$ of the intermediate expressions as the images of the letters in $\mathbb J^{\rm BL}_{x- 2\pi i \tau y,y}(z,\tau)$ under the $\mathfrak{sl}_2$ action (\ref{uschoice.00}). Hence, we have established in (\ref{prfG.02}) that
\beq
\gamma \cdot \big[\mathbb J^{\rm BL}_{x- 2\pi i \tau y,y}(z,\tau) \big]
= U_\gamma^{-1} \mathbb J^{\rm BL}_{x- 2\pi i \tau y,y}(z,\tau) U_\gamma \ \forall \ \gamma \in {\rm SL}_2(\mathbb Z)
\label{prfG.04}
\eeq
by showing that it holds for both generators (\ref{prfG.03}) of the modular group.

\subsection{Implications for $\mathbbm{\Gamma}_{x,y}(z,\tau)$}
\label{appG.2}

In our second step of proving Lemma \ref{3.lem:1}, we try to uplift the equivariance (\ref{prfG.04}) of the Brown-Levin connection to its path-ordered exponential in the definition (\ref{notsec.19}) of $\mathbbm{\Gamma}_{x,y}(z,\tau)$. For generic endpoints $w,z\notin \mathbb Z{+} \tau \mathbb Z$ away from the singular points of $ \mathbb J^{\rm BL}_{x- 2\pi i \tau y,y}(z,\tau)$, equivariance would straightforwardly carry over to ${\rm Pexp} (\int_z^w
{\mathbb J}^{\rm BL}_{x-2\pi i \tau y,y}(z_1,\tau)
) $. However, the endpoint $w=0$ encountered in (\ref{notsec.19}) requires regularization, and we shall illustrate through a cutoff prescription using some $\varepsilon \ll 1$ as in \cite{Broedel:2014vla, Broedel:2018iwv, Broedel:2019tlz} that the modular $S$ transformation (\ref{prfG.03}) introduces a correction term $\tau^{-b_2}$ through the rescaling of $\varepsilon$ by $\tau$. For this purpose, we split the Brown-Levin connection into 
\beq
\mathbb J^{\rm BL}_{x- 2\pi i \tau y,y}(z,\tau) = \frac{\dd z}{z} [x,y] + \mathbb J^{\rm reg}_{x- 2\pi i \tau y,y}(z,\tau)
\label{prfG.05}
\eeq
where the second part is non-singular as $z \rightarrow 0$ and apply the
composition-of-paths formula, the cutoff-regularized path ordered exponential.
\begin{align}
{\cal S} \cdot \bigg[ {\rm Pexp} \biggl(\int_{z}^{ \varepsilon}
  \mathbb J^{\rm BL}_{x- 2\pi i \tau y,y}(z_1,\tau)\biggr) \bigg]&= {\rm Pexp} \biggl(\int_{z/\tau}^\varepsilon
\bigg[ \frac{\dd z_1}{z_1} [x,y] + \mathbb J^{\rm reg}_{x+ 2\pi i  y/\tau,y}(z_1,-\tfrac{1}{\tau}) \bigg]\biggr) \notag \\
&= {\rm Pexp} \biggl(\int_{z}^{\tau \varepsilon}
\bigg[ \frac{\dd z_1}{z_1} [x,y] + \mathbb J^{\rm reg}_{x+ 2\pi i y/ \tau,y}(\tfrac{z_1}{\tau},-\tfrac{1}{\tau}) \bigg]\biggr)  \notag \\
&= {\rm Pexp} \biggl({-} b_2 \int_{\varepsilon}^{\tau \varepsilon}
 \frac{\dd z_1}{z_1} \bigg)
{\rm Pexp} \biggl(\int_{z}^{ \varepsilon}
\mathbb J^{\rm BL}_{x+ 2\pi i  y/\tau,y} \big(\tfrac{z_1}{\tau},-\tfrac{1}{\tau} \big) \biggr)  \notag \\
&= \tau^{-b_2 } U_S^{-1}
{\rm Pexp} \biggl(\int_{z}^{ \varepsilon}
  \mathbb J^{\rm BL}_{x- 2\pi i \tau y,y}(z_1,\tau)\biggr) U_S
  \label{prfG.06}
\end{align}
In passing to the third line, we have used that the regular part $\mathbb J^{\rm reg}_{x- 2\pi i \tau y,y}$ does not contribute to the Pexp of the infinitesimal path $\int_{\varepsilon}^{\tau \varepsilon}$. In the last step, the first Pexp has been evaluated by resumming logarithms, and the second Pexp has inherited the $S$ equivariance (\ref{prfG.02}) of the Brown-Levin connection since the integration paths $\int_{z}^{ \varepsilon}$ avoids its singular points.

Since the modular $T$ transformation (\ref{prfG.03}) does not act on the endpoint $z$ of the integration path, there is no analogue of the correction term $\tau^{-b_2 }$ in
\begin{align}
{\cal T} \cdot \bigg[ {\rm Pexp} \biggl(\int_{z}^{ \varepsilon}
  \mathbb J^{\rm BL}_{x- 2\pi i \tau y,y}(z_1,\tau)\biggr) \bigg]
&= U_T^{-1}
{\rm Pexp} \biggl(\int_{z}^{ \varepsilon}
  \mathbb J^{\rm BL}_{x- 2\pi i \tau y,y}(z_1,\tau)\biggr) U_T
  \label{prfG.07}
\end{align}

\subsection{Simplifications from the additional integral over ${\rm G}_2$}
\label{appG.3}

In the last step of proving Lemma \ref{3.lem:1}, the correction factor $ \tau^{-b_2 }$ in (\ref{prfG.06}) will be seen to conspire with the modular transformation of the
factor of $\exp\big( \ee{0}{2}{\tau} b_2\big)$ in the definition (\ref{notsec.19}) of $\mathbbm{\Gamma}_{x,y}(z,\tau)$. By the quasi-modular transformation ${\rm G}_2(-\tfrac{1}{\tau}) = \tau^2 {\rm G}_2(\tau) -2\pi i \tau$ and $ {\rm G}_2(\tau{+}1)=  {\rm G}_2(\tau)$ of the Eisenstein series in the integrand, its primitive acquires logarithms in $\tau$ in its $S$ image,
\begin{align}
\ee{0}{2}{\tau{+}1} &= \ee{0}{2}{\tau}+\frac{i\pi}{6}
\label{prfG.08} \\
\ee{0}{2}{-\tfrac{1}{\tau}} &= \ee{0}{2}{\tau} 
+ \int^\tau_0 \frac{\dd \rho}{\rho}
+ \int^0_{i\infty} \frac{\dd \rho}{2\pi i} \,  {\rm G}_2(\rho)
\notag \\
&= \ee{0}{2}{\tau} + \log(\tau) -\frac{i\pi}{2}
\notag
\end{align}
The constant $-\frac{i\pi}{2}$ in the last line can be verified numerically from the series expansion of $\ee{0}{2}{\tau} = \frac{i \pi \tau}{6} - 2 \sum_{m,n=1}^\infty \frac{1}{n} q^{mn}$ and analytically from the degeneration formulae for elliptic associators in \cite{KZB, EnriquezEllAss}.

By inserting the modular transformations (\ref{prfG.08}) into the factor of $\exp\big( \ee{0}{2}{\tau} b_2\big)$ entering $\mathbbm{\Gamma}_{x,y}(z,\tau)$, we find the inverse of the correction factor $ \tau^{-b_2 }$ in (\ref{prfG.06}),
\beq
\exp\big( \ee{0}{2}{-\tfrac{1}{\tau}}  b_2 \big) = \exp \big( \ee{0}{2}{\tau}  b_2 -\tfrac{i\pi}{2}  b_2 \big) \tau^{b_2}
\eeq
as well as the monodromy phases $e^{i\pi b_2/6}$ and $e^{-i\pi b_2/2}$ in the statement (\ref{modgser}) of the lemma, thus completing its proof.


\section{Further examples of equivariant iterated integrals}
\label{apbeqv}

This appendix lists further instances of the equivariant
iterated integrals $\beqvY{j_1&j_2\\k_1&k_2\\z&z}$ in modular frame accompanying the word $b_{k_1}^{(j_1)} b_{k_2}^{(j_2)}$ in the presentation (\ref{sec5.1_1}) of the generating series $\mathbb H^{\rm eqv}_{\ep,b}(u,v,\tau)$: three at degree $k_1{+}k_2= 7$
\begin{align}
\beqvY{0&2\\3&4\\z&z}&=\bplusY{0&2\\3&4\\z&z}+\bminusY{2&0\\4&3\\z&z}+\bminusY{0\\3\\z}\bplusY{2\\4\\z}\notag\\
&\quad  -\bplusY{0&2\\3&4\\z&}-\bminusY{2&0\\4&3\\&z}-\bminusY{0\\3\\z}\bplusY{2\\4\\}\notag\\
&\quad+\dfrac{2}{3}\zeta_3\biggl( \bminusY{0\\3\\z}-\dfrac{B_3(u)}{6}(-2\pi i\bar{\tau})\biggr)\notag\\
&\quad+\dfrac{1}{6\pi \Im \tau}\zeta_3\biggl(-\dfrac{B_3(u)}{12}(-2\pi i \bar{\tau})^2\biggr)+\dfrac{\zeta_5}{12\pi \Im \tau}B_1(u)
\notag \\  
\beqvY{2&0\\4&3\\z&z}&=\bplusY{2&0\\4&3\\z&z}+\bminusY{0&2\\3&4\\z&z}+\bminusY{2\\4\\z}\bplusY{0\\3\\z}\notag\\
&\quad  +\bplusY{0&2\\3&4\\z&}+\bminusY{2&0\\4&3\\&z}+\bminusY{0\\3\\z}\bplusY{2\\4\\}\notag\\
&\quad-\dfrac{2}{3}\zeta_3\biggl( \bminusY{0\\3\\z}-\dfrac{B_3(u)}{6}(-2\pi i\bar{\tau})\biggr)\notag\\
&\quad-\dfrac{1}{6\pi \Im \tau}\zeta_3\biggl(-\dfrac{B_3(u)}{12}(-2\pi i \bar{\tau})^2\biggr)-\dfrac{\zeta_5}{12\pi \Im \tau}B_1(u)
\notag \\
\beqvY{2&1\\4&3\\z&z}&=\bplusY{2&1\\4&3\\z&z}+\bminusY{1&2\\3&4\\z&z}+\bminusY{2\\4\\z}\bplusY{1\\3\\z}\notag\\
&\quad +\bplusY{1&2\\3&4\\z&}+\bminusY{2&1\\4&3\\&z}+\bminusY{1\\3\\z}\bplusY{2\\4\\}\notag\\
&\quad-\dfrac{2}{3}\zeta_3\biggl(\bminusY{1\\3\\z}-\dfrac{B_3(u)}{12}(-2\pi i \bar{\tau})^2\biggr)-\dfrac{\zeta_5}{3}B_1(u)
\label{wt7exbs}
\end{align}
and another two at degree $k_1{+}k_2= 8$
\begin{align}
\beqvY{2&1\\4&4\\z&z}&=\bplusY{2&1\\4&4\\z&z}+\bminusY{1&2\\4&4\\z&z}+\bplusY{1\\4\\z}\bminusY{2\\4\\z}\notag\\
&\quad+\biggl(\bplusY{1&2\\4&4\\z&}+\bminusY{2&1\\4&4\\&z}+\bplusY{2\\4}\bminusY{1\\4\\z}\biggr)\notag\\
&\quad-\biggl(\bplusY{2&1\\4&4\\z&}+\bminusY{1&2\\4&4\\&z}+\bplusY{1\\4\\}\bminusY{2\\4\\z}\biggr)\notag\\
  &\quad-\dfrac{2}{3}\zeta_3\bminusY{1\\4\\z}+\dfrac{1}{6 \pi \Im \tau }\zeta_3\bminusY{2\\4\\z}\notag\\
  &\quad+\dfrac{1}{72}(-2\pi i\bar{\tau})^2B_4(u)\zeta_3-\dfrac{1}{18}B_2(u)\zeta_5
  \notag \\
\beqvY{2&1\\5&3\\z&z}&=\bplusY{2&1\\5&3\\z&z}+\bminusY{1&2\\3&5\\z&z}+\bplusY{1\\3\\z}\bminusY{2\\5\\z}\notag\\
    &\quad+\dfrac{3}{2}\biggl(\bplusY{2&1\\4&4\\z&}+\bminusY{1&2\\4&4\\&z}+\bplusY{1\\4}\bminusY{2\\4\\z}\biggr)\notag\\
    &\quad+\dfrac{3}{4}\biggr(\bplusY{1&2\\4&4\\z&}+\bminusY{2&1\\4&4\\&z}+\bplusY{2\\4}\bminusY{1\\4\\z}\biggr)\notag\\
    &\quad-\dfrac{1}{2}\zeta_3\bminusY{1\\4\\z}-\dfrac{1}{4y}\zeta_3\bminusY{2\\4\\z}
    -\dfrac{1}{24}\zeta_5B_2(u) 
    \notag\\
    &\quad+\dfrac{1}{96}(-2\pi i\bar{\tau})^2B_4(u)\zeta_3+\dfrac{1}{192 \pi \Im \tau}(-2\pi i\bar{\tau})^3B_4(u)\zeta_3
    \label{wt8exbs}
\end{align}


\bibliographystyle{JHEP}

\begin{thebibliography}{100}

\bibitem{Shapiro:1972ph}
J.~A. Shapiro, {\it {Loop graph in the dual tube model}},  {\em Phys. Rev. D}
  {\bf 5} (1972) 1945--1948.

\bibitem{Zagier:123}
J.~H. Bruinier, G.~Geer, G.~Harder, and D.~Zagier, {\em {The 1-2-3 of Modular
  Forms}}.
\newblock Springer Berlin, Heidelberg, 2008.

\bibitem{DHoker:2024book}
E.~D'Hoker and J.~Kaidi, {\em {Modular Forms and String Theory}}.
\newblock Cambridge University Press, 2024.

\bibitem{DHoker:2015gmr}
E.~D'Hoker, M.~B. Green, and P.~Vanhove, {\it {On the modular structure of the
  genus-one Type II superstring low energy expansion}},  {\em JHEP} {\bf 08}
  (2015) 041, [\href{http://arxiv.org/abs/1502.06698}{{\tt 1502.06698}}].

\bibitem{DHoker:2015wxz}
E.~D'Hoker, M.~B. Green, {\"O}.~G{\"u}rdogan, and P.~Vanhove, {\it Modular
  graph functions},  {\em Commun. Num. Theor. Phys.} {\bf 11} (2017) 165--218,
  [\href{http://arxiv.org/abs/1512.06779}{{\tt 1512.06779}}].

\bibitem{DHoker:2016mwo}
E.~D'Hoker and M.~B. Green, {\it Identities between modular graph forms},  {\em
  J. Number Theory} {\bf 189} (2018) 25--80,
  [\href{http://arxiv.org/abs/1603.00839}{{\tt 1603.00839}}].

\bibitem{Green:1999pv}
M.~B. Green and P.~Vanhove, {\it {The Low-energy expansion of the one loop type
  II superstring amplitude}},  {\em Phys.Rev.} {\bf D61} (2000) 104011,
  [\href{http://arxiv.org/abs/hep-th/9910056}{{\tt hep-th/9910056}}].

\bibitem{Green:2008uj}
M.~B. Green, J.~G. Russo, and P.~Vanhove, {\it {Low energy expansion of the
  four-particle genus-one amplitude in type II superstring theory}},  {\em
  JHEP} {\bf 02} (2008) 020, [\href{http://arxiv.org/abs/0801.0322}{{\tt
  0801.0322}}].

\bibitem{DHoker:2019blr}
E.~D'Hoker and M.~B. Green, {\it {Exploring transcendentality in superstring
  amplitudes}},  {\em JHEP} {\bf 07} (2019) 149,
  [\href{http://arxiv.org/abs/1906.01652}{{\tt 1906.01652}}].

\bibitem{Claasen:2024ssh}
E.~Claasen and M.~Doroudiani, {\it {Type II Superstring Amplitude at One-Loop
  and Transcendentality}},  {\em Phys. Rev. Lett.} {\bf 134} (2025), no.~20
  201601, [\href{http://arxiv.org/abs/2412.04381}{{\tt 2412.04381}}].

\bibitem{Zerbini:2015rss}
F.~Zerbini, {\it {Single-valued multiple zeta values in genus 1 superstring
  amplitudes}},  {\em Commun. Num. Theor. Phys.} {\bf 10} (2016) 703--737,
  [\href{http://arxiv.org/abs/1512.05689}{{\tt 1512.05689}}].

\bibitem{DHoker:2019xef}
E.~D'Hoker and M.~B. Green, {\it {Absence of irreducible multiple zeta-values
  in melon modular graph functions}},  {\em Commun. Num. Theor. Phys.} {\bf 14}
  (2020), no.~2 315--324, [\href{http://arxiv.org/abs/1904.06603}{{\tt
  1904.06603}}].

\bibitem{Zagier:2019eus}
D.~Zagier and F.~Zerbini, {\it {Genus-zero and genus-one string amplitudes and
  special multiple zeta values}},  {\em Commun. Num. Theor. Phys.} {\bf 14}
  (2020), no.~2 413--452, [\href{http://arxiv.org/abs/1906.12339}{{\tt
  1906.12339}}].

\bibitem{DHoker:2016quv}
E.~D'Hoker and J.~Kaidi, {\it {Hierarchy of Modular Graph Identities}},  {\em
  JHEP} {\bf 11} (2016) 051, [\href{http://arxiv.org/abs/1608.04393}{{\tt
  1608.04393}}].

\bibitem{Kleinschmidt:2017ege}
A.~Kleinschmidt and V.~Verschinin, {\it {Tetrahedral modular graph functions}},
   {\em JHEP} {\bf 09} (2017) 155, [\href{http://arxiv.org/abs/1706.01889}{{\tt
  1706.01889}}].

\bibitem{Basu:2019idd}
A.~Basu, {\it {Eigenvalue equation for the modular graph $C_{a,b,c,d}$}},  {\em
  JHEP} {\bf 07} (2019) 126, [\href{http://arxiv.org/abs/1906.02674}{{\tt
  1906.02674}}].

\bibitem{Gerken:review}
J.~E. Gerken, {\em {Modular Graph Forms and Scattering Amplitudes in String
  Theory}}.
\newblock PhD thesis, Humboldt U., 2020.
\newblock \href{http://arxiv.org/abs/2011.08647}{{\tt 2011.08647}}.

\bibitem{Berkovits:2022ivl}
N.~Berkovits, E.~D'Hoker, M.~B. Green, H.~Johansson, and O.~Schlotterer, {\it
  {Snowmass White Paper: String Perturbation Theory}},  in {\em {Snowmass
  2021}}, 3, 2022.
\newblock \href{http://arxiv.org/abs/2203.09099}{{\tt 2203.09099}}.

\bibitem{Gerken:2020aju}
J.~E. Gerken, {\it {Basis Decompositions and a Mathematica Package for Modular
  Graph Forms}},  {\em J. Phys. A} {\bf 54} (2021), no.~19 195401,
  [\href{http://arxiv.org/abs/2007.05476}{{\tt 2007.05476}}].

\bibitem{Claasen:2025vcd}
E.~Claasen and M.~Doroudiani, {\it {From modular graph forms to iterated
  integrals}},  {\em JHEP} {\bf 06} (2025) 204,
  [\href{http://arxiv.org/abs/2502.05531}{{\tt 2502.05531}}].

\bibitem{Gerken:2018zcy}
J.~E. Gerken and J.~Kaidi, {\it {Holomorphic subgraph reduction of higher-point
  modular graph forms}},  {\em JHEP} {\bf 01} (2019) 131,
  [\href{http://arxiv.org/abs/1809.05122}{{\tt 1809.05122}}].

\bibitem{Broedel:2018izr}
J.~Broedel, O.~Schlotterer, and F.~Zerbini, {\it {From elliptic multiple zeta
  values to modular graph functions: open and closed strings at one loop}},
  {\em JHEP} {\bf 01} (2019) 155, [\href{http://arxiv.org/abs/1803.00527}{{\tt
  1803.00527}}].

\bibitem{Panzertalk}
E.~Panzer, ``{Talk ``Modular graph functions as iterated Eisenstein integrals''
  given at the workshop ``Elliptic Integrals in Mathematics and Physics''
  (Ascona, Switzerland)}.''
  \url{https://indico.cern.ch/event/700233/contributions/3112451/attachments/1712442/2761239/elliptic.pdf},
  2018.

\bibitem{Gerken:2020yii}
J.~E. Gerken, A.~Kleinschmidt, and O.~Schlotterer, {\it {Generating series of
  all modular graph forms from iterated Eisenstein integrals}},  {\em JHEP}
  {\bf 07} (2020), no.~07 190, [\href{http://arxiv.org/abs/2004.05156}{{\tt
  2004.05156}}].

\bibitem{brown2017multiple}
F.~Brown, {\it {Multiple modular values and the relative completion of the
  fundamental group of ${\cal M}_{1,1}$}},
  \href{http://arxiv.org/abs/1407.5167}{{\tt 1407.5167}}.

\bibitem{Brown:2017qwo}
F.~Brown, {\it {A class of non-holomorphic modular forms I}},  {\em Res. Math.
  Sci.} {\bf 5} (2018) 5:7, [\href{http://arxiv.org/abs/1707.01230}{{\tt
  1707.01230}}].

\bibitem{Brown:2017qwo2}
F.~Brown, {\it {A class of non-holomorphic modular forms II: equivariant
  iterated Eisenstein integrals}},  {\em Forum~of~Mathematics,~Sigma} {\bf 8}
  (2020) 1, [\href{http://arxiv.org/abs/1708.03354}{{\tt 1708.03354}}].

\bibitem{Dorigoni_2022}
D.~Dorigoni, M.~Doroudiani, J.~Drewitt, M.~Hidding, A.~Kleinschmidt,
  N.~Matthes, O.~Schlotterer, and B.~Verbeek, {\it {Modular graph forms from
  equivariant iterated Eisenstein integrals}},  {\em JHEP} {\bf 12} (2022) 162,
  [\href{http://arxiv.org/abs/2209.06772}{{\tt 2209.06772}}].

\bibitem{Dorigoni:2024oft}
D.~Dorigoni, M.~Doroudiani, J.~Drewitt, M.~Hidding, A.~Kleinschmidt,
  O.~Schlotterer, L.~Schneps, and B.~Verbeek, {\it {Non-holomorphic modular
  forms from zeta generators}},  {\em JHEP} {\bf 10} (2024) 053,
  [\href{http://arxiv.org/abs/2403.14816}{{\tt 2403.14816}}].

\bibitem{DHoker:2017pvk}
E.~D'Hoker, M.~B. Green, and B.~Pioline, {\it {Higher genus modular graph
  functions, string invariants, and their exact asymptotics}},  {\em Commun.
  Math. Phys.} {\bf 366} (2019), no.~3 927--979,
  [\href{http://arxiv.org/abs/1712.06135}{{\tt 1712.06135}}].

\bibitem{DHoker:2018mys}
E.~D'Hoker, M.~B. Green, and B.~Pioline, {\it Asymptotics of the {$D^8{\cal
  R}^4$} genus-two string invariant},  {\em Commun. Num. Theor. Phys.} {\bf 13}
  (2019), no.~2 351--462, [\href{http://arxiv.org/abs/1806.02691}{{\tt
  1806.02691}}].

\bibitem{DHoker:2020tcq}
E.~D'Hoker, C.~R. Mafra, B.~Pioline, and O.~Schlotterer, {\it {Two-loop
  superstring five-point amplitudes. Part II. Low energy expansion and
  S-duality}},  {\em JHEP} {\bf 02} (2021) 139,
  [\href{http://arxiv.org/abs/2008.08687}{{\tt 2008.08687}}].

\bibitem{Basu:2020pey}
A.~Basu, {\it {Poisson equations for elliptic modular graph functions}},  {\em
  Phys. Lett. B} {\bf 814} (2021) 136086,
  [\href{http://arxiv.org/abs/2009.02221}{{\tt 2009.02221}}].

\bibitem{Basu:2020iok}
A.~Basu, {\it {Relations between elliptic modular graphs}},  {\em JHEP} {\bf
  12} (2020) 195, [\href{http://arxiv.org/abs/2010.08331}{{\tt 2010.08331}}].
  [Erratum: JHEP 03, 061 (2021)].

\bibitem{Dhoker:2020gdz}
E.~D'Hoker, A.~Kleinschmidt, and O.~Schlotterer, {\it {Elliptic modular graph
  forms. Part I. Identities and generating series}},  {\em JHEP} {\bf 03}
  (2021) 151, [\href{http://arxiv.org/abs/2012.09198}{{\tt 2012.09198}}].

\bibitem{Hidding:2022vjf}
M.~Hidding, O.~Schlotterer, and B.~Verbeek, {\it {Elliptic modular graph forms
  II: Iterated integrals}},  \href{http://arxiv.org/abs/2208.11116}{{\tt
  2208.11116}}.

\bibitem{Ramakrish}
D.~Zagier, {\it The {B}loch-{W}igner-{R}amakrishnan polylogarithm function},
  {\em Math. Ann.} {\bf 286} (1990) 613.

\bibitem{BrownLev}
F.~Brown and A.~Levin, {\it {Multiple elliptic polylogarithms}},
  \href{http://arxiv.org/abs/1110.6917}{{\tt 1110.6917}}.

\bibitem{Lev}
A.~Levin, {\it Elliptic polylogarithms: an analytic theory},  {\em Compositio
  Math.} {\bf 106} (1997), no.~3 267--282.

\bibitem{Levrac}
A.~Levin and G.~Racinet, {\it Towards multiple elliptic polylogarithms},
  \href{http://arxiv.org/abs/math/0703237}{{\tt math/0703237}}.

\bibitem{Enriquez:2023emp}
B.~Enriquez and F.~Zerbini, {\it {Elliptic hyperlogarithms}},  {\em Canad. J.
  Math.} (2025) 1--36, [\href{http://arxiv.org/abs/2307.01833}{{\tt
  2307.01833}}].

\bibitem{Broedel:2014vla}
J.~Broedel, C.~R. Mafra, N.~Matthes, and O.~Schlotterer, {\it {Elliptic
  multiple zeta values and one-loop superstring amplitudes}},  {\em JHEP} {\bf
  07} (2015) 112, [\href{http://arxiv.org/abs/1412.5535}{{\tt 1412.5535}}].

\bibitem{Broedel:2017jdo}
J.~Broedel, N.~Matthes, G.~Richter, and O.~Schlotterer, {\it {Twisted elliptic
  multiple zeta values and non-planar one-loop open-string amplitudes}},  {\em
  J. Phys.} {\bf A51} (2018), no.~28 285401,
  [\href{http://arxiv.org/abs/1704.03449}{{\tt 1704.03449}}].

\bibitem{Kaderli:2021kqn}
A.~Kaderli, {\em {Elliptic multiple polylogarithms in open string theory}}.
\newblock PhD thesis, Humboldt U., Berlin, 2021.

\bibitem{Broedel:2017kkb}
J.~Broedel, C.~Duhr, F.~Dulat, and L.~Tancredi, {\it {Elliptic polylogarithms
  and iterated integrals on elliptic curves. Part I: general formalism}},  {\em
  JHEP} {\bf 05} (2018) 093, [\href{http://arxiv.org/abs/1712.07089}{{\tt
  1712.07089}}].

\bibitem{Bourjaily:2022bwx}
J.~L. Bourjaily et~al., {\it {Functions Beyond Multiple Polylogarithms for
  Precision Collider Physics}},  in {\em {Snowmass 2021}}, 3, 2022.
\newblock \href{http://arxiv.org/abs/2203.07088}{{\tt 2203.07088}}.

\bibitem{Weinzierl:2022}
S.~Weinzierl, {\em {Feynman Integrals: A Comprehensive Treatment for Students
  and Researchers}}.
\newblock Cambridge Monographs on Mathematical Physics. Springer Cham, 2022.

\bibitem{KZB}
D.~Calaque, B.~Enriquez, and P.~Etingof, {\it Universal {KZB} equations: the
  elliptic case},  in {\em Algebra, arithmetic, and geometry: in honor of {Y}u.
  {I}. {M}anin. {V}ol. {I}}, vol.~269 of {\em Progr. Math.}, pp.~165--266.
\newblock Birkh\"auser Boston, Inc., Boston, MA, 2009.

\bibitem{Schnetz:2013hqa}
O.~Schnetz, {\it {Graphical functions and single-valued multiple
  polylogarithms}},  {\em Commun. Num. Theor. Phys.} {\bf 08} (2014) 589--675,
  [\href{http://arxiv.org/abs/1302.6445}{{\tt 1302.6445}}].

\bibitem{Brown:2013gia}
F.~Brown, {\it {Single-valued Motivic Periods and Multiple Zeta Values}},  {\em
  SIGMA} {\bf 2} (2014) e25, [\href{http://arxiv.org/abs/1309.5309}{{\tt
  1309.5309}}].

\bibitem{Vanhove:2020qtt}
P.~Vanhove and F.~Zerbini, {\it {Building blocks of closed and open string
  amplitudes}},  in {\em {MathemAmplitudes 2019: Intersection Theory and
  Feynman Integrals}}, 7, 2020.
\newblock \href{http://arxiv.org/abs/2007.08981}{{\tt 2007.08981}}.

\bibitem{Brown:2011ik}
F.~Brown, {\it On the decomposition of motivic multiple zeta values},  in {\em
  Galois-{T}eichm\"uller theory and arithmetic geometry}, vol.~63 of {\em Adv.
  Stud. Pure Math.}, pp.~31--58.
\newblock Math. Soc. Japan, Tokyo, 2012.
\newblock \href{http://arxiv.org/abs/1102.1310}{{\tt 1102.1310}}.

\bibitem{BrownTate}
F.~Brown, {\it Mixed {T}ate motives over {$\Bbb Z$}},  {\em Ann. of Math. (2)}
  {\bf 175} (2012), no.~2 949--976, [\href{http://arxiv.org/abs/1102.1312}{{\tt
  1102.1312}}].

\bibitem{Levine}
M.~Levine, {\it Tate motives and the vanishing conjectures for algebraic
  {$K$}-theory},  in {\em Algebraic {$K$}-theory and algebraic topology ({L}ake
  {L}ouise, {AB}, 1991)}, vol.~407 of {\em NATO Adv. Sci. Inst. Ser. C: Math.
  Phys. Sci.}, pp.~167--188.
\newblock Kluwer Acad. Publ., Dordrecht, 1993.

\bibitem{DG:2005}
P.~Deligne and A.~B. Goncharov, {\it Groupes fondamentaux motiviques de {T}ate
  mixte},  {\em Ann. Sci. \'{E}cole Norm. Sup. (4)} {\bf 38} (2005), no.~1
  1--56, [\href{http://arxiv.org/abs/math/0302267}{{\tt math/0302267}}].

\bibitem{Brown:anatomy}
F.~Brown, ``{Talk ``Anatomy of the motivic Lie algebra'', given at the program
  ``Grothendieck-Teichm\"uller Groups, Deformation and Operads'', (Newton
  Institute, Cambridge, UK)}.'' \url{ https://sms.cam.ac.uk/media/1459610},
  2013.

\bibitem{Brown:depth3}
F.~Brown, {\it Zeta elements in depth 3 and the fundamental {L}ie algebra of
  the infinitesimal {T}ate curve},  {\em Forum Math. Sigma} {\bf 5} (2017)
  Paper No. e1, 56, [\href{http://arxiv.org/abs/1504.04737}{{\tt 1504.04737}}].

\bibitem{EnriquezEllAss}
B.~Enriquez, {\it Elliptic associators},  {\em Selecta Math. (N.S.)} {\bf 20}
  (2014), no.~2 491--584, [\href{http://arxiv.org/abs/1003.1012}{{\tt
  1003.1012}}].

\bibitem{Hain:KZB}
R.~Hain, {\it Notes on the universal elliptic {KZB} connection},  {\em Pure
  Appl. Math. Q.} {\bf 16} (2020), no.~2 229--312,
  [\href{http://arxiv.org/abs/1309.0580}{{\tt 1309.0580}}].

\bibitem{Schneps:2015mzv}
L.~Schneps, {\it {Elliptic double shuffle, Grothendieck-Teichm{\"u}ller and
  mould theory}},  {\em Ann. Math. Qu\'ebec} {\bf 44(2)} (2020) 261--289,
  [\href{http://arxiv.org/abs/1506.09050}{{\tt 1506.09050}}].

\bibitem{hain_matsumoto_2020}
R.~Hain and M.~Matsumoto, {\it Universal mixed elliptic motives},  {\em Journal
  of the Institute of Mathematics of Jussieu} {\bf 19} (2020), no.~3 663--766,
  [\href{http://arxiv.org/abs/1512.03975}{{\tt 1512.03975}}].

\bibitem{Dorigoni:2024iyt}
D.~Dorigoni, M.~Doroudiani, J.~Drewitt, M.~Hidding, A.~Kleinschmidt,
  O.~Schlotterer, L.~Schneps, and B.~Verbeek, {\it {Canonicalizing Zeta
  Generators: Genus Zero and Genus One}},  {\em Commun. Math. Phys.} {\bf 407}
  (2026), no.~1 12, [\href{http://arxiv.org/abs/2406.05099}{{\tt 2406.05099}}].

\bibitem{Frost:2023stm}
H.~Frost, M.~Hidding, D.~Kamlesh, C.~Rodriguez, O.~Schlotterer, and B.~Verbeek,
  {\it {Motivic coaction and single-valued map of polylogarithms from zeta
  generators}},  {\em J. Phys. A} {\bf 57} (2024), no.~31 31LT01,
  [\href{http://arxiv.org/abs/2312.00697}{{\tt 2312.00697}}].

\bibitem{Frost:2025lre}
H.~Frost, M.~Hidding, D.~Kamlesh, C.~Rodriguez, O.~Schlotterer, and B.~Verbeek,
  {\it {Deriving motivic coactions and single-valued maps at genus zero from
  zeta generators}},  \href{http://arxiv.org/abs/2503.02096}{{\tt 2503.02096}}.

\bibitem{Goncharov:2001iea}
A.~B. Goncharov, {\it {Multiple polylogarithms and mixed Tate motives}},
  \href{http://arxiv.org/abs/math/0103059}{{\tt math/0103059}}.

\bibitem{Goncharov:2005sla}
A.~B. Goncharov, {\it {Galois symmetries of fundamental groupoids and
  noncommutative geometry}},  {\em Duke Math. J.} {\bf 128} (2005) 209,
  [\href{http://arxiv.org/abs/math/0208144}{{\tt math/0208144}}].

\bibitem{svpolylog}
F.~Brown, {\it {Polylogarithmes multiples uniformes en une variable}},  {\em C.
  R. Acad. Sci. Paris} {\bf Ser. I 338} (2004) 527--532.

\bibitem{Broedel:2016kls}
J.~Broedel, M.~Sprenger, and A.~Torres~Orjuela, {\it {Towards single-valued
  polylogarithms in two variables for the seven-point remainder function in
  multi-Regge-kinematics}},  {\em Nucl. Phys. B} {\bf 915} (2017) 394--413,
  [\href{http://arxiv.org/abs/1606.08411}{{\tt 1606.08411}}].

\bibitem{DelDuca:2016lad}
V.~Del~Duca, S.~Druc, J.~Drummond, C.~Duhr, F.~Dulat, R.~Marzucca,
  G.~Papathanasiou, and B.~Verbeek, {\it {Multi-Regge kinematics and the moduli
  space of Riemann spheres with marked points}},  {\em JHEP} {\bf 08} (2016)
  152, [\href{http://arxiv.org/abs/1606.08807}{{\tt 1606.08807}}].

\bibitem{Brown:2018omk}
F.~Brown and C.~Dupont, {\it {Single-valued integration and double copy}},
  {\em J. Reine Angew. Math.} {\bf 2021} (2021), no.~775 145--196,
  [\href{http://arxiv.org/abs/1810.07682}{{\tt 1810.07682}}].

\bibitem{Enriquez:Emzv}
B.~Enriquez, {\it Analogues elliptiques des nombres multiz\'etas},  {\em Bull.
  Soc. Math. France} {\bf 144} (2016), no.~3 395--427,
  [\href{http://arxiv.org/abs/1301.3042}{{\tt 1301.3042}}].

\bibitem{Broedel:2015hia}
J.~Broedel, N.~Matthes, and O.~Schlotterer, {\it {Relations between elliptic
  multiple zeta values and a special derivation algebra}},  {\em J. Phys.} {\bf
  A49} (2016), no.~15 155203, [\href{http://arxiv.org/abs/1507.02254}{{\tt
  1507.02254}}].

\bibitem{Matthes:Thesis}
N.~Matthes, {\em {Elliptic multiple zeta values}}.
\newblock PhD thesis, Universit\"at Hamburg, 2016.

\bibitem{Zerbini:2018sox}
F.~Zerbini, {\em {Elliptic multiple zeta values, modular graph functions and
  genus 1 superstring scattering amplitudes}}.
\newblock PhD thesis, Bonn U., 2017.
\newblock \href{http://arxiv.org/abs/1804.07989}{{\tt 1804.07989}}.

\bibitem{Baune:2025svempl}
K.~Baune, J.~Broedel, and Y.~Moeckli, {\it {A construction of single-valued
  elliptic polylogarithms}},  \href{http://arxiv.org/abs/2511.15240}{{\tt
  2511.15240}}.

\bibitem{LNT}
J.-G. Luque, J.-C. Novelli, and J.-Y. Thibon, {\it {Period polynomials and
  Ihara brackets}},  {\em J. Lie Theory} {\bf 17} (2007) 229--239,
  [\href{http://arxiv.org/abs/math/0606301}{{\tt math/0606301}}].

\bibitem{Pollack}
A.~Pollack, ``{Relations between derivations arising from modular forms}.''
  \url{https://dukespace.lib.duke.edu/dspace/handle/10161/1281}, 2009.
\newblock Undergraduate thesis, Duke University.

\bibitem{WWWe}
J.~Broedel, N.~Matthes, and O.~Schlotterer.
  \url{https://tools.aei.mpg.de/emzv}.

\bibitem{Broedel:2018iwv}
J.~Broedel, C.~Duhr, F.~Dulat, B.~Penante, and L.~Tancredi, {\it {Elliptic
  symbol calculus: from elliptic polylogarithms to iterated integrals of
  Eisenstein series}},  {\em JHEP} {\bf 08} (2018) 014,
  [\href{http://arxiv.org/abs/1803.10256}{{\tt 1803.10256}}].

\bibitem{Lochak:2020}
P.~Lochak, N.~Matthes, and L.~Schneps, {\it Elliptic multizetas and the
  elliptic double shuffle relations},  {\em International Mathematics Research
  Notices} {\bf 2021} (2020) 695--753,
  [\href{http://arxiv.org/abs/1703.09410}{{\tt 1703.09410}}].

\bibitem{saad2020multiple}
A.~Saad, {\it {Multiple zeta values and iterated Eisenstein integrals}},
  \href{http://arxiv.org/abs/2009.09885}{{\tt 2009.09885}}.

\bibitem{Kleinschmidt:2025dtk}
A.~Kleinschmidt, F.~Porkert, and O.~Schlotterer, {\it {Towards Motivic
  Coactions at Genus One from Zeta Generators}},
  \href{http://arxiv.org/abs/2508.02800}{{\tt 2508.02800}}.

\bibitem{LapDan:1953rq}
J.~A. Lappo-Danilevsky, {\em {M\'emoires sur la th\'eorie des syst\`emes des
  \'equations diff\'erentielles lin\'eaires}}.
\newblock Chelsea Publishing Co., New York, 1953.

\bibitem{GONCHAROV1995197}
A.~B. Goncharov, {\it Geometry of configurations, polylogarithms, and motivic
  cohomology},  {\em Advances in Mathematics} {\bf 114} (1995), no.~2 197--318.

\bibitem{Brown:2009qja}
F.~Brown, {\it {Multiple zeta values and periods of moduli spaces ${\cal M}_{0
  ,n}( \mathbb R )$}},  {\em Annales Sci. Ecole Norm. Sup.} {\bf 42} (2009)
  371, [\href{http://arxiv.org/abs/math/0606419}{{\tt math/0606419}}].

\bibitem{BEFZ:2110}
B.~Enriquez and F.~Zerbini, {\it {Construction of Maurer-Cartan elements over
  configuration spaces of curves}},
  \href{http://arxiv.org/abs/2110.09341}{{\tt 2110.09341}}.

\bibitem{BEFZ:2212}
B.~Enriquez and F.~Zerbini, {\it {Analogues of hyperlogarithm functions on
  affine complex curves}},  {\em Publ. Res. Inst. Math. Sci.} {\bf 61} (2025),
  no.~4 627?712, [\href{http://arxiv.org/abs/2212.03119}{{\tt 2212.03119}}].

\bibitem{Broedel:2019tlz}
J.~Broedel and A.~Kaderli, {\it {Functional relations for elliptic
  polylogarithms}},  {\em J. Phys. A} {\bf 53} (2020), no.~24 245201,
  [\href{http://arxiv.org/abs/1906.11857}{{\tt 1906.11857}}].

\bibitem{DHoker:2025szl}
E.~D'Hoker, B.~Enriquez, O.~Schlotterer, and F.~Zerbini, {\it {Relating flat
  connections and polylogarithms on higher genus Riemann surfaces}},
  \href{http://arxiv.org/abs/2501.07640}{{\tt 2501.07640}}.

\bibitem{DHoker:2023vax}
E.~D'Hoker, M.~Hidding, and O.~Schlotterer, {\it {Constructing polylogarithms
  on higher-genus Riemann surfaces}},  {\em Commun. Num. Theor. Phys.} {\bf 19}
  (2025) 355--413, [\href{http://arxiv.org/abs/2306.08644}{{\tt 2306.08644}}].

\bibitem{tsunogai4}
H.~Tsunogai, {\it On some derivations of lie algebras related to galois
  representations},  {\em Publ. Res. Inst. Math. Sci. 31} (1995) 113--134.

\bibitem{tsunogai1}
P.~P. Deligne, {\it Le groupe fondamental de la droite projective moins trois
  points},  in {\em Galois Groups over $\mathbb Q$} (Y.~Ihara, K.~Ribet, and
  J.-P. Serre, eds.), (New York, NY), pp.~79--297, Springer US, 1989.

\bibitem{tsunogai2}
Y.~Ihara, {\it Braids, galois groups, and some arithmetic functions},  {\em
  Proceedings of the International Congress of Mathematicians} {\bf I,II}
  (1990) 99--120.

\bibitem{IharaTakao:1993}
Y.~Ihara and N.~Takao, {\it {Seminar talk}},  1993.

\bibitem{tsunogai5}
A.~B. Goncharov, {\it Multiple $\zeta$-values, galois groups, and geometry of
  modular varieties},  {\em European Congress of Mathematics, C. Casacuberta,
  R. M. Miro-Roig, J. Verdera, and S. Xambo-Descamps} (2001) 361--392.

\bibitem{tsunogai6}
H.~Gangl, M.~Kaneko, and D.~Zagier, {\it Double zeta values and modular forms},
   {\em Automorphic forms and zeta functions} (2006) 71--102.

\bibitem{tsunogai7}
L.Schneps, {\it On the poisson bracket on the free lie algebra in two
  generators},  {\em J. Lie Theory} {\bf 16} (2006) 19--37.

\bibitem{tsunogai12}
S.~Baumard and L.~Schneps, {\it On the derivation representation of the
  fundamental lie algebra of mixed elliptic motives},
  \href{http://arxiv.org/abs/1510.05549}{{\tt 1510.05549}}.

\bibitem{Mafra:2019ddf}
C.~R. Mafra and O.~Schlotterer, {\it {All-order alpha'-expansion of one-loop
  open-string integrals}},  {\em Phys. Rev. Lett.} {\bf 124} (2020), no.~10
  101603, [\href{http://arxiv.org/abs/1908.09848}{{\tt 1908.09848}}].

\bibitem{Mafra:2019xms}
C.~R. Mafra and O.~Schlotterer, {\it {One-loop open-string integrals from
  differential equations: all-order $\alpha$'-expansions at $n$ points}},  {\em
  JHEP} {\bf 03} (2020) 007, [\href{http://arxiv.org/abs/1908.10830}{{\tt
  1908.10830}}].

\bibitem{Gerken:2019cxz}
J.~E. Gerken, A.~Kleinschmidt, and O.~Schlotterer, {\it {All-order differential
  equations for one-loop closed-string integrals and modular graph forms}},
  {\em JHEP} {\bf 01} (2020) 064, [\href{http://arxiv.org/abs/1911.03476}{{\tt
  1911.03476}}].

\bibitem{Dorigoni:2021jfr}
D.~Dorigoni, A.~Kleinschmidt, and O.~Schlotterer, {\it {Poincar\'e series for
  modular graph forms at depth two. Part I. Seeds and Laplace systems}},  {\em
  JHEP} {\bf 01} (2022) 133, [\href{http://arxiv.org/abs/2109.05017}{{\tt
  2109.05017}}].

\bibitem{Dorigoni:2021ngn}
D.~Dorigoni, A.~Kleinschmidt, and O.~Schlotterer, {\it {Poincar\'e series for
  modular graph forms at depth two. Part II. Iterated integrals of cusp
  forms}},  {\em JHEP} {\bf 01} (2022) 134,
  [\href{http://arxiv.org/abs/2109.05018}{{\tt 2109.05018}}].

\bibitem{Broedel:2020tmd}
J.~Broedel, A.~Kaderli, and O.~Schlotterer, {\it {Two dialects for KZB
  equations: generating one-loop open-string integrals}},  {\em JHEP} {\bf 12}
  (2020) 036, [\href{http://arxiv.org/abs/2007.03712}{{\tt 2007.03712}}].

\bibitem{Reutenauer}
C.~Reutenauer, {\it Free {L}ie algebras},  in {\em Handbook of algebra, {V}ol.
  3}, vol.~3 of {\em Handb. Algebr.}, pp.~887--903.
\newblock Elsevier/North-Holland, Amsterdam, 2003.

\bibitem{Brown:unpubl}
F.~Brown, {\it {Single-valued hyperlogarithms and unipotent differential
  equations}},  {\em unpublished}.

\bibitem{Brown:2019jng}
F.~Brown and C.~Dupont, {\it {Lauricella hypergeometric functions, unipotent
  fundamental groups of the punctured Riemann sphere, and their motivic
  coactions}},  {\em Nagoya Math. J.} {\bf 249} (2023) 148--220,
  [\href{http://arxiv.org/abs/1907.06603}{{\tt 1907.06603}}].

\bibitem{Britto:2021prf}
R.~Britto, S.~Mizera, C.~Rodriguez, and O.~Schlotterer, {\it {Coaction and
  double-copy properties of configuration-space integrals at genus zero}},
  {\em JHEP} {\bf 05} (2021) 053, [\href{http://arxiv.org/abs/2102.06206}{{\tt
  2102.06206}}].

\bibitem{DHoker:2017zhq}
E.~D'Hoker and W.~Duke, {\it Fourier series of modular graph functions},  {\em
  J. Number Theory} {\bf 192} (2018) 1--36,
  [\href{http://arxiv.org/abs/1708.07998}{{\tt 1708.07998}}].

\bibitem{Hidding:2022zzz}
M.~Hidding, ``Algorithms for the study of elliptic multiple graph functions.''
\newblock unpublished Mathematica package.

\bibitem{Bhardwaj:2023vvm}
R.~Bhardwaj, A.~Pokraka, L.~Ren, and C.~Rodriguez, {\it {A double copy from
  twisted (co)homology at genus one}},  {\em JHEP} {\bf 07} (2024) 040,
  [\href{http://arxiv.org/abs/2312.02148}{{\tt 2312.02148}}].

\bibitem{Mazloumi:2024wys}
P.~Mazloumi and S.~Stieberger, {\it {One-loop double copy relation from twisted
  (co)homology}},  {\em JHEP} {\bf 10} (2024) 148,
  [\href{http://arxiv.org/abs/2403.05208}{{\tt 2403.05208}}].

\bibitem{Pokraka:2025zlh}
A.~Pokraka, L.~Ren, and C.~Rodriguez, {\it {A double copy from twisted
  (co)homology at genus g}},  \href{http://arxiv.org/abs/2509.01598}{{\tt
  2509.01598}}.

\bibitem{Drewitt:2023con}
J.~Drewitt and J.~Pimm, {\it {Real-analytic modular forms for $\Gamma_0(N)$ and
  their $L$-series}},  {\em Journal of Number Theory} {\bf 271} (2025) 1--32,
  [\href{http://arxiv.org/abs/2310.01127}{{\tt 2310.01127}}].

\bibitem{Duhr:2025lvr}
C.~Duhr and F.~Lippert, {\it {Equivariant primitives of Eisenstein series for
  congruence subgroups}},  \href{http://arxiv.org/abs/2502.04752}{{\tt
  2502.04752}}.

\bibitem{Ichikawa:2022qfx}
T.~Ichikawa, {\it {Higher genus polylogarithms on families of Riemann
  surfaces}},  {\em Nucl. Phys. B} {\bf 1013} (2025) 116836,
  [\href{http://arxiv.org/abs/2209.05006}{{\tt 2209.05006}}].

\bibitem{Baune:2024biq}
K.~Baune, J.~Broedel, E.~Im, A.~Lisitsyn, and F.~Zerbini, {\it
  {Schottky\textendash{}Kronecker forms and hyperelliptic polylogarithms}},
  {\em J. Phys. A} {\bf 57} (2024), no.~44 445202,
  [\href{http://arxiv.org/abs/2406.10051}{{\tt 2406.10051}}].

\bibitem{DHoker:2024ozn}
E.~D'Hoker and O.~Schlotterer, {\it {Fay identities for polylogarithms on
  higher-genus Riemann surfaces}},  \href{http://arxiv.org/abs/2407.11476}{{\tt
  2407.11476}}.

\bibitem{Baune:2024ber}
K.~Baune, J.~Broedel, E.~Im, A.~Lisitsyn, and Y.~Moeckli, {\it {Higher-genus
  Fay-like identities from meromorphic generating functions}},  {\em SciPost
  Phys.} {\bf 18} (2025) 093, [\href{http://arxiv.org/abs/2409.08208}{{\tt
  2409.08208}}].

\bibitem{Baune:2025sfy}
K.~Baune, J.~Broedel, E.~Im, Z.~Ji, and Y.~Moeckli, {\it {Higher-genus multiple
  zeta values}},  {\em J. Phys. A} {\bf 58} (2025), no.~46 465401,
  [\href{http://arxiv.org/abs/2507.21765}{{\tt 2507.21765}}].

\bibitem{Ichikawa:2025kbi}
T.~Ichikawa, {\it {The Enriquez connection for higher genus polylogarithms}},
  \href{http://arxiv.org/abs/2510.00486}{{\tt 2510.00486}}.

\bibitem{DHoker:2013fcx}
E.~D'Hoker and M.~B. Green, {\it {Zhang-Kawazumi Invariants and Superstring
  Amplitudes}},  {\em J. Number Theor.} {\bf 144} (2014) 111,
  [\href{http://arxiv.org/abs/1308.4597}{{\tt 1308.4597}}].

\bibitem{DHoker:2014oxd}
E.~D'Hoker, M.~B. Green, B.~Pioline, and R.~Russo, {\it {Matching the
  $D^{6}R^{4}$ interaction at two-loops}},  {\em JHEP} {\bf 01} (2015) 031,
  [\href{http://arxiv.org/abs/1405.6226}{{\tt 1405.6226}}].

\bibitem{Pioline:2015qha}
B.~Pioline, {\it {A Theta lift representation for the Kawazumi-Zhang and
  Faltings invariants of genus-two Riemann surfaces}},  {\em J. Number Theor.}
  {\bf 163} (2016) 520--541, [\href{http://arxiv.org/abs/1504.04182}{{\tt
  1504.04182}}].

\bibitem{Basu:2020goe}
A.~Basu, {\it {Integrating simple genus two string invariants over moduli
  space}},  {\em JHEP} {\bf 03} (2021) 158,
  [\href{http://arxiv.org/abs/2012.14006}{{\tt 2012.14006}}].

\bibitem{Basu:2021xdt}
A.~Basu, {\it {Poisson equation for genus two string invariants: a
  conjecture}},  {\em JHEP} {\bf 04} (2021) 050,
  [\href{http://arxiv.org/abs/2101.04597}{{\tt 2101.04597}}].

\bibitem{DHoker:2020uid}
E.~D'Hoker and O.~Schlotterer, {\it {Identities among higher genus modular
  graph tensors}},  {\em Commun. Num. Theor. Phys.} {\bf 16} (2022), no.~1
  35--74, [\href{http://arxiv.org/abs/2010.00924}{{\tt 2010.00924}}].

\bibitem{Panzer:2015ida}
E.~Panzer, {\em {Feynman integrals and hyperlogarithms}}.
\newblock PhD thesis, Humboldt U., 2015.
\newblock \href{http://arxiv.org/abs/1506.07243}{{\tt 1506.07243}}.

\bibitem{Abreu:2022mfk}
S.~Abreu, R.~Britto, and C.~Duhr, {\it {The SAGEX review on scattering
  amplitudes Chapter 3: Mathematical structures in Feynman integrals}},  {\em
  J. Phys. A} {\bf 55} (2022), no.~44 443004,
  [\href{http://arxiv.org/abs/2203.13014}{{\tt 2203.13014}}].

\bibitem{Jianqiang}
J.~Zhao, {\em {Multiple zeta functions, multiple polylogarithms, and their
  special values}}.
\newblock World Scientific, New Jersey, 2016.

\bibitem{BurgosFresan}
J.~I.~B. Gil and J.~Fresan, ``{Multiple zeta values: from numbers to
  motives}.'' \url{http://javier.fresan.perso.math.cnrs.fr/mzv.pdf}.

\bibitem{Brown2014MotivicPA}
F.~Brown, {\it Motivic periods and the projective line minus three points},  in
  {\em {Proceedings of the ICM 2014}}, 2014.
\newblock \href{http://arxiv.org/abs/1407.5165}{{\tt 1407.5165}}.

\bibitem{brown2017notes}
F.~Brown, {\it {Notes on motivic periods}},  {\em Communications in Number
  Theory and Physics} {\bf 11} (2015), no.~3 557--655,
  [\href{http://arxiv.org/abs/1512.06410}{{\tt 1512.06410}}].

\bibitem{Blumlein:2009cf}
J.~Bl{\"u}mlein, D.~J. Broadhurst, and J.~A.~M. Vermaseren, {\it {The Multiple
  Zeta Value Data Mine}},  {\em Comput. Phys. Commun.} {\bf 181} (2010)
  582--625, [\href{http://arxiv.org/abs/0907.2557}{{\tt 0907.2557}}].

\bibitem{Schlotterer:2012ny}
O.~Schlotterer and S.~Stieberger, {\it {Motivic Multiple Zeta Values and
  Superstring Amplitudes}},  {\em J. Phys.} {\bf A46} (2013) 475401,
  [\href{http://arxiv.org/abs/1205.1516}{{\tt 1205.1516}}].

\bibitem{Keilthy}
A.~Keilthy, {\em Rational structures on multiple zeta values}.
\newblock PhD thesis, University of Oxford, 2020.

\bibitem{FB:talk}
F.~Brown, ``{Depth-graded motivic multiple zeta values}.'' {Talk given at
  Quantum Field Theory, Periods and Polylogarithms III, Berlin, June 25-29
  2012.}

\bibitem{Drummond:2013vz}
J.~M. Drummond and E.~Ragoucy, {\it {Superstring amplitudes and the
  associator}},  {\em JHEP} {\bf 08} (2013) 135,
  [\href{http://arxiv.org/abs/1301.0794}{{\tt 1301.0794}}].

\bibitem{privFBDKLS}
F.~Brown, D.~Kamlesh, and L.~Schneps. private communication.

\bibitem{Ihara1989TheGR}
Y.~Ihara, {\it The {G}alois representation arising from $\mathbb{P}^1 -
  \{0,1,\infty \}$ and {T}ate twists of even degree},  1989.

\bibitem{Ihara:stable}
Y.~Ihara, {\it On the stable derivation algebra associated with some braid
  groups},  {\em Israel Journal of Mathematics} {\bf 80} (1992) 35--153.

\end{thebibliography}

\providecommand{\href}[2]{#2}\begingroup\raggedright\endgroup

\end{document}